\documentclass{amsart}

\newif\ifsubmission

\usepackage{amsthm,amsmath,amsfonts,amssymb}
\usepackage[numbers,sort&compress]{natbib}
\usepackage[colorlinks,citecolor=blue,urlcolor=blue]{hyperref}
\usepackage[]{graphicx}

\ifsubmission
\usepackage{xr}
\else
\fi

\usepackage{comment}
\usepackage{booktabs}
\usepackage{longtable}
\usepackage{array,bbm,multirow} 

\theoremstyle{plain}
\newtheorem{theorem}{Theorem}[section]
\newtheorem{lemma}{Lemma}[section]
\newtheorem{corollary}[theorem]{Corollary}
\newtheorem{proposition}[theorem]{Proposition}
\newtheorem{claim}[theorem]{Claim}

\theoremstyle{definition}
\newtheorem{definition}{Definition}
\newtheorem{assumption}{Assumption}

\theoremstyle{remark}
\newtheorem{remark}{Remark}

\begin{document}

\ifsubmission

\else

\title[]{Functional Estimation of Manifold-Valued Diffusion Processes}

\author{Jacob McErlean}

\address{Department of Mathematics, Duke University, Durham, 27708, NC, USA} 

\author{Hau-Tieng Wu}

\address{Courant Institute of Mathematical Sciences, New York University, New York, 10012, USA} 

\begin{abstract}
Nonstationary high-dimensional time series are increasingly encountered in biomedical research as measurement technologies advance. Owing to the homeostatic nature of physiological systems, such datasets are often located on, or can be well approximated by, a low-dimensional manifold. Modeling such datasets by manifold-valued It\^o diffusion processes has been shown to provide valuable insights and to guide the design of algorithms for clinical applications. 
In this paper, we propose Nadaraya-Watson type nonparametric estimators for the drift vector field and diffusion matrix of the process from one trajectory. Assuming a time-homogeneous stochastic differential equation on a smooth complete manifold without boundary, we show that as the sampling interval and kernel bandwidth vanish with increasing trajectory length, recurrence of the process yields asymptotic consistency and normality of the drift and diffusion estimators, as well as the associated occupation density. Analysis of the diffusion estimator further produces a tangent space estimator for dependent data, which has its own interest and is essential for drift estimation. 
Numerical experiments across a range of manifold configurations support the theoretical results.
\end{abstract}

\maketitle

\fi

\section{Introduction}  
\label{intro}

High-dimensional time series data have become increasingly pervasive across a range of quantitative disciplines, including but not exclusively economics, finance, geoscience, and medicine \cite{box2015time}. In such settings, multiple interrelated variables are recorded simultaneously over time, often at high sampling frequencies and over long horizons, resulting in complex, high-dimensional temporal datasets. A central inferential objective is to characterize and quantify the underlying system dynamics, either for scientific understanding, real-time monitoring and control, or improved prediction. This requires the development of statistical methods capable of adequately capturing both temporal dependence and cross-sectional interactions in the data. 

A motivating example from biomedicine is the repurposing of underutilized intraoperative signals, such as arterial blood pressure (ABP) \cite{wang2023arterial,Chang2025.10.12.25337819} and photoplethysmogram (PPG) \cite{ho2024variability}, to predict clinical outcomes. By leveraging physiological knowledge, these signals can be transformed into high-dimensional time series and interpreted as sensors probing the dynamics of the cardiovascular system. In particular, each cardiac cycle in an ABP signal can be viewed as the system's response to a heartbeat stroke, analogous to the sound produced when a drum is struck. This perspective evokes the classical inverse problem of whether one can ``hear the shape of a drum''. \cite{kac1966can}. However, the cardiovascular system is far more complex: it is adaptive rather than static, and its state evolves continuously over time. Consequently, variability across cardiac cycles \cite{malik1990heart,bruce1996temporal}, rather than any single waveform, encodes critical information about physiological status. This observation motivates representing ABP as a sequence of functions, each corresponding to a cardiac cycle, thereby forming a high-dimensional time series that captures temporal evolution. Despite this variability, homeostatic regulation \cite{modell2015physiologist} ensures that these dynamics are not arbitrary but constrained by some underlying physiological principles.

These considerations naturally motivate a geometric perspective for modeling such data. High-dimensional biomedical time series are typically nonstationary and arise from deterministic dynamics modulated by stochastic perturbations. The underlying dynamics often evolve on a low-dimensional, nonlinear manifold that is not directly observable. Empirically, these trajectories exhibit diffusion-like behavior over time, echoing reactiondiffusion models  \cite{lobanova2004running} used in cardiac electrophysiology and stochastic descriptions of heart rate variability \cite{von2019physiology}. While the underlying physiological mechanisms are complex, only partially accessible, and usually qualitative, we adopt a {\em phenomenological} approach and model the data as evolving on a manifold endowed with a diffusion process. This approach provides a principled framework for capturing intrinsic structure and enables the development of algorithms for quantifying the evolving physiological dynamics.

To capture the above mentioned characteristics, we model a high-dimensional time series $\mathcal{X}:=\{x_k\}_{k=1}^n\subset \mathbb{R}^p$, where $x_k$ is sampled at time $t_k$, using a stochastic differential equation (SDE). Let $(M,g)$ be a $d$-dimensional Riemannian manifold isometrically embedded in $\mathbb{R}^p$  through $\iota$ with $d \leq p$. Adapting Einstein summation convention, consider an It\^o semimartingale $X_t$ satisfying the {\em time-homogeneous} SDE in the It\^o differential form:
\begin{equation}\label{EQ: master equation latent space model}
dX_t=\nu(X_t)dt+\sigma_\alpha(X_t)\circ dW^\alpha_t=\mu(X_t)dt+\sigma_\alpha(X_t) dW^\alpha_t\in M\,,
\end{equation}
where $\nu\in \Gamma^\infty(TM)$ and $\mu\in \Gamma^\infty(TM)$ denote the {\em drift} in the Stratonovich and It\^o form, respectively, $\sigma_\alpha\in \Gamma^\infty(TM)$, $\alpha=1,\ldots,r$, $r\in \mathbb{N}$, denote the {\em diffusion} vector fields, $\circ$ indicates the Stratonovich stochastic integral, and $W_t$ is standard $r$-dim Brownian motion.   
We model $\mathcal{X}$ as discrete samples $x_k=\iota(X_{t_k})\in\mathbb{R}^p$ at $t_k=k\Delta$.  
Here, $M$ reflects the geometric structure that constrains $X_t$.

Although the model provides a structured representation of the underlying system, in practice neither the parameters of the SDE nor the parametrization of the manifold $M$ are known. Instead, we only have access to the observed time series data $\mathcal{X}$. 
This challenge has been widely studied in the literature, though often considering $M=\mathbb{R}^p$; that is, without incorporating a nonlinear manifold in the model. The univariate case, $M=\mathbb{R}^1$, is by now relatively well-understood; we refer the reader to \cite{ait2009handbook,bosq2012nonparametric} and references therein for a broad, though not exhaustive, overview. In the multivariate setting, $M=\mathbb{R}^p$, substantial progress has been made from various perspectives, addressing a wide range of related challenges; non-exhaustive examples include nonparametric kernel-based drift and diffusion estimators \cite{burgiere1993theoreme,fan2007aggregation,bandi_moloche_2018}, direct covariance approach \cite{hayashi2005covariance}, maximal likelihood approach \cite{ait2008closed}, sparsity approach \cite{boninsegna2018sparse}, Malliavian-Fourier approach \cite{malliavin2009fourier,akahori2023symmetric}, principal component analysis approach \cite{ait2019principal,chen2020five,bu2025nonparametric}, spectral approach \cite{gobet2004nonparametric,crommelin2011diffusion}, multiscale structure based estimators \cite{pavliotis2007parameter,crommelin2011diffusion}, Gaussian process approach \cite{duncker2019learning}, parametric approach \cite{sorensen2004parametric}, etc. 


Among the challenges posed by this framework, in this paper we focus on estimating the drift and diffusion from $\mathcal{X}$ under the manifold model \eqref{EQ: master equation latent space model}. Estimation of the occupation densities and tangent spaces arises as a byproduct. Our primary motivation is \cite{bandi_moloche_2018} and \cite{LOCHERBACH20081301} (in the continuous-time setting), which focuses on $M = \mathbb{R}^p$. There, Nadaraya-Watson-type nonparametric estimators for drift and diffusion are proposed, and their asymptotic properties are established using Harris recurrence structure. 
To recall the intuition behind the kernel regression approach in \cite{bandi_moloche_2018}, note that the increments $\frac{1}{\Delta}(x_{k+1}-x_k)$ and outer products $\frac{1}{\Delta}(x_{k+1}-x_{k})(x_{k+1}-x_{k})^\top$ act as noisy one-step estimators of the drift $\iota_*\mu(x_k)$ and diffusion matrix $\sum_{l=1}^r\iota_*\sigma_l(x_k)[\iota_*\sigma_l(x_k)]^\top$. Individually, these quantities are highly variable. Kernel regression stabilizes them by aggregating local estimators, assigning larger weights to observations near a target point $x$ via a kernel function applied to a distance scaled by bandwidth $h$. The resulting weighted averages yield consistent estimators of the drift and diffusion.

Our main contribution is multifold, circling around generalization of the kernel regression to the manifold setup with dependent dataset and establishing theoretical guarantees.
First, although one might attempt to directly apply the Euclidean estimator of \cite{bandi_moloche_2018}, arguing that manifolds are locally well approximated by affine subspaces, this intuition fails. Even under continuous observation, curvature generates a leading bias term in the Euclidean drift estimation. 
Specifically, with It\^o's formula, the Euclidean-embedded process
$Z_t := {\iota}(X_t)$ satisfies
\begin{align}
 dZ_t =  \Big( {\iota}_* \nu+ \frac{1}{2} D_{{\iota}_* \sigma_\alpha}({\iota}_* \sigma_\alpha) \Big)(Z_t) \, dt + {\iota}_* \sigma_\alpha(Z_t) \, dW^\alpha_t, \label{Intro section ito-form}
\end{align}
where ${\iota}_* \nu+ \frac{1}{2} D_{{\iota}_* \sigma_\alpha}({\iota}_* \sigma_\alpha)\neq \iota_*\mu={\iota}_* \nu+ {\iota}_*|_x\nabla_{\sigma_\alpha}\sigma_\alpha$ and $D_{{\iota}_* \sigma_\alpha}({\iota}_* \sigma_\alpha)$ is the covariant derivative of the vector field ${\iota}_* \sigma_\alpha$ along itself in $\mathbb{R}^p$ satisfying
$D_{{\iota}_* \sigma_\alpha}({\iota}_* \sigma_\alpha)={\iota}_*|_x\nabla_{\sigma_\alpha}\sigma_\alpha+\textup{I\!I}_x(\sigma_\alpha,\sigma_\alpha)$.
This computation shows that if we estimate the drift term using the Euclidean estimator, the normal component, $\textup{I\!I}_x(\sigma_\alpha,\sigma_\alpha)$, biases the estimator. See Section \ref{section Ito formula summary} for more details.
Therefore, the kernel regression procedure must be carefully designed. By contrast, curvature does not asymptotically affect diffusion estimation, and the diffusion matrix spans the tangent space under mild assumptions. This insight motivates our construction: we first estimate the tangent space using the diffusion estimator, and then recover the drift using the estimated tangent structure. The resulting tangent space estimator, derived from dependent data, appears to be new in the manifold learning literature and is of independent interest.
Second, the sampling scheme poses substantial challenges. Kernel regression requires sufficiently dense local sampling. In contrast to the i.i.d. framework commonly assumed in manifold learning \cite{singer2012vector,Tyagi2013,wuwu18}, our data consist of a {\em single} realization of the SDE, and its empirical distribution does not automatically provide uniform local coverage. To address this issue, we exploit Harris recurrence on manifolds, which guarantees the existence of an invariant measure serving as a surrogate sampling density. To analyze the proposed estimators, we need a generalized Nummelin splitting scheme for Harris recurrent processes, which may or may not have life cycles \cite{LOCHERBACH20081301}. Combined with the Darling-Kac result that quantifies how the time  $X_t$ spends in a region scales over long periods, we convert additive functionals associated with the estimators into integrals on the manifold with respect to $X_t$'s invariant measure. Under suitable conditions on bandwidth and sampling rates, together with careful analysis of the resulting Gaussian mixtures, we establish a central limit theorem for all proposed estimators as the observation horizon tends to infinity.

This paper is organized as follows. Section~\ref{section: math Model} presents the mathematical model. Section~\ref{section: proposed algorithm} details the proposed algorithm. Theoretical results are provided in Section~\ref{section: theory of proposed algorithm}, followed by numerical simulations in Section~\ref{section simulation label}. Section~\ref{section discussion label} discusses the findings and outlines future research directions. 
All proofs and technical details are given in the 
\ifsubmission
Online Supplement. 
\else
Appendix.
\fi
We adapt the following notation in what follows. $\xrightarrow[]{\hspace{0.1cm}d\hspace{0.1cm}}$ and 
$\xrightarrow[]{\hspace{0.1cm}p\hspace{0.1cm}}$ stand for convergence in distribution and convergence in probability.
For two sequences of random variables, $A_n = o_p(B_n)$ means $A_n/B_n \xrightarrow[]{\hspace{0.1cm}p\hspace{0.1cm}} 0$ and
$A_n = O_p(B_n)$ means $A_n/B_n$ is stochastically bounded as $n \rightarrow \infty$. The superscript
$\top$ means matrix transpose. The symbol $\otimes$ denotes the Kronecker product and $\mathbbm{1}_A$ denotes the indicator function of the set $A$. Other commonly used symbols are listed in Table \ref{Table: more symbols}.

\section{Mathematical Model}\label{section: math Model}

We now present our manifold-valued diffusion model \eqref{EQ: master equation latent space model} for high-dimensional time series data, viewed as observations over a finite time period from a stochastic dynamical system. Specifically, we model the data as samples from a solution to an SDE evolving as a diffusion process on a manifold. 

\begin{assumption} \label{manifold-ass}
Let $(M,g)$ be a $d$-dimensional, complete, connected, smooth Riemannian manifold without boundary, isometrically embedded into $\mathbb{R}^p$ via an embedding $\iota$. When the manifold is non-compact, assume the Ricci curvature is lower bounded; that is, $\texttt{Ric}\geq Kg$ for some $K\in \mathbb{R}$, and $\texttt{inj}_x\geq\delta>0$, where $\texttt{inj}_x$ is the injectivity radius of $x$, for all $x\in M$.
\end{assumption}

The Ricci curvature and injectivity assumptions hold automatically when $M$ is compact. In the noncompact case, the lower bound assumption on Ricci curvature helps control the diffusion kernel behavior, and the uniform positive lower bound assumption on the injectivity radius is imposed to preclude local geometric collapse so that the volume doubling property holds. 
Let $d_g(x,y)$ denote the geodesic distance between $x,y\in M$ and $dV_g$ denote the Riemannian volume measure. Denote by $B_b(M)$, $C(M)$, $C_0(M)$, and $C_c^\infty(M)$ the spaces of bounded measurable functions on $M$, continuous functions on $M$, continuous functions vanishing at infinity, and smooth functions with compact support, respectively.
Let $\Gamma(TM)$ and $\Gamma^k(TM)$, $k\in \mathbb{N}\cup\{\infty\}$, denote the space of continuous and $C^k$ vector fields on $M$. When $M$ is noncompact, denote $\widehat{M}:=M\cup \{\partial_M\}$ to be the one-point compactification of $M$; otherwise $\widehat{M}:=M$. 
Denote $\mathcal{B}(M)$ to be the Borel $\sigma$-algebra on $\widehat M$. 
For a sequence $x_n\in M$, we know $x_n\to \partial_M$ if and only if $\|x_n\|_{\mathbb{R}^q}\to \infty$ \cite{hsu_stoch_anal_mflds}. 
Denote $\mathcal{W}^r_0:=\left\{\omega \in C([0, \infty), \mathbb{R}^r)\right\}$ and $\{\mathcal{F}_t\}_{t > 0}$ its topological $\sigma$-field generated by Borel cylinders. 
Consider the probability space $(\Omega,\mathcal{F},\mathbb{P})$ and a $r$-dimensional standard Brownian motion $W_s=[W_s^1,\ldots,W_s^r]^\top$  with the initial distribution $\mu_0$. We then have a filtrated probability space $(\Omega,\mathcal{F}_*,\mathbb{P})$, where $\mathcal{F}_*:=\{\mathcal{F}_t|\,t\geq 0\}$ is a right continuous filtration so that $\mathcal{F}:=\lim_{t\to\infty}\mathcal{F}_t$ and $\mathcal{F}_t=\cap_{\epsilon>0}\sigma\{W_u|\,u\leq t+\epsilon\}$. Clearly, $W_t$ is adapted to $\mathcal{F}_*$.
Let $\mathcal{W}(M) := \{w\in C([0, \infty), \widehat M)|\,w(0)\in M,$ and if $w(t)=\partial_M$ then $w(t')=\partial_M$ for all $t'\geq t\}$,
and $\mathcal{B}_t(\mathcal{W}(M))$ be the topological $\sigma$-field generated by Borel cylinders. The {\em explosion time} is defined on $\mathcal{W}(M)$ by $e(w):=\inf\{t|\, w(t)=\partial_M\}$; that is, $\partial_M$ is the ``trap'' of $w$. When $M$ is compact, $e=\infty$ \cite{Stroock2010pde}. 

\subsection{Quick review of SDE on manifold and assumptions} \label{subsection review SDE}
Adapt notation from \eqref{EQ: master equation latent space model}. It is well-known \cite[Proposition 1.2.9]{hsu_stoch_anal_mflds} that there exists a unique strong solution, which is an $\mathbb{R}^p$-valued adapted semi-martingale $X_t$ supported on $M\subset \mathbb{R}^p$, up to its explosion time $e(X)$, to a time-homogeneous {\em SDE on manifold} in the Stratonovich form: 
\begin{align}
X_t = X_0 + \int_0^t \nu(X_s) ds + \int_0^t \sigma_\alpha(X_s) \circ dW^\alpha_s  \label{strat-SDE}
\end{align}
where we use Einstein summation convention, $r\in \mathbb{N}$, $W_t$ is the standard Brownian motion with $\mu_0=\delta_0$, the initial value $X_0 \in M$ follows the distribution $\lambda$ supported on $M$ and is independent of $W_t$, $\circ$ denotes the Stratonovich stochastic integral (e.g. \cite[e.g., 1.1.13]{hsu_stoch_anal_mflds}), and $\nu, \sigma_1, \cdots, \sigma_r \in \Gamma^\infty(TM)$. 
Recall that a strong solution to \eqref{strat-SDE} up to a stopping time $\tau$ is an $\mathcal{F}_*$-adapted random process $X_t \in \mathcal{W}(M)$ defined on $\Omega$ such that for any $f \in C^\infty(M)$ \cite[Definition 1.2.3]{hsu_stoch_anal_mflds}, 
 \begin{align}
 f(X_t) = f(X_0) + \int_0^t (\nu f)(X_s) ds + \int_0^t (\sigma_\alpha f)(X_s) \circ dW^\alpha_s\,,\ \ \ 0\leq t<\tau\,.\label{function-integral-sde-def}
 \end{align}
The SDE \eqref{strat-SDE} is often written in It\^o differential form as that in \eqref{EQ: master equation latent space model}. We present the Stratonovich form of the integral here because it respects the nonlinear geometry; that is, the Stratonovich SDE is intrinsic to the manifold and its driving vector fields transform naturally under diffeomorphisms via push-forward \cite[Prop. 1.2.4]{hsu_stoch_anal_mflds}.

Let $\mu_X$ denote the probability law on $\mathcal{W}(M)$ of the solution $X_t$ to \eqref{strat-SDE} and $\{P_t\}_{t\geq 0}$ denote the associated transition semigroup. Recall that for $S\in \mathcal{B}(M)$ and $\lambda=\delta_x$, where $x\in M$, for $t\geq 0$, $P_t(x,S):=\mu_X\{w\in \mathcal{W}(M)|\, w(t)\in S\}$ and $P_tf(x):=\mathbb{E}_xf(X_t)=\int_{\mathcal{W}(M)}f(\omega_t)d \mu_Y(\omega)$, where $f\in \mathcal{B}_b(M)$ and $\mathbb{E}_x$ is the expectation associated with $\mathbb{P}_x$. $\{P_t\}$ of the solution to \eqref{strat-SDE} is a strong Markov process 
\cite[IX.$\S$3-$\S$5]{Elworthy1982sde}.

With a coordinate system over a neighborhood $U\subset M$ of $x\in M$, 
locally there is a symmetric and non-negative definite matrix-valued function, $\pi(x):=(\pi^{ij}(x))\in \mathbb{R}^{d\times d}$, where 
\[
\pi^{ij}(x)=\sum_\alpha \sigma^{i}_\alpha(x)\sigma^{j}_\alpha(x)\,. 
\]
We call $\pi$ the {\em diffusion matrix}. 
$X_t$ is a diffusion process whose infinitesimal generator $L$ satisfies 
\[
Lf(x):=
\lim_{t\to 0}\frac{\mathbb{E}_x[f(X_t)]-f(x)}{t}=\frac{1}{2}\sum_{\alpha=1}^r \sigma_\alpha(\sigma_\alpha f)(x)+\mu f(x)\,,
\] 
where $f\in C_c^\infty(M)$; that is, $X_t$ is a $L$-diffusion process \cite[Section 1.3]{hsu_stoch_anal_mflds}. 
In local coordinates, the generator admits the representation 
\[
Lf(x)=\frac{1}{2}\pi^{ij}(x)\frac{\partial^2f}{\partial x^i\partial x^j}(x)+b^i(x)\frac{\partial f}{\partial x^i}(x)\,, 
\]
where $b^i:=\mu^{i}+\sum_\alpha \sigma^{k}_\alpha \partial_k \sigma^{i}_\alpha$. The additional $\sum_\alpha \sigma^{k}_\alpha \partial_k \sigma^{i}_\alpha$ in $b^i$ arises from the conversion between the It\^o form and Stratonovich formulations.

Harris recurrence is the key tool for our algorithm analysis. Recall that \cite[Definition 1.1]{limit_theorems_null} $X=(X_t)_{t\geq 0}$ with $X_0=x \in M$ a.s. is called \textit{Harris recurrent} if there exists a non-trivial $\sigma$-finite measure $\eta$ on $(M,\mathcal{B}(M))$ so that for any $A \in \mathcal{B}(M)$ so that
$\eta(A) > 0$, we have $\mathbb{P}_x (\int_0^\infty \mathbf{1}_A(X_t) dt = \infty) = 1$.
It is well known that a Harris recurrent process implies the existence of a unique invariant measure $\phi_X$ of $\{P_t\}$, up to a constant multiple \cite[Theorem 1.2]{limit_theorems_null}. When $\phi_X(M)=\infty$, we call $X_t$ \textit{null Harris recurrent}. When $\phi_X(M)<\infty$, we call $X_t$ \textit{positive Harris recurrent}. 
Geometrically, the process $X_t$ is Harris recurrent if, for every measurable subset of the manifold with positive volume, the process almost surely visits that set infinitely often (in time). 
For more details on Harris recurrence, we refer readers to \cite{limit_theorems_null}.

We impose the following assumption. 

\begin{assumption} \label{manifold-ass2}
Assume the initial measure of $X_t$, $\lambda$, is compactly supported on $\mathsf C\subset M$ with $dV_g(\mathsf C)>0$ and a bounded density function, and the spectrum of $\pi(x)$ is uniformly bounded from above and away from $0$; that is, $L$ is uniformly elliptic. 

When the manifold is non-compact, we impose further assumptions. 
\begin{enumerate}
\item (conservation) The explosion time is infinite; that is, $e(Y)=\infty$.
\item ($C^0$-property) $\{P_t\}$ is strong Feller. 
\item (diffusion kernel) The diffusion kernel $p_t(x,y)$ on $(0,\infty)\times M\times M$ associated with the transition semi-group is smooth and strictly positive.
\item (Harris recurrence) $X_t$ is Harris recurrent with an invariant measure $\phi_X$.
\end{enumerate}
\end{assumption}

Note that in \cite{bandi_moloche_2018}, rather than uniform ellipticity, the {\em H\"ormander condition} \cite[Chapter 7]{Stroock2010pde} is imposed. Instead, we adopt uniform ellipticity since we need nondegenerate diffusion in order to design our  drift estimator. The compact support assumption on the initial measure is motivated by biomedical applications, where it reflects the relatively homogeneous initial state of the dynamics.

\begin{remark}
Note that not every choice of $r$ is suitable for modeling and analyzing high-dimensional time series data if we need a {\em non-degenerate diffusion}, which is essential for recovering the drift term.
Recall that the well-known Hairy Ball theorem prohibits the existence of a smooth, non-vanishing vector field on $\mathbb{S}^2$. This topological constraint guarantees that choosing $r=2$ cannot yield a non-degenerate diffusion on $\mathbb{S}^2$. Therefore, depending on the manifold, we may need $r>d$.  
From a statistical perspective, taking $r>d$ may seem to introduce identifiability issues. In particular, it is possible to find $r,s\geq d$ with $r\neq s$ such that the diffusion terms $\sum_{\alpha=1}^r \sigma_\alpha(X_t) \circ dW^\alpha_t$ and $\sum_{\beta=1}^s \tilde \sigma_\beta(X_t) \circ d\tilde W^\beta_t$ have the same law on $M$ for some $\tilde\sigma_\beta$ and $\tilde W^\beta_t$, $\beta=1,\ldots, s$, that is independent of $W_t$. This is however not a big trouble in our analysis since our goal is estimating the diffusion matrix, {\em not} recovering each $\sigma_\alpha$. 
\end{remark}

Assumption \ref{manifold-ass2}(1) is necessary for our analysis to be asymptotic. Assumption \ref{manifold-ass2}(2)-(4) are needed for the construction of a recurrent atom \cite[Definition 1.9.A]{limit_theorems_null} for the Nummelin-like splitting argument via an embedding technique used in the analysis. Note that not all Harris recurrent processes have a recurrent atom. The Harris recurrence assumption effectively amounts to requiring sufficiently dense sampling {locally on the manifold over finite recording time}, analogous to the ``lower bounded density assumption'' commonly imposed in the analysis of manifold learning algorithms. See \cite[Chapter 1]{limit_theorems_null} for a summary of Harris recurrence.

When $M$ is compact, Assumption \ref{manifold-ass2}(1)-(4) hold automatically. Specifically, the existence of smooth kernel functions $p_t(x,y)$ on $(0,\infty)\times M\times M$ is ensured by the H\"ormander theorem, and $p_t(x,y)$ has Gaussian upper and lower bounds \cite[Theorem 6.4.1]{Stroock2010pde}, which leads to Assumption \ref{manifold-ass2}(3). Moreover, $\{P_t\}$ is strong Feller \cite[VIII.$\S$6]{Elworthy1982sde} and $X_t$ is positive Harris recurrent \cite{meyn1993stability}.
When $M$ is non-compact, various sufficient conditions ensure Assumption \ref{manifold-ass2}(1)-(4). To avoid distraction, we postpone these details to Section \ref{section sufficient conditions for assumption 2 when M is noncompact}. We shall mention that a non-compact manifold is necessary to host an It\^o diffusion with null Harris recurrence. As our primary focus is on estimating dynamics and quantifying the asymptotic behavior of the estimators, in the non-compact case we are content to work under Assumption \ref{manifold-ass2}(1)-(4) in this paper.

\subsection{Manifold-valued diffusion model and sampling for the high-dimensional time series}
We now detail the proposed manifold-valued diffusion model for the high-dimensional time series. We start with a definition and some nominations.

\begin{definition}[Manifold-valued diffusion model]\label{definition observed space model}
Suppose Assumption \ref{manifold-ass} for a manifold $M$ holds. We call $X_t$ on $M$ satisfying Assumption \ref{manifold-ass2} a {\em manifold-valued diffusion model}.
\end{definition}

The high-dimensional time series $\mathcal{X}=\{x_k\}_{k=1}^n\subset \mathbb{R}^p$ is modeled as the discretization of a realization of the solution $\iota(X_t)$ on the interval $[0, T]$ at uniform times; that is, $x_k=\iota(X_{k\Delta})$, where $\Delta >0$ denotes the sampling interval, $k=1,\ldots,n$, $n=\lfloor T/\Delta\rfloor$, and $\lfloor \cdot\rfloor$ is the floor operator.

\section{Proposed algorithm}
\label{section: proposed algorithm}

Under the manifold-valued diffusion model, we develop a class of manifold-adaptive, Nadaraya-Watson type kernel estimators to recover the invariant measure $\phi_X$, or the occupation density of $X_t$, and {``observed''} drift vector and diffusion matrix, denoted as 
\begin{align}
\mu^{(\texttt{o})}:=\iota_*\mu\in \mathbb{R}^p\ \ \mbox{ and }\ \ \pi^{(\texttt{o})}:=\sum_{l=1}^r(\iota_*\sigma_l)(\iota_*\sigma_l)^\top\in \mathbb{R}^{p\times p} \label{definition: target quantities 1}\,,
\end{align} 
from the high-dimensional time series $\mathcal{X}:=\{x_k\}_{k=1}^n$. 
The idea is straightforward: we first obtain noisy preliminary estimates of the drift and diffusion from process increments, then average them using a nonparametric kernel. This ``plug-in'' approach exploits variance reduction through averaging, yielding more stable and accurate estimates. In all cases, its success hinges on establishing a meaningful quantitative relationship between successive observations $x_{k+1}$ and $x_k$ by taking care of the curvature.

{In practice, the geodesic distance and tangent space of $M$ are not available and must be estimated from the data. For each state $x\in M$, prepare an appropriate ``distance-like'' function $\mathcal{D}_x:M\to \mathbb{R}^+$ defined by
\begin{equation}\label{definition Dy Euclidean}
\mathcal{D}_x(x'):=\|\iota(x')-\iota(x)\|_{\mathbb{R}^p}\,, 
\end{equation}
where $x'\in M$, which accurately estimates the geodesic distance between $x$ and $x'$ on $M$ when they are sufficiently close.
Choose a kernel $\mathcal{K}:\mathbb{R}^+\to \mathbb{R}$, which is smooth and compactly supported.
Define
\begin{align*} 
\hat{L}^{(\texttt{o})}(x) := \frac{\Delta}{h^d} \sum_{k=0}^{n-1} \mathcal{K} \left(\frac{\mathcal{D}_x(X_{k\Delta})}{h}\right)\,,
\end{align*}
which estimates the invariant measure $\phi_X$ at $x$.
With the chosen $\mathcal{D}_x$, estimate the   diffusion matrix at $x{\in M}$ from $\mathcal{X}$ by 
\begin{align}\label{Formula: generic diffusion term estimator observed}
\hat{\pi}^{(\texttt{o})}(x) &=\frac{1}{\Delta} \frac{\sum_{k=0}^{n-1}\mathcal{K}\left(\frac{\mathcal{D}_x(x_{k})}{h} \right)(x_{k+1}-x_{k})(x_{k+1}-x_{k})^\top}{\sum_{k=0}^{n-1}\mathcal{K}\left(\frac{\mathcal{D}_x(x_{k})}{h} \right)}\in \mathbb{R}^{p\times p}\,.
\end{align}
Note that the denominator is $\frac{h^d}{\Delta}\hat{L}^{(\texttt{o})}(x)$.

To estimate the drift term, we need to estimate the tangent space and a projection onto $\iota_*T_xM$, denoted as $\hat{P}_x$. 
Denote the eigenvalue decomposition of $\hat{\pi}^{(\texttt{o})}(x)$ as $\hat{U}_x\hat{\Sigma}_x\hat{U}^\top_x$, where the eigenvalues are ordered decreasingly and define
\begin{equation}\label{definition Py Euclidean}
\hat{P}_x:=\hat U_{d} \hat U_{d}^\top\,,
\end{equation}
where $\hat U_{d} \in \mathbb{R}^{p\times d}$ is formed from the first $d$ columns of $\hat{U}_x$. Clearly, $\hat U_{d}^\top \hat U_{d}=I_{d\times d}$ by construction. As we show below, the column space of $\hat U_{d}$ provides a good approximation of the embedded tangent space $\iota_* T_xM$. 
With the chosen projection matrix $\hat{P}_x$, estimate the   drift term by
\begin{align}\label{Formula: generic drift term estimator observed}
\hat{\mu}^{(\texttt{o})}(x) &=\hat{P}_x\hat{\mu}_E(x)\in \mathbb{R}^{p}\,,
\end{align}
where
\begin{align} 
\hat{\mu}_E(x):=\frac{1}{\Delta} \frac{\sum_{k=0}^{n-1}\mathcal{K}\left(\frac{\mathcal{D}_x(x_{k})}{h} \right)(x_{k+1}-x_{k})}{\sum_{k=0}^{n-1}\mathcal{K}\left(\frac{\mathcal{D}_x(x_{k})}{h} \right)}\in \mathbb{R}^{p}\,.
\end{align}
 }

When $M=\mathbb{R}^p$, our problem reduces to that studied in \cite{bandi_moloche_2018}. Indeed, choosing {$\hat{P}_x=I_{p\times p}$} makes the estimators in  \eqref{Formula: generic diffusion term estimator observed} and \eqref{Formula: generic drift term estimator observed}  coincide with those of \cite{bandi_moloche_2018}. 
In the manifold setting, particularly when $d<p$ and $M$ is nonlinear, the situation differs. Although the diffusion process can be embedded in Euclidean space, the estimator of \cite{bandi_moloche_2018}, denoted $\hat{\mu}_E$, targets the drift of the embedded process $Z_t$ rather than that of $X_t$; see \eqref{Intro section ito-form}. A simple illustration is the curve $M = \{(x,y): y = x^2 \}$. Driftless Brownian motion on $\mathbb{R}$ lifts naturally to $M$, yet a naive drift estimate at $(0,0)$ acquires a positive vertical component normal to $M$, even when geodesic distances on $M$ are used. To correct curvature-induced bias, a projection matrix $\hat{P}_x$ is necessary.

\begin{remark} \label{geodesic-distance-remark}
Recall that when $x$ and $x'$ are sufficiently close, the Euclidean distance $\|\iota(x')-\iota(x)\|_{\mathbb{R}^p}$ provides a good approximation to the geodesic distance \cite{malik2019connecting}{; that is, when $x'=\exp_x(t\theta)$, $\|\theta\|=1$, and $t$ sufficiently small, we have
\[
\|\iota(x)-\iota(x')\|_{\mathbb{R}^q}=t+\frac{1}{6}t^3\iota_*\theta^\top\nabla_\theta\texttt{I\!\!I}_x(\theta,\theta)+O(t^4)\,,
\]
where $\texttt{I\!\!I}_x$ is the second fundamental form at $x$.} More accurate approximations can be obtained through higher-order corrections involving estimating the second fundamental form \cite{malik2019connecting}, and in some settings geodesic information may even be available. 
However, the main difficulty arises not from estimating geodesic distances but from analyzing the increment $x_{k+1}- x_k$. In general, $x_{k+1}- x_k$ does not lie in $\iota_*T_{x}M$  and contains a normal component, which is the primary source of bias. For this reason, we do not pursue higher-order local approximations of the geodesic distance here. 
\end{remark}

\section{Asymptotic Analysis}\label{section: theory of proposed algorithm}

In this section, we study the asymptotic behavior of the estimators introduced in Section \ref{section: proposed algorithm}. We begin with intuition and relevant existing results before presenting main results.

Intuitively, since $X_t$ is continuous, our proposed estimators can be viewed as Riemann sum approximations of their target quantities. To be more specific, by It\^o's formula, we have 
\[
x_{k+1}-x_{k}=\iota(X_{(k+1)\Delta})-\iota(X_{k\Delta})=\int_{k\Delta}^{(k+1)\Delta}\mu\iota(X_s)ds+\int_{k\Delta}^{(k+1)\Delta}\sigma_{\alpha}\iota dW^\alpha_s\,, 
\]
and hence the numerator of $\hat{\mu}_E(x)$ becomes 
\[
\frac{1}{h^d}\sum_{k=0}^{n-1}\mathcal{K}\left(\frac{\mathcal{D}_x(x_{k})}{h} \right)\left(\int_{k\Delta}^{(k+1)\Delta}\mu\iota(X_s)ds+\int_{k\Delta}^{(k+1)\Delta}\sigma_{\alpha}\iota dW^\alpha_s\right)\,.
\]
By approximating $\int_{k\Delta}^{(k+1)\Delta}\mu\iota(X_s)ds$ by $\Delta\mu\iota(X_{k\Delta})$, we have:
\begin{align}
\hat{\mu}^{(\texttt{o})}(x) \approx\, &P_x\bigg( \frac{\frac{\Delta}{h^d}\sum_{k=0}^{n-1}\mathcal{K}\left(\frac{\mathcal{D}_x(x_{k})}{h} \right)\mu\iota(X_{k\Delta})}{\frac{\Delta}{h^d}\sum_{k=0}^{n-1}\mathcal{K}\left(\frac{\mathcal{D}_x(x_{k})}{h} \right)}+ \frac{\frac{1}{h^d}\sum_{k=0}^{n-1}\mathcal{K}\left(\frac{\mathcal{D}_x(x_{k})}{h} \right)\int_{k\Delta}^{(k+1)\Delta}\sigma_{\alpha}\iota dW^\alpha_s}{\frac{\Delta}{h^d}\sum_{k=0}^{n-1}\mathcal{K}\left(\frac{\mathcal{D}_x(x_{k})}{h} \right)}\bigg)\nonumber\,,
\end{align}
where the summation is the Riemann sum approximation; that is,
\begin{align}
\frac{\frac{\Delta}{h^d}\sum_{k=0}^{n-1}\mathcal{K}\left(\frac{\mathcal{D}_x(x_{k})}{h} \right)\mu\iota(X_{k\Delta})}{\frac{\Delta}{h^d}\sum_{k=0}^{n-1}\mathcal{K}\left(\frac{\mathcal{D}_x(x_{k})}{h} \right)}\approx \frac{\frac{1}{h^d}\int_0^T\mathcal{K}\left(\frac{\mathcal{D}_x(X_s)}{h} \right)\mu\iota(X_{s})ds}{\frac{1}{h^d}\int_0^T\mathcal{K}\left(\frac{\mathcal{D}_x(X_s)}{h}\right)ds} \label{equation Lhat approximat additive functional}\,.
\end{align}
The first term in \eqref{equation Lhat approximat additive functional} is the ratio of two integrations involving $X_t$. Recall that for bounded measurable $f\geq 0$, an $\mathbb{R}_+\cup\{\infty\}$ valued, $\mathcal{F}_t$-adapted process $A_t:=\int_0^t f(X_s) ds$, $t\geq 0$, with $A_0=0$, is called an {\em additive functional}. See Section \ref{section LL embedding} for more details. By definition, the denominator $\frac{1}{h^d}\int_0^T\mathcal{K}\left(\frac{\mathcal{D}_x(X_s)}{h}\right)ds$ is an additive functional, and the numerator can be handled similarly using additive functional properties after proper manipulations. In the proof, we quantify these approximations and show that this term, combined with $P_x$ approximates the desired drift. The second term in \eqref{equation Lhat approximat additive functional} involves stochastic integration, $\frac{1}{h^d}\sum_{k=0}^{n-1}\mathcal{K}\left(\frac{\mathcal{D}_x(x_{k})}{h} \right)\int_{k\Delta}^{(k+1)\Delta}\sigma_{\alpha}\iota dW^\alpha_s$, which we control via controlling its quadratic variation using its martingale property. In the end, we show that it converges to a normal distribution.
Since all proposed estimators share a similar structure, while some might be more complicated than others, the analysis reduces to controlling the Riemann sum approximation when $\Delta\to 0$, quantifying the behavior of the integrals when $h\to 0$ and $T\to \infty$, and analyzing the asymptotics of the stochastic integral terms. 
The ratio of two integrations involving $X_t$ reminds us of the ratio-limit theorem. 

\begin{theorem}[Ratio Limit Theorem]\cite[Theorem 1.7]{limit_theorems_null}
\label{ratio-limit0SUPP}
Suppose Assumptions \ref{manifold-ass} and \ref{manifold-ass2} hold. For any Borel-measurable, positive, and $\phi_X$-integrable $f,g: M \rightarrow \mathbb{R}$ such that $0<\langle \phi_X, g \rangle_M:=\int_M g(x)\phi_X(x) <\infty$, we have
\begin{align}
\mathbb{P}_x \left( \lim_{T \rightarrow \infty} \frac{\int_0^T f(X_s) \, ds}{\int_0^T g(X_s)\,ds} = \frac{\langle \phi_X, f \rangle_M}{\langle \phi_X, g \rangle_M}\right) = 1
\end{align}
for all $x \in M$. Moreover,  
\begin{align}
\lim_{T \rightarrow \infty} \frac{\mathbb{E}_x\int_0^T f(X_s) \, ds}{\mathbb{E}_x\int_0^T g(X_s)\,ds} = \frac{\langle \phi_X, f \rangle_M}{\langle \phi_X, g \rangle_M} 
\end{align}
$\phi_X$-a.s., where the exceptional set depends on $f$ and $g$.
\end{theorem}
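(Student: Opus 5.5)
This is the classical Chacon--Ornstein type ratio limit theorem for Harris recurrent processes, cited from \cite[Theorem 1.7]{limit_theorems_null}. The plan is to reduce it to the discrete-time ratio limit theorem via regeneration, and only check that Assumptions \ref{manifold-ass} and \ref{manifold-ass2} supply its hypotheses. Under these assumptions $X_t$ is a Harris recurrent strong Markov process on $M$: this is automatic when $M$ is compact, and is Assumption \ref{manifold-ass2}(4) otherwise. It has a unique invariant measure $\phi_X$ up to scaling, and conservation ensures the additive functionals $A^f_T=\int_0^T f(X_s)\,ds$ are defined for all $T$.

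\textbf{Step 1: regeneration structure.} Pass to the resolvent chain, $R(x,\cdot)=\int_0^\infty e^{-t}P_t(x,\cdot)\,dt$. By the strong Feller property and the positivity of the kernel $p_t$, this chain is $\psi$-irreducible and Harris recurrent, and it admits a small set with a minorization. Nummelin splitting then produces a recurrent atom for an extended process, in the spirit of the Löcherbach--Loukianova construction referenced above. This gives regeneration times $0<R_1<R_2<\cdots\to\infty$ such that the blocks $\xi_k^f=\int_{R_k}^{R_{k+1}}f(X_s)\,ds$ are i.i.d. (or one-dependent, in the splitting) for $k\ge1$. Moreover, Kac's formula identifies $\mathbb{E}\,\xi_k^f=c\,\langle\phi_X,f\rangle_M$ with the same constant $c$ for every $f$.

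\textbf{Step 2: almost sure ratio limit.} Let $N_T$ be the number of regenerations up to time $T$. Bound $A^f_T$ between $\sum_{k<N_T}\xi^f_k$ and $\sum_{k\le N_T}\xi^f_k$, plus an initial segment that is a.s.\ finite. The strong law of large numbers applied to the blocks gives $\frac1{N}\sum_{k\le N}\xi_k^f\to c\langle\phi_X,f\rangle_M$ a.s. In the one-dependent case, split into even and odd indices first. Since $\langle\phi_X,g\rangle_M>0$, the same holds for $g$ with a positive limit. Taking the ratio and using $N_T\to\infty$, the boundary blocks become negligible relative to the partial sums, because $\max_{k\le N}\xi_k/N\to0$ a.s. This yields the first display for every starting point $x$, by Harris recurrence. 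The case $\phi_X(M)=\infty$ needs no change, since the argument only uses integrable $f$ and $g$.

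\textbf{Step 3: ratio of expectations.} This is the delicate part, since almost sure convergence does not pass to expectations without domination. I would follow the approach of \cite{limit_theorems_null} and use renewal theory. Write $\mathbb{E}_x A^f_T=\mathbb{E}_x A^f_{R_1\wedge T}+\sum_k\mathbb{E}_x[\xi_k^f;R_k\le T]$. Using the independence of $\xi^f_k$ from $\{R_k\le T\}$, the sum is essentially $c\langle\phi_X,f\rangle_M\,\mathbb{E}_x N_T$, up to a boundary term. The ratio then tends to the claimed limit provided the first-block term $\mathbb{E}_x A^f_{R_1}$ is finite and negligible compared with $\mathbb{E}_x N_T\to\infty$. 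Finiteness holds only for $\phi_X$-almost every $x$, because $\int \mathbb{E}_x A^f_{R_1}\,d(\text{atom-exit law})<\infty$. This is the source of the $f,g$-dependent exceptional set, and controlling this initial-block term is the main obstacle.
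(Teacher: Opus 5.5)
The paper does not prove this statement. It is imported verbatim from \cite[Theorem 1.7]{limit_theorems_null} (going back to Az\'ema--Duflo--Revuz), and the only input is Harris recurrence of $X_t$ under Assumptions \ref{manifold-ass}--\ref{manifold-ass2}, which your opening paragraph checks correctly. Your regeneration argument therefore goes beyond what the paper does. Steps 1--2 (split chain, SLLN along even and odd blocks, sandwiching $A^f_T$ between partial block sums) are sound for the almost-sure part when $f,g$ are bounded, which is the only case the paper ever uses. For unbounded $f$, however, your claim that the initial segment $\int_0^{R_1}f(X_s)\,ds$ is a.s.\ finite from \emph{every} starting point is false. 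Take Brownian motion on a compact $3$-manifold and the positive, $\phi_X$-integrable $f=d_g(\cdot,x_0)^{-2}$: then $\int_0^t f(X_s)\,ds=\infty$ $\mathbb{P}_{x_0}$-a.s.\ for every $t>0$. So ``for all $x\in M$'' in the first display really needs bounded $f,g$ (or finite additive functionals), and otherwise can only be guaranteed off a $\phi_X$-null set.

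The step that fails as written is in Step 3.
\begin{itemize}
\item \textbf{Independence.} In the split (L\"ocherbach--Loukianova) construction, the block $\xi^f_k=\int_{R_k}^{R_{k+1}}f(X_s)\,ds$ is \emph{not} independent of $\{R_k\le T\}$. The gap $R_k-S_k$ is a bridge time with law $e^{-t}p_t(Z^1_{S_k},Z^3_{S_k})/u^1(Z^1_{S_k},Z^3_{S_k})\,dt$, and $Z^3_{S_k}=X_{R_k}$ is the starting point of the $k$-th block. Independence holds only for $\{S_k\le T\}\in\mathsf F_{S_k-}$, which is exactly why the paper sums $\mathbf{1}\{S_m\le T\}U_m$ in the proof of Lemma \ref{A1-analog}. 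With $N_T:=\#\{k\ge1:S_k\le T\}$ this gives the upper bound $\mathbb{E}_xA^f_T\le\mathbb{E}_xA^f_{R_1}+c\langle\phi_X,f\rangle_M\,\mathbb{E}_xN_T$.
\item \textbf{Terminal blocks.} The lower bound requires discarding the (at most two) blocks $\xi_{N_T-1},\xi_{N_T}$ that may straddle $T$. These are size-biased, and when $\phi_X(M)=\infty$ the cycle lengths have infinite mean, so ``up to a boundary term'' is not automatic. A truncation settles it: $\mathbb{E}_x[\xi_{N_T};\xi_{N_T}>K]\le\mathbb{E}[\xi^f_1;\xi^f_1>K]\,\mathbb{E}_xN_T$, so these terms are $o(\mathbb{E}_xN_T)$.
\item \textbf{First block.} Your justification of $\mathbb{E}_xA^f_{R_1}<\infty$ for $\phi_X$-a.e.\ $x$ is insufficient. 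Integrating against the regeneration law $\xi$, which lives on the compact set $C$, gives finiteness only $\xi$-a.e. You need the standard fact that $\{x:\mathbb{E}_xA^f_{R_1}<\infty\}$ is absorbing for the resolvent chain, hence of full $\phi_X$-measure by irreducibility.
\end{itemize}
With these repairs, and $\mathbb{E}_xN_T\uparrow\infty$ by monotone convergence, the ratio of expectations converges off an $(f,g)$-dependent null set.
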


This theorem shows that, in the long-time limit, an additive functional behaves like an integral over the manifold, {and it links our analysis with that in manifold learning via \eqref{equation Lhat approximat additive functional}}. In other words, the effect of the initial condition is effectively ``washed out'' asymptotically, linking the result to the ergodic theorem, with $\int_0^T g(X_s)\,ds$ acting as a random clock counting effective time. See Section \ref{section: Long-Time Asymptotics for Additive Functionals} for further details.

To quantify the asymptotic behavior of additive functionals, {we need to know} how often the system $X_t$ returns to a given state and how the time spent in a region scales over long periods. {When $X_t$ is Harris recurrent, these quantities} follow a well-defined probabilistic pattern that typically resembles a stable distribution rather than the usual bell-curve behavior. {This fact is quantified in the following general Darling-Kac theorem. To state this theorem, we need} the notion of a function that, while not necessarily bounded, behaves asymptotically like a constant.

\begin{definition} \label{reg-varying}
Let $\ell: \mathbb{R}^+ \rightarrow \mathbb{R}^+$. 
We say that $\ell$ is \textit{regularly varying at infinity with index $\alpha\in \mathbb{R}$} if for all $\zeta > 0$, $ \lim_{x \rightarrow \infty} \frac{\ell(\zeta x)}{\ell(x)} = \zeta^\alpha$.
If $\alpha=0$, $\ell(x)$ is called \textit{slowly varying at infinity}. We similarly say that $\ell$ is \textit{regularly varying at zero with index $\alpha$} if for all $\zeta > 0$, 
$ \lim_{x \rightarrow 0} \frac{\ell(\zeta x)}{\ell(x)} = \zeta^\alpha$.
\end{definition}

Denote $D(\mathbb{R}_+,\mathbb{R})$ to be the Skorohod space with Borel $\sigma$-algebra and canonical filtration. Note that the classical Darling-Kac theorem is the necessary direction of the following theorem, {so we call it the general Darling-Kac theorem.} 

\begin{theorem}[General Darling-Kac Theorem]\cite[Theorem 3.15]{limit_theorems_null}\label{Theorem Darling-Kac Theorem0}
Suppose Assumptions \ref{manifold-ass} and \ref{manifold-ass2} hold. (a) The following two statements are equivalent.
\begin{enumerate}
\item For every non-negative Borel-measurable function $g$ with $0<\langle\phi_X,g\rangle_M<\infty$, one has regular variation at $0$ of resolvants in $X_t$:
\begin{align}
(R_{1/T}g)(x):=\mathbb{E}_x\left[\int_0^\infty \exp\left( \frac{-t}{T}\right)g(X_t) dt \right] \sim  \langle\phi_X,g\rangle_M\Upsilon(T)  \label{resolvent-at-zero}
\end{align}
as $T\to \infty$ for $p_X$-almost all $x\in M$ (the exceptional set depends on $f$), 
where $\Upsilon(T):=U(T)T^\alpha$, $\alpha \in (0,1]$, and $U:\mathbb{R}^+\to\mathbb{R}^+$ is slowly varying at infinity.  
\item For any additive functional $A_t=\int_0^t f(X_s) ds$ of $X_t$ with $0<\mathbb{E}_{\phi_X}(A_1)<\infty$, one has weak convergence   
\begin{align*}
  \frac{(A_{sT})_{s\geq 0}}{\Upsilon(T)}  \xrightarrow[]{\hspace{0.1cm}d\hspace{0.1cm}} \mathbb{E}_{\phi_X}(A_1) g_\alpha(s)
\end{align*}
in $D(\mathbb{R}_+,\mathbb{R})$ as $T \rightarrow \infty$ under $\mathbb{P}_x$ for all $x\in M$, where $g_\alpha$ is the Mittag-Leffler process of index $\alpha$.
\end{enumerate}
(b) The cases in (a) are the only ones that the weak convergence of $\frac{(A_{sT})_{s\geq 0}}{\upsilon(T)}$ to a continuous nondecreasing limit process $\varpi$ so that $\varpi_0=0$ and the law of $\varpi_1$ is not degenerate at $0$, is available for some norming function $\upsilon$.
\end{theorem}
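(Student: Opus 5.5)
This statement is quoted from \cite[Theorem 3.15]{limit_theorems_null}. The task therefore splits into two parts. First, check that our manifold-valued diffusion satisfies the hypotheses of that abstract theorem. Second, recall the structure of its proof, which goes through the Nummelin splitting / regeneration embedding. The hypotheses of the abstract result are that $X_t$ is a Harris recurrent strong Markov process on a Polish (locally compact) state space with invariant measure $\phi_X$. We verify these first. $\widehat M$ is a compact metrizable space and $M$ is locally compact and separable. Strong Markov follows from \cite[IX]{Elworthy1982sde}. Non-explosion is Assumption~\ref{manifold-ass2}(1). Harris recurrence is Assumption~\ref{manifold-ass2}(4) in the noncompact case, and \cite{meyn1993stability} in the compact case. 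Strong Feller and a smooth positive kernel $p_t(x,y)$ come from Assumption~\ref{manifold-ass2}(2)--(3), or from H\"ormander/uniform ellipticity when $M$ is compact. With this kernel, the sampled chain $(X_{k})_{k\ge0}$ satisfies a minorization condition $P_1(x,\cdot)\ge \beta\,\mathbf 1_C(x)\,\nu(\cdot)$ on a small compact set $C$ of positive volume. This is what lets us build a recurrent atom for an extended process by Nummelin splitting, as in \cite{LOCHERBACH20081301}.

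Next we reduce to a renewal problem. Using the atom, we decompose the path into i.i.d. excursions (life cycles) between successive regeneration times $R_1<R_2<\cdots$. The additive functional then becomes a sum of i.i.d. increments $\xi_j=A_{R_{j+1}}-A_{R_j}$. By Kac's formula $\mathbb E\xi_j=c\,\mathbb E_{\phi_X}(A_1)$ with the normalization of $\phi_X$ fixed accordingly, and the ratio limit theorem (Theorem~\ref{ratio-limit0SUPP}) lets us replace $A$ by the counting process $N_T=\#\{j:R_j\le T\}$, up to a factor tending a.s. to $\mathbb E_{\phi_X}(A_1)/c$. So (a)(2) is equivalent to the statement that $N_{sT}/\Upsilon(T)$ converges to a multiple of the Mittag-Leffler process.

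For (1)$\Rightarrow$(2), the resolvent $R_{1/T}g$ equals, up to the constant $\langle\phi_X,g\rangle_M$ and a negligible error, the Laplace transform of the renewal measure of $(R_j)$ at $1/T$, namely $1/(1-\mathbb E e^{-(R_2-R_1)/T})$. Regular variation of this quantity with index $\alpha$ is equivalent, by Karamata's Tauberian theorem, to the tail $\mathbb P(R_2-R_1>t)$ being regularly varying with index $-\alpha$. For $\alpha<1$ this puts the inter-regeneration times in the domain of attraction of an $\alpha$-stable subordinator; for $\alpha=1$ it gives relative stability. The inverse of the stable subordinator is the Mittag-Leffler process, and functional convergence in $D(\mathbb R_+,\mathbb R)$ follows from continuity of the inverse map on monotone paths. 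For (2)$\Rightarrow$(1), we run the argument backwards: one-dimensional convergence of $N_T/\Upsilon(T)$ forces regular variation of $\mathbb E N_T$ by moment convergence, since Mittag-Leffler moments are determined. This is transferred back to the resolvent via Abelian theorems.

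Part (b) is where I expect the main obstacle. Suppose some norming $\upsilon$ gives a nondegenerate continuous limit. By the regeneration structure the limit must be the inverse of a subordinator that is the scaling limit of the renewal times. Hence it is self-similar, which forces $\upsilon$ to be regularly varying with some index $\alpha$. Monotonicity together with $\varpi_1\not\equiv 0$ excludes $\alpha\le 0$, and the law of large numbers bounds $\alpha\le 1$. Identifying the limit as necessarily stable-inverse requires the convergence-of-types and Lamperti arguments. In practice I would cite \cite[Theorem 3.15]{limit_theorems_null} for this step, and only verify that the splitting construction above yields the hypotheses needed there.
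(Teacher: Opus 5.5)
Your proposal is correct and follows the paper. The paper gives no argument of its own and simply invokes \cite[Theorem 3.15]{limit_theorems_null} once Assumptions~\ref{manifold-ass}--\ref{manifold-ass2} supply a non-explosive, strong Markov, Harris recurrent diffusion with invariant measure $\phi_X$ (automatic when $M$ is compact), which is exactly the hypothesis check you carry out. If you keep the recalled sketch of (a), three points are misstated: the cycle increments $\xi_j$ are in general only one-dependent rather than i.i.d.\ (Proposition~\ref{prop: summary of Rn and Sn}(5)), so the renewal Laplace-transform identity belongs to the split discrete chain, whose excursions are genuinely i.i.d.; the Karamata equivalence with a tail of index $-\alpha$ holds only for $\alpha<1$, and for $\alpha=1$ the correct condition is slow variation of $\int_0^t\mathbb{P}(R_2-R_1>s)\,ds$; and getting $\mathbb{E}N_T\sim c\,\Upsilon(T)$ from weak convergence of $N_T/\Upsilon(T)$ needs uniform integrability, or else inversion to a stable limit for the regeneration times plus the domain-of-attraction criterion.
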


See Section \ref{section: Mittag-Leffler process} for a quick review of Mittag-Leffler process. Call $\Upsilon:\mathbb{R}^+\to \mathbb{R}^+$ the {\em scaling factor} of $X_t$. When the manifold is compact, $X_t$ is positive Harris recurrent and we have $\Upsilon(T)=T$, or $\alpha=1$, and $U(T)=1$.

\subsection{Further model assumptions}

Denote the densities of $X_t$ and $X_s$ and the corresponding joint density as $\tilde{p}_t(\cdot)$, $\tilde{p}_s(\cdot)$, and $\tilde{p}_{s,t}(\cdot,\cdot)$ respectively, and denote
\begin{equation}\label{definition of tilde g st ab}
\tilde{g}_{s,t}(a,b) = \tilde{p}_{s,t}(a,b) - \tilde{p}_s(a)\tilde{p}_t(b)\,,
\end{equation}
where $a,b\in M$,
which quantifies the level of path-dependence of the process $X_t$. We make the following assumption about $\Upsilon$'s behavior near $0$, {which allows us to apply Theorem \ref{Theorem Darling-Kac Theorem0},} and the regularity of the path-dependence of the process via the scaling factor $\Upsilon$.

\begin{assumption}
\label{reg-ass} 
The scaling factor $\Upsilon$ of $X_t$ is regularly varying at 0 with index {$\alpha \in (0,1]$}. Moreover,  
$\displaystyle\lim_{T \rightarrow \infty} \Upsilon(T)^{-1} \iint_{[0,T]^2} \|\tilde{g}_{s,t}\|_r \, ds \, dt < \infty$ for some $r\in [1,\infty]$.\end{assumption}

In other words, the first part of Assumption \eqref{reg-ass} says that Theorem \ref{Theorem Darling-Kac Theorem0}(a)(1) holds.
We impose the following assumption for $\phi_X$, which allows us to carry out asymptotic analysis on $M$. 
\begin{assumption}\label{lebesgue-dens-ass} 
The invariant measure $\phi_X$ associated with $X_t$ is absolutely continuous associated with the Riemannian measure of the manifold $M$ and admits a strictly-positive density function $p_X\in C^3(M)$ so that $\phi_X(dx) = p_X(x) dV_g(x)$ by Radon-Nikodym theorem. 
\end{assumption}

\subsection{Sampling and algorithm assumptions}
Our final set of assumptions concerns the data sampling scheme and algorithm. 
Regarding the kernel function $\mathcal{K}$, we make the following assumption.

\begin{assumption} \label{kernel-ass}
The kernel function $\mathcal{K} \in C^3(\mathbb{R})$ is nonnegative and compactly supported on $[0,L]$, where $L>0$. Denote $\kappa_{p,q}=\int_{\mathbb{R}^d}\mathcal{K}(\|u\|)^p \|u\|^q du$, where $p\in \mathbb{N}$ and $q\in \{0\}\cup\mathbb{N}$, and assume $\kappa_{1,0}=1$.
\end{assumption}

The assumption $\kappa_{1,0}=1$ can be easily achieved by a direct normalization.
Finally, we make an assumption regarding the sampling period $\Delta$ and bandwidth $h$, which depends on the scaling factor $\Upsilon$ of $X_t$.

\begin{assumption}
\label{reg-ass2} 
Assume $\Delta=\Delta(T)\to 0$, $h = h(T) \rightarrow 0$, and $\Upsilon(T)h^{2d/r}\to \infty$, where $r$ is from Assumption \ref{reg-ass}, as $T \rightarrow \infty$. 
\end{assumption}

This assumption plays a crucial role in our analysis. Letting $T\to\infty$ allows us to leverage the equilibrium behavior of the dynamics to estimate both drift and diffusion terms, while $\Delta\to0$ provides sufficiently fine temporal resolution to capture the dynamics accurately, often referred to as {\em infilling} asymptotics. 

\subsection{Asymptotic result for proposed estimators}

Recall notation in Section \ref{section: proposed algorithm}.
We start with the asymptotic behavior of $\hat{L}^{(\texttt{o})}(x)$, which is stated in the following Theorem and its proof is postponed to Section \ref{Section lemmas about occupation density}. 

\begin{theorem}[Occupation density estimator]
\label{occ-density-theorem} 
Assume Assumptions  \ref{manifold-ass}-\ref{reg-ass2} hold. If $\frac{\Delta}{h^2} =o(1)$ as $T \rightarrow \infty$, we have 
\begin{align}
   \frac{\hat{L}^{(\texttt{o})}(x)}{\Upsilon(T)} \overset{d}{\longrightarrow} g_\alpha(1)p_X(x)\ \mbox{ and }\ \mathbb{E}^M_\lambda(\hat{L}^{(\texttt{o})}(x)) = \Theta(\Upsilon(T))\,.
 \label{hat{L}-prob-order}
\end{align}
\end{theorem}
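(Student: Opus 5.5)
The plan is to split $\hat L^{(\texttt{o})}(x)$ into a continuous-time additive functional plus a Riemann-sum error, handle the first with Theorem \ref{Theorem Darling-Kac Theorem0}, and show the second is negligible using $\Delta/h^2=o(1)$. Write $f_h(y):=h^{-d}\mathcal{K}(\mathcal{D}_x(y)/h)$ and
\[
\hat{L}^{(\texttt{o})}(x)=\int_0^{n\Delta} f_h(X_s)\,ds+\sum_{k=0}^{n-1}\int_{k\Delta}^{(k+1)\Delta}\big(f_h(X_{k\Delta})-f_h(X_s)\big)\,ds=:A^h_{T}+R_T .
\]

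\textbf{Step 1: the additive functional $A^h_T$.} Since $\iota$ is an isometric embedding, $\mathcal{D}_x$ agrees with $d_g(x,\cdot)$ up to $O(d_g^3)$ by Remark \ref{geodesic-distance-remark}. Change to normal coordinates around $x$ and use $p_X\in C^3$ (Assumption \ref{lebesgue-dens-ass}) and $\kappa_{1,0}=1$. This gives
\[
\langle\phi_X,f_h\rangle_M=p_X(x)+O(h^2).
\]
For a fixed $h$, Theorem \ref{Theorem Darling-Kac Theorem0}(a)(2), applicable by Assumption \ref{reg-ass}, yields $A^h_T/\Upsilon(T)\to\langle\phi_X,f_h\rangle_M\, g_\alpha(1)$ in distribution. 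Here $h=h(T)$ moves with $T$, so a uniformity argument is needed.

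\textbf{Step 2: uniformity in $h$.} I would compare $A^h_T$ with $A^{g}_T=\int_0^T g(X_s)\,ds$ for a fixed reference function $g$, say a smooth bump with $\langle\phi_X,g\rangle_M=1$. The target is
\[
\frac{A^h_T-\langle\phi_X,f_h\rangle_M A^g_T}{\Upsilon(T)}\xrightarrow{p}0 .
\]
I would bound its second moment by a double integral of the form $\iint_{[0,T]^2}$ and split the joint density as $\tilde p_{s,t}=\tilde p_s\tilde p_t+\tilde g_{s,t}$.
\begin{itemize}
\item For the $\tilde g_{s,t}$ part, Hölder's inequality gives $\|f_h\|_{r'}\lesssim h^{-d(1-1/r')}=h^{-d/r}$ in squared form, i.e.\ the product is of order $h^{-2d/r}$. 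Assumption \ref{reg-ass} then bounds this part by $\Upsilon(T)h^{-2d/r}$, which is $o(\Upsilon(T)^2)$ by Assumption \ref{reg-ass2}.
\item For the product part, the ratio-limit expectation statement (Theorem \ref{ratio-limit0SUPP}) reduces it to $\big(\mathbb{E}(A^h_T)-\langle\phi_X,f_h\rangle_M\,\mathbb{E}(A^g_T)\big)^2$, which is negligible. This uses the bounded initial density and the compact support in Assumption \ref{manifold-ass2} to pass from $\mathbb{P}_x$ to $\mathbb{P}_\lambda$.
\end{itemize}
Once this holds, Slutsky's lemma combined with Step 1 applied to $g$ gives the distributional limit. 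The gap $T-n\Delta<\Delta$ contributes $O(\Delta h^{-d})$ before normalization, which is negligible.

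\textbf{Step 3: the discretization error $R_T$.} I would apply Itô's formula to $f_h(X_s)-f_h(X_{k\Delta})$. The derivatives satisfy $\|\nabla f_h\|\lesssim h^{-d-1}$ and $\|\nabla^2 f_h\|\lesssim h^{-d-2}$, and both are supported in a ball of radius $\approx Lh$. Expand over a small window of length $\Delta$:
\begin{itemize}
\item The drift and second-order terms give at most $\Delta\,h^{-2}\sum_k\Delta\, h^{-d}\mathbb 1_{B(x,Lh)}(X_{k\Delta})$. This is $(\Delta/h^2)\,O_p(\Upsilon(T))$ by Step 1 applied to the indicator kernel.
\item The martingale term has quadratic variation of order $\Delta^2 h^{-2}$ times the same occupation sum, hence of order $\Delta^2h^{-2}h^{-d}\Upsilon(T)$. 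Its square root, divided by $\Upsilon(T)$, tends to $0$ because $\Delta/h^2\to0$ and $\Upsilon h^{2d/r}\to\infty$ with $r\ge1$ (so $\Upsilon h^{d}\to\infty$ suffices after checking the exponents).
\end{itemize}
So $R_T/\Upsilon(T)\to0$ in probability.

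\textbf{Step 4: the expectation.} The claim $\mathbb{E}_\lambda\hat L^{(\texttt{o})}(x)=\Theta(\Upsilon(T))$ would follow from the resolvent asymptotics (\ref{resolvent-at-zero}) applied to $f_h$. Regular variation of $\Upsilon$ lets me pass from the resolvent to $\mathbb{E}\int_0^T$ via a Karamata Tauberian argument. Adding the bound on $\mathbb{E}|R_T|$ from Step 3 finishes this part.

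\textbf{Main obstacle.} I expect Step 2 to be the hardest: making the convergence uniform in the vanishing bandwidth. The exceptional sets in Theorem \ref{ratio-limit0SUPP} depend on the function, so that theorem cannot be used pathwise with $h$ moving. This is exactly where Assumption \ref{reg-ass} on $\tilde g_{s,t}$ and the rate $\Upsilon(T)h^{2d/r}\to\infty$ must carry the argument.
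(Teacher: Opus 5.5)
Your architecture matches the paper's Lemmas \ref{A2-analog} and \ref{A1-analog}: an It\^o bound on the Riemann-sum error, a second-moment bound through $\tilde g_{s,t}$, and a distributional limit from Theorem \ref{Theorem Darling-Kac Theorem0}. However, the step meant to handle the moving bandwidth does not work as written.

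\textbf{The gap in Step 2.} The $\tilde g_{s,t}$ part is fine and uniform in $h$, because only $\|f_h-c_hg\|_{r'}\lesssim h^{-d/r}$ enters. The problem is the centering term $\mathbb{E}_\lambda A^h_T-c_h\,\mathbb{E}_\lambda A^g_T$, where $c_h:=\langle\phi_X,f_h\rangle_M$. The expectation form of Theorem \ref{ratio-limit0SUPP} concerns a fixed pair of functions and gives no rate. For each fixed $h$ it yields $\mathbb{E}_\lambda A^h_T/\mathbb{E}_\lambda A^g_T\to c_h$, but this says nothing along $h=h(T)\to 0$. Since $f_h$ concentrates at $x$, even vague convergence of the normalized expected occupation measures would not suffice without an equicontinuity estimate at $x$. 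So the obstacle you correctly flag for the pathwise ratio limit theorem is moved to the level of expectations, not removed. Step 4 has the same defect: \eqref{resolvent-at-zero} and Karamata's theorem are applied to a function that changes with $T$.

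\textbf{The missing ingredient.} What is needed is the regeneration structure of the L\"ocherbach--Loukianova embedding, which the paper uses in Claim \ref{mathcal{B}-conv}.
\begin{itemize}
\item For every nonnegative $\phi_X$-integrable $f$, each full cycle satisfies $\mathbb{E}_\lambda\int_{R_m}^{R_{m+1}}f(X_s)\,ds=C_X\langle\phi_X,f\rangle_M$ (Proposition \ref{2.20and4.3}).
\item The event $\{S_m\le T\}\in\mathsf F_{S_m-}$ is independent of the path after $R_m$. Hence $\mathbb{E}_\lambda A^f_T$ equals $C_X\langle\phi_X,f\rangle_M\,\mathbb{E}_\lambda N_T$ plus two boundary terms.
\item The first-cycle term is bounded using the bounded density of $\lambda$ relative to $\xi$ on $\mathsf C$. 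The last-cycle term is handled as in Claim \ref{mathcal{B}-conv}.
\end{itemize}
Apply this to both $f_h$ and $g$. The $\mathbb{E}_\lambda N_T$ terms cancel exactly, giving $\mathbb{E}_\lambda A^h_T-c_h\,\mathbb{E}_\lambda A^g_T=o(\Upsilon(T))$ uniformly in $h$. This closes Step 2. Step 4 then follows by running the Tauberian argument only for the fixed $g$. With this supplied, using a fixed reference functional $A^g_T$ as the random clock is a legitimate variant of the paper's route. It also spares you the separate limit theorem for $N_T/\Upsilon(T)$ (Lemma \ref{Premathcal{B}-conv}).

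\textbf{Two smaller points in Step 3.}
\begin{itemize}
\item On $[k\Delta,s]$ the It\^o integrands are supported where $X_v$, not $X_{k\Delta}$, lies in $B(x,Lh)$. The bound should therefore use the continuous occupation $\Delta h^{-2}\int_0^T h^{-d}\mathbf{1}_{B(x,Lh)}(X_v)\,dv$.
\item $\Upsilon(T)h^{2d/r}\to\infty$ implies $\Upsilon(T)h^{d}\to\infty$ only when $r\le 2$. What the martingale term actually needs is that $h^{d-2}\Upsilon(T)$ stays bounded below. This is the same condition the paper's Lemma \ref{A2-analog} invokes.
\end{itemize}
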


Geometrically, $\hat{L}^{(\texttt{o})}(x)$ is an estimate of the invariance measure associated with $X_t$, which can be viewed as a kernel density estimation of $p_X(x)$ when the samples are dependent and modeled by a diffusion process.

Next, we discuss the diffusion estimator. 
To describe the bias of the estimator $\hat\pi^{(\texttt{o})}(x)$, define 
\begin{align}
 b_\pi^{(\texttt{o})}(x)   :=\kappa_{1,2}\Big( \nabla\pi^{(\texttt{o})}(x) \cdot \nabla \log p_X(x) 
+ \frac{1}{2} \Delta\pi^{(\texttt{o})}(x)  \Big) \label{B_pi_varphi}\,. 
\end{align}
We have the following theorem describing the asymptotic behavior of $\hat{\pi}^{(\texttt{o})}(x)$.

\begin{theorem}[Diffusion estimator]
\label{euclidean-diffusion-estimate-total}
Suppose Assumptions  \ref{manifold-ass}-\ref{reg-ass2} hold. Further suppose that  $\frac{h^d \Upsilon(T)}{\Delta} \xrightarrow{}\infty$, $\frac{h^{d+4}\Upsilon(T)}{\Delta} \xrightarrow[]{} C$  for {a} constant $C>0$, and $ h^{d-4}\Upsilon(T)\Delta \xrightarrow[]{} 0$. When $T\to \infty$, we have
\begin{align}\label{diffusion term convergence Euclidean}
\sqrt{\frac{h^d\hat{L}^{(\texttt{o})}(x)}{\Delta}} (\hat{\pi}^{(\texttt{o})}(x)- \pi^{(\texttt{o})}(x)  - h^2 b_\pi^{(\texttt{o})}(x)) \overset{d}{\longrightarrow}N\big(  0, 2\kappa_{2,0} \pi^{(\texttt{o})}(x) \otimes {\pi}^{(\texttt{o})}(x) \big)\,,
\end{align}
where $\otimes$ is the Kronecker product.
\end{theorem}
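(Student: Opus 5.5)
We decompose the increment with It\^o's formula for the embedded process $Z_t=\iota(X_t)$ in \eqref{Intro section ito-form}. Write $a(z):=\iota_*\nu+\frac12 D_{\iota_*\sigma_\alpha}(\iota_*\sigma_\alpha)$ for the Euclidean drift and $s_\alpha:=\iota_*\sigma_\alpha$. Set $\delta_k:=x_{k+1}-x_k=\int_{k\Delta}^{(k+1)\Delta}a(Z_s)ds+\int_{k\Delta}^{(k+1)\Delta}s_\alpha(Z_s)dW^\alpha_s$. A second application of It\^o's formula, now to the matrix-valued process $(Z_t-x_k)(Z_t-x_k)^\top$ on $[k\Delta,(k+1)\Delta]$, gives
\[
\delta_k\delta_k^\top=\int_{k\Delta}^{(k+1)\Delta}\pi^{(\texttt{o})}(Z_s)\,ds+\int_{k\Delta}^{(k+1)\Delta}\big[(Z_s-x_k)a^\top+a(Z_s-x_k)^\top\big]ds+\mathcal{M}_k ,
\]
where $\mathcal{M}_k:=\int_{k\Delta}^{(k+1)\Delta}\big[(Z_s-x_k)s_\alpha^\top+s_\alpha(Z_s-x_k)^\top\big]dW^\alpha_s$ is a martingale increment. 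We then substitute this into \eqref{Formula: generic diffusion term estimator observed}. After multiplying by $\sqrt{h^d\hat L^{(\texttt{o})}/\Delta}$, the centered estimator splits into three pieces.

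\textbf{(I) Bias.} The first piece is
\[
\frac{\sum_k\mathcal{K}_k\frac1\Delta\int\pi^{(\texttt{o})}(Z_s)ds}{\sum_k\mathcal{K}_k}-\pi^{(\texttt{o})}(x).
\]
We replace the Riemann sums by continuous additive functionals, with error controlled using $\Delta/h^2\to0$ and the modulus of continuity of $X$. By the ratio limit theorem (Theorem \ref{ratio-limit0SUPP}), together with Theorem \ref{occ-density-theorem} for the denominator, this ratio converges to
\[
\frac{\int_M \mathcal{K}(\mathcal{D}_x(y)/h)\pi^{(\texttt{o})}(y)p_X(y)dV_g}{\int_M \mathcal{K}(\mathcal{D}_x(y)/h)p_X(y)dV_g}.
\]
A normal-coordinate expansion, using Remark \ref{geodesic-distance-remark} to swap $\mathcal{D}_x$ for $d_g$ at cost $O(h^3)$, yields $h^2b_\pi^{(\texttt{o})}(x)+O(h^3)$; the odd moments vanish by radial symmetry of $\mathcal{K}$. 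The fluctuation of the additive functional around this limit is of order $(\Upsilon(T)h^{2d/r})^{-1/2}$ by Assumption \ref{reg-ass}, so it is negligible. The condition $h^{d+4}\Upsilon/\Delta\to C$ ensures that $\sqrt{h^d\Upsilon/\Delta}\,O(h^3)\to0$, while the $h^2$ term is kept.

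\textbf{(II) Cross terms.} Since $Z_s-x_k=O_p(\sqrt\Delta)$, the cross terms contribute $O_p(\Delta^{1/2})$ per unit time. Once the drift part of $Z_s-x_k$ is removed, the leading martingale part has zero conditional mean. The remainder is of order $\Delta$, and after scaling it is $\sqrt{h^d\Upsilon/\Delta}\cdot\Delta=\sqrt{h^d\Upsilon\Delta}$, which tends to $0$. The same order of control handles the error from replacing $\pi^{(\texttt{o})}(Z_s)$ by its value at $Z_{k\Delta}$. The hypothesis $h^{d-4}\Upsilon\Delta\to0$ absorbs the remaining mixed terms of size $\Delta/h^2$ times the bias scale.

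\textbf{(III) Martingale CLT, the main step.} Let
\[
N_T:=\frac{\sum_k\mathcal{K}_k\mathcal{M}_k/\Delta}{\sqrt{\sum_k\mathcal{K}_k}} .
\]
Its predictable quadratic variation is
\[
\frac{1}{\Delta^2\sum_k\mathcal{K}_k}\sum_k\mathcal{K}_k^2\int\!\!\int\ldots\approx\frac{\sum_k\mathcal{K}_k^2}{\sum_k\mathcal{K}_k}\cdot 2\,\pi^{(\texttt{o})}\otimes\pi^{(\texttt{o})}(x).
\]
The factor $2\pi\otimes\pi$ comes from the fourth Gaussian moment $\mathbb{E}[(\xi\xi^\top)\otimes(\xi\xi^\top)]-\ldots$ of the locally Gaussian increment, through $\mathbb{E}\int_0^\Delta\!\!\int_0^s\ldots=\Delta^2/2$. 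Applying the ratio limit theorem again gives $\sum\mathcal{K}_k^2/\sum\mathcal{K}_k\to\kappa_{2,0}/\kappa_{1,0}=\kappa_{2,0}$, so the quadratic variation converges in probability to the deterministic limit $2\kappa_{2,0}\pi^{(\texttt{o})}\otimes\pi^{(\texttt{o})}$. This holds even though $\hat L/\Upsilon$ has a random Mittag-Leffler limit: the self-normalization by $\sqrt{h^d\hat L/\Delta}$ cancels that randomness.

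The expected obstacle is here, because the random clock prevents a direct application of a standard CLT. We would handle it by time-changing the martingale by the additive functional $\sum_{j\le k}\mathcal{K}_j$ and invoking the Dambis--Dubins--Schwarz/Rebolledo martingale CLT with mixing-stable convergence. Stable convergence makes the joint limit with $\hat L/\Upsilon$ a Gaussian mixture, and since the normalized statistic has a nonrandom conditional variance, the mixture collapses to the stated normal law. The Lindeberg condition follows from fourth-moment bounds on $\mathcal{M}_k$, namely $O(\Delta^4)$, which is negligible relative to $(h^d\Upsilon)^2$ because $h^d\Upsilon/\Delta\to\infty$.

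Combining (I)--(III) with Slutsky's lemma gives \eqref{diffusion term convergence Euclidean}.
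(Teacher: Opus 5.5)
There is a genuine gap in step (I). You declare the fluctuation of the smoothing term around its deterministic limit negligible, but you never test it at the scale that matters. After multiplying by $\sqrt{h^d\hat L^{(\texttt{o})}(x)/\Delta}$, which is of order $\sqrt{h^d\Upsilon(T)/\Delta}$, that fluctuation must be $o_p\big(\sqrt{\Delta/(h^d\Upsilon(T))}\big)$. Under the stated rates it is not. Multiplying the second and third rate conditions gives
\[
h^{2d}\Upsilon(T)^2=\frac{h^{d+4}\Upsilon(T)}{\Delta}\cdot h^{d-4}\Upsilon(T)\Delta\to C\cdot 0=0,
\]
so $h^d\Upsilon(T)\to0$, and therefore $h^4/\Delta=\big(h^{d+4}\Upsilon(T)/\Delta\big)/\big(h^d\Upsilon(T)\big)\to\infty$.

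Your own bound $(\Upsilon(T)h^{2d/r})^{-1/2}$, once rescaled, becomes $\sqrt{h^{d(1-2/r)}/\Delta}\ge\sqrt{h^d/\Delta}$, which diverges whenever $d\le 4$. This is not just crudeness of the bound. The first-order part of the smoothing term is $\nabla\pi^{(\texttt{o})}(x)\cdot\bar u$, where $\bar u:=\sum_k\mathcal{K}_k u_k/\sum_k\mathcal{K}_k$ and $u_k$ are the normal coordinates of $x_k$ about $x$.
\begin{itemize}
\item Its mean is $O(h^2)$; this is where $\nabla\log p_X$ enters $b_\pi^{(\texttt{o})}$.
\item It also has a nondegenerate random part of order $\sqrt{h^{4-d}/\Upsilon(T)}$.
\end{itemize}
For $d=1$ this is explicit. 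By Tanaka's formula, the local-time increment $\ell^{x+u}_T-\ell^x_T$ contains a martingale of size $\sqrt{|u|\,\ell^x_T}$, which gives $\bar u-\mathbb{E}\bar u\asymp h^{3/2}\Upsilon(T)^{-1/2}$. After rescaling, this term is of order $h^2/\sqrt{\Delta}\to\infty$. On a curved $M$ one generically has $\nabla\pi^{(\texttt{o})}(x)\neq 0$, because $\iota_*T_yM$ turns as $y$ moves.

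So (I) cannot be rescued by a sharper estimate. One needs something like $h^4/\Delta\to0$. Combined with $h^{d+4}\Upsilon(T)/\Delta\to C>0$, that forces $h^d\Upsilon(T)\to\infty$, which contradicts the condition $h^{d-4}\Upsilon(T)\Delta\to 0$ you rely on to absorb the $O_p(\Delta/h^2)$ Riemann-sum error.

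The paper's proof has the same weak point, so you have not overlooked a device it uses. It disposes of $\hat\Pi_3$ ``by the same analysis of $\mathsf B$'' as in Theorem \ref{generalized-drift-est}, i.e.\ it imports $\mathsf B_1=o_p(h^2)$. In the drift setting that claim rests on fluctuation bounds of order $\Upsilon(T)^{-1/2}$ being $o(h^2)$, which holds when $h^{d+4}\Upsilon(T)\to C$. Here it fails: the fluctuation exceeds $h^2$ by the factor $(h^d\Upsilon(T))^{-1/2}\to\infty$.

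Your steps (II) and (III) are sound, and (III) is more accurate than the paper's argument. Because you work with the single martingale increment $\mathcal{M}_k$, you obtain the factor $2$ in $2\kappa_{2,0}\pi^{(\texttt{o})}\otimes\pi^{(\texttt{o})}$. The paper adds the quadratic variations of $\Gamma_2^{q,\sigma}$ and $\Gamma_2^{f,\sigma}$ but drops their cross-variation, so its Theorem \ref{generalized-diff-est} actually yields only $\kappa_{2,0}\Xi_{\iota,\iota}=\kappa_{2,0}\pi^{(\texttt{o})}\otimes\pi^{(\texttt{o})}$. Strictly, the Gaussian fourth-moment computation gives $2\pi^{(\texttt{o})}\otimes\pi^{(\texttt{o})}$ only after half-vectorization; the $\texttt{vec}$ covariance also involves the permutation that swaps the two Kronecker factors.
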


When {$M = \mathbb{R}^d$}, our framework and results reduce to those of \cite[Theorem 5]{bandi_moloche_2018}, where the result is introduced using half-vectorization. 
Denote $\texttt{vech}$  and $\texttt{vec}$ to be the half-vectorization and vectorization operators that convert any symmetric matrix $A\in \mathbb{R}^{m\times m}$ into $\texttt{vech}(A)\in \mathbb{R}^{m(m+1)/2}$ and $\texttt{vec}(A)\in \mathbb{R}^{m^2}$.
Denote $D\in \mathbb{R}^{m^2\times \frac{m(m+1)}{2}}$ to be the standard duplication matrix that converts $\texttt{vech}(A)$ into $\texttt{vec}(A)$ for any symmetric matrix $A\in \mathbb{R}^{m\times m}$. Then, \eqref{diffusion term convergence Euclidean} can be rewritten as 
\[
\sqrt{\frac{h^d\hat{L}^{(\texttt{o})}(x)}{\Delta}} \text{vech} (\hat{\pi}^{(\texttt{o})}(x)- \pi^{(\texttt{o})}(x)  - h^2 b_\pi^{(\texttt{o})}(x)) \overset{d}{\longrightarrow}N(  0, 2\kappa_{2,0}P_D\big( {\pi}^{(\texttt{o})}(x) \otimes {\pi}^{(\texttt{o})}(x) \big) {P}_D^\top )\,,
\]
where $P_D=(D^\top D)^{-1}D^\top$. 
In the manifold setting, the deviation of $(x_{i+1}-x_i)$ from $\iota_*T_{x_i}M$ comes into play, which complicates the bias terms in \eqref{diffusion term convergence Euclidean} through curvature effects. Notably, both the bias and variance depend on the intrinsic manifold dimension $d$, rather than the ambient dimension $p$.

\begin{remark}
Note that \cite[Theorems 5]{bandi_moloche_2018} imposes the conditions $\frac{h^d \hat{L}^{(\texttt{o})}(x)}{\Delta} \xrightarrow[]{p}\infty$, $\frac{h^{d+4}\hat{L}^{(\texttt{o})}(x)}{\Delta} \xrightarrow[]{p} C$, and $ h^{d-4}\hat{L}^{(\texttt{o})}(x) \Delta\xrightarrow[]{p} 0$. We choose to replace $\hat{L}^{(\texttt{o})}(x)$ by $\Upsilon(T)$ in our assumption to avoid potential contradiction. Indeed, if we impose $\frac{h^{d+4}\hat{L}^{(\texttt{o})}(x)}{\Delta} \xrightarrow[]{p} C$, with Theorem \ref{occ-density-theorem}  that $\frac{\hat{L}^{(\texttt{o})}(x)}{\Upsilon(T)} \xrightarrow[]{d}   g_\alpha(1)p_X(x)$, Slutsky's theorem and continuous mapping theorem give $\frac{h^{d+4}\Upsilon(T)}{\Delta} \xrightarrow[]{p} \frac{C}{ g_\alpha(1)p_X(x)}$. When $\alpha\in (0,1)$, since $\frac{h^{d+4}\Upsilon(T)}{\Delta}$ is a deterministic sequence and $g_\alpha(1)$ is a nondegenerate random variable, we must have $C=0$, and hence lose the control of $\frac{h^{d+4}\hat{L}^{(\texttt{o})}(x)}{\Delta}$. As discussed in \cite[Remark 12]{bandi_moloche_2018}, it is not a problem to replace $\hat{L}^{(\texttt{o})}(x)$ by $\Upsilon(T)$ in \cite[Theorems 5]{bandi_moloche_2018} (and \cite[Theorems 4]{bandi_moloche_2018} as well), which avoids the contradiction. The same comment holds for Theorem \ref{main theorem drift}.
\end{remark}

Estimating the drift estimator is more delicate, as it requires estimating the tangent-space projection of the embedded manifold from the data. The following theorem provides a key result for tangent space estimation.

\begin{theorem}[Tangent space estimator]
\label{diff-est-tangent-space-proj}
Suppose Assumptions \ref{manifold-ass}-\ref{reg-ass2}  hold. Denote ${P}_x$ to be an orthogonal projection to {$\iota_*T_xM$}. Suppose moreover that $h^d \Upsilon(T) \xrightarrow{} \infty$, $h^{d+4} \Upsilon(T)\xrightarrow[]{} C$  for a constant $C>0$, and $\Delta^2 h^{d-4} \Upsilon(T) \xrightarrow[]{} 0$. Denote the eigenvalue decomposition 
$\hat{\pi}^{(\texttt{o})}(x) = \hat{U} \hat{D} \hat{U}^{\top}$, where the eigenvalues are ordered descreasingly.
Let $ \hat{U}_{d}$ be the $p \times d$ matrix formed from the first $d$ columns of $\hat{U}$ associated with the largest $d$ eigenvalues. Then, 
\begin{align}
\hat{P}_x :=  \hat{U}_{d}  \hat{U}_{d}^{\top} = {P}_x + h^2b^{(\texttt{t})}(x) \,,\label{btan-decomp}
\end{align}
where $b^{(\texttt{t})}(x)= \overline{b} ^{(\texttt{t})}(x) +  \epsilon^{(\texttt{t})}(x)\in \mathbb{R}^{p\times p}$, 
$\overline{b} ^{(\texttt{t})}(x) =O(1)$, and $\epsilon^{(\texttt{t})}(x)= o_p\left(\frac{1}{\sqrt{h^{d+4}\Upsilon(T)}}\right)$. 
\end{theorem}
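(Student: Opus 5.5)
The plan is to treat $\hat{P}_x$ as a spectral projector of a perturbed matrix and use Davis--Kahan / resolvent perturbation theory, with the perturbation controlled by Theorem \ref{euclidean-diffusion-estimate-total}. Write
\[
\hat{\pi}^{(\texttt{o})}(x)=\pi^{(\texttt{o})}(x)+E\,,\qquad E:=h^2 b_\pi^{(\texttt{o})}(x)+R\,,
\]
where $R$ is the fluctuation term. Since $\pi^{(\texttt{o})}(x)=\iota_*|_x\,\pi(x)\,\iota_*|_x^\top$ and $\pi(x)$ is uniformly elliptic (Assumption \ref{manifold-ass2}), $\pi^{(\texttt{o})}(x)$ has rank exactly $d$. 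Its range is $\iota_*T_xM$, and its nonzero eigenvalues are bounded below by some $\lambda_{\min}>0$. Hence the top-$d$ eigenprojector of $\pi^{(\texttt{o})}(x)$ is exactly $P_x$, with spectral gap $\lambda_{\min}$ separating it from the zero eigenvalue of multiplicity $p-d$.

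First step: bound $E$. The stated rates here, with $\Delta$ replaced relative to Theorem \ref{euclidean-diffusion-estimate-total}, should be read through the same proof. The bias is $h^2 b^{(\texttt{o})}_\pi(x)=O(h^2)$ deterministic. The normalized fluctuation $\sqrt{h^d\hat L^{(\texttt{o})}(x)/\Delta}\,R$ is tight (asymptotically mixed normal), and $\hat L^{(\texttt{o})}(x)/\Upsilon(T)$ converges to a positive limit by Theorem \ref{occ-density-theorem}. So $R=O_p\big((h^d\Upsilon(T))^{-1/2}\big)$ up to the $\Delta$ normalization. The hypotheses $h^{d+4}\Upsilon(T)\to C$ and $\Delta^2h^{d-4}\Upsilon(T)\to0$ are exactly what make the remaining discretization and curvature remainders from the proof of Theorem \ref{euclidean-diffusion-estimate-total} (Riemann-sum and $O(\Delta)$ normal-component terms) of order $o_p(h^2)$, indeed $o_p\big(h^2/\sqrt{h^{d+4}\Upsilon(T)}\big)$. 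In total, $\|E\|=O_p(h^2)\to0$. Consequently, with probability tending to one, $\|E\|<\lambda_{\min}/2$, the top $d$ eigenvalues of $\hat\pi^{(\texttt{o})}(x)$ stay separated from the rest, and $\hat P_x$ is well defined.

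Second step: a first-order expansion of the spectral projector through the Riesz formula $\hat P_x=\frac{1}{2\pi i}\oint_\Gamma (z-\hat\pi^{(\texttt{o})}(x))^{-1}dz$, with $\Gamma$ encircling $[\lambda_{\min},\lambda_{\max}]$ but not $0$. This gives
\[
\hat P_x=P_x+\mathcal{L}(E)+O(\|E\|^2),\qquad \mathcal{L}(E)=S E P_x+P_x E S\,,
\]
where $S=(\pi^{(\texttt{o})}(x))^{+}$ on $\iota_*T_xM$ (with a minus sign convention for the normal block), so only the off-diagonal tangent--normal blocks of $E$ contribute. Then set
\[
\overline b^{(\texttt{t})}(x):=\mathcal{L}\big(b^{(\texttt{o})}_\pi(x)\big)=O(1)
\]
and $\epsilon^{(\texttt{t})}(x):=h^{-2}\big[\mathcal{L}(R)+O(\|E\|^2)\big]$. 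The quadratic term contributes $O_p(h^2)$, and by $h^{d+4}\Upsilon(T)\to C$ this equals $o_p$ of nothing worse than the required rate. More precisely, $h^2=O\big((h^{d+4}\Upsilon)^{-1/2}\big)\cdot h^{2}$-negligible, since $h\to0$.

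The main obstacle is the linear fluctuation term $h^{-2}\mathcal{L}(R)$. A naive bound gives only $O_p\big(h^{-2}(h^d\Upsilon)^{-1/2}\big)=O_p\big((h^{d+4}\Upsilon)^{-1/2}\big)$, not $o_p$. The extra gain comes from the structure of $\mathcal{L}$: it only sees the tangent--normal block $P_x R(I-P_x)$. The leading martingale part of $R$ is built from $\iota_*\sigma_\alpha(X_{k\Delta})\,dW$ outer products, which lie in $\iota_*T_{X_{k\Delta}}M$. For $X_{k\Delta}$ within $h$ of $x$, the normal component of that tangent space is $O(h)$ via the second fundamental form, and the normal part of the increment is $O(\Delta)$. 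So $P_xR(I-P_x)=O_p\big(h(h^d\Upsilon)^{-1/2}\big)$ plus discretization terms controlled by $\Delta^2h^{d-4}\Upsilon\to0$. This yields $h^{-2}\mathcal{L}(R)=O_p\big(h\,(h^{d+4}\Upsilon)^{-1/2}\big)=o_p\big((h^{d+4}\Upsilon)^{-1/2}\big)$. Carrying out this block decomposition of the martingale quadratic variation, reusing the additive-functional estimates (Theorem \ref{ratio-limit0SUPP}) from the diffusion proof, is where the real work lies.
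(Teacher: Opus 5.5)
Your overall plan matches the paper's proof: split $\hat\pi^{(\texttt{o})}(x)-\pi^{(\texttt{o})}(x)$ into $h^2 b_\pi^{(\texttt{o})}(x)$ plus a remainder, use the spectral gap of $\pi^{(\texttt{o})}(x)$, expand the top-$d$ eigenprojector to first order, and take $\overline{b}^{(\texttt{t})}(x)=\mathcal{L}(b_\pi^{(\texttt{o})}(x))$. You justify the spectral gap via uniform ellipticity more explicitly than the paper does.

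\textbf{The gap: you drop a $\sqrt{\Delta}$ in the size of the fluctuation $R$.} This produces a spurious ``main obstacle.'' If $\sqrt{h^d\hat L^{(\texttt{o})}(x)/\Delta}\,R$ is tight, then $R=O_p\big(\sqrt{\Delta/(h^d\Upsilon(T))}\big)$, not $O_p\big((h^d\Upsilon(T))^{-1/2}\big)$. Since $\Delta\to 0$, this is already $o_p\big((h^d\Upsilon(T))^{-1/2}\big)$.

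This $\sqrt{\Delta}$ gain is exactly what the paper uses, through Corollary~\ref{diffusion-estimator-at-drift-scale}. Under the present hypotheses,
$\sqrt{h^d\hat L^{(\texttt{o})}(x)}\,\big(\hat\pi^{(\texttt{o})}(x)-\pi^{(\texttt{o})}(x)-h^2b_\pi^{(\texttt{o})}(x)\big)\to 0$ in probability. The reasons are:
\begin{itemize}
\item the martingale term $\hat\Pi_2$ is $O_p(\sqrt{\Delta})$ at that normalization;
\item the drift cross-term $\hat\Pi_1$ and the bias remainders $O_p(\Delta/h^2)$ and $o_p(h^2)$ are negligible, because $h^{d+4}\Upsilon(T)\to C$ and $\Delta^2h^{d-4}\Upsilon(T)\to 0$ (which together force $\Delta/h^4\to0$).
\end{itemize}
Dividing by $h^2$ gives $h^{-2}\mathcal{L}(R)=o_p\big((h^{d+4}\Upsilon(T))^{-1/2}\big)$ at once. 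The paper then finishes with standard eigenprojector perturbation.

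So the naive bound does not fail, and the second-fundamental-form block argument you call ``where the real work lies'' is unnecessary. In fact, the bound $O_p\big(h\,(h^d\Upsilon(T))^{-1/2}\big)$ you wanted for $P_xR(I-P_x)$ already holds for the whole fluctuation term, not just its tangent--normal block, since $\Delta/h^2\to0$ gives $\sqrt{\Delta}=o(h)$.

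As written, however, your proof rests on that block estimate, which you only assert. It is plausible, since $(I-P_x)\iota_*\sigma_\alpha(X_s)=O(h)$ for $X_s$ near $x$. But making it rigorous would need its own quadratic-variation/BDG control over each sampling interval. That step is therefore incomplete; replacing it with the $\sqrt{\Delta}$ observation closes the argument.

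A minor point: the first-order term of the projector should be $(\pi^{(\texttt{o})}(x))^{+}E(I-P_x)+(I-P_x)E(\pi^{(\texttt{o})}(x))^{+}$. Your expression $SEP_x+P_xES$ with $S=(\pi^{(\texttt{o})}(x))^{+}$ would extract the tangent--tangent block. That contradicts your own correct remark that only the tangent--normal blocks contribute.
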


Note that the central limit theorem for the diffusion estimator depends on the scale $\sqrt{\frac{\Delta}{h^d\hat{L}^{(\texttt{o})}(x)}}$, whereas tangent space recovery only requires the coarser scale $\frac{1}{\sqrt{h^d\hat{L}^{(\texttt{o})}(x)}}$.  This difference arises because the latter scale is already sufficient for establishing the central limit theorem for drift estimation. We do not claim optimality of the proposed diffusion-based tangent space estimator.
We also note that several methods for tangent space estimation have been proposed in the literature (e.g., \cite{singer2012vector, Tyagi2013}), primarily based on local principal component analysis (PCA). In this approach, one constructs a local covariance matrix
\[
C_x:= \sum_{k=0}^{n-1}\mathcal{K}\left(\frac{\mathcal{D}_x(x_{k})}{h} \right)(x_{k}-x)(x_{k}-x)^\top\,,
\]
and estimates $\iota_*T_xM$ using its top $d$ eigenvectors. 
However, these methods typically assume i.i.d. samples from the manifold and thus do not directly apply in our setting. Additional analysis is required to understand the behavior of local PCA under the manifold-valued diffusion model.

To state the asymptotic behavior of $\hat{\mu}^{(\texttt{o})}$, define the bias term:
\begin{align}
b_{\mu}^{(\texttt{o})}(x) =\,& \kappa_{1,2}P_x \left(\nabla \mu^{(\texttt{o})}(x)\cdot\nabla \log p_X(x)
+ \frac{1}{2} \Delta \mu^{(\texttt{o})}(x) \right)  - \overline{b} ^{(\texttt{t})}(x) \mu^{(\texttt{o})}(x) \in \mathbb{R}^p \label{mu-varphi-bias}\,,
\end{align}
where $\overline{b} ^{(\texttt{t})}(x)$ is in \eqref{btan-decomp}.

\begin{theorem}[Drift estimator]\label{main theorem drift}
Suppose Assumptions \ref{manifold-ass}-\ref{reg-ass2}  hold. Suppose moreover that $h^d \Upsilon(T) \xrightarrow{} \infty$, $h^{d+4} \Upsilon(T)\xrightarrow[]{} C$ for a constant {$C>0$}, and $\Delta^2 h^{d-4} \Upsilon(T) \xrightarrow[]{} 0$. 
Suppose ${P}_{x}$ is the orthogonal projection to {$\iota_*T_xM$}.  When $T\to \infty$,  
\begin{align}\label{drift term convergence Euclidean}
\sqrt{h^d   \hat{L}^{(\texttt{o})}(x)}  (\hat{\mu}^{(\texttt{o})}(x) - \mu^{(\texttt{o})}(x)  -
h^2 b^{(\texttt{o})}_{\mu}(x) ) 
\overset{d}{\longrightarrow} N( \mathbf{0}, \,\kappa_{2,0}P_x\pi^{(\texttt{o})}(x)P_x^\top)\,.
\end{align}
\end{theorem}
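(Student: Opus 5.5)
We will prove Theorem \ref{main theorem drift} by splitting $\hat{\mu}^{(\texttt{o})}(x)-\mu^{(\texttt{o})}(x)$ according to the tangent space decomposition of Theorem \ref{diff-est-tangent-space-proj}. Since $\mu^{(\texttt{o})}(x)\in\iota_*T_xM$, we have $P_x\mu^{(\texttt{o})}(x)=\mu^{(\texttt{o})}(x)$, and \eqref{btan-decomp} gives
\begin{align*}
\hat{\mu}^{(\texttt{o})}(x)-\mu^{(\texttt{o})}(x)=P_x\big(\hat{\mu}_E(x)-\mu^{(\texttt{o})}(x)\big)+h^2 b^{(\texttt{t})}(x)\,\hat{\mu}_E(x)\,.
\end{align*}
The second term equals $h^2\overline{b}^{(\texttt{t})}(x)\hat{\mu}_E(x)+h^2\epsilon^{(\texttt{t})}(x)\hat{\mu}_E(x)$. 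Because $\epsilon^{(\texttt{t})}=o_p((h^{d+4}\Upsilon(T))^{-1/2})$ and $\hat{\mu}_E=O_p(1)$, multiplying by $\sqrt{h^d\hat{L}^{(\texttt{o})}(x)}=O_p(\sqrt{h^d\Upsilon(T)})$ (by Theorem \ref{occ-density-theorem}) makes the $\epsilon^{(\texttt{t})}$ contribution $o_p(1)$. Since $\hat{\mu}_E(x)\to \mu^{(\texttt{o})}(x)+\tfrac12\pi^{(\texttt{o})}$-weighted second-fundamental-form normal term $=:\mu^{(\texttt{o})}(x)+N(x)$ in probability, the $\overline{b}^{(\texttt{t})}$ term contributes $h^2\overline b^{(\texttt{t})}(x)(\mu^{(\texttt{o})}+N)$. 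This should be reconciled with the sign in \eqref{mu-varphi-bias}; the identity $P_x\hat P_x$-type relation $\overline b^{(\texttt{t})}=-(\ldots)$ on the relevant components is what I would check. One expects $\overline b^{(\texttt{t})}$ to map tangent vectors to normal and normal to tangent, so its action must be computed explicitly from the $h^2$-bias of $\hat\pi^{(\texttt{o})}$ via first-order eigenprojection perturbation. The remainder, $h^2(\hat{\mu}_E-\lim)$, is $o_p(h^2)$ and is negligible since $h^{d+4}\Upsilon\to C$.

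For the main term $P_x(\hat{\mu}_E(x)-\mu^{(\texttt{o})}(x))$, we follow the scheme sketched before Theorem \ref{ratio-limit0SUPP}. By It\^o's formula applied to $\iota$, the increment splits into a drift part $\int_{k\Delta}^{(k+1)\Delta}(\mu\iota+\tfrac12\textup{I\!I}(\sigma_\alpha,\sigma_\alpha))(X_s)ds$, whose normal piece is killed by $P_x$ up to $O(h)$ in the kernel window, and a martingale part $\int\sigma_\alpha\iota\,dW^\alpha$.

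The drift part is handled first by a Riemann-sum replacement, with error $O_p(\Delta)$ relative to the denominator, which is negligible by $\Delta^2h^{d-4}\Upsilon\to0$ after scaling. Second, we convert it to the ratio of additive functionals in \eqref{equation Lhat approximat additive functional}. Third, using the ratio limit theorem (Theorem \ref{ratio-limit0SUPP}) together with Assumption \ref{reg-ass} to control the fluctuation of the additive functionals at rate $\Upsilon(T)h^{2d/r}$, we reduce to deterministic manifold integrals $\int_M h^{-d}\mathcal{K}(\|\iota(y)-\iota(x)\|/h)f(y)p_X(y)dV_g(y)$. A normal-coordinate expansion, using Remark \ref{geodesic-distance-remark} so that the Euclidean-vs-geodesic discrepancy is $O(h^3)$, then yields $P_x\mu^{(\texttt{o})}+h^2\kappa_{1,2}P_x(\nabla\mu^{(\texttt{o})}\cdot\nabla\log p_X+\tfrac12\Delta\mu^{(\texttt{o})})+o(h^2)$. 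The tangent-normal interplay $P_x\iota_*\mu(y)-\iota_*\mu(y)$ for $y$ near $x$ is absorbed into the $\Delta\mu^{(\texttt{o})}$ term, viewed as ambient componentwise derivatives.

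The martingale part $P_x\sum_k\mathcal{K}_k\int\sigma_\alpha\iota\,dW^\alpha$, normalized by $\sqrt{h^d\hat L^{(\texttt{o})}}$ and divided by $\frac{h^d}{\Delta}\hat L^{(\texttt{o})}$, is a discrete martingale. Its quadratic variation, divided by the denominator, converges by the same additive functional argument to $\kappa_{2,0}P_x\pi^{(\texttt{o})}(x)P_x^\top$, relative to $\hat L^{(\texttt{o})}$. A martingale CLT with random (Mittag-Leffler) normalization, self-normalized as in \cite{bandi_moloche_2018}, then gives mixed-normal convergence, which becomes standard normal after the self-normalization. Combining the pieces by Slutsky's theorem finishes the proof.

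The main obstacle is the $\overline{b}^{(\texttt{t})}$ interaction. Computing $\overline{b}^{(\texttt{t})}\hat\mu_E$ requires the precise $O(h^2)$ bias of $\hat\pi^{(\texttt{o})}$, including its normal-normal and tangent-normal blocks coming from curvature, together with Davis–Kahan/Kato eigenprojection perturbation. It is also subtle that $\hat\mu_E$ carries an $O(1)$ normal component, so $h^2\overline b^{(\texttt{t})}N$ must either be absorbed in the stated bias or shown to cancel. This is where I would concentrate effort.
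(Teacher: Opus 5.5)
Your decomposition $\hat\mu^{(\texttt{o})}(x)-\mu^{(\texttt{o})}(x)=P_x(\hat\mu_E(x)-\mu^{(\texttt{o})}(x))+h^2b^{(\texttt{t})}(x)\hat\mu_E(x)$ is exactly what the paper does, and so is your disposal of $h^2\epsilon^{(\texttt{t})}\hat\mu_E$ and $h^2\overline b^{(\texttt{t})}(\hat\mu_E-\mu_\iota)$. But the proposal stops at the step you call the main obstacle, and the reconciliation you propose cannot succeed.

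The paper never computes $\overline b^{(\texttt{t})}$. It takes the main term from Theorem \ref{generalized-drift-est} with $f=\iota$. That gives a CLT for the unprojected $\hat\mu_E=\hat\mu_\iota$, centred at the full extrinsic drift $\mu_\iota=\mu^{(\texttt{o})}+N$, with bias $B^{\mu,\texttt{o}}_\iota$ built from $\mu_\iota$. The paper then simply places the deterministic $h^2$-pieces in the bias.

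Your own algebra already fixes the tangent-estimation piece. Since $\hat P_x=P_x+h^2b^{(\texttt{t})}$, it is $+\overline b^{(\texttt{t})}\mu_\iota$, and no identity turns this into $-\overline b^{(\texttt{t})}\mu^{(\texttt{o})}$. At this scale $\hat\pi^{(\texttt{o})}=\pi^{(\texttt{o})}+h^2b^{(\texttt{o})}_\pi+o_p(h^2)$. So $\overline b^{(\texttt{t})}$ is the first-order projector perturbation $P_x^\perp b^{(\texttt{o})}_\pi(\pi^{(\texttt{o})})^{+}+(\pi^{(\texttt{o})})^{+}b^{(\texttt{o})}_\pi P_x^\perp$. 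This is block off-diagonal with respect to $\iota_*T_xM\oplus(\iota_*T_xM)^\perp$. Hence the only normal $h^2$-term in the correct centring is $\overline b^{(\texttt{t})}\mu^{(\texttt{o})}$. Its sign matters, because the limit law is degenerate in normal directions. It is also nonzero in general: for Brownian motion plus a gradient drift $\nabla V$ on the unit sphere, it is a nonzero multiple of $|\nabla V(x)|^2x$.

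So your suspicion is well founded. It is resolved by centring at $P_xB^{\mu,\texttt{o}}_\iota+\overline b^{(\texttt{t})}\mu_\iota$, with which the argument closes. (The paper's proof writes $P_xB^{\mu,\texttt{o}}_\iota-\overline b^{(\texttt{tan})}\mu_\iota$, and its displayed decomposition is not consistent with $\hat P_x=P_x+h^2b^{(\texttt{t})}$.)

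The second gap is in your main term: the normal piece $N(X_s)$ is not negligible after projection. $P_xN(y)$ vanishes at $y=x$, but its linear part is $v\mapsto -A_{N(x)}v$ (shape operator, by Weingarten), which is nonzero. Kernel symmetry removes only the $O(h)$ contribution. The linear part interacting with $\nabla p_X$, together with the quadratic part, leaves $h^2$ times the first term of \eqref{mu-varphi-bias} with $\mu^{(\texttt{o})}$ replaced by $N$. This survives the normalization because $h^{d+4}\Upsilon(T)\to C>0$. In the same sphere example, $N(y)=-y$ and the term is a nonzero multiple of $\nabla\log p_X(x)$.

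Projecting after Theorem \ref{generalized-drift-est} keeps this term automatically inside $P_xB^{\mu,\texttt{o}}_\iota$. The first term of \eqref{mu-varphi-bias} shows only $\mu^{(\texttt{o})}$ because the paper's proof sets $\mu_\iota=\iota_*\mu$. That conflicts with the It\^o computation in Section \ref{section Ito formula summary}.

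This is also where your $h^2\overline b^{(\texttt{t})}N$ goes. Differentiating $\pi^{(\texttt{o})}N\equiv 0$ along $w=\nabla\log p_X$ gives $(\pi^{(\texttt{o})})^{+}(\nabla_w\pi^{(\texttt{o})})N=-P_x\nabla_wN$. So the $\nabla\log p_X$-part of $\overline b^{(\texttt{t})}N$ cancels the $\nabla\log p_X$-part of the term you dropped, not anything in \eqref{mu-varphi-bias}.

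A minor point: the kernel-discretization error is $O_p(\Delta/h^2)$ (Lemma \ref{kernel-disc-corollary}), not $O_p(\Delta)$. That is why $\Delta^2h^{d-4}\Upsilon(T)\to0$ is the condition you need.
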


In the case where {$M =\mathbb{R}^d$}, the drift estimator reduces to that considered in \cite{bandi_moloche_2018}. Since the SDE can be constructed via an embedding procedure, it is tempting to directly apply the estimator from \cite{bandi_moloche_2018}. However, this naive approach introduces an additional bias of order $1$ in the normal direction due to curvature.
More specifically, if we omit $\hat{P}_x$ and simply use $\hat{\mu}_E(x)$ \eqref{Formula: generic drift term estimator observed}, we have
\begin{align*} 
\sqrt{{h^d\hat{L}^{(\texttt{o})}(x)}}  (\hat{\mu}_E(x) - (\mu^{(\texttt{o})}(x)+\mu^{(\texttt{o})}_\eta(x)) -
h^2b_{\mu}^{(\texttt{o})}(x) ) 
\overset{d}{\longrightarrow} N( 0, \,\kappa_{2,0}\pi^{(\texttt{o})}(x))\,,
\end{align*}
which differs from \eqref{drift term convergence Euclidean} by a nontrivial term 
\begin{align*}
\mu^{(\texttt{o})}_{\eta}(x) &=  \frac{1}{2}\left(\begin{matrix} \sum_{a=1}^d \sum_{b=1}^d  \langle \sigma^{(\texttt{o})}_{a}(y), \sigma^{(\texttt{o})}_{b}(y)\rangle e_1^\top I\!\!I_y(\partial_a, \partial_b) \\
\vdots \\
 \sum_{a=1}^d \sum_{b=1}^d  \langle \sigma^{(\texttt{o})}_{a}(y), \sigma^{(\texttt{o})}_{b}(y)\rangle e_p^\top I\!\!I_y(\partial_a, \partial_b) \end{matrix}\right)\in\mathbb{R}^p
\end{align*}
living in the normal space at $x$. The deviation $\mu^{(\texttt{o})}_\eta(x)$ cannot be eliminated even if we can access the geodesic distance of the manifold and set $\mathcal{D}_x(x_{k})=d_g(x,x_k)$ in $\hat{\mu}_E(x)$. See Figure \ref{fig:IB1} in the numerical section for examples.

We emphasize a key technical distinction between the analyses of the diffusion and drift estimators. For the drift, four error terms must be controlled in the expansion of the exponential map. This arises because the drift of a diffusion process acquires second-order contributions under a coordinate transformation via geometry-stochastics relationship, which is made clear via It\^o's lemma. As a result, curvature, the second-order structure of the manifold, induces a non-negligible distortion in the drift that persists asymptotically. Note that even in the Euclidean setting, the bias contains a second-order term originating from the drift itself \cite[Theorem 4]{bandi_moloche_2018}.

We  compare the bandwidth-sampling size pair $(h,n)$ for the drift and diffusion estimators with the choices commonly used in the manifold learning literature. In manifold learning, observations are typically assumed to be independent. For pointwise convergence, the bandwidth $h=h(n)$ is usually chosen such that $nh^d\to \infty$ and $h\to 0$ as $n\to \infty$. Geometrically, if the sampling density is bounded away from $0$, the condition $nh^d\to \infty$ ensures that asymptotically there are infinitely many samples in a shrinking ball of radius $h$, providing sufficient local data for estimation. While the relationship is more complicated, a parallel condition for the drift estimator is $h^d\Upsilon(T)\to \infty$ as $T\to \infty$. When $X_t$ is positive Harris recurrent, for example, $M$ is compact, $\Upsilon(T)=T$. Using the sampling relation $T=n\Delta$ with $\Delta\to 0$ as $T\to \infty$, the assumption can be rewritten as $h^dn\Delta\to \infty$ as $n\to \infty$. The interpretation is similar but not identical to the independent-sampling case. Since $\Delta\to 0$, the condition implies $h^dn\to\infty$, so that asymptotically infinitely many observations fall in a local neighborhood. The additional factor $\Delta$ slows the effective rate and requires more observations when the sampling interval is small, reflecting the stronger temporal dependence between closely spaced samples. Under the same positive Harris recurrent condition, the condition $\frac{h^d\Upsilon(T)}{\Delta}\to \infty$ when $T\to \infty$ for the diffusion estimator is equivalent to $h^dn \to \infty$ as $n\to \infty$. Thus, when $T$ and $\Delta$ are fixed across procedures, and hence the same sample size, $\hat{\pi}^{(\texttt{o})}$ converges faster than $\hat{\mu}^{(\texttt{o})}$ by a factor of $\sqrt{\Delta}$. This theoretical finding has an intuitive interpretation. From \eqref{Intro section ito-form}, 
\begin{align*}
&\iota(X_{(k+1)\Delta})-\iota(X_{k\Delta})\\
=\,&\left(\iota_*\nu+\frac{1}{2}D_{\iota_*\sigma_\alpha}(\iota_*\sigma_\alpha)\right)(\iota(X_{k\Delta}))\Delta+\iota_*\sigma_\alpha(X_{k\Delta}) \sqrt{\Delta}Z^\alpha+o_p(\Delta^{3/2})\,, 
\end{align*}
where $Z^\alpha\sim N(0,1)$. In this approximation, the drift term is of order $\Delta$, while the diffusion term, or noise, is of order $\sqrt{\Delta}$. When $\Delta$ gets smaller, the signal-to-noise ratio for the drift decreases at rate $\sqrt{\Delta}$; the stochastic fluctuations dominate the deterministic drift contribution in each increment, making its recovery more difficult. In contrast, 
\begin{align*}
&(\iota(X_{(k+1)\Delta})-\iota(X_{k\Delta}))(\iota(X_{(k+1)\Delta})-\iota(X_{k\Delta}))^\top\\
=\,&(\iota_*\sigma_\alpha(X_{k\Delta}) Z^\alpha)(\iota_*\sigma_\alpha(X_{k\Delta}) Z^\alpha)^\top\Delta+O_p(\Delta^{3/2})\,, 
\end{align*}
so the diffusion coefficient appears in the leading term of the increment. Consequently, the diffusion component is statistically easier to estimate than the drift.

\section{Simulated Experiments}\label{section simulation label}

In this section, we examine our proposed estimators using two 2-dimensional manifolds, the standard $2$-sphere $\mathbb{S}^2= \{x \in \mathbb{R}^3: \|x\| = 1\}\subset\mathbb{R}^3$ and the Klein bottle embedded in $\mathbb{R}^4$. 
Throughout this section, we use a smooth, compactly supported kernel function defined by $\mathcal{K}(s) =\exp(-(1-(s/3)^2)^{-1})$ when $s \in [0,3)$ and $0$ otherwise. Bandwidth selection is delicate in practice, particularly when the sampling density is nonuniform. Since developing a bandwidth selection algorithm is beyond the scope of this work, we follow the empirical practice in manifold learning and select $h$ so that the kernel is supported on {a neighborhood that is of size $\sim 1\%$ of the total trajectory length.} The problem of identifying an optimal bandwidth will be investigated in future studies (see Discussion). The Python code producing figures and results are available in \url{https://github.com/jacobmcerlean/Functional-Estimation-Manifold-SDEs}.

\subsection{$2$-sphere}  
Consider $M$ as a 2-dim ellipsoid with eccentricity $(a,b,c)$ embedded in $\mathbb{R}^3$. To simulate SDE trajectories on $M$, consider $\mathbb{S}^2\subset \mathbb{R}^3$ and diffeomorphism from $\mathbb{S}^2$ to $M$ via $\varphi:(x,y,z) \mapsto (ax,by,cz)$.  
Consider the SDE with the drift $\mu^{(\texttt{l})}(x,y,z) = (y,-x,0)$ and Riemannian Brownian motion on $\mathbb{S}^2$. 
We simulate the process using a \emph{retraction-based Euler scheme} \cite{schwarz2025efficient}, which is detailed below.
Given a number of observations $n\in \mathbb{N}$, time-step size $\Delta$, and an initial condition $Y_0 = y_0 \in \mathbb{S}^2$, we generate a discrete trajectory
$\{Y_k\}_{k=0}^n$ on $\mathbb{S}^2$ as follows. For $k =0,\ldots,n-1$, we first sample a random unit vector
$w_k \in \mathbb{R}^3$ and an independent random $\chi^2(2)$-distributed radius $r_k$. Then, we set a random tangent vector 
$v_k := r_k\left(\frac{w_k - w_k^\top Y_kY_k}{|w_k - w_k^\top Y_kY_k|}\right)\in T_{Y_k}\mathbb{S}^2$.
This produces an isotropic tangent increment consistent with Riemannian Brownian motion on $\mathbb{S}^2$. The {\em Euler increment} is then defined by
\begin{align}
   \delta Y_k = \sqrt{\Delta}\, v_k + \Delta\, \mu(Y_k)\,. \label{Euler scheme} 
\end{align}
The next state of the SDE trajectory is obtained via a retraction map given by radial projection; that is,
$Y_{k+1} := \frac{Y_k + \delta Y_k}{\|Y_k + \delta Y_k\|}$. Then iterate with $k$.
As is shown in \cite{schwarz2025efficient}, this iteration approximates the intrinsic SDE on $\mathbb{S}^2$. The drift of $X_t=\varphi(Y_k)$ is 
\[
\mu^{\texttt{(o)}}(x,y,z)
=\left(\frac{ay}{b},\,\frac{-bx}{a},\,0\right)+P_{(x,y,z)}\!\left(-(x,y,z)\right),
\]
where
\[
P_{(x,y,z)}w \;=\; w-\frac{\langle w,n(x,y,z)\rangle}{\langle n(x,y,z),n(x,y,z)\rangle}\,n(x,y,z),
\quad
n(x,y,z)=\left(\frac{x}{a^{2}},\,\frac{y}{b^{2}},\,\frac{z}{c^{2}}\right).
\]
and diffusion is $\pi^{\texttt{(o)}}(x,y,z) = D \varphi D \varphi ^\top- (x,y,z)(x,y,z)^\top$, where $D\varphi \in \mathbb{R}^{3 \times 3}$ is the matrix $ \text{diag}(a,b,c)$.

Now we report results on an ellipsoid $M$ with eccentricity $(a,b,c)=(2,1.5,1)$, normalized by applying a global scaling $\sqrt{\frac{{3}}{a^2 + b^2 + c^2}}$, and postpone results for other ellipsoids with eccentricities $(1,1,1)$ and $(1.5,1,1)$ with the same normalization to Section \ref{section more numerical simulations}.

We start with demonstrating the rate of convergence of the empirical density of an SDE trajectory to the invariant measure of $X_t$. Simulate a long trajectory of length $n_{\max} = 10^8$ and $\Delta = 10^{-2}$, for physical time $T = 10^6$, on $\mathbb{S}^2$, and map this trajectory to $M$. In Figure \ref{fig:ellipsoid_density2}, we report the log-log plot of $\|\hat{L}^{(\texttt{o})}-\phi_X\|_{L^2}$ to demonstrate the rate of convergence of the empirical density. The decay rate of around $n^{-1/2}$ agrees with the positive Harris recurrence nature of the dynamics. For a visualization, we plot the empirical density $\hat{L}^{(\texttt{o})}(x)$ of the trajectory at different lengths $n_i$, where $\log_{10}(n_i) \in \{4, 5,6,7,8\}$, in Figure \ref{fig:ellipsoid_density}.

\begin{figure}[hbt!]
    \centering
\includegraphics[width=.6\textwidth]{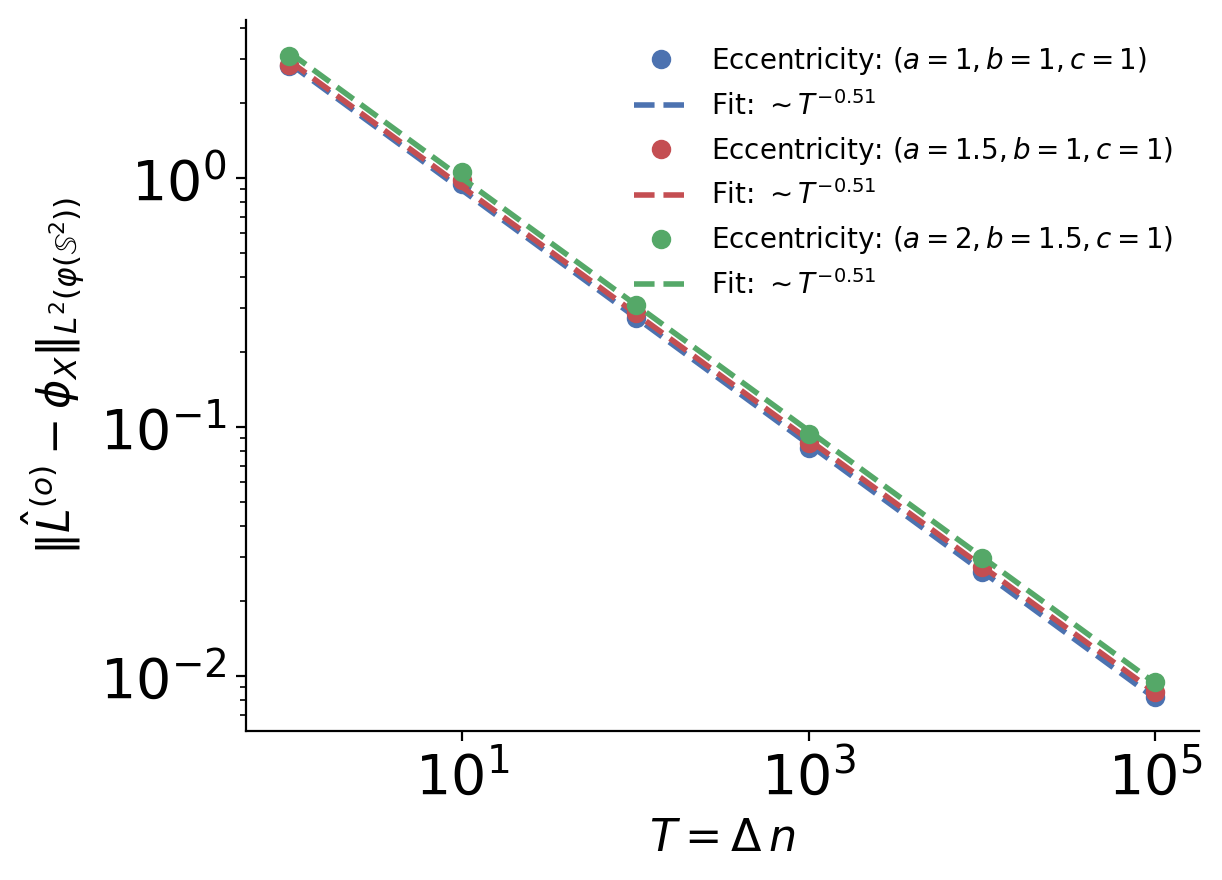}
    \caption{Using the occupation density for trajectory length $n = 10^8$, time-step $\Delta=10^{-2}$, and physical time $T = 10^6$ as an accurate estimate for the invariant density $\phi_X$, we compare the invariant density to the estimate $\hat{L}^{(\texttt{o})}$ based on the first $n$ data-points of the trajectory on each observed ellipsoid to measure the convergence rate of the empirical density.}
    \label{fig:ellipsoid_density2}
\end{figure}

Next, we simulate an SDE trajectory of length $n=10^8$, with the initial point sampled uniformly from $N$, time step $\Delta = 10^{-4}$, and physical time $T = 10^4$. The dataset is obtained by downsampling the trajectory, keeping every 100$^{th}$ observation. This preserves the physical time while reducing the sample size to $n = 10^6$ with uniform time step $\Delta = 10^{-2}$. 
 
Figure~\ref{fig:IB1} illustrates $\hat{\mu}_E(x)$ and $\hat{\mu}^{(\texttt{o})}(x)$ computed from the dataset on $M$ is $(2,1.5,1)$, shown on a spherical cap near $(2,0,0)$. Additional examples with different eccentricities are provided in Figures \ref{fig:IB1-11} and \ref{fig:IB1-12}. 
The curvature-induced bias of $\hat{\mu}_E(x)$ is clearly visible in Figure~\ref{fig:IB1}. For comparison, we also plot $P\hat{\mu}_E$, which is defined as $(P\hat{\mu}_E)(x):= P_x\hat{\mu}_E(x)$, where $P_x$ is the true projection operator onto $\iota_*T_xM$. As expected, when the tangent space is known, $P\hat{\mu}_E$ visually performs better than $\hat{\mu}^{(\texttt{o})}$.

To quantify these observations, we sample $10^4$ points uniformly on $\mathbb{S}^2$, map them to $M$, and evaluate drift and diffusion estimation errors at these base-points. Due to the topological constraints, any continuous drift on $\mathbb{S}^2$ must vanish somewhere. To avoid numerical instability, we stratify points by 
$\|\mu^{(\texttt{o})}(x)\|/\|\mu^{(\texttt{o})}\|_\infty$ with threshold $c=0.05$. For the subset of points with $\{\|\mu^{(\texttt{o})}(x)\|/\|\mu^{(\texttt{o})}\|_\infty>0.05\}$ ($99.9\%$ out of $10^4$), we report: 
(i) relative vector error, or normalized root mean square error (NRMSE), $\frac{\|\hat\mu-\mu^{(\texttt{o})}\|}{\|\mu^{(\texttt{o})}\|}$, where $\hat\mu$ can be any drift estimator, (ii) relative norm error $\frac{|\|\hat\mu\|-\|\mu^{(\texttt{o})}\||}{\|\mu^{(\texttt{o})}\|}$, and (iii) angle error $\Theta(\hat\mu,\mu^{(\texttt{o})})$ with the unit radians. 
For the remaining points, where the drift is near zero, we report absolute drift errors $\|\hat\mu-\mu^{(\texttt{o})}\|$ only.
For diffusion, we report the relative Frobenius error, or NRMSE, $\frac{\|\hat\pi-\pi\|_F}{\|\pi\|_F}$ and a subspace metric based on principal angles between the leading two-dimensional eigenspaces, $U_2$ (resp.\ $\widehat U_2$), of $\pi$ (resp.\ $\hat\pi$), $\|\sin\Theta_2\|_F$, where $\sin\Theta_2:= \sin\Theta(U_2,\widehat U_2)$, with $\Theta(U_2,\widehat U_2)$ the diagonal matrix so that the singular values of $U_2^\top \widehat U_2$ are the cosines of the diagonal matrix $\Theta(U_2,\widehat U_2)$. Note that $\|\sin\Theta_2\|_F$ measures the error of tangent space estimation.

Summary statistics (means $\pm$ standard deviations) are reported in Table~\ref{tab:mag-and-error-plots-obs-ell-table}, with the associated histograms in Figures~\ref{fig:obs-ell-hist} and \ref{fig:obs-ell-hist-2}. Results for $M$ with other eccentricities are shown in Table~\ref{tab:mag-and-error-plots-obs-ell-table-2}. Consistently smaller diffusion errors compared to drift errors reflect the greater difficulty of drift estimation discussed after Theorem~\ref{main theorem drift}.
The drift estimators exhibit the predicted ordering: $\hat\mu_E$ has the largest error, $\hat\mu^{(\texttt{o})}$ improves upon it, and $P\hat\mu_E$ performs best. This pattern holds across relative vector, norm, and angle errors. Paired one-sided Wilcoxon signed-rank tests with Bonferroni correction confirm $P\hat\mu_E < \hat\mu^{(\texttt{o})} < \hat\mu_E$ for all drift metrics (adjusted $p$-values $<10^{-5}$).

Finally, a detailed study at $(0,0,1)^\top$, based on 1000 independent SDE simulations with $n = 10^6$ and $\Delta = 10^{-2}$ can be found in Figures \ref{fig:IA3a} and \ref{fig:IA3b}. The approximately Gaussian error distributions align with the asymptotic normality result; QQ-plots are provided in Figures~\ref{fig:IA2} and \ref{fig:IA2-22}.

\begin{figure}[hbt!]
    \centering
    \includegraphics[trim=0 0 0 0,clip, width=0.95\textwidth]{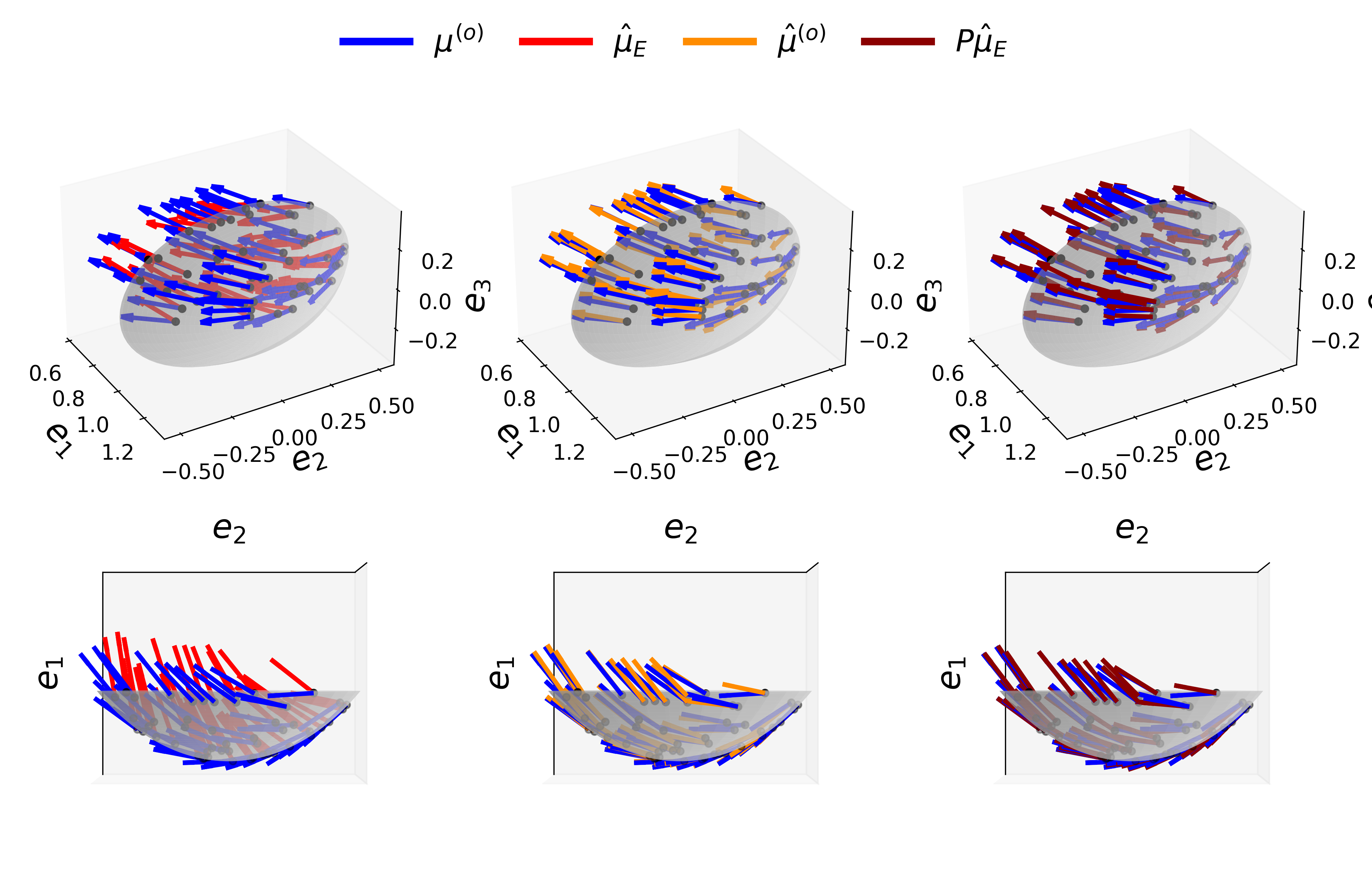}
    \caption{From left to right: visualizations of $\hat{\mu}_E(x), \hat{\mu}^{(\texttt{o})}(x),$ and $P_x \hat{\mu}_E(x)$, where $P_x$ is the projection operator onto the tangent space $T_xM$, for base-point samples $x$ drawn uniformly from a spherical cap centered at $(1,0,0)^\top$ and observed on an ellipsoid with eccentricity $(2,1.5,1)$, shown from two viewing angles. The ground-truth drift vector is superimposed as blue arrows.}
    \label{fig:IB1}
\end{figure}

\begin{table}[t]
\scriptsize
\renewcommand{\arraystretch}{1.15}
\begin{tabular}{lcccc}
\toprule
Error type & $\hat{\mu}_E$ & $\hat{\mu}^{(\texttt{o})}$ & $P_x\hat{\mu}_E$ & $\hat{\pi}^{(\texttt{o})}$ \\
\midrule
 $\frac{\|\hat\mu(x)-\mu^{(\texttt{o})}(x)\|}{\|\mu^{(\texttt{o})}(x)\|}$ ($\frac{\|\mu^{(\texttt{o})}(x)\|}{\|\mu^{(\texttt{o})}\|_\infty}\ge 0.05$)      & $1.060 \pm 0.566$ & $0.215 \pm 0.215$ & $0.208 \pm 0.214$ & --- \\
$\frac{|\|\hat\mu(x)\|-\|\mu^{(\texttt{o})}(x)\||}{\|\mu^{(\texttt{o})}(x)\|}$ ($\frac{\|\mu^{(\texttt{o})}(x)\|}{\|\mu^{(\texttt{o})}\|_\infty}\ge 0.05$)& $0.474 \pm 0.507$& $0.150 \pm 0.169$ & $0.147 \pm 0.167$ & --- \\
  $\Theta(\hat\mu(x),\mu^{(\texttt{o})}(x))$ ($\frac{\|\mu^{(\texttt{o})}(x)\|}{\|\mu^{(\texttt{o})}\|_\infty}\ge 0.05$)            & $0.782 \pm 0.190$& $0.129 \pm 0.192$ & $0.124 \pm 0.193$ & --- \\
 $\|\hat\mu(x)-\mu^{(\texttt{o})}(x)\|$   ($\frac{\|\mu^{(\texttt{o})}(x)\|}{\|\mu^{(\texttt{o})}\|_\infty}< 0.05$)                 & $0.644 \pm 0.013$ & $0.126 \pm 0.060$ & $0.125 \pm 0.059$ & --- \\
 $\|\hat\pi(x)-\pi^{(\texttt{o})}(x)\|_F/\|\pi(x)\|_F$          & --- & --- & --- & $0.038 \pm 0.013$ \\
  $\|\sin\Theta_2(x)\|_F$         & --- & --- & --- & $0.017 \pm 0.008$ \\
\bottomrule
\end{tabular}

\caption{\raggedright
Summary of various evaluation metrics. $\hat{\mu}(x)$ is the estimator of $\mu^{(\texttt{o})}(x)$, which can be $\hat{\mu}_E$, $\hat{\mu}^{(\texttt{o})}$, or $P_x\hat{\mu}_E$, where $P_x$ is the projection to $T_xM$, listed in the top. $\hat\pi(x)$ is the estimator of $\pi^{(\texttt{o})}(x)$, which is $\hat{\pi}^{(\texttt{o})}(x)$. $\Theta(\hat\mu,\mu^{(\texttt{o})})$ is the angle between $\mu^{(\texttt{o})}$ and $\hat\mu$ with the unit radian.  $\|\sin\Theta\|_F$ is the subspace distance between the dominant 2D eigenspaces of $\hat\pi^{(\texttt{o})}$ and $\pi^{(\texttt{o})}$.
}
\label{tab:mag-and-error-plots-obs-ell-table}
\end{table}

\begin{figure}[hbt!]
    \centering
\includegraphics[width=0.85\textwidth]{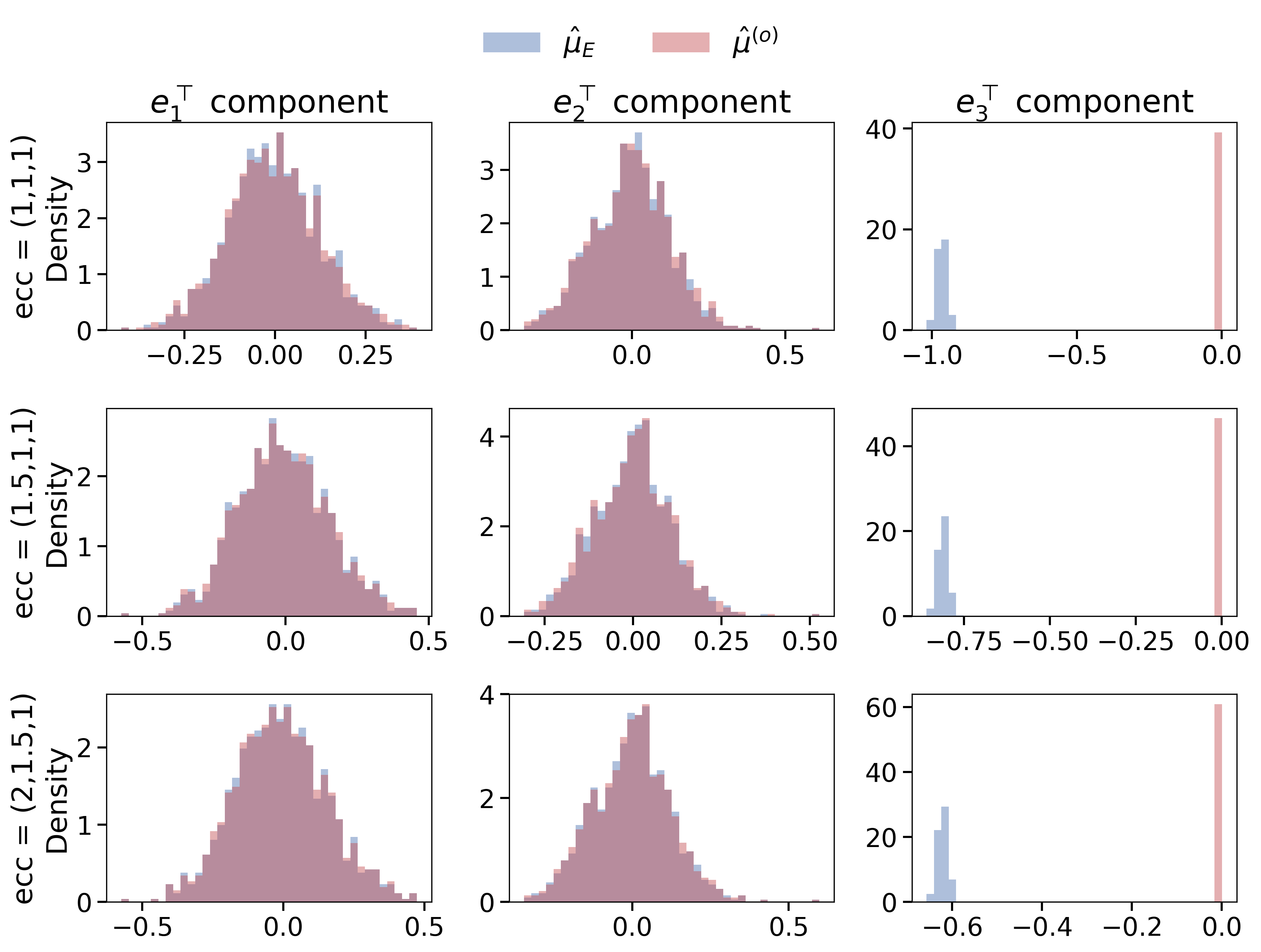}
   \caption{Histograms of drift estimation errors over 1000 independent SDE simulations, comparing $\hat{\mu}_E$ and $\hat{\mu}^{(\texttt{o})}$ to the true drift vector field $\mu^{(\texttt{o})}$ at $(0,0,1)^\top$.}
    \label{fig:IA3a}
\end{figure}

\begin{figure}[hbt!]
    \centering
    \includegraphics[width=0.85\textwidth]{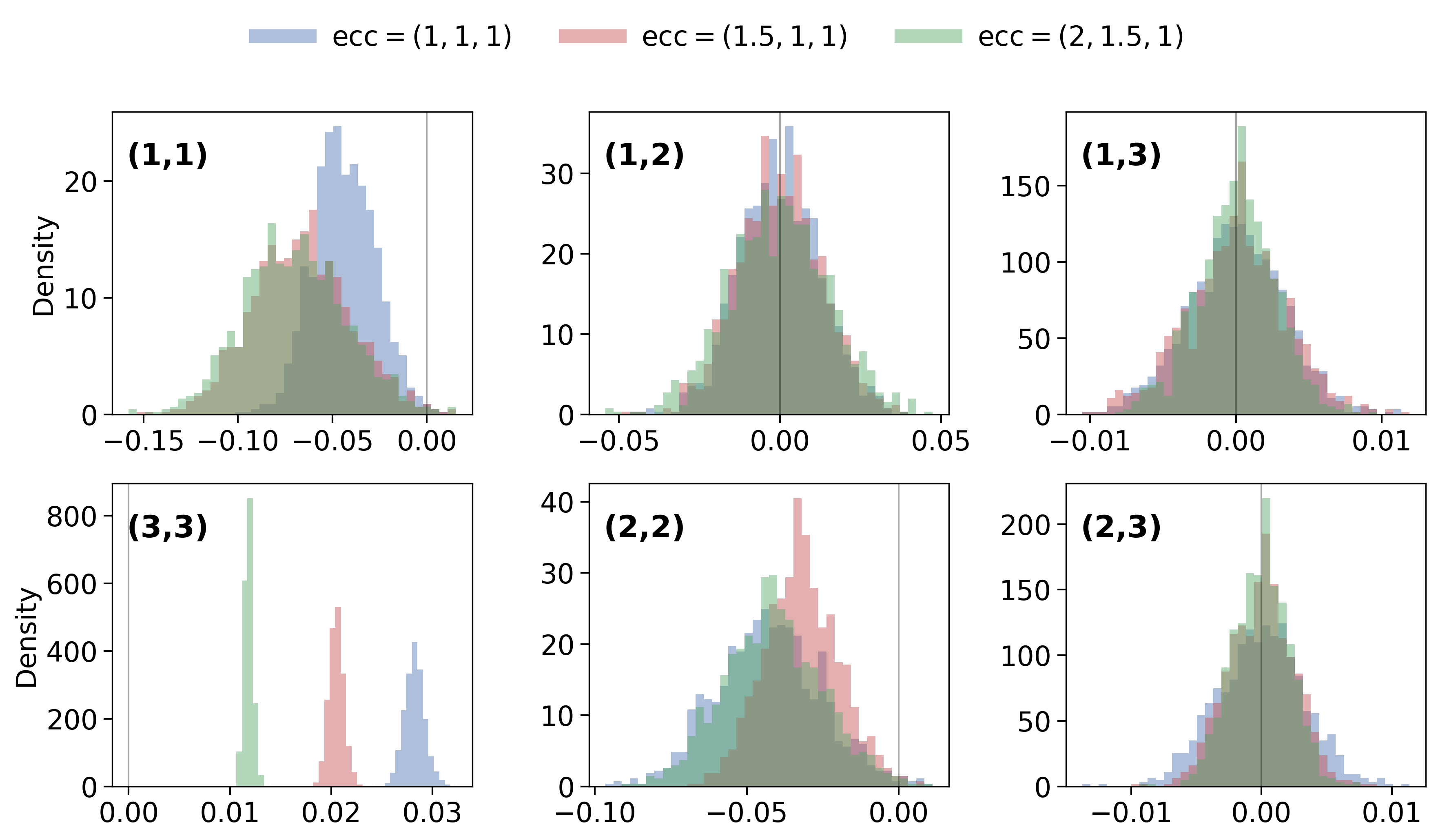}
     \caption{Histograms of the errors of the estimated diffusion matrix $u_i^\top\hat{\pi}^{(\texttt{o})}u_j$ over 1000 independent SDE simulations (labeled as $(i,j)$ in the subplots) compared to the true vector fields $u_i^\top \pi^{(\texttt{o})} u_j$, for ellipsoids of varying eccentricities.}
    \label{fig:IA3b}
\end{figure}

\subsection{The Klein Bottle in $\mathbb{R}^4$}

Set $N:=[0,2\pi)\times [0,2\pi)$. Consider the manifold $M$ to be the Klein bottle $K\subset \mathbb{R}^4$, a compact, non-orientable, $2$-dimensional smooth manifold without boundary, embedded in $\mathbb{R}^4$ via the smooth map
$\varphi:(u,v)\in N\to
    [\cos(u)(a + r\sin v), \ 
        \sin(u)(a + r\sin v), \ 
        r\cos(u/2)\sin v, \ 
        r\sin(u/2)\sin v]^\top$, where $a > r > 0$. 
Let $\Gamma$ be the fundamental group of the Klein bottle, which is a discrete group of diffeomorphisms of $\mathbb{R}^2$ generated by 
\begin{align}
g_1:(u,v) \mapsto (u, v + 2 \pi)\,, \quad
g_2:(u,v) \mapsto (u + 2 \pi, -v), \label{KB-identification}
\end{align}
where $(u,v)\in \mathbb{R}^2$, fulfilling the relation $g_1g_2g_1^{-1}=g_2$.
Note that $g_1$ leads to a cylinder, while $g_2$ twist and glue the cylinder's ends to generate a Klein bottle. Also note that $N$ is the fundamental region for the action of $\Gamma$ on $\mathbb{R}^2$.
Let $\pi:\mathbb{R}^2\to \mathbb{R}^2/\Gamma$ be the quotient map.

By construction, $\varphi$ is $\Gamma$-compatible; that is, $\varphi(u,v) = \varphi(\gamma \cdot (u,v))$ for all $\gamma \in \Gamma$, with $\gamma \cdot$ denoting the group action.
A smooth vector field $\mu$ on $\mathbb{R}^2$ is called {\em $\Gamma$-invariant} if for any $\gamma \in \Gamma$, $\gamma_*\mu(u,v)=\mu(\gamma\cdot(u,v))$. A $\Gamma$-invariant vector field $\mu$ therefore induces a smooth vector field on $N$,  and a smooth vector field on $M$ lifts to a $\Gamma$-invariant vector field on $\mathbb{R}^2$. 
Thus, smooth vector fields on $M$ are in one-to-one correspondence with $\Gamma$-invariant smooth vector fields on $\mathbb{R}^2$. With this property, in this section we simulate the $M$-valued SDE via simulating the Euclidean SDE on $\mathbb{R}^2$.

For the SDE, we set the drift as $\mu(u,v) = (1 + \frac{1}{2}\cos\left(\frac{u}{2} \right)\sin(v), \frac{1}{2}\sin\left(2v\right))^\top$ (shown in Figure~\ref{fig:KB_drift}) on $N$ and the diffusion to be Riemannian Brownian motion to enforce the $\Gamma$-invariant condition, so that the SDE is valid on $N$ with the drift $\mu^{(\texttt{l})}:=\pi_*\mu$, the drift $\mu^{(\texttt{o})}:=\varphi_*\mu^{(\texttt{l})} + \frac{1}{2} \iota_*(\nabla_{\varphi_* \sigma_\alpha} \varphi_* \sigma_\alpha)$ for $\sigma_\alpha$ vector fields generating Brownian motion on $N$, and the diffusion $\pi^{(\texttt{o})}= D\varphi D\varphi^\top$.
We use a standard Euler-scheme as in \eqref{Euler scheme} on the Euclidean plane for generating the SDE, and then apply $\varphi$ to obtain the SDE on $M$ embedded in $\mathbb{R}^4$.
Below, set {$a=2$ and $r=1$} in the parameterization of $M$ by $\varphi$.

Simulate a long trajectory with $n=10^8$ points and time-step $\Delta = 10^{-2}$, and then view the estimated density as a surrogate of $\phi_X$. In Figure \ref{fig:IC1}, we report the log-log plot of the $\|\hat{L}^{(\texttt{o})}-\phi_X\|_{L^2}$ to demonstrate the rate of convergence of the empirical density. The decay rate at around $n^{-0.5}$ is consistent with that of ellipsoid and close to $n^{-0.5}$, as expected. For a visualization, we plot the empirical density $\hat{L}^{(\texttt{o})}(x)$ of the trajectory at different lengths $n_i$, where $\log_{10}(n_i) \in \{4, 5,6,7,8\}$, in Figure \ref{fig:IC1-2}. 

\begin{figure}[hbt!]
    \centering
\includegraphics[width=0.65\textwidth]{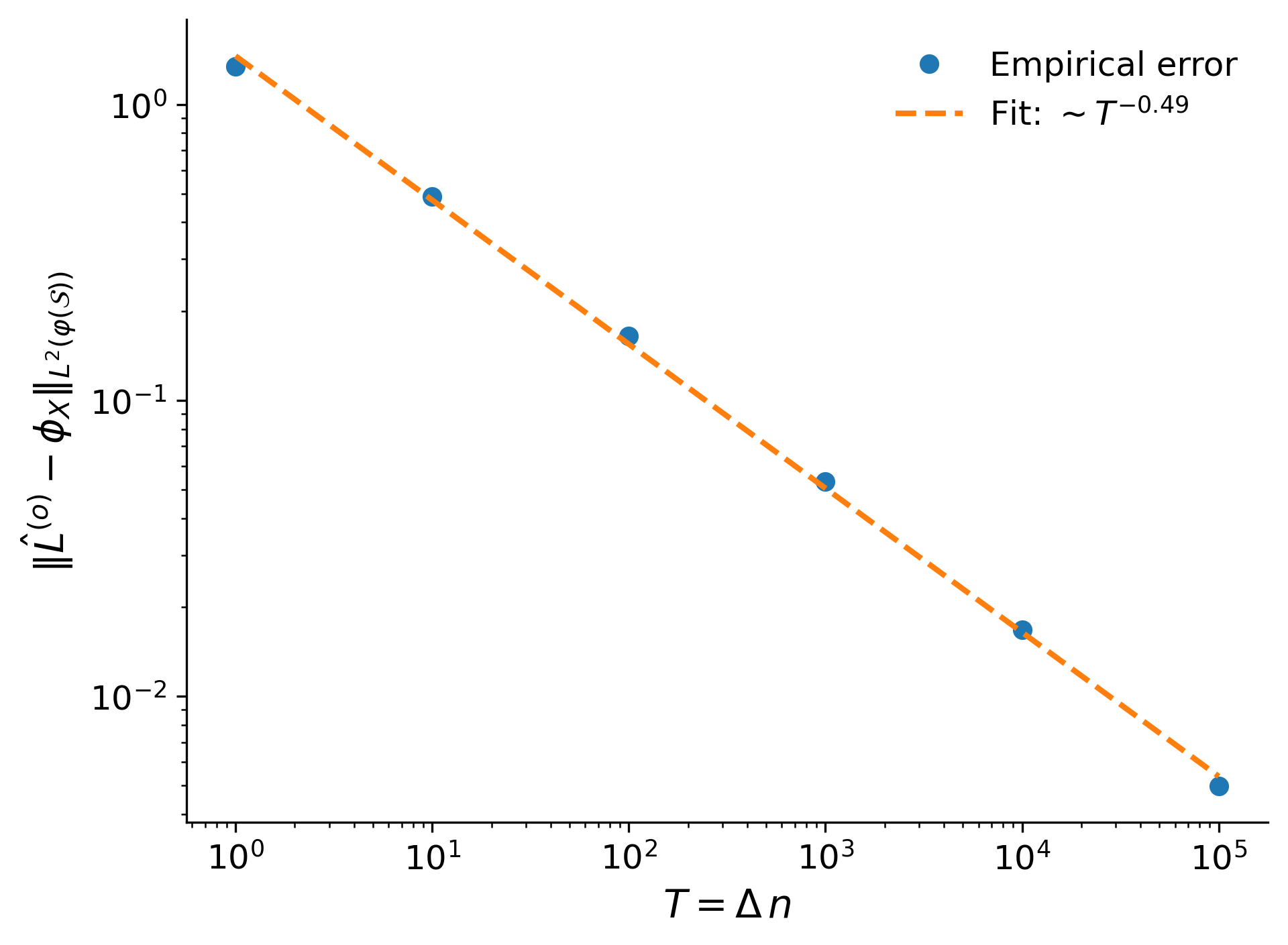}
     \caption{Using the occupation density for trajectory length $n = 10^8$ and $\Delta = 10^{-2}$ as an accurate estimate for the invariant density $\phi_X$, we compare the invariant density to the estimate $\hat{L}^{(\texttt{o})}$ based on the first $n$ data-points of the trajectory to measure the convergence rate of the empirical density.}
    \label{fig:IC1}
\end{figure}

Next, we realize an SDE trajectory with $n=10^8$ with initial point chosen uniformly at random from the sphere. We set the time-step $\Delta = 10^{-4}$, so that the physical time of the trajectory is $T = 10^4$. The trajectory is then down-sampled by every $100$ observations, which preserves the physical time but reduces sampled points to $n = 10^6$ with uniform time-step $\Delta = 10^{-2}$. Thus, the resolution of the trajectory generation is higher than that used in the estimation procedure. The down-sampled single trajectory is then mapped by $\varphi$ to the Klein bottle in $\mathbb{R}^4$. To visualize and compare the drift estimators, we reduce the dimension of a local patch using singular value decomposition (SVD). Take a local patch of $B=465$ trajectory points that lie within $[\pi-1, \pi+1]\times[2 \pi-\frac{1}{2}, 2\pi/2 + \frac{1}{2}] \subset N$, and are mapped to $\mathbb{R}^4$ by $\varphi$. Denote the center $(c_u,c_v)=(\pi,2\pi)\in N$, with image $x_c=\Phi(c_u,c_v)$. Form a centered data matrix $X=\big[x_1-x_c,\ldots,x_{B}-x_c\big]\in\mathbb{R}^{4\times B}$ and compute its SVD as $X=USV^\top$ with singular values ordered decreasingly. Define $U_3=[u_1,u_2,u_3]\in\mathbb{R}^{4\times 3}$ from the top three left singular vectors, and construct 
$z_i =U_3^\top(x_i-x_c)\in\mathbb{R}^3$, $i=1,\ldots,B$.
Then, plot the projected vectors
$U_3^\top \mu^{(\texttt{o})}(x_i)$, $U_3^\top {\hat{\mu}}_E(x_i)$, $U_3^\top {\hat{\mu}}^{(\texttt{o})}(x_i)$, and $U_3^\top {(P_x\hat{\mu}_E)}(x_i)$ on $\{z_i\}$. To aid visualization, we rotate $z_i$ so that the least weighted direction of $z_i$ is on the $z$ axis and plot the rotated patch together with the rotated drift vectors from three viewing angles in Figure \ref{fig:IC4}. The curvature-induced bias in $\hat{\mu}_E$ is visually clear. To aid visualization and avoid crowding, we randomly select only $58$, or approximately $B/8$, of the base-points in $[\pi-1, \pi+1]\times[2 \pi-\frac{1}{2}, 2\pi/2 + \frac{1}{2}]$ to plot the drift estimators.

\begin{figure}[hbt!]
  \centering
    \includegraphics[width=.95\textwidth]{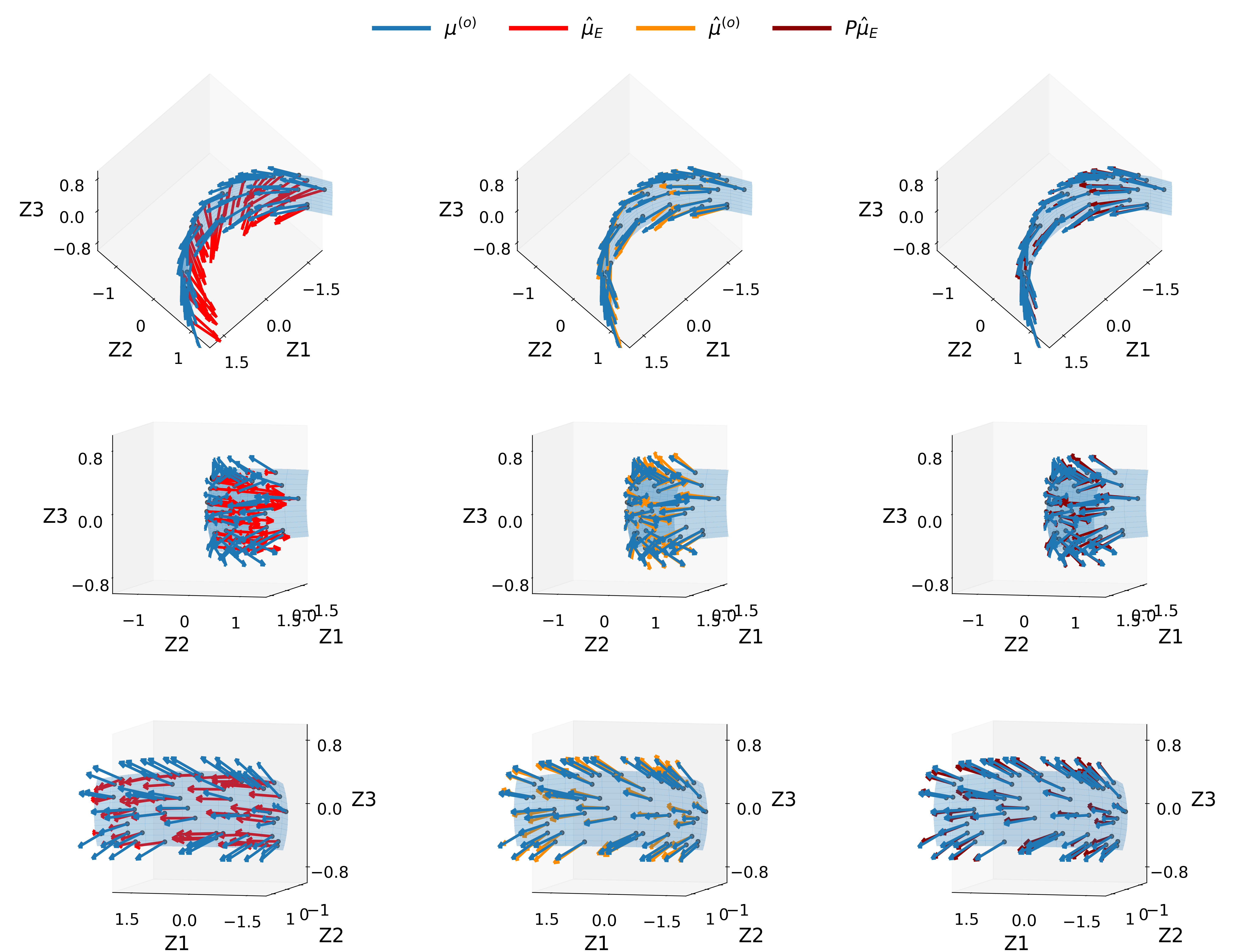} 
   \caption{From left to right: visualizations of $\hat{\mu}_E(x), \hat{\mu}^{(\texttt{o})}(x),$ and $P_x \hat{\mu}_E(x)$, where $P_x$ is the projection operator onto the tangent space $T_xM$. The ground-truth drift vector is superimposed as blue arrows. The associated local patch mapped via $U_3$ and rotation is overlaid to enhance the visualization.}
  \label{fig:IC4}
\end{figure}

To quantify these visual findings, we follow the same principle as that in the ellipsoid. We sample {$10^4$} points on $M$ by taking a uniform $100 \times 100$ grid on $N$ and mapping by $\varphi$. Since the vector field on $N$ is non-vanishing, we evaluate NRMSE at all points. The results are shown in Table~\ref{tab:obskb_summary_with_fro_and_subspace}. The associated histograms can be found in Figure \ref{fig:IC2-2}. For NRMSE and angle error of drift estimates, $\hat\mu_E$ has substantially larger error than $\hat\mu^{(\texttt{o})}$, and $\hat\mu^{(\texttt{o})}$ is slightly worse than $P_x\hat\mu_E$ (adjusted $p$-values $<10^{-5}$). For the {relative norm error}, $\hat\mu^{(\texttt{o})}$ has significantly smaller relative norm error than $\hat\mu_E$ (adjusted $p$-values $<10^{-5}$) while
the comparison between $P_x\hat\mu_E$ and $\hat\mu^{(\texttt{o})}$ does not have statistical significance.

\begin{table}[t]
\centering
\renewcommand{\arraystretch}{1.1}

\begin{tabular}{lcccc}
\toprule
Error type & $\hat{\mu}_E$ & $\hat{\mu}^{(\texttt{o})}$ & $P_x\hat{\mu}_E$ & $\hat{\pi}^{(\texttt{o})}$ \\
\midrule
$\frac{\|\hat\mu(x)-\mu^{(\texttt{o})}(x)\|}{\|\mu^{(\texttt{o})}(x)\|}$              & $0.363 \pm 0.206$ & $0.144 \pm 0.083$ & $0.137 \pm 0.086$  & --- \\
$\frac{|\|\hat\mu(x)\|-\|\mu^{(\texttt{o})}(x)\||}{\|\mu^{(\texttt{o})}(x)\|}$        & $0.117 \pm 0.090$& $0.100 \pm 0.079$& $0.102 \pm 0.080$   & --- \\
$\Theta(\hat\mu(x),\mu^{(\texttt{o})}(x))$                  & $0.319 \pm 0.202$ & $0.089 \pm 0.065$& $0.074 \pm 0.065$& --- \\
$\frac{\|\hat\pi(x)-\pi^{(\texttt{o})}(x)\|_F}{\|\pi(x)\|_F}$                      & ---              & ---                & ---                & $0.048 \pm 0.016$ \\
$\|\sin\Theta_2(x)\|_F$                               & ---              & ---                & ---                & $0.052 \pm 0.041$ \\
\bottomrule
\end{tabular}
\caption{\raggedright Summary of various evaluation metrics. $\hat{\mu}(x)$ is the estimator of $\mu^{(\texttt{o})}(x)$, which can be $\hat{\mu}_E$, $\hat{\mu}^{(\texttt{o})}$, or $P_x\hat{\mu}_E$, where $P_x$ is the projection to $T_xM$, listed in the top. $\hat\pi(x)$ is the estimator of $\pi^{(\texttt{o})}(x)$, which is $\hat{\pi}^{(\texttt{o})}(x)$. $\Theta(\mu^{(\texttt{o})},\hat\mu)$ is the angle between $\mu^{(\texttt{o})}$ and $\hat\mu$ with the unit radian.  $\|\sin\Theta\|_F$ is the subspace distance between the dominant 2D eigenspaces of $\hat\pi^{(\texttt{o})}$ and $\pi^{(\texttt{o})}$.}
\label{tab:obskb_summary_with_fro_and_subspace}
\end{table}

\section{Discussion}\label{section discussion label}
This work develops drift and diffusion estimators, along with tangent space and occupation density estimators, from uniform and high-frequency sampling under a manifold-based time-homogeneous SDE framework with theoretical guarantees. Several important issues merit further discussion.
A central practical question is bandwidth selection, which remains delicate in diffusion-based regression. Achieving an optimal bandwidth choice of $h$ is complicated, particularly when $X_t$ is null Harris recurrent; see \cite[Remark 12]{bandi_moloche_2018}. Existing methods \cite{ait2016bandwidth,bandi2009bandwidth} provide guidance for Euclidean settings, but their extension to manifold-supported, temporally dependent data is largely unexplored. Identifying data-driven bandwidth choices that balance curvature effects, sampling density, and temporal dependence is a promising direction that is also important for manifold learning algorithms. 
The present analysis assumes a fixed sampling interval $\Delta$. In many biomedical applications \cite{wang2023arterial,Chang2025.10.12.25337819}, however, $\Delta$ is nonuniform and dependent on $X_t$. A thorough study of nonuniform sampling, particularly for high-frequency data corrupted by noise or synchronization errors, is beyond the scope of this paper but represents a natural extension. Related challenges include model misspecification \cite{pavliotis2007parameter}, low-frequency sampling \cite{gobet2004nonparametric}, high-frequency noise amplification \cite{zhang2005tale,pavliotis2007parameter}, and asynchrony or randomized sampling schemes \cite{akahori2023symmetric}. All of these issues become more intricate on manifolds.
Another practical question is the convergence rate of the proposed estimators. Weak convergence follows from technical limit theorems \cite{limit_theorems_null} and we need different techniques to derive the rate.
More generally, biomedical time series are often time-inhomogeneous and contaminated by nonstationary noise or measurement artifacts. Extending the framework to accommodate such behavior, possibly through generalized Fourier-Malliavin techniques \cite{malliavin2009fourier,akahori2023symmetric}, is an important and largely open direction. Locally linear manifold regression methods \cite{cheng2013local}, or higher-order variations, may further reduce bias in curved regions, though controlling curvature-dependent remainder terms in the higher-order variations requires additional technical development. From a manifold-learning perspective, an important direction is to adapt existing algorithms to dependent sampling schemes; for instance, can we still approximate the Laplace-Beltrami operator from the graph Laplacian? If not, how to modify it if we need the Laplace-Beltrami operator? 
We leave these extensions to future work.

\ifsubmission

\else
\bibliographystyle{plain}  
\fi

\bibliography{references}

\ifsubmission

\else

\appendix

\setcounter{page}{1}
\setcounter{equation}{0}
\setcounter{definition}{0}
\setcounter{lemma}{0}
\setcounter{theorem}{0}
\renewcommand{\thepage}{S.\arabic{page}}
\renewcommand{\thesection}{S.\arabic{section}}
\renewcommand{\theequation}{S.\arabic{equation}}
\renewcommand{\thelemma}{S.\arabic{section}.\arabic{lemma}}
\renewcommand{\thetable}{S.\arabic{table}}
\renewcommand{\thedefinition}{S.\arabic{definition}}
\renewcommand{\thefigure}{S.\arabic{figure}}

\section{Necessary mathematical background}\label{sec:App1}
In this section, we summarize necessary mathematical background for the analysis of the proposed estimators. 

\subsection{Life-cycle decomposition and L\"ocherbach-Loukianova embedding}\label{section LL embedding}
Recall that an
 {\em additive functional} \cite[Definition 1.5]{limit_theorems_null} of a Harris recurrent $X=(X_t)_{t\geq 0}$ on $(\Omega,\mathcal{F},\mathbb{P})$ taking values in a manifold $M$ with c\`adl\`ag paths, and with $X_0=x\in M$ a.s., is a process $(A_t)_{t\geq 0}\subset\mathbb{R}_+\cup\{\infty\}$ so that $A_t$ is $\mathcal{F}$ adapted, $A_0=0$, all paths are nondecreasing and right-continuous, and for every $x\in M$ and for all $s,t\geq 0$, we have $A_{t+s}=A_t+A_s\circ \vartheta_t$ $P_x$-a.s., where $\vartheta_t$ is the shift operator. 
 In our asymptotic analysis of the proposed estimators, we encounter additive functionals of the form
\begin{align}
A_t:=\int_0^t f(X_s) ds\,, \label{additive-functional}
\end{align}
where $t \geq 0$ and $f\geq 0$ is bounded and measurable, and its discretization. We focus on studying the asymptotic behavior of $A_t$ and its Riemann sum discretization when $t\to \infty$ and $X_t$ is an It\^o diffusion and Harris recurrent with an invariant measure $\phi_X$.  For such $f$, the additive functional $A_t$ defines a measure on $(M,\mathcal{B}(M))$ via $\nu_A(B):=\mathbb{E}_{\phi_X}^M(\int^1_0 \boldsymbol{1}_B(X_s)dA_s)$, where $B\in \mathcal{B}(M)$ is a Borel measurable set. We call the additive functional {\em integrable} if $\nu_A(M)=\mathbb{E}_{\phi_X}^M(A_1)<\infty$ \cite[page 12]{limit_theorems_null}. 
Note that 
\begin{align}\label{section LL embedding additive functional measure is linear}
\mathbb{E}_{\phi_X}^M\left(\int^1_0 \boldsymbol{1}_B(X_s)dA_s\right)=t^{-1}\mathbb{E}_{\phi_X}^M\left(\int^t_0 \boldsymbol{1}_B(X_s)dA_s\right)
\end{align} 
for any $t>0$ since $t\mapsto \mathbb{E}_{\phi_X}^M(A_t)$ is linear. So, alternatively when $\langle \phi_X, f \rangle_M <\infty$, the additive functional is integrable \cite{LOCHERBACH20092312}, and we have $\mathbb{E}_{\phi_X}^M(A_1)= \langle \phi_X, f \rangle_M$. 

\subsubsection{Recurrent atom} \label{subsection recurrent atom}
The analysis of \eqref{additive-functional} would become straightforward if this additive functional of $X_t$ could be decomposed into i.i.d. (or even stationary and strongly mixing) components. Such a decomposition would exist if the invariant measure $\phi_X$ of the process $X_t$ contained a {\em recurrent atom} $\mathsf A$; i.e., whenever the process enters 
$\mathsf A$, it returns to $\mathsf A$ infinitely often with probability $1$.

\begin{definition}[Definition 1.9.A, \cite{limit_theorems_null}]
Consider a random process $X_t$ on the Polish state space $(E, \mathcal{E})$ adapted to the filtration $\mathcal{F}$. The set $\mathsf A$ is an \textit{atom} for $X_t$ if
    \begin{enumerate}
 \item $\sigma_{\mathsf A} := \inf \{ t > 0: X_t \in \mathsf A\}$ and $\tau_{\mathsf A} = \inf \{ \tau > 0: X_t \notin \mathsf A\}$ are $\mathcal{F}$-stopping times;
 
 \item for $x \in \mathsf A$, the distribution of $X_{\tau_{\mathsf A}}$ conditional on $X_0 = x$ does not depend on $x$.
    \end{enumerate}
An atom is called \textit{recurrent} if for all $x \in \mathsf A$, then a.s. for every $N$ there is some $t > N$ with $X_t \in \mathsf A$ given $X_0 = x$.
\end{definition}
The existence of a recurrent atom $\mathsf A$ implies a sequence of stopping times $\{R_m\}_{m=0}^\infty$, called a {\em life-cycle decomposition} \cite[Definition 1.9.B]{limit_theorems_null}, so that $X_{R_m} \in \mathsf A$, and for all $m$, we have
\begin{enumerate}
\item the stopping time $R_m< \infty$ with $R_{m+1} = R_m + R_1 \circ \vartheta_{R_m}$, where $\vartheta$ is the shift operator;
\item $X_{t + R_m}$ is independent of $\mathcal{F}_{R_m}$ for $t \ge 0$.
\end{enumerate}
Supposing such a life-cycle decomposition exists, we have standard tools to understand the asymptotic behavior of the additive functional of the process $X_t$.

However,  Harris recurrence does not in general guarantee life-cycle decomposition, even in the Euclidean setting, since the process $X_t$ may fail to admit a recurrent atom. Nevertheless, the desired result can still be obtained by invoking the embedding technique introduced in \cite{LOCHERBACH20081301}, which we call {\em L\"ocherbach-Loukianova (LL) embedding}. LL embedding extends the classical constructions of of \cite{nummelin_1978,athreya_1978} to continuous-time processes. The key idea is to embed $X_t$ into an augmented process $Z_t$ that possesses a recurrent atom $\mathsf A$, thereby providing a surrogate notion of life-cycle decomposition suitable for our analysis. This technique, central to \cite{bandi_moloche_2018}, will likewise prove applicable in our setting. For self-containedness, and to facilitate its extension to the manifold-valued setting, we outline the construction below.

\subsubsection{L\"ocherbach-Loukianova embedding}

We summarize the LL embedding, $Z_t := Z(X_t)$ of the process $X_t$ defined to solve a manifold-valued SDE, as follows (full details can be found in Section 2.2 of \cite{LOCHERBACH20081301}).

Suppose $X_t$ satisfies Assumptions \ref{manifold-ass}, \ref{manifold-ass2}, and \ref{lebesgue-dens-ass}. As discussed in Section \ref{subsection review SDE}, it is strong Markov \cite[IX.3-5]{Elworthy1982sde} with c\`adl\`ag paths \cite[VIII.6]{Elworthy1982sde}. When $M$ is compact, we have all necessary conditions. Since $X_t$ is positive Harris recurrent, we have an invariant probability measure $\phi_X$ with a smooth density function $p_X > 0$ by Assumption \ref{lebesgue-dens-ass}.  
By \cite[Theorem 6.4.1]{Stroock2010pde}, there exists smooth kernel functions $p_t(x,y)$ on $(0,\infty)\times M\times M$ such that $p_t(x,y)>0$ for all $t>0$ and $x,y\in M$, and $p_t(x,y)$ has a Gaussian upper and lower bounds. Therefore, Assumption 2.1 in \cite{LOCHERBACH20081301} is satisfied. Moreover, \cite[condition (2.8)]{LOCHERBACH20081301} about the resolvent control, $\sup_{x,x' \in M} \int_0^\infty te^{-t}\frac{p_t(x,x')}{u^1(x,x')} dt < \infty$, where $u^1(x,x') := \int_0^\infty e^{-t}p_t(x,x') dt$, holds by a direct calculation. Clearly, $d\nu_{x,x'}(t):=\frac{e^{-t}p_t(x,x')}{u^1(x,x')}dt$ is a probability measure on $\mathbb{R}_+$. With this validation, Assumption 3.1 in \cite{LOCHERBACH20081301} is also satisfied. When $M$ is noncompact, we use the imposed Assumption \ref{manifold-ass2} so that Assumption 3.1 in \cite{LOCHERBACH20081301} is satisfied and the LL embedding can be applied.

We first assemble the ingredients required for the LL embedding.
By Harris recurrence, the process returns  infinitely often to any set with positive $\phi_X$ measure almost surely. Choose a compact set $C \subset M$ so that $\phi_X(C) > 0$, where $\phi_X$ is the invariant measure of $X_t$; the particular choice of $C$ is immaterial to the result. Let $\xi$ be the probability measure equivalent to $\phi_X( \cdot \cap C)$. Recall that the transition kernel of $X_t$ is  given by $P_t(x,dy)=p_t(x,y)dV_g(y)$, where $dV_g(y)$ is the Riemannian volume measure. Recall that the discrete-time random process $\{\bar{X}_n:=X_{\mathsf T_n}\}_{n\in\mathbb{N}}$, where $\mathsf T_n:=\sum_{j=1}^n\mathsf W_j$ and $\mathsf W_j$ is an i.i.d. $\exp(1)$-waiting times, is Harris recurrent with its transition kernel $U^1(x,dy)$ satisfying the minorisation condition: \cite[(2.3)]{LOCHERBACH20081301}
\begin{align*}
U^1(x,dy) &:= \int_0^\infty e^{-t}P_t(x,dy)dt \ge \beta \mathbf{1}_C(x) \xi(dy) \,,
\end{align*}
where $\beta \in (0,1)$. By Assumption, $U^1\ll dV_g$ and the associated density function if $u^1(x,u):=\int_0^\infty e^{-t}p_t(x,y)dt$.

Let $U(du)$ denote the uniform probability measure on $[0,1]$. Following \cite[(2.4)]{LOCHERBACH20081301}, define the transition kernel $Q((x,u),dy)$ from $M \times [0,1]$ to $M$ by
\begin{align*}
Q((x,u),dy) &= \begin{cases}
\xi(dy) \qquad & (x,u) \in C \times [0, \beta] \\
(1- \beta)^{-1}(U^1(x,dy) - \beta\xi(dy))\qquad &(x,u) \in C \times (\beta,1 ] \\
U^1(x,dy) \qquad &x \notin C \,.
\end{cases}
\end{align*}
By construction, $\int_0^1 Q((x,u),dy)du=U^1(x,dy)$ \cite[(2.5)]{LOCHERBACH20081301}. 

We now quickly review the LL embedding of $X_t$ by constructing a new random process $Z_t= (Z^1_t, Z^2_t, Z^3_t)$ that takes values in $\mathcal{M} := M \times [0,1] \times M$  using $Q$. Set $\mathsf T_0=0$. Initially, set $Z^1_0=X_0=x$,  $Z^2_0\sim U$, where $U$ is the uniform distribution on $[0,1]$, and set $Z^3_0\sim Q((x,u),dx')$ conditional on $Z^2_0=u$. The construction is iteratively on $m\in \mathbb{N}$, and it depends on a sequence of independent jump times $\{\mathsf T_m\}_{m=1}^\infty$ iteratively defined as $\mathsf T_{m+1}:= \mathsf T_m+\sigma_{m+1}$, where $\sigma_m$ follows the law $e^{-t}\frac{p_t(x,x')}{u^1(x,x')}$ on $\mathbb{R}_+$ with $x=Z^1_{\mathsf T_m}$ and $z'=Z^3_{\mathsf T_m}$, and $\{\mathsf T_m\}_{m=0}^\infty$ are independent of the process $X_t$.
Denote $I_m := [\mathsf T_m, \mathsf T_{m+1})$. Then, on $I_m$, construct $Z^1$ that bridges $X_{\mathsf T_m}$ and $X_{\mathsf T_{m+1}}$, and set $Z_t^2=Z^2_{\mathsf T_m}$ and $Z_t^3=Z^3_{\mathsf T_m}$ for all $t\in I_{m}$. At the jump time $\mathsf T_{m+1}$, set $Z^1_{\mathsf T_{m+1}}:=Z^3_{\mathsf T_m}$, choose $Z^2_{\mathsf T_{m+1}}\sim U[0,1]$ independent of $Z_s$, $s<\mathsf T_{m+1}$, and conditional on $Z^1_{\mathsf T_{m+1}}=x'$ and $Z^2_{\mathsf T_{m+1}}=u'$, choose $Z^3_{\mathsf T_{m+1}}\sim Q((x',u'),dx'')$. Clearly, the evolution of $Z_t$ during $I_m$ does not depend on $Z^2_{\mathsf T_m}$. Then, iterate. Here, $Z^2_t$ and $Z^3_t$ are piecewise constant, $Z^2_t$ is used to realize the splitting and $Z^3_t$ records the future information. 
We refer readers with interest in details to \cite[Section 2]{LOCHERBACH20081301}.
Denote $\mathsf{F}_t$ be the filtration generated by $Z_t$ and $\mathsf{G}_t$ be the filtration generated by $(Z^1_t,Z^2_t)$.

Some properties follow regarding this construction. While we do not need all of them, we collect them here for the self-containedness. 

\begin{proposition}\label{proposition Zt is Harris recurrent}
The LL embedded process $Z_t$ satisfies:
\begin{enumerate}
\item \cite[Theorem 2.12]{LOCHERBACH20081301} $Z_t$ is strong Markovian for any stopping time $S$ such that $Z^1_S\neq Z^3_S$ a.s. and $Z_t$ is c\`adl\`ag. 

\item $Z_t$ is Harris recurrent with the invariant probability measure 
\[
\Pi(dy,du,dz)=dV_g(y)U(du)\int_M \phi_X(dx) u^1(x,y)\frac{u^1(y,z)}{u^1(x,z)}Q((x,u),dz)\,. 
\]

\item By construction, while $\{\mathsf T_m\}_m$ is not independent of $Z_t$, it is independent of $Z^1_t$. Also, $\{\mathsf T_m\}_m$ is the jump times of $(Z^2_t, Z^3_t)$. 
 
\item \cite[Proposition 2.6]{LOCHERBACH20081301} $(\mathsf T_n- \mathsf T_{n-1})_{n\geq 1}$ are i.i.d. $\exp(1)$ random variables, and $\mathsf T_{n+1}-\mathsf T_n$ is independent of $\mathsf{F}_{\mathsf T_n-}$. 
\end{enumerate}
\end{proposition}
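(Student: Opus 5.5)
Items (1) and (4) need no new argument. I would invoke \cite[Theorem 2.12]{LOCHERBACH20081301} and \cite[Proposition 2.6]{LOCHERBACH20081301} directly, after checking that their standing hypotheses hold. This check was done in the paragraphs preceding the statement:
\begin{itemize}
\item Assumption 2.1 of \cite{LOCHERBACH20081301} holds: the transition densities $p_t(x,y)$ are smooth and strictly positive, by \cite[Theorem 6.4.1]{Stroock2010pde} in the compact case and Assumption~\ref{manifold-ass2}(3) otherwise.
\item The resolvent condition (2.8) of \cite{LOCHERBACH20081301} holds.
\item Assumption 3.1 of \cite{LOCHERBACH20081301} holds.
\end{itemize}
The only thing to stress is that \cite{LOCHERBACH20081301} is written for $\mathbb{R}^p$, but their proofs use only the Polish structure of the state space, the strong Markov and c\`adl\`ag properties of $X_t$, the existence and positivity of $p_t$, and the minorisation $U^1\ge\beta\mathbf 1_C\otimes\xi$. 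All of these transfer verbatim to $M$ with $dV_g$ in place of Lebesgue measure.

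For item (3), I would use the key structural fact of the LL construction: the first coordinate $Z^1_t$ has the same law as $X_t$. Concretely, $(Z^1,\{\mathsf T_m\})$ has the law of $X$ observed together with an independent rate-one Poisson process. This is the content of the identity $\int_0^1Q((x,u),dy)\,du=U^1(x,dy)$ combined with the bridge construction, since
\[
\int_0^\infty e^{-\sigma}\frac{p_\sigma(x,z)}{u^1(x,z)}\,\text{bridge}_{x\to z}^{\sigma}\,d\sigma\; U^1(x,dz)
\]
reproduces the law of $X$ run for an independent $\exp(1)$ time. Iterating over $m$ by the strong Markov property gives independence of $\{\mathsf T_m\}$ from $Z^1$. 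The statement that $\{\mathsf T_m\}$ are exactly the jump times of $(Z^2,Z^3)$ is by construction: these coordinates are constant on each $I_m$ and resampled at $\mathsf T_{m+1}$. The resampling changes them almost surely because $Z^2$ is drawn from a continuous law.

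For item (2), I would proceed in two steps.

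\emph{Recurrent atom.} Take the atom $\mathsf A:=C\times[0,\beta]\times M$ at jump times. The embedded chain $\bar X_n=X_{\mathsf T_n}$ is Harris recurrent, so $Z^1_{\mathsf T_n}\in C$ for infinitely many $n$ almost surely. On each such visit, $Z^2_{\mathsf T_n}\le\beta$ independently with probability $\beta$, so by the conditional Borel--Cantelli lemma the chain enters $\mathsf A$ infinitely often. On each entry, $Z^3$ is drawn from $\xi$ regardless of the past, which yields regeneration. Harris recurrence of $Z$ follows from these regenerations together with Harris recurrence of $Z^1\overset{d}{=}X$.

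\emph{Identifying $\Pi$.} I would compute the stationary law by a Palm/cycle argument. At stationarity, the jump-time state is $\phi_X(dx)\,U(du)\,Q((x,u),dz)$. Between jumps, $Z^1$ is the bridge from $x$ to $z$ of length $\sigma$, so its density at elapsed time $t$ is $p_t(x,y)p_{\sigma-t}(y,z)/p_\sigma(x,z)$. The inter-jump times have mean one. Time-averaging over $t\in[0,\sigma)$ and integrating $\sigma$ against $e^{-\sigma}p_\sigma(x,z)/u^1(x,z)$ gives
\[
\frac{1}{u^1(x,z)}\int_0^\infty\!\!\int_t^\infty e^{-\sigma}p_t(x,y)p_{\sigma-t}(y,z)\,d\sigma\,dt=\frac{u^1(x,y)\,u^1(y,z)}{u^1(x,z)},
\]
after the substitution $s=\sigma-t$. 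This is exactly the displayed $\Pi$. Uniqueness up to scaling then follows from Harris recurrence.

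The main obstacle is making the Palm-inversion step rigorous: passing from the stationary law at jump times to the continuous-time stationary measure, and verifying invariance $\Pi P^Z_t=\Pi$ despite $Z$ being strong Markov only at stopping times with $Z^1_S\ne Z^3_S$. I would first try to prove invariance at the jump chain and then lift it via the ratio limit theorem (Theorem~\ref{ratio-limit0SUPP}) applied to $Z$. I would also note that when $\phi_X$ is infinite (null recurrence), $\Pi$ is only $\sigma$-finite, so ``probability'' should be read up to normalization.
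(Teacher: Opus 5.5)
Your proposal is correct. For items (2) and (3) it takes a different route from the paper, which gives no argument of its own. In the paper, items (1) and (4) are quoted from \cite{LOCHERBACH20081301}, item (3) is asserted ``by construction'', and the formula for $\Pi$ in item (2) is stated without proof, presumably taken from the same source.

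You instead derive (3) by identifying $(Z^1,\{\mathsf T_m\})$ in law with $X$ plus an independent rate-one Poisson clock. This is right: integrating out $Z^2_{\mathsf T_m}$ turns $Q$ into $U^1$, and disintegrating an $\exp(1)$ run of $X$ over its endpoint gives exactly the holding law $e^{-\sigma}p_\sigma(x,z)/u^1(x,z)\,d\sigma$ and the bridge. You derive (2) in three steps: Harris recurrence of the resolvent chain, conditional Borel--Cantelli on $\{Z^2_{\mathsf T_n}\le\beta\}$, and a cycle (Palm) inversion of the jump-chain invariant law $\phi_X(dx)U(du)Q((x,u),dz)$ over bridge segments. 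Your bridge computation reproduces the displayed $\Pi$ exactly. The cycles obtained this way are only $1$-dependent (cf. Proposition~\ref{prop: summary of Rn and Sn}(5)), but the even-indexed ones are i.i.d., which is enough for Harris recurrence.

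Your route buys transparency. It shows that the $y$-marginal of $\Pi$ is $\phi_X U^1=\phi_X$, and that $Z^3_t\mid Z^1_t\sim U^1(Z^1_t,\cdot)$, as in Proposition~\ref{LL-Prop2.8a}. It also shows $\Pi(\mathcal M)=\phi_X(M)$. So your caveat about the word ``probability'' is right: in the null-recurrent case $\Pi$ is infinite and cannot be normalized at all.

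What the citation buys is precisely the step you flag. Lifting via a ratio limit theorem for $Z$ presupposes that such a theorem is available for a process that is strong Markov only at stopping times with $Z^1_S\neq Z^3_S$. A more direct closure is the regenerative cycle formula.
\begin{itemize}
\item The map $f\mapsto\mathbb{E}^{\mathcal M}_\xi\int_{R_1}^{R_2}f(Z_t)\,dt$ is invariant. This uses the Markov property at fixed times, the strong Markov property at $R_1$ and $R_2$ (where $Z^1\neq Z^3$ a.s.), and the common law of $Z_{R_n}$.
\item Splitting this integral over the jump intervals and applying the chain-level cycle formula identifies it with your Palm expression, including the $\sigma$-finite case.
\end{itemize}

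One small correction: do not list condition (2.8) of \cite{LOCHERBACH20081301} as verified in general. The paper checks it only for compact $M$, and later observes that it fails for Brownian motion on $\mathbb{R}$. None of the four items uses it; by the paper's own account it enters only when controlling the first cycle from a general initial law.
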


The next property is the most important one concerning us is the existence of a ``recurrent atom'' on the extended space $\mathcal{M}$.

\begin{proposition}\cite[Proposition 2.8(a)]{LOCHERBACH20081301}\cite[Proposition 4.1]{LOCHERBACH20092312}\label{LL-Prop2.8a} 
The set \begin{align}\label{definition Zt recurrent atom}
\mathsf A = C \times [0, \beta] \times M\,,
\end{align}
where $C$ compact with $\phi_X(C) > 0$ and $\beta \in (0,1)$, is a ``recurrent atom'' for $Z_t$ with respect to the measure $\bar{\lambda}$ in \eqref{lambda-extension-measure} in the sense that, for $R:= \inf\{n|\,Z_{\mathsf T_n} \in A\}$, then $[Z_{\mathsf  T_{R+1}}| Z^1_{\mathsf T_R}, Z^2_{\mathsf T_R}]$ has density $\bar{\lambda}$. The law of $(X_{\mathsf T_n})_{n\geq 0}$ conditional on $X_0=x\in M$ is the same as $(Z^1_{\mathsf T_n})_{n\geq 0}$ conditional on $Z^1_0=x\in M$, and the law of $(X_t)_{t\geq 0}$ conditional on $X_0=x\in M$ is the same as $(Z^1_t)_{t\geq 0}$ conditional on $Z^1_0=x\in M$. Moreover, the law of $[Z^3_t|Z^1_t]$ follows $u^1(Z^1_t,x')dx'$.
\end{proposition}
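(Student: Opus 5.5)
The claim is imported from \cite[Proposition 2.8(a)]{LOCHERBACH20081301} and \cite[Proposition 4.1]{LOCHERBACH20092312}, so the plan is to verify that the manifold setting meets the hypotheses of those results and then follow their argument in outline.

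\textbf{Step 1: hypotheses.} The hypotheses are the minorisation $U^1(x,dy)\ge \beta\mathbf{1}_C(x)\xi(dy)$ and the resolvent condition \cite[(2.8)]{LOCHERBACH20081301}.

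For compact $M$, the minorisation follows from the Gaussian two-sided bounds on $p_t(x,y)$ \cite[Theorem 6.4.1]{Stroock2010pde}. On the compact set $C$, the density $u^1(x,y)=\int_0^\infty e^{-t}p_t(x,y)\,dt$ is bounded below, while $p_X$ is bounded above. Hence $u^1(x,y)\,dV_g(y)\ge \beta\,\xi(dy)$ for some $\beta\in(0,1)$.

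For noncompact $M$, the same bound comes from Assumption \ref{manifold-ass2}(3). Smoothness and strict positivity of $p_t$ give a positive infimum of $u^1$ on $C\times C$.

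\textbf{Step 2: the atom property.} By the construction of $Q$, on the event $\{Z_{\mathsf T_n}\in \mathsf A\}=\{Z^1_{\mathsf T_n}\in C,\ Z^2_{\mathsf T_n}\le\beta\}$ the variable $Z^3_{\mathsf T_n}$ is drawn from $\xi$, regardless of $Z^1_{\mathsf T_n}$.

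The next state $Z_{\mathsf T_{n+1}}$ is built from three pieces. Its first coordinate is $Z^1_{\mathsf T_{n+1}}=Z^3_{\mathsf T_n}$. Its second coordinate is a fresh uniform variable. Its third coordinate is drawn from $Q$ given these. Consequently the conditional law of $Z_{\mathsf T_{R+1}}$ given $(Z^1_{\mathsf T_R},Z^2_{\mathsf T_R})$ is the fixed measure $\bar\lambda(dy,du,dz)=\xi(dy)U(du)Q((y,u),dz)$.

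The bridge piece on $I_R$ must also not depend on the starting point. I would check this using the strong Markov property of Proposition \ref{proposition Zt is Harris recurrent}(1) and the independence structure in Proposition \ref{proposition Zt is Harris recurrent}(3)--(4).

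\textbf{Step 3: recurrence.} The chain $Z^1_{\mathsf T_n}$ has the law of the resolvent chain $\bar X_n$, which is Harris recurrent. It therefore visits $C$ infinitely often. At each visit, $Z^2$ is an independent uniform variable, so it falls in $[0,\beta]$ with probability $\beta$, and by Borel--Cantelli $\mathsf A$ is hit infinitely often. This gives $R<\infty$ and recurrence.

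\textbf{Step 4: the law statements.} The identity $\int_0^1Q((x,u),dy)\,du=U^1(x,dy)$ shows that the marginal transition of $Z^1_{\mathsf T_n}$ is $U^1$. Hence $(Z^1_{\mathsf T_n})_n$ and $(X_{\mathsf T_n})_n$ have the same law.

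For continuous time, $Z^1$ on $I_m$ is a bridge from $x$ to $x'$ run over a time of law $e^{-t}p_t(x,x')/u^1(x,x')\,dt$. Integrating out $x'$ against $u^1(x,x')$ recovers $P_t$ killed at an independent $\exp(1)$ time, by the Chapman--Kolmogorov equation. Gluing these pieces gives the law of $X$.

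The statement about $[Z^3_t\mid Z^1_t]$ then follows from the same disintegration: given the current position $y$, the endpoint has density $u^1(y,x')$.

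\textbf{Main obstacle.} The delicate step is this continuous-time bridge identification, specifically the existence and measurability of the bridges on $M$. Here I would rely on the smooth positive kernel from Assumption \ref{manifold-ass2}(3) and follow \cite[Section 2]{LOCHERBACH20081301} essentially verbatim.
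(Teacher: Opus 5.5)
Your route matches the paper's. The paper gives no argument beyond checking, in Section~\ref{section LL embedding}, that a smooth strictly positive transition density yields the minorisation $U^1(x,dy)\ge\beta\mathbf{1}_C(x)\xi(dy)$, and then citing L\"ocherbach--Loukianova. That density comes from Gaussian bounds for compact $M$ and from Assumption~\ref{manifold-ass2}(3) otherwise. Your Step 1, the first half of Step 2, and Steps 3--4 outline their construction correctly.

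\textbf{The step that fails.} In Step 2 you propose to check that the bridge piece on $I_R$ does not depend on the starting point. It does. On $I_R=[\mathsf{T}_R,\mathsf{T}_{R+1})$ the path $Z^1$ is a bridge from $Z^1_{\mathsf{T}_R}$ to $Z^3_{\mathsf{T}_R}$. Its duration has law $e^{-t}p_t(Z^1_{\mathsf{T}_R},Z^3_{\mathsf{T}_R})/u^1(Z^1_{\mathsf{T}_R},Z^3_{\mathsf{T}_R})\,dt$. So both the path and the length of $I_R$ depend on $Z^1_{\mathsf{T}_R}$, and no strong Markov or independence property removes this.

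The statement does not need it. It concerns only the skeleton value $Z_{\mathsf{T}_{R+1}}$, whose coordinates are $Z^3_{\mathsf{T}_R}\sim\xi$, a fresh uniform, and a draw from $Q$. Your computation is complete without that sentence, so delete it rather than try to prove it. The dependence it would rule out is exactly why $\mathsf{A}$ is an atom only for the chain $(Z_{\mathsf{T}_n})_n$. It is also why regeneration happens at the next jump time $R_n$ and not at $S_n$, and why consecutive cycles are dependent (Proposition~\ref{prop: summary of Rn and Sn}(1),(5)).

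\textbf{Which measure you obtain.} Your computation gives $\xi(dy)U(du)Q((y,u),dz)$, i.e.\ \eqref{lambda-extension-measure} with $\lambda$ replaced by $\xi$. This is the correct law and matches Proposition~\ref{prop: summary of Rn and Sn}(4). It is not the $\bar\lambda$ built from a general initial law $\lambda$, since its first marginal is $\xi$. So the statement holds as written only when $\lambda=\xi$; say this rather than silently redefining $\bar\lambda$.

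\textbf{Smaller points.}
\begin{enumerate}
\item In Step 3 the events $\{Z^2_{\mathsf{T}_n}\le\beta\}$ at successive visits of $Z^1_{\mathsf{T}_n}$ to $C$ are not independent, because the residual kernel used when $Z^2>\beta$ influences later visits. Use L\'evy's conditional Borel--Cantelli lemma together with the independence of $Z^2_{\mathsf{T}_n}$ from $\mathsf{F}_{\mathsf{T}_n-}$.
\item The resolvent condition you list in Step 1 is not used in Steps 2--4. It bounds mean bridge lengths, which matters for integrability of cycle functionals, not here. So there is no need to verify it, which is fortunate since the paper notes it fails for Brownian motion on $\mathbb{R}$.
\item For the law of $Z^3_t$ given $Z^1_t$, integrate out $Z^2$ before conditioning. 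Use $\int_0^1Q((x,u),\cdot)\,du=U^1(x,\cdot)$ and $\int_r^\infty e^{-s}p_{s-r}(y,x')\,ds=e^{-r}u^1(y,x')$. If instead you condition on $\mathsf{G}_t$, with $t\in I_m$ and $(Z^1_{\mathsf{T}_m},Z^2_{\mathsf{T}_m})\in\mathsf{A}$, the conditional law of $Z^3_t$ is proportional to $u^1(Z^1_t,x')\,\xi(dx')/u^1(Z^1_{\mathsf{T}_m},x')$, not $u^1(Z^1_t,x')\,dx'$.
\end{enumerate}
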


This proposition says that $X_t$ and $Z^1_t$ have the same law, which justified the term ``embedding''. Moreover, in this richer random process, we possess a recurrent atom $\mathsf A$. Note that the third coordinate $M$ of $\mathsf A$ is trivial as the process $Z_t^3$ is always contained in $M$.

Before introducing the ``generalized life-cycle decomposition, we shall make clear the origin and relation of several measures relevant to the embedding construction, which is critical for the upcoming analyses. Let $\lambda$ be a generic, initial measure on $M$ modeling the distribution of $X_0$. 
The associated expectation $\mathbb{E}^M_\lambda$ (or $\mathbb{E}_\lambda$ when there is no confusion) is defined as 
\begin{align*}
\mathbb{E}^M_\lambda\left( \int_0^T f(X_t) dt\right) &:=\int_{X_0=x\in M} \int_0^T f(X_t) \, dt\, d\lambda(x)\,, 
\end{align*}
where $f$ is a real-valued, measurable function on $M$.

\begin{remark}
Although not strictly necessary, it is useful to record an alternative representation of $\mathbb{E}^M_\lambda\left( \int_0^T f(X_t) dt\right)$. 
Let ${\mathcal{W}}^r_0:=\left\{\omega \in C([0, \infty), \mathbb{R}^r) : \omega(0) = 0\right\}$, equipped with its natural filtration $\{\mathcal{F}_t\}_{t > 0}$ and Wiener measure $\mathbb{P}_W$. 
Let ${\mathcal{W}}(M) = C([0, \infty), M)$, endowed with the filtration $\mathcal{B}_t({\mathcal{W}}(M))$ generated by cylinder sets, and denote by $\mathbb{P}_W^M$ the Wiener measure on ${\mathcal{W}}(M)$. 
Equip $M \times \mathcal{W}^r_0$ with the product measure $\mathbb{P}^M_\lambda:= \lambda \otimes \mathbb{P}_W$. Define the path-map $F^M: M \times \mathcal{W}^r_0 \rightarrow {\mathcal{W}}(M)$ by $F^M(x_0,\omega)(t)=X_t(x_0,\omega)$, where $X_t(x_0, \omega)$ denotes the location at time $t$ of the trajectory $X_t$ starting from $x_0\in M$ and driven by the Brownian motion determined by $\omega$. By construction, $F^M$ is measurable. Then, we induce a measure $P^M_\lambda$ on ${\mathcal{W}}(M)$ by
$P^M_\lambda := \mathbb{P}^M_\lambda \circ (F^M)^{-1}$. With this notation, $\mathbb{E}^M_\lambda\left( \int_0^T f(X_t) dt\right):=\int_{\gamma \in  \mathcal{W}({M})} \int_0^T f(\gamma(t)) \, dt\, d{P}^{M}_\lambda(\gamma)$, which provides a path-space formulation.
\end{remark}
From the initial measure $\lambda$ for $X_t$, we consider an initial measure 
\begin{align}
    \overline{\lambda}(dx,du,dy) &:= \lambda(dx) U(du) Q((x,u),dy) \label{lambda-extension-measure}
\end{align} 
for $Z_t$ on $\mathcal{M}:= M \times [0,1] \times M$, where the variables $dx$, $du$, and $dy$ indicate a placeholder for the upcoming integration. 
Since $\mathcal{M}= M \times [0,1] \times M$ is the space in which $Z_t$ evolves, we denote the path-space ${\mathcal{W}}(\mathcal{M})$ to be the space of paths induced by ${\mathcal{W}}(M)$ and the LL embedding construction, with filtration $\mathcal{B}_t({\mathcal{W}}(\mathcal{M}))$ generated by cylinder sets, and $\mathbb{P}_W^\mathcal{M}$ the Wiener measure on ${\mathcal{W}}(\mathcal{M})$. Note that paths in ${\mathcal{W}}(\mathcal{M})$ are not continuous by construction, and $\mathcal{B}_t({\mathcal{W}}(\mathcal{M}))$ is the same as $\mathsf{F}_t$. 
Similarly, for $g$ a real-valued, measurable functions on $\mathcal{M}$, we define
    \begin{align}
    \mathbb{E}^\mathcal{M}_\lambda\left( \int_0^T g(Z_t) dt\right) &:= \int_{Z_0 =z\in \mathcal{M}} \int_0^T g(Z_t) \, dt\, d\bar{\lambda}(z) \label{extended-manifold-path-space-expectation} \,,
\end{align} 
where $\bar{\lambda}$ is the initial measure of $Z_t$ related to the generic initial measure $\lambda$ of $X_t$ via \eqref{lambda-extension-measure}.

In our analysis, we study the {\em deterministic equivalent} of $\int_0^T f(X_t) dt$, defined as $\mathbb{E}^M_\lambda\left( \int_0^T f(X_t) dt\right)$, which is unique up to a positive constant being asymptotically of the same order. 
Note that since $Z^1_t$ follows the same law as $X_t$ by the LL embedding construction, for a function $g$ defined on $\mathcal{M}$ that only depends on the first coordinate; that is, $f=f_0\circ P_1$, where $P_1$ is the projection to the first coordinate, we have 
\begin{equation} \label{expectations-agree}
\mathbb{E}^\mathcal{M}_\lambda\left( \int_0^T f(Z_t) dt\right)=\mathbb{E}^\mathcal{M}_\lambda\left( \int_0^T f_0(Z^1_t) dt\right)=\mathbb{E}^M_\lambda\left( \int_0^T f_0(X_t) dt\right)\,.
\end{equation}

\subsubsection{Generalized life-cycle decomposition and generalized regeneration times}

With the LL embedding and the recurrent atom $A$ of $Z_t$ \eqref{definition Zt recurrent atom}, we introduce the desired {\em generalized life-cycle decomposition}.

\begin{definition}[Generalized life-cycle decomposition] \label{gen-lif-cycle-decomp}
Let $\mathsf A$ be the recurrent atom of $Z_t$ and $\{ \mathsf T_m\}_{m=0}^\infty$ the associated sequence of exponential jump-times used to construct $Z_t$.
We construct a sequence of {\em $\mathsf{F}_t$-stopping times} $\{S_m, R_m\}_{m=0}^\infty$, called the  a {\em generalized life-cycle decomposition}, by the following. Initially, set $R_0 = S_0 = 0$, and subsequent times are iteratively set by $S_{m+1} := \inf\{m'|\,\mathsf T_{m'} > R_m \,, Z_{\mathsf T_m'} \in \mathsf A\}$ and $R_{m}  = \inf\{m'|\, \mathsf T_{m'}\,, \mathsf T_{m'} > S_{m}\}$ for all $m\in \mathbb{N}$.
\end{definition}

By this construction, we summarize properties of $S_m$ and $R_m$ in the following proposition.

\begin{proposition}\label{prop: summary of Rn and Sn}
With the construction in Definition \ref{gen-lif-cycle-decomp}, we have:
\begin{enumerate}
\item \cite[Proposition 2.13(a)]{LOCHERBACH20081301} For all $m$, $Z_{R_m+}$ is independent of $\mathsf{G}_{S_n}$ and $\mathsf{F}_{S_n-}$; that is, at $R_n$, we start fresh and have independence after a waiting time. Hence, $Z_{R_m+}$ is also independent of $\mathsf{F}_{R_{n-1}}$. 

\item \cite[Proposition 4.2 (c)]{LOCHERBACH20092312} The law of $Z_{R_n}^1$ conditional on $\mathsf{G}_{S_n}$ is $\xi$. 

\item \cite[Proposition 2.13(b)]{LOCHERBACH20081301} $\mathbb{E}^{\mathcal{M}}_\lambda(R_n-S_n|\mathcal{F}_{S_n-})\leq 1/\beta$ for all $n\in \mathbb{N}$. 

\item \cite[Proposition 4.2 (b)]{LOCHERBACH20092312} We have 
\begin{align}\label{section LL ZRn distribution}
Z_{R_n}\sim \xi(dx)U(du)Q((x,u),dx')\ \ n\geq 1\,.
\end{align}

\item \cite[Proposition 4.4]{LOCHERBACH20092312} Denote 
\[
U_m:=\int_{R_m}^{R_{m+1}} f(X_s) \, ds
\]
for $m\geq0$. The sequence $(U_n)_{n\geq 0}$ is a stationary ergodic sequence under $\mathbb{P}_\xi$, and the sequence $(U_{2k})_{k\geq 0}$ is i.i.d, as well as the sequence $(U_{2k+1})_{k\geq 0}$.
\end{enumerate}
\end{proposition}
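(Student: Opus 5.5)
The statement collects properties of the generalized life-cycle decomposition, each attributed to \cite{LOCHERBACH20081301} or \cite{LOCHERBACH20092312}. My plan is therefore not to reprove them from scratch. Instead, I would verify that our manifold-valued setting meets the hypotheses of those papers, and then check that each cited argument uses only measure-theoretic and Markov structure, which transfers verbatim from $\mathbb{R}^p$ to $M$. The verification of hypotheses was essentially done in the discussion preceding Proposition~\ref{proposition Zt is Harris recurrent}. In the compact case, the smooth, strictly positive heat kernel with Gaussian bounds \cite[Theorem 6.4.1]{Stroock2010pde} gives Assumptions 2.1 and 3.1 and condition (2.8) of \cite{LOCHERBACH20081301}. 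In the noncompact case, Assumption~\ref{manifold-ass2}(1)--(4) gives these directly. The one point I would re-examine is that $M$ is a Polish space, since it is a complete separable metric space under $d_g$. The kernels $U^1$, $Q$ and $\nu_{x,x'}$ are measurable on $M\times M$, so the construction of $Z_t$ on $\mathcal{M}=M\times[0,1]\times M$ is well defined.

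With this in place, I would go through the items in order.

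\begin{enumerate}
\item Item (1) is the regeneration statement. Since $Z_{S_n}\in\mathsf A=C\times[0,\beta]\times M$, the splitting kernel gives $Z^3_{S_n}\sim\xi$ independently of the past. The bridge on $[S_n,R_n)$ depends only on $(Z^1_{S_n},Z^3_{S_n})$, so $Z_{R_n+}$ is independent of $\mathsf G_{S_n}$ and $\mathsf F_{S_n-}$.
\item Item (2) follows from the same computation, because $Z^1_{R_n}=Z^3_{S_n}$.
\item Item (3) uses Proposition~\ref{proposition Zt is Harris recurrent}(4): the waiting time $R_n-S_n$ is a single interjump time. Its conditional mean is bounded through the resolvent control (2.8), which yields the constant $1/\beta$.
\item Item (4) combines item (2) with the definition of the fresh uniform coordinate and of $Q$.
\end{enumerate}

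Item (5) is where the cited argument has genuine content. The cycle $[R_m,R_{m+1})$ depends on $Z_{R_m}$, whose law is fixed by (4). Consecutive cycles, however, share the bridge segment $[S_{m+1},R_{m+1})$, and this creates one-step dependence. I would show stationarity from (4) and the strong Markov property of Proposition~\ref{proposition Zt is Harris recurrent}(1), applied at $R_m$, where $Z^1\neq Z^3$ a.s. Then I would show that $U_m$ and $U_{m+2}$ are separated by the regeneration in (1), so the even-indexed and odd-indexed subsequences are each i.i.d. Ergodicity follows because the sequence is $1$-dependent.

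I expect the main obstacle to be justifying the strong Markov property at the random times $S_m$ and $R_m$. These times are defined through the jump times $\mathsf T_m$ and not through $Z^1$ alone, so I must confirm that they are $\mathsf F_t$-stopping times satisfying $Z^1_S\neq Z^3_S$ a.s. My first approach would be to use the absolute continuity of $Q((x,u),\cdot)$ with respect to $dV_g$: it makes the event $Z^3=Z^1$ null, exactly as in the Euclidean proof.
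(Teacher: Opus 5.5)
Items (1), (2), (4) and (5) are handled by essentially the regeneration argument behind the cited results. Item (3), however, has a genuine gap, and it is tied to an incorrect hypothesis check.

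\textbf{What goes wrong in item (3).} You say the conditional mean of $R_n-S_n$ is controlled by the resolvent condition (2.8) and that this yields $1/\beta$. Condition (2.8) only bounds $\sup_{x,x'}\int_0^\infty t\,\nu_{x,x'}(dt)$, where $\nu_{x,x'}(dt)=e^{-t}p_t(x,x')u^1(x,x')^{-1}dt$. That constant has nothing to do with $\beta$, so it cannot give the stated bound.

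More seriously, your claim that Assumption~\ref{manifold-ass2}(1)--(4) gives (2.8) in the noncompact case is false. The paper itself notes that (2.8) fails for Brownian motion on $\mathbb{R}$, and that process satisfies all of Assumption~\ref{manifold-ass2}. So in the null-recurrent setting your argument for (3) has no footing.

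You also cannot simply apply Proposition~\ref{proposition Zt is Harris recurrent}(4) at $S_n$. Selecting the jump time through $Z^2_{S_n}\le\beta$ biases the law of the next waiting time.

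\textbf{The fix.} The bound comes from the minorization. Given $\mathsf F_{S_n-}$, the point $x:=Z^1_{S_n}\in C$ is known and $Z^2_{S_n}\le\beta$. Hence $Z^3_{S_n}\sim\xi$, and $R_n-S_n$ has conditional law $\int_M\xi(dx')\,\nu_{x,x'}(dt)$. Since $\beta\,\xi(dx')\le u^1(x,x')\,dV_g(x')$ for $x\in C$, Tonelli and conservativeness give
\[
\mathbb{E}(R_n-S_n\,|\,\mathsf F_{S_n-})\le\frac{1}{\beta}\int_0^\infty t\,e^{-t}\int_M p_t(x,x')\,dV_g(x')\,dt=\frac{1}{\beta}\,.
\]
Equivalently, on the atom the waiting-time law is dominated by $\beta^{-1}$ times the $\exp(1)$ law $\int_M U^1(x,dx')\,\nu_{x,x'}(dt)=e^{-t}dt$.

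None of the five items needs (2.8). In L\"ocherbach--Loukianova that condition only controls the first cycle under a general initial law. This paper handles that issue instead through the compactly supported initial law with bounded density.

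\textbf{Two smaller corrections.}
\begin{enumerate}
\item In (1), independence from $\mathsf G_{S_n}$ holds because the construction restarts at the jump time $R_n$ from $Z^1_{R_n}=Z^3_{S_n}$, which is a fresh $\xi$-draw, with fresh $(Z^2,Z^3)$. The bridge on $[S_n,R_n)$ depends on $Z^3_{S_n}$, so it is exactly why independence fails for $\mathsf F_{R_n-}$. It is not a reason for the independence.
\item In (5), the segment $[S_{m+1},R_{m+1})$ lies inside cycle $m$ and is not shared between cycles. The one-step dependence comes from its endpoint being the starting point of cycle $m+1$.
\end{enumerate}
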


The second point of 
Proposition \ref{prop: summary of Rn and Sn} says that 
the sequence of $\mathsf{F}$-stopping time $\{R_m\}_{m = 0}^\infty$ is a {\em  life-cycle decomposition} for the process $Z_t$, and the sequence $Z_{R_n}$, $n\geq 1$, is i.i.d.. 
The fourth point of Proposition \ref{prop: summary of Rn and Sn} says that at regeneration times, we start from a fixed distribution that does not depend on the past. 
The following proposition is needed in our analysis.

\begin{proposition}\label{2.20and4.3}
\cite[Proposition 2.20]{LOCHERBACH20081301}\cite[Proposition 4.3]{LOCHERBACH20092312}
For a Harris recurrent $X_t$ with invariant measure $\phi_X$ and $\langle \phi_X, |f| \rangle_M < \infty$, denote $A_t:=\int_0^t f(X_s) ds$. For any initial measure $\lambda$ and $n \ge 1$, we have
\begin{align*}
\mathbb{E}^{\mathcal{M}}_\lambda(A_{R_{n+1}}-A_{R_n}) =   
\mathbb{E}^{\mathcal M}_\xi(A_{R_1}) = C_X\langle \phi_X, f \rangle_M \,,
\end{align*}
where $C_X\neq 0$ does not depend on $f$.  
\end{proposition}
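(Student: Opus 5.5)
The plan is to prove the two equalities separately. The first follows from regeneration. The second identifies the cycle occupation measure with a multiple of $\phi_X$ through the Ratio Limit Theorem (Theorem \ref{ratio-limit0SUPP}), which lets me avoid a direct invariance computation.

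\emph{Step 1: the cycle increment does not depend on $n$ or $\lambda$.} By Proposition \ref{prop: summary of Rn and Sn}(4), for every $n\ge 1$ and every initial $\lambda$, $Z_{R_n}$ has the fixed law $\xi(dx)U(du)Q((x,u),dx')$. By Proposition \ref{prop: summary of Rn and Sn}(1), the post-$R_n$ process $Z_{R_n+\cdot}$ is independent of $\mathsf F_{R_{n-1}}$. The strong Markov property of Proposition \ref{proposition Zt is Harris recurrent}(1) applies at $R_n$, since $Z^1_{R_n}\neq Z^3_{R_n}$ a.s.; this is because $Z^3_{R_n}$ is drawn from an absolutely continuous law. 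Consequently the path segment $(Z_{R_n+s})_{0\le s\le R_{n+1}-R_n}$ has the same law as $(Z_s)_{0\le s\le R_1}$ started from the regeneration law, and that law is $\bar\xi$ in the notation \eqref{lambda-extension-measure}. Because $f$ depends only on the first coordinate, $A_{R_{n+1}}-A_{R_n}=\int_{R_n}^{R_{n+1}}f(Z^1_s)\,ds$ has a law independent of $n$ and $\lambda$, and \eqref{expectations-agree} gives $\mathbb{E}^{\mathcal M}_\lambda(A_{R_{n+1}}-A_{R_n})=\mathbb{E}^{\mathcal M}_\xi(A_{R_1})=:m(f)$. I would first prove this for $f\ge0$, where all expectations lie in $[0,\infty]$, and then pass to general $f$ via $f=f^+-f^-$ once finiteness is established in Step 3.

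\emph{Step 2: $m$ is proportional to $\phi_X$.} By monotone convergence, $f\mapsto m(f)$ is a measure on $M$. Fix a reference $g\ge0$ with $0<\langle\phi_X,g\rangle_M<\infty$ and $0<m(g)<\infty$; Step 3 supplies such a $g$. Take $f\ge0$ with $\langle\phi_X,f\rangle_M<\infty$. Define $U_k(f)=A_{R_{k+1}}-A_{R_k}$; by Proposition \ref{prop: summary of Rn and Sn}(5) this sequence is stationary and ergodic. Birkhoff's theorem, which is valid for nonnegative variables even with infinite mean, then gives $N^{-1}A_{R_N}(f)\to m(f)$ a.s., and similarly for $g$. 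Hence $A_{R_N}(f)/A_{R_N}(g)\to m(f)/m(g)$ a.s. Since $R_N\to\infty$ a.s., the Ratio Limit Theorem evaluated along the random times $T=R_N$ forces this limit to equal $\langle\phi_X,f\rangle_M/\langle\phi_X,g\rangle_M$. This identity also shows $m(f)<\infty$. Setting $C_X:=m(g)/\langle\phi_X,g\rangle_M>0$ gives $m(f)=C_X\langle\phi_X,f\rangle_M$. The constant does not depend on $f$, and by the same ratio identity applied to two reference functions it does not depend on $g$ either. Linearity then extends the identity to $f$ with $\langle\phi_X,|f|\rangle_M<\infty$.

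\emph{Step 3 (main obstacle): a reference $g$ with $0<m(g)<\infty$.} Positivity is straightforward: take $g=\mathbf 1_C$ with $C$ the compact set in the atom. Since $\phi_X(C)>0$ and the process started from $\xi$ (which is concentrated on $C$) spends positive time in $C$ before $R_1$ with positive probability, $m(g)>0$. Finiteness is the delicate point, because in the null recurrent case $\mathbb{E}(R_{n+1}-R_n)=\infty$. My first attempt would use the splitting structure. Each visit of the skeleton $Z_{\mathsf T_m}$ to $C$ lands in the atom $\mathsf A$ with probability $\beta$, so the number of skeleton visits to $C\times[0,1]\times M$ per cycle is dominated by a geometric variable with mean at most $1/\beta$. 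Proposition \ref{prop: summary of Rn and Sn}(3) controls the extra waiting time $R_n-S_n$. It then remains to bound the expected occupation time of $C$ between consecutive skeleton times. Because the $\mathsf T_m$ are independent of $Z^1$ (Proposition \ref{proposition Zt is Harris recurrent}(3)), I would bound this using the resolvent $U^1(x,C)\le 1$ together with the strong Markov property at each $\mathsf T_m$. If this Wald-type bound becomes too technical, the fallback is to choose $g=\mathbf 1_C\, u^1(\cdot,\cdot)$-weighted so that the occupation time of $g$ over a cycle is directly expressible as a sum over skeleton visits.
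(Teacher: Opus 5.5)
Steps 1 and 2 are sound, and they take a genuinely different route from the one behind the cited result. The paper gives no proof of its own and imports the statement from Löcherbach--Loukianova, where the cycle occupation measure is identified with the invariant measure through the resolvent chain and uniqueness. You instead identify it by Birkhoff along cycles plus the Ratio Limit Theorem, which is legitimate.

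That argument needs a reference $g$ with $0<m(g)<\infty$, and Step 3 as written does not produce one. Counting skeleton visits to $C$ is fine, since that count is geometric with mean at most $1/\beta$. But the occupation time of $C$ over a cycle is a sum over \emph{all} skeleton intervals $[\mathsf T_j,\mathsf T_{j+1}]$ in the cycle, not only those whose left endpoint lies in $C$. The path can enter $C$ between two skeleton times at which it is outside $C$. If you bound each interval's contribution by $U^1(x,C)\le 1$, you only get the expected number of skeleton steps per cycle. That number is infinite precisely in the null-recurrent case you are worried about. The fallback, a ``$u^1(\cdot,\cdot)$-weighted'' $g$, is not a well-defined function on $M$.

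The missing idea is an identity, not a bound. For an independent $\mathrm{Exp}(1)$ time,
\[
\mathbb E_y\int_0^{\mathsf T_1}\mathbf 1_C(X_s)\,ds=\int_0^\infty e^{-t}P_t(y,C)\,dt=U^1(y,C)=\mathbb P_y\big(X_{\mathsf T_1}\in C\big).
\]
Two features of the split construction let you use it along the cycle. First, averaging over the fresh $Z^2_{\mathsf T_j}$ turns the $j$th segment back into an $X$-path run for an independent $\mathrm{Exp}(1)$ time, because $\int_0^1 Q((y,u),\cdot)\,du=U^1(y,\cdot)$ and the bridge reassembles the path. Second, the event that the $j$th interval belongs to the cycle is decided before $Z^2_{\mathsf T_j}$ is drawn.

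Let $J$ be the index with $\mathsf T_J=S_1$ and $\mathsf T_{J+1}=R_1$. The tower property then gives
\[
m(\mathbf 1_C)=\mathbb E^{\mathcal M}_\xi\sum_{j=0}^{J}U^1\big(Z^1_{\mathsf T_j},C\big)=\mathbb E^{\mathcal M}_\xi\,\#\big\{1\le i\le J+1:\ Z^1_{\mathsf T_i}\in C\big\}\le \beta^{-1}+1 .
\]
So time spent in $C$ becomes a count of skeleton visits to $C$, which your geometric argument does control.

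The same identity also gives positivity: $m(\mathbf 1_C)\ge\int\xi(dy)\,U^1(y,C)>0$, by strict positivity of $p_t$ and $V_g(C)>0$. This is safer than your continuity argument, because a compact $C$ with $\phi_X(C)>0$ need not have a $\xi$-charged interior. With $g=\mathbf 1_C$ justified this way, Step 2 goes through as you wrote it.
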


Note that the additive functional over the first cycle, $A_{R_1}$, depends on the initial measure. When the initial measure is $\xi$, it is the same as the future cycles $A_{R_{n+1}}-A_{R_n}$, when $n>1$.   This is because of \eqref{section LL ZRn distribution} and $Z_t$ is strong Markov.  If the initial measure is not $\xi$, then $A_{R_1}$ is different from the future cycles, and we need assumptions to control it. For example, in \cite[(2.8) in Proposition 2.16]{LOCHERBACH20081301}, if 
\[
\sup_{x,x'\in M} \int_0^\infty te^{-t}\frac{p_t(x,x')}{u^1(x,x')}dt<\infty\,, 
\]
where $u^1(x,x')=\int_0^\infty e^{-t}p_t(x,x')dt$, then $\mathbb{E}^{\mathcal M}_{\lambda}(A_{R_1}) $ and $\mathbb{E}^{\mathcal M}_{\lambda}(A_{S_1})$ are both bounded. This assumption is used to control $\mathbb{R}^{\mathcal M}_\lambda(\mathsf{T}_1)$. It holds when $M$ is compact since the diffusion kernel has a nice Gaussian control. However, when $M=\mathbb{R}$ and $X_t$ is Brownian motion, which is null Harris recurrent ($\mu=0$ and $\sigma=1$), $p_t(x,x')=\frac{1}{\sqrt{4\pi t}}\exp(-d(x,x')^2/4t)$ and $\int_0^\infty te^{-t}\frac{p_t(x,x')}{u^1(x,x')}dt\asymp d(x,x')$ when $d(x,x')\to \infty$. The assumption fails. Combined with the motivation from biomedical signal analysis, we introduce the assumption that $X_0$ follows a distribution supported on a compact set $C\subset M$ with a bounded density function.  

Recall that due to \eqref{section LL embedding additive functional measure is linear}, we have 
\begin{align*}
\mathbb{E}_{\phi_X}^M(A_1)=\langle \phi_X, f \rangle_M\,, 
\end{align*}
so from time to time, when $f$ is positive measurable, we may use $\mathbb{E}^{M}_{\phi_X}(A_{1})<0$ to indicate that the associated additive functional is integrable.

\subsection{A quick review of Mittag-Leffler process}\label{section: Mittag-Leffler process}

Next, we define quantities that we use to describe the asymptotic behavior of additive functionals of Harris recurrent processes. We focus on our setup and assume the random process $X_t$ is manifold-valued, while is can be easily generalized. See the setup in \cite{limit_theorems_null} for details.

\begin{definition}
\label{stable-inc-process}\cite[Definition 2.5]{limit_theorems_null} For $\alpha \in (0,1)$, a {\em stable increasing process} of index $\alpha$ is a process $S_\alpha(t)$, $t \ge 0$, with the properties that 
 \begin{enumerate}
 \item all paths of $S_\alpha$ are c\`adl\`ag, non-decreasing, and $S_\alpha(0) = 0$ almost surely;
 \item the process has independent and stationary increments satisfying
$ \mathbb{E}[e^{-\lambda S_\alpha(t)}] = e^{-\lambda^\alpha t}$ for all $t, \lambda \ge 0$.
 \end{enumerate}
For $\alpha = 1$, we define the deterministic process $S_1(t)  = t$.
\end{definition}

By construction, the stable increasing process of index $\alpha$ defines a unique probability law on the Skorohod space $D(\mathbb{R}_+,\mathbb{R})$ with Borel $\sigma$-algebra and canonical filtration. See \cite[Definition 2.5]{limit_theorems_null} for more details.

\begin{definition}\cite[Definition 2.6]{limit_theorems_null}\label{MLprocess} 
For $\alpha \in (0,1)$, the process inverse for $S_\alpha(t)$ is the {\em Mittag-Leffler process} of index $\alpha$, denoted as $g_\alpha(t)$ and defined by 
$g_\alpha(t) = \inf\{s > 0: S_\alpha(s) > t\}$ for all $t\geq0$.
For $\alpha = 1$, we define the deterministic process $g_1(t)  = t$.
\end{definition}

By construction, then almost surely $g_\alpha(0) = 0$ and $g_\alpha(t)$ is c\`adl\`ag, continuous, and increasing to $\infty$ as $t \rightarrow \infty$.

\begin{proposition}\cite[Remark 2.8]{limit_theorems_null}\label{MLprocess_and_distr}
Let $g_\alpha$ be the Mittag-Leffler process of index $\alpha \in (0,1]$. Then, $g_\alpha(1) \sim G_\alpha$, where $G_\alpha$ is a Mittag-Leffler random variable of index $\alpha$.
\end{proposition}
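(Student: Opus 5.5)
The plan is to identify the law of $g_\alpha(1)$ through the self-similarity of the stable increasing process $S_\alpha$, and then to match moments with the Mittag-Leffler law. Recall that $G_\alpha$ is characterised by the moments $\mathbb{E}[G_\alpha^k]=k!/\Gamma(1+k\alpha)$, $k\in\mathbb{N}$; equivalently, $\mathbb{E}[e^{zG_\alpha}]=E_\alpha(z)=\sum_k z^k/\Gamma(1+k\alpha)$. The case $\alpha=1$ is trivial and I dispose of it first. By Definition \ref{MLprocess}, $g_1(1)=1$. The moments $k!/\Gamma(1+k)=1$ identify $G_1$ as the point mass at $1$, so the claim holds.

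For $\alpha\in(0,1)$ I proceed in two steps.

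\textbf{Step 1: reduce $g_\alpha(1)$ to $S_\alpha(1)$.} By Definition \ref{stable-inc-process}, $\mathbb{E}[e^{-\lambda S_\alpha(s)}]=e^{-s\lambda^\alpha}=\mathbb{E}[e^{-\lambda s^{1/\alpha}S_\alpha(1)}]$. Hence $S_\alpha(s)\overset{d}{=}s^{1/\alpha}S_\alpha(1)$ for each fixed $s$. From the definition of the inverse, $\{g_\alpha(1)\le s\}$ is sandwiched between $\{S_\alpha(s)>1\}$ and $\{S_\alpha(s)\ge 1\}$. These two events differ by a null set because $S_\alpha(s)$ has a continuous (indeed absolutely continuous) law. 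Therefore
\[
\mathbb{P}(g_\alpha(1)\le s)=\mathbb{P}\bigl(S_\alpha(1)\ge s^{-1/\alpha}\bigr)=\mathbb{P}\bigl(S_\alpha(1)^{-\alpha}\le s\bigr),
\]
that is, $g_\alpha(1)\overset{d}{=}S_\alpha(1)^{-\alpha}$. The continuity of the law of $S_\alpha(1)$ can be read off from the Laplace transform, since $S_\alpha(1)$ is a positive strictly $\alpha$-stable variable. This is the point where care with null sets is needed.

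\textbf{Step 2: compute the negative moments.} For $p>0$, I use $x^{-p}=\Gamma(p)^{-1}\int_0^\infty\lambda^{p-1}e^{-\lambda x}\,d\lambda$ together with Tonelli's theorem. This gives
\[
\mathbb{E}[S_\alpha(1)^{-p}]=\frac{1}{\Gamma(p)}\int_0^\infty\lambda^{p-1}e^{-\lambda^\alpha}\,d\lambda=\frac{\Gamma(p/\alpha)}{\alpha\Gamma(p)}.
\]
Taking $p=\alpha k$ yields
\[
\mathbb{E}[g_\alpha(1)^k]=\frac{\Gamma(k)}{\alpha\Gamma(\alpha k)}=\frac{k!}{\Gamma(1+\alpha k)},
\]
which are exactly the Mittag-Leffler moments.

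It remains to show that these moments determine the law; I expect this to be the main, though mild, obstacle. By Stirling's formula, $k!/\Gamma(1+\alpha k)\le C^k (k!)^{1-\alpha}$. Hence $\sum_k z^k\,\mathbb{E}[g_\alpha(1)^k]/k!$ converges for all $z$, so the moment generating function is entire (it equals $E_\alpha(z)$), and Carleman's condition holds. The law is therefore uniquely determined by its moments, and $g_\alpha(1)\sim G_\alpha$.
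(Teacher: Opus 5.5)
Your proof is correct, and it argues more than the paper does. The paper gives no argument for this proposition: it cites Remark 2.8 of H\"opfner--L\"ocherbach and, in the following remark, quotes the moment formula $\mathbb{E}(g_\alpha^n)=n!/\Gamma(1+\alpha n)$ from Kasahara. In that literature the Mittag-Leffler law of index $\alpha$ is essentially introduced as the law of the inverse stable subordinator at time $1$, so the citation treats the statement as almost definitional.

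You instead derive it in three steps:
\begin{itemize}
\item self-similarity $S_\alpha(s)\overset{d}{=}s^{1/\alpha}S_\alpha(1)$ plus the inverse relation gives the representation $g_\alpha(1)\overset{d}{=}S_\alpha(1)^{-\alpha}$;
\item the Gamma-integral identity turns the Laplace transform $e^{-\lambda^\alpha}$ into the negative moments;
\item determinacy of the moment problem finishes it, since the moment generating function is the entire function $E_\alpha(z)$.
\end{itemize}
This checks that, under the normalization $\mathbb{E}e^{-\lambda S_\alpha(t)}=e^{-t\lambda^\alpha}$ in the paper's definition, the moments are exactly $k!/\Gamma(1+\alpha k)$ with no stray scale constant, which is what the paper later relies on.

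One simplification: Step 1 does not need continuity of the law of $S_\alpha(1)$, the point you flagged as delicate. Put $Y=S_\alpha(1)^{-\alpha}$; note $S_\alpha(1)>0$ a.s., since $\mathbb{P}(S_\alpha(1)=0)=\lim_{\lambda\to\infty}e^{-\lambda^\alpha}=0$. Your sandwich then already gives $F_Y(s-)\le\mathbb{P}(g_\alpha(1)\le s)\le F_Y(s)$. The two distribution functions therefore agree off the countable set of atoms of $Y$, and by right-continuity they agree everywhere.
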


\begin{remark}
We note that two closely related but distinct families of Mittag-Leffler random variables appear in the literature. To avoid confusion, we briefly describe their construction and relationship. Both are associated with the {\em Mittag-Leffler function} 
$E_\alpha(z) := \sum_{r=0}^\infty \frac{z^r}{\Gamma(1+\alpha r)}$ defined for $\alpha \in \mathbb{C}$ with $\Re(\alpha)>0$.
For $\alpha \in (0,1]$, the first family  has cumulative distribution function $1 - E_\alpha(-s^\alpha)$. 
These distributions are heavy-tailed and do not possess finite mean for $0 < \alpha \le 1$\cite{pillai1990mittag}.
The second family, which is relevant for our purposes (see Proposition \ref{MLprocess_and_distr}), arises from the inverse stable subordinator. In contrast to the first family, these distributions are not heavy-tailed and admit finite moments given by \cite{kasahara1984limit}
$\mathbb{E}(g_\alpha^n) = \frac{n!}{\Gamma(1 + \alpha n)}$, $n\in \mathbb{N}$. By definition, when $\alpha=0$, this family reduces to the exponential random variable with parameter $1$, which is precisely the case appearing in the classical Darling-Kac theorem (Theorem \ref{Theorem Darling-Kac Theorem0}).
\end{remark}

\subsection{Long-Time Asymptotics for Additive Functionals}\label{section: Long-Time Asymptotics for Additive Functionals}

While we do not directly need it, for the sake of self-containedness we mention a stronger ratio limit theorem called {\em Chacon-Ornstein limit theorem} \cite{LOCHERBACH20081301}. Recall that we call a measurable function $f:M\to \mathbb{R}_+$ {\em special} \cite[Definition 2.3]{LOCHERBACH20081301} when the function $x\mapsto \mathbb{E}_x\int_0^\infty \exp\left[-\int_0^th(X_s)ds\right]f(X_t)dt$ is bounded for all bounded and positive measurable functions $h$ such that $\langle \phi_X,h\rangle_M:= \int_M h(x) \,\phi_X(dx)>0$. Since $X_t$ is strongly Feller, all positive bounded functions with compact support are special \cite[Corollary 2.22]{LOCHERBACH20081301}. We have a stronger ratio limit theorem called Chacon-Ornstein limit theorem \cite{LOCHERBACH20081301}. For two special functions $f,g$ such that $\langle \phi_X,g\rangle_M>0$, for any initial measures $\lambda_1,\lambda_2$, $\phi_X$-a.s. we have 
\begin{align}
\lim_{T \rightarrow \infty} \frac{\mathbb{E}_{\lambda_1}\int_0^T f(X_s) \, ds}{\mathbb{E}_{\lambda_2}\int_0^T g(X_s)\,ds} = \frac{\langle \phi_X, f \rangle_M}{\langle \phi_X, g \rangle_M} \,.
\end{align}

We now state the Darling-Kac theorem that describe the asymptotic behavior of the proposed estimators. 
Recall that for a random process $X_t$, the \textit{quadratic variation} is defined by $[X_t] :=\lim_{n \rightarrow \infty} \sum_{k = 0}^{n-1} (X_{t(k+1)/n} - X_{tk/n})^2$, whenever the limit exists in probability.  Since we work with It\^o processes with continuous sample paths, we do not distinguish between predictable quadratic variation and quadratic variation. Moreover, if $M_t$ is a continuous, square-integrable, and real-valued local martingale, then by the {\em Doob-Meyer decomposition} \cite[Theorem 4.10, pg. 24]{brownian_motion}, the process $M_t^2 - [M_t]$ is itself a local martingale. In our analysis, we encounter additive functionals like $A_t:=\int^t_0 g(X_t)dW_t$, where $X_t$ is an It\^o process and {$\langle \phi_X, |g|\rangle_M < \infty$}. In this case, $[A_t]=\int^t_0 g^2(X_t)dt$. To study the asymptotic behavior of such terms, we will rely on Theorem 3.16 in \cite{limit_theorems_null}. Recall notation in Theorem \ref{Theorem Darling-Kac Theorem0}.

\begin{proposition}\cite[Theorem 3.16 and the following discussion]{limit_theorems_null}\label{thm3.16_limit_theorems_null} 
Let $A_t$ be a $p$-dim additive functional of a Harris recurrent $X_t$ with invariant measure $\phi_X$. Assume $A_t$ is a locally square integrable local martingale that is locally bounded. Denote $J\in \mathbb{R}^{p\times p}$ such that $J_{k,l}:=\mathbb E^M_{\phi_X}([e_k^\top A_1,\ e_l^\top A_1])$,  
and assume $J_{k,k}<\infty$ for $k=1,\ldots,p$. Then, for some $0 < \alpha \le 1$, we have a weak convergence of pairs 
\begin{align*}
   \left(\frac{A_{sT}}{\sqrt{\Upsilon(T)}},\ \frac{[A_{sT}]}{\Upsilon(T)}\right)_{s \ge 0} \xrightarrow[]{\hspace{0.1cm}d\hspace{0.1cm}} ( J^{1/2} W_\alpha,\ J g_\alpha ) 
\end{align*}
in $D(\mathbb{R}_+,\mathbb{R})$ as $T \rightarrow \infty$  under $\mathbb P_x$ for all $x\in M$, where $g_\alpha$ is a Mittag-Leffler process of order $\alpha$, $W_\alpha:=W(g_\alpha)$, and $W$ is a $p$-dim standard Brownian motion independent of $g_\alpha$. 
\end{proposition}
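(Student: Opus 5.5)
The plan is to reduce everything to two limit theorems for sums over the generalized life-cycle decomposition of the L\"ocherbach--Loukianova embedding $Z_t$, and then combine them. Let $\{R_m\}$ be the regeneration times from Definition \ref{gen-lif-cycle-decomp}, and let $\ell_T:=\sup\{m:\,R_m\le T\}$ count completed cycles. Define the cycle increments
\[
\xi_m:=A_{R_{m+1}}-A_{R_m}\in\mathbb{R}^p, \qquad \zeta_m:=[A]_{R_{m+1}}-[A]_{R_m}\in\mathbb{R}^{p\times p}.
\]
By Proposition \ref{prop: summary of Rn and Sn}(5), applied to the increasing additive functionals $[e_k^\top A+e_l^\top A]$ and to the martingale $A$ itself, the sequences $(\xi_m,\zeta_m)$ are stationary and ergodic, and their even-indexed and odd-indexed subsequences are i.i.d. Since $A$ is a local martingale started afresh at each $R_m$, $\mathbb{E}\,\xi_m=0$ and $\mathbb{E}\,\xi_m\xi_m^\top=\mathbb{E}\,\zeta_m$. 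Polarization together with Proposition \ref{2.20and4.3} identifies $\mathbb{E}\,\zeta_m=C_XJ$, so $J_{k,k}<\infty$ supplies the required second moments.

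\textbf{Step 1 (quadratic variation and clock).} I would apply Theorem \ref{Theorem Darling-Kac Theorem0}(a)(2) to the scalar increasing functional $\mathrm{tr}[A]_t$, whose $\phi_X$-mean at time $1$ is $\mathrm{tr}J$. This gives $\mathrm{tr}[A]_{sT}/\Upsilon(T)\to \mathrm{tr}(J)\,g_\alpha(s)$ in $D(\mathbb{R}_+,\mathbb{R})$. Next, the ratio limit theorem (Theorem \ref{ratio-limit0SUPP}), applied to the polarized functionals, gives
\[
[e_k^\top A,e_l^\top A]_t/\mathrm{tr}[A]_t\to J_{k,l}/\mathrm{tr}J \quad \text{a.s.}
\]
Hence the full matrix $[A]_{sT}/\Upsilon(T)$ converges to $J\,g_\alpha(s)$, jointly in all entries and driven by a single $g_\alpha$. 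The same argument applied to the functional $t\mapsto$ (number of regenerations by $t$) yields $\ell_{sT}/\Upsilon(T)\to C_X^{-1}g_\alpha(s)$. This convergence holds jointly with the previous one, since both are asymptotically proportional to $\mathrm{tr}[A]$.

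\textbf{Step 2 (martingale part via cycles).} I would write
\[
A_{sT}=\sum_{m<\ell_{sT}}\xi_m+(\text{first and last incomplete cycles}).
\]
By the functional CLT for the $1$-dependent stationary sequence $\xi_m$ (split into even and odd i.i.d. blocks), $n^{-1/2}\sum_{m<\lfloor nu\rfloor}\xi_m\to (C_XJ)^{1/2}B_u$, where $B$ is a Brownian motion. The boundary cycles are negligible after division by $\sqrt{\Upsilon(T)}$. For the first cycle I would use that $\lambda$ is compactly supported (Assumption \ref{manifold-ass2}) together with the bound $\mathbb{E}(R_n-S_n\mid\cdot)\le1/\beta$. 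For the last cycle I would use a maximal inequality for the i.i.d. blocks, and the local boundedness of $A$ handles the jumps in the Lindeberg condition. The remaining task is to substitute the random index $\ell_{sT}$ into the Donsker limit.

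\textbf{Main obstacle: joint convergence with an independent time change.} The limit $J^{1/2}W(g_\alpha)$ requires that the Brownian limit of the cycle sums be \emph{independent} of the limit $g_\alpha$ of the clock, even though $\xi_m$ and the cycle lengths $R_{m+1}-R_m$ are dependent. I would establish this through a Skorohod-space analogue of the Anscombe/R\'enyi mixing argument. In the null-recurrent regime ($\alpha<1$), $g_\alpha$ is the inverse of the stable limit of $R_{\lfloor nu\rfloor}/a(n)$, which is determined by the few largest cycle lengths. Removing these finitely many cycles changes the $\xi$-sum by only $o(\sqrt n)$, so I would show that $(n^{-1/2}\sum\xi_m,\,R_{\lfloor n\cdot\rfloor}/a(n))$ converges jointly to independent limits $(B,S_\alpha)$. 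In the positive case, $S_1$ is deterministic and independence is trivial. Then the continuous-mapping theorem applied to the composition map $(x,y)\mapsto x\circ y$, which is continuous at continuous increasing $y$, gives
\[
\left(\frac{A_{sT}}{\sqrt{\Upsilon(T)}},\,\frac{[A]_{sT}}{\Upsilon(T)}\right)\to\left((C_XJ)^{1/2}B(C_X^{-1}g_\alpha),\,Jg_\alpha\right)\overset{d}{=}\left(J^{1/2}W(g_\alpha),\,Jg_\alpha\right).
\]
Finally, I would transfer the result from the initial law $\bar\lambda$ to every starting point $\mathbb{P}_x$ using Proposition \ref{LL-Prop2.8a}.
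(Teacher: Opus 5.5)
The paper gives no proof of this proposition; it quotes it from \cite{limit_theorems_null}. Your regeneration-plus-random-time-change argument is therefore a genuinely different, self-contained route, and its overall architecture is sound. However, the step that fixes the covariance of the limit is justified by a false claim.

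The $R_m$ are stopping times for the filtration $\mathsf{F}_t$ generated by $Z_t$. A martingale additive functional of $X$ is in general \emph{not} an $\mathsf{F}$-martingale, because $\mathsf{F}_{\mathsf{T}_n}$ already contains $Z^3_{\mathsf{T}_n}=Z^1_{\mathsf{T}_{n+1}}$. That is the endpoint of the bridge $X$ follows on $[\mathsf{T}_n,\mathsf{T}_{n+1})$; for Brownian $X$, the increment $X_{\mathsf{T}_{n+1}}-X_{\mathsf{T}_n}$ of the martingale $X_t-X_0$ is $\mathsf{F}_{\mathsf{T}_n}$-measurable. Nor does anything start afresh at $R_m$: $Z_{R_m+}$ is independent only of $\mathsf{F}_{S_m-}$, which is exactly why $(\xi_m)$ is merely $1$-dependent.

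This matters. For a $1$-dependent stationary sequence the long-run covariance is $\mathbb{E}\xi_0\xi_0^\top+\mathbb{E}\xi_0\xi_1^\top+\mathbb{E}\xi_1\xi_0^\top$. Splitting into even and odd i.i.d.\ subsequences says nothing about the cross term, so your limit $(C_XJ)^{1/2}B$ is not established.

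The repair is as follows. The jump times form a Poisson process independent of $X$. Given $(Z^1_{\mathsf{T}_n},Z^1_{\mathsf{T}_{n+1}})$, the mark $Z^2_{\mathsf{T}_n}$ is conditionally independent of everything else, since the bridge and the next jump time do not depend on it. So you may reveal $Z^2_{\mathsf{T}_n}$ only at time $\mathsf{T}_{n+1}$. In that filtration:
\begin{itemize}
\item $W$ is still a Brownian motion and $A$ is still a local martingale;
\item every $R_m$ (though not $S_m$) is still a stopping time.
\end{itemize}
Optional stopping, justified by $\mathbb{E}\zeta_m=C_XJ<\infty$, then makes $(\xi_m)$ a stationary ergodic martingale-difference sequence. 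This gives $\mathbb{E}\xi_m\xi_m^\top=C_XJ$ and vanishing cross-covariances at once, and the martingale FCLT replaces the $1$-dependent one.

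The independence of $B$ and $g_\alpha$, which you rightly single out as the heart of the theorem, is still only a heuristic. To turn it into a proof you must show:
\begin{itemize}
\item the number of cycles among the first $Kn$ with $R_{m+1}-R_m>\varepsilon a(n)$ is tight;
\item these cycles and, by $1$-dependence, their neighbours contribute $o_p(\sqrt n)$ to the $\xi$-sum; this follows from $\max_{m\le Kn}|\xi_m|=o_p(\sqrt n)$ by finite second moments;
\item for $\alpha<1$ the small cycles contribute $O(\varepsilon^{1-\alpha})$ to $R_{\lfloor n\cdot\rfloor}/a(n)$;
\item the bulk sum is asymptotically independent of the large-cycle configuration.
\end{itemize}
A cleaner equivalent is to show that the array $\bigl(n^{-1/2}\xi_m,\,a(n)^{-1}(R_{m+1}-R_m)\bigr)$ converges to a process with independent increments, Gaussian part only in the first coordinate, and L\'evy measure only on the second axis. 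Such a limit automatically has independent components.

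Two smaller points:
\begin{itemize}
\item Invoking Theorem \ref{Theorem Darling-Kac Theorem0}(a)(2) requires the regular-variation hypothesis of Assumption \ref{reg-ass}, which the statement leaves implicit.
\item The first cycle needs neither the compact support of $\lambda$ nor the bound $1/\beta$; that bound controls $R_1-S_1$, not the possibly heavy-tailed $S_1$. Under each $\mathbb{P}_x$, $\sup_{t\le R_1}|A_t|<\infty$ a.s., so this term vanishes after division by $\sqrt{\Upsilon(T)}$. Running the whole argument directly under each $\mathbb{P}_x$ also avoids the transfer from $\bar\lambda$, since convergence under a mixture does not imply convergence under each component.
\end{itemize}
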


Note that $W(g_\alpha)$ is a {\em centered Gaussian mixture} with covariance $g_\alpha$, or it can be viewed as a time changed Brownian motion with $g_\alpha$; that is, a Brownian motion whose variance is randomized by a Mittag-Leffler clock. When $\alpha<1$, the marginal of $W(g_\alpha)$ is non-Gaussian. In our application, we encounter $t=1$, when $W_\alpha(1)|g_\alpha(1)\sim N(0,g_\alpha(1)I_p)$, or $W_\alpha(1)=\sqrt{g_\alpha(1)}Z$, where $Z\sim N(0,I_p)$ and $Z$ is independent of $g_\alpha(1)$.  
While the quadratic variation $[A_{sT}]$ is itself an additive functional, Proposition \ref{thm3.16_limit_theorems_null} treats the special case that $A_{sT}$ has martingale behavior.

The next inequality, {\em Burkholder-Davis-Gundy (BDG) inequality}, is also necessary, particularly when we analyze the diffusion estimator that involves triple discretization.
\begin{theorem}
\label{BDG-ineq}\cite[Theorem 3.28, page 166]{brownian_motion}
For a continuous local martingale $M_t$ with $M_0 = 0$ and any $p > 0$, there exist process-independent constants $c_p$ and $C_p$ so that for any stopping time $T$, 
\begin{align*}
c_p \mathbb{E}(\left[ M_T\right]^p) \le  \mathbb{E}((M_T^*)^{2p}) \le  C_p \mathbb{E}(\left[ M_T\right]^p)\,,
\end{align*}
where $M_T^* = \sup_{t \le T} |M_t|$ is the maximum process.
\end{theorem}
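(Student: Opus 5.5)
We would prove the inequality in the standard way: first reduce to bounded martingales, then obtain both bounds from Itô's formula and Doob's maximal inequality when $p\geq 1$, and finally use a good-$\lambda$ (or Lenglart domination) argument to cover all $p>0$.

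\textbf{Reduction.} Replacing $M$ by the stopped process $M^{T}$ lets us take the stopping time to be $T=\infty$, with $M_\infty^*$ and $[M]_\infty$ understood as monotone limits. Next we localize with $\tau_n:=\inf\{t:\,|M_t|\geq n \text{ or } [M]_t\geq n\}$. Each $M^{\tau_n}$ is a bounded martingale with bounded quadratic variation. Both sides of the inequality increase monotonically in $n$, so monotone convergence passes constants obtained for $M^{\tau_n}$ to $M$. The constants must therefore never depend on $n$ or on the process, only on $p$.

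\textbf{Case $p\geq 1$.} Write $q:=2p\geq 2$.

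\emph{Upper bound.} Apply Itô's formula to $|x|^{q}$, which is $C^2$ for $q\geq2$:
\[
|M_t|^{q}=\int_0^t q|M_s|^{q-1}\mathrm{sgn}(M_s)\,dM_s+\tfrac{q(q-1)}{2}\int_0^t|M_s|^{q-2}\,d[M]_s .
\]
The stochastic integral is a true martingale after localization, so taking expectations and bounding $|M_s|^{q-2}\leq (M^*)^{q-2}$ gives
\[
\mathbb{E}|M_\infty|^q\leq \tfrac{q(q-1)}{2}\,\mathbb{E}\big[(M^*)^{q-2}[M]_\infty\big].
\]
Hölder's inequality with exponents $q/(q-2)$ and $q/2$ bounds the right side by $\|M^*\|_q^{q-2}\,\|[M]_\infty\|_{p}$. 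Doob's $L^q$ maximal inequality gives $\|M^*\|_q\leq \tfrac{q}{q-1}\|M_\infty\|_q$. Combining these and dividing by $\|M^*\|_q^{q-2}$, which is finite by the reduction, yields $\mathbb{E}(M^*)^{2p}\leq C_p\,\mathbb{E}[M]^p_\infty$.

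\emph{Lower bound.} Start from $[M]_t=M_t^2-2\int_0^t M_s\,dM_s$, so that
\[
\mathbb{E}[M]^p_\infty\leq C\Big(\mathbb{E}(M^*)^{2p}+\mathbb{E}\sup_t\Big|\int_0^t M\,dM\Big|^p\Big).
\]
For $p\geq 2$, apply the upper bound already proved (with exponent $p/2$) to $N:=\int M\,dM$, whose quadratic variation is $\int M^2\,d[M]\leq (M^*)^2[M]_\infty$. This gives
\[
\mathbb{E}\sup_t|N_t|^{p}\leq C\,\mathbb{E}\big[(M^*)^{p}[M]_\infty^{p/2}\big]\leq \varepsilon\,\mathbb{E}[M]^p_\infty+C_\varepsilon\,\mathbb{E}(M^*)^{2p}
\]
by Young's inequality. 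Choosing $\varepsilon$ small, the $\varepsilon$-term is absorbed into the left side, which is finite by the reduction. For $1\le p<2$ this route does not directly apply, so we handle that range together with $p<1$ in the next step.

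\textbf{All $p>0$: good-$\lambda$ step, the main obstacle.} From the $L^2$ identity $\mathbb{E}(M_\infty^2)=\mathbb{E}[M]_\infty$ and Doob's inequality, we prove Burkholder's good-$\lambda$ inequality: for $\beta>1$ and $0<\delta<\beta-1$,
\[
\mathbb{P}\big(M^*>\beta\lambda,\ [M]^{1/2}_\infty\leq\delta\lambda\big)\leq \frac{\delta^2}{(\beta-1)^2}\,\mathbb{P}(M^*>\lambda),
\]
and the symmetric statement with the roles of $M^*$ and $[M]^{1/2}$ exchanged.

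For the displayed inequality, set $\rho:=\inf\{t:|M_t|>\lambda\}$ and $\sigma:=\inf\{t:|M_t|>\beta\lambda\}$, and let $\tau:=\inf\{t:[M]_t>\delta^2\lambda^2\}$. Apply Chebyshev's inequality together with the $L^2$ isometry to the martingale $M^{\sigma\wedge\tau}-M^{\rho\wedge\tau}$, using continuity of paths. This bounds the left probability by $\mathbb{E}\big([M]_{\tau}-[M]_{\rho\wedge\tau}\big)\big/\big((\beta-1)\lambda\big)^2$. Since this quantity is at most $\delta^2\lambda^2$ and is supported on $\{\rho<\infty\}=\{M^*>\lambda\}$, the inequality follows. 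The symmetric version is the same computation with the roles exchanged.

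Integrate the good-$\lambda$ inequality against $d(\lambda^{2p})$. Using $\mathbb{P}(M^*>\beta\lambda)\leq \mathbb{P}(M^*>\beta\lambda,\,[M]^{1/2}_\infty\le\delta\lambda)+\mathbb{P}([M]^{1/2}_\infty>\delta\lambda)$, this yields
\[
\beta^{-2p}\,\mathbb{E}(M^*)^{2p}\leq \frac{\delta^2}{(\beta-1)^2}\,\mathbb{E}(M^*)^{2p}+\delta^{-2p}\,\mathbb{E}[M]_\infty^p .
\]
Taking $\delta$ small enough that $\delta^2/(\beta-1)^2<\beta^{-2p}$ and absorbing the first term (finite by the reduction) gives the upper bound. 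The lower bound follows in the same way from the symmetric version. This step handles every $p>0$ uniformly, so it also covers the range $1\le p<2$ left open above.

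The delicate points are all in this step: placing the stopping times correctly, using continuity so that $|M_\rho|=\lambda$ exactly, and keeping the constants independent of the process.
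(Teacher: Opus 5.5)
The paper does not prove this inequality; it imports it from Karatzas--Shreve, so the comparison is with that textbook argument. Your proof is correct and complete, but it follows Burkholder's good-$\lambda$ route rather than the cited one.

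\emph{The cited route.} For $p\ge 1$, Karatzas--Shreve stay entirely within It\^o calculus. The upper bound is exactly yours: It\^o's formula for $|x|^{2p}$, then H\"older, then Doob. The lower bound uses the auxiliary martingale $N=\int_0^{\cdot}[M]_s^{(p-1)/2}\,dM_s$. For it one has $p[N]_\infty=[M]_\infty^p$, and integration by parts gives $|N_\infty|\le 2M^*_\infty[M]_\infty^{(p-1)/2}$. H\"older then closes the lower bound for every $p\ge 1$ at once. For $0<p<1$ they use a Lenglart-type domination inequality: if $\mathbb{E}X_\tau\le\mathbb{E}A_\tau$ for bounded stopping times $\tau$, with $A$ continuous and increasing, then $\mathbb{E}(X^*_\infty)^p\le\frac{2-p}{1-p}\mathbb{E}A^p_\infty$. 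This is applied to $(X,A)=(M^2,[M])$ and to $([M],(M^*)^2)$.

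\emph{Comparison.} Your trick of $\int M\,dM$ combined with Young's inequality only reaches $p\ge 2$, which is why you need the good-$\lambda$ step for $1\le p<2$. Since that step already handles every $p>0$ in both directions, your It\^o section is strictly speaking redundant. Your route buys a single uniform argument, which also extends to moderate functions in place of $\lambda^{2p}$. The textbook route buys explicit constants.

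\emph{One small inaccuracy.} The exchanged good-$\lambda$ inequality is not literally the same computation with the same constant. There you let $\rho'=\inf\{t:[M]_t>\lambda^2\}$ and stop at $\inf\{t:|M_t|>\delta\lambda\}$. The resulting martingale increment after $\rho'$ is then bounded by $2\delta\lambda$. You apply Markov's inequality to the increment of $[M]$, which exceeds $(\beta^2-1)\lambda^2$ on the event, and use $\mathbb{E}N_\infty^2=\mathbb{E}[N]_\infty$. So Chebyshev and the isometry trade roles, and the constant becomes $4\delta^2/(\beta^2-1)$. This is harmless: any constant tending to $0$ as $\delta\to 0$ suffices for the absorption step.
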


\subsection{It\^o's formula}\label{section Ito formula summary}

Recall that the SDE $X_t\in M$ in the Stratonovich form satisfies 
\[
dX_t=\nu(X_t)dt+\sigma_\alpha(X_t)\circ dW^\alpha(t)\,, 
\]
which appears in the It\^o form as
\[
dX_t=\underbrace{\Big(\nu+\frac{1}{2}\sum_\alpha \nabla_{\sigma_\alpha}\sigma_\alpha\Big)(X_t)}_{:=\mu(X_t)}dt+\sigma_\alpha(X_t) dW^\alpha(t).
\]
With this fact, here is another way to see why curvature comes into play and why projection is needed in the drift estimator.
Consider $f = \iota : M \to \mathbb{R}^p$, where $\iota$ is the isometric embedding of $(M,g)$ into $\mathbb{R}^p$. The explicit Euclidean-embedded process
$Z_t = {\iota}(X_t)$ satisfies
\begin{align*}
 dZ_t = &\,{\iota}_* \nu(Z_t) \, dt + {\iota}_* \sigma_\alpha(Z_t) \circ dW^\alpha_t=\underbrace{\Big( {\iota}_* \nu+ \frac{1}{2} D_{{\iota}_* \sigma_\alpha}({\iota}_* \sigma_\alpha) \Big)(Z_t)}_{\neq \iota_*\mu(Z_t)} \, dt + {\iota}_* \sigma_\alpha(Z_t) \, dW^\alpha_t, 
\end{align*}
where $D_{{\iota}_* \sigma_\alpha}({\iota}_* \sigma_\alpha)$ denotes the covariant derivative of the vector field $\iota_* \sigma_\alpha$ along itself in the ambient space $\mathbb{R}^p$, which satisfies
\[
D_{{\iota}_* \sigma_\alpha}({\iota}_* \sigma_\alpha)(\iota(x))={\iota}_*|_x\nabla_{\sigma_\alpha}\sigma_\alpha+\texttt{I\!\!I}_x(\sigma_\alpha,\sigma_\alpha)
\]
and contains a nontrivial normal component. 
This computation shows that if we estimate the drift term using the Euclidean estimator, the normal component biases the estimator and it is the dynamics of the extrinsic $Z_t=\iota(X_t)$ that we are dealing with, not the intrinsic drift term of $X_t$ that we want.

Next we document the It\^o's formula. Consider $f\in C^2(M,\, \mathbb{R}^p)$ with $\texttt{supp} f \subset B_r(x)$ for $r < \texttt{inj}_x(M)$. For $0\leq t_1\leq t_2$, the It\^o's formula gives 
\begin{align}
f(X_{t_2}) - \,& f(X_{t_1})  = \int_{t_1}^{t_2} (\nu f)(X_s) ds + \int_{t_1}^{t_2} (\sigma_\alpha f)(X_s) \circ dW^\alpha_s \nonumber\\
 =\,& \int_{t_1}^{t_2} \underbrace{\Big[(\nu f)(X_s) + \frac{1}{2}  \sum_{l=1}^r (\sigma_\alpha\sigma_\alpha f)(X_s)\Big]}_{:=\mu_f(X_s)}  \, ds + \int_{t_1}^{t_2}  \underbrace{(\sigma_\alpha f)(X_s)}_{:=\sigma_{f,\alpha}(X_s)} \, dW^\alpha_s \label{ito-integral}\,.
\end{align}
Recall that the second equality comes from the definition of Hessian $ \texttt{Hess}^q(X,Y) =X(Yq)-(\nabla_XY)q$ \cite[page 23]{hsu_stoch_anal_mflds} for the conversion from Stratonovich formulation to It\^o formulation.  We will use this notation throughout the proofs.

\subsection{Sufficient conditions for Assumption \ref{manifold-ass2} when $M$ is noncompact}\label{section sufficient conditions for assumption 2 when M is noncompact}

For Assumption \ref{manifold-ass2}(1), the simplest one is the linear growth bound and locally Lipschitz condition via extension. Denote the smooth extensions of $\nu$ and ${\sigma}_k$, $k=1,\ldots,r$, in \eqref{strat-SDE} to $\mathbb{R}^p$ as $\widetilde{\nu}$ and $\widetilde{\sigma}_k$, and $\widetilde{\sigma}(x):=[\widetilde{\sigma}_1(x),\ldots,\widetilde{\sigma}_r(x)]\in \mathbb{R}^{p\times r}$. The {\em linear growth bound} condition is 
\[
\|\widetilde{\sigma}(x)\|+|\widetilde{\nu}(x)|\leq C(1+|x|)
\]
for some $C>0$, and their derivatives of any order are bounded. The {\em locally Lipschitz} is for any $N>0$, there exists $C_N>0$ such that 
\[
\|\widetilde{\sigma}(x)-\widetilde{\sigma}(x')\|+|\widetilde{\nu}(x)-\widetilde{\nu}(x')|\leq C_N|x-x'|
\] 
for any $x,x'\in B_0(N)$. The linear growth bound and locally Lipschitz condition of $\widetilde{\nu}$ and $\widetilde{\sigma}$ guarantees the solution does not explode \cite[(1.1.11)]{hsu_stoch_anal_mflds}.
See \cite[Theorem 2.1.1 and Corollary 2.1.2]{wang2014analysis} for other conditions. 

The Foster-Lyapunov criterion \cite{meyn1993survey} provides a sufficient condition for Assumptions \ref{manifold-ass2}(1) and \ref{manifold-ass2}(4). A function $V:M\to \mathbb{R}^+$ is called a {\em norm-like} function if $V(x)\to \infty$ as $x\to \infty$. The norm-like functions are essentially a specific type of Lyapunov function tailored for studying stability and ergodicity of Markov processes.
Assume there exist constants $c\geq 0$ and a norm-like function $V$ so that
\[
(LV)(x)\leq cV(x)\,,
\]
then $\mathbb{P}\{d_g(X_t,x)=\infty \}=0$ for any $x\in M$, which is another sufficient condition for Assumption \ref{manifold-ass2}(1). 
To discuss Harris recurrence, we need more quantities. Define $K_a:=\int_0^\infty P_ta(dt)$, where $a$ is a probability distribution on $\mathbb{R}_+$. A non-empty set $C\in \mathcal{B}(M)$ is called $\phi_a$-petite if $\phi_a$ is a non-trivial measure on $\mathcal{B}(M)$ and $K_a(x,\cdot)\geq \phi_a(\cdot)$ for all $x\in C$. The notion of a petite set is related to irreducibility of $X_t$.
It is known that  \cite[Theorem 3.2]{meyn1993stability} if all compact subsets of $M$ are {\em petite}, and if there exists a compact set $C\subset M$, a constant $d>0$, and a norm-like function $V$ such that the {\em condition for recurrence} 
\[
(LV)(x)\leq d\mathbbm{1}_C(x)
\]
holds for all $x\in M$, then $X_t$ is Harris recurrent. 
Next, if $X_t$ is Feller and non-explosive, and the {\em positive recurrence condition}
\[
(LV)(x)\leq -cf(x)+d\mathbbm{1}_C(x)
\]
holds for all $x\in M$ for some $c,d>0$, $f:M\to [1,\infty)$, a compact set $C\subset M$, and $V\geq 0$, then $X_t$ is positive Harris recurrent \cite[Theorem 4.5]{meyn1993stability}. Note that this sufficient condition does not involves irreducibility. Another sufficient condition for a non-explosive $X_t$ being positive Harris recurrent is that the positive recurrence condition holds with $C$ a closed petite set and that $V$ is bounded on $C$ \cite[Theorem 4.2]{meyn1993stability}.
The existence of a norm-like function satisfying the above conditions guarantees that the drift term pulls the process back toward a compact region, preventing escape to infinity. 
In practice, constructing an explicit norm-like function can be technically demanding.

There are also sufficient conditions for Assumption \ref{manifold-ass2}(3). We call a probability measure $\mu$ on $\mathcal{B}(M)$ {\em quasi-invariant} of the Markov operator $P_t$ if $\mu P_t$ is absolutely continuous with related to $\mu$, where $\mu P_t(A):=\mu(P_t\boldsymbol{1}_A)$ and $A\in \mathcal{B}(M)$. Assume $L=\Delta+Z$, where $Z\in C^\infty(TM)$. Define 
\[
\texttt{Ric}_Z(X,Y)=\texttt{Ric}(X,Y)-\langle \nabla_X Z,Y\rangle\,,
\]
where $X,Y,Z\in C^\infty(\Gamma M)$.
By \cite[Theorem 2.3.3]{wang2014analysis}, $\texttt{Ric}_Z\geq K$ for some $K\in \mathbb{R}$ is equivalent to the {\em Harnack-type} inequality that when $p>1$, for any bounded non-negative measurable function $f$:
\[
(P_tf)^p(x)\leq P_t(f^p)(y)\exp\left(\frac{Kpd_g(x,y)^2}{2(p-1)(e^{2Kt}-1)}\right)\,.
\]
If $\texttt{Ric}_Z\geq K$ for $K\in \mathbb{R}$ and there exists a quasi-invariant probability measure $\mu$ of $P_t$, then $P_t$ is strong Feller \cite[Theorem 1.3.1]{wang2014analysis}. Moreover, $P_t$ has at most one invariant probability measure, and if it has, the kernel $p_t$ of $P_t$ related to the invariant probability measure is strictly positive. 
See \cite{seeley1984contraction,ishiwata2024graph} for sufficient conditions for Feller.

Assumption \ref{manifold-ass2}(3) might be the most tricky one. It holds when the diffusion kernel has a Gaussian-like control. Such Gaussian-like control holds, for example, when $L=\Delta$ and $M$ has non-negative Ricci curvature or is quasi-isometric to those with non-negative Ricci curvature, or that cover a compact manifold with deck-transformation group having polynomial volume growth \cite{grigor2009heat}. It is possible that the diffusion kernel does not have Gaussian-like control but fulfills Assumption \ref{manifold-ass2}(3). A concrete example is when $M$ is a 3-dim Cartan-Hadamard manifold $\mathbb{H}^3_k$ with constant curvature $-k^2<0$, where $k>0$. Its heat kernel is $p_t(x,y)=\frac{1}{(4\pi t)^{3/2}}\frac{kd(x,y)}{\sinh(kd(x,y))}\exp\big(-\frac{d(x,y)^2}{4t}-k^2t\big)$ \cite{grigor1994heat}. See \cite{grigor2009heat} for various examples of non-compact manifolds with finite ends, whose heat kernel has a non-Gaussian control. 
When the drift term is nontrivial, here is a relevant control. Assume again $L=\Delta+Z$. Assume $Z=\nabla V$ for some $V\in C^2(M)$ and the invariant measure is $\mu(dx):=e^{V(x)}dx$. Let $p_t$ be the diffusion kernel of $P_t$ with related to $\mu$. When $\texttt{Ric}_Z\geq K$ for $K\in \mathbb{R}$, for any $\delta>0$, there exists $c(\delta)>0$ such that a Gaussian-like upper bound holds \cite[Theorem 2.4.4]{wang2014analysis}:
\[
p_t(x,y)\leq (\mu(B_{\sqrt{t}}(x)) \mu(B_{\sqrt{t}}(y)))^{-1/2}\exp\left(c(\delta)(1+t)-\frac{\rho(x,y)^2}{2\delta t}\right)
\] 
for all $t>0$ and $x,y\in M$. If we further assume $\phi_X$ is a probability measure, we have the lower bound
\[
p_t(x,y)\geq \exp\left(-K\rho(x,y)^2/2(e^{Kt}-1)\right)
\] 
for $t>0$ and $x,y\in M$. The above examples and sufficient conditions show the complication of the diffusion kernel control, and the necessity of Assumption \ref{manifold-ass2}(3).

 \subsection{General tools}

In the analysis, we encounter various kinds of error controls in the ratio form. We summarize common ones in the following lemmas.

\begin{lemma}
    \label{reciprocal-lemma1} 
    Suppose that $X_t \xrightarrow[]{\hspace{0.1cm}d\hspace{0.1cm}} X$ as $t \rightarrow \infty$, with $X > 0$ almost surely. Moreover, suppose $A_t \xrightarrow[]{\hspace{0.1cm}p\hspace{0.1cm}} A$ and $B_t \xrightarrow[]{\hspace{0.1cm}p\hspace{0.1cm}} B,$ for $A,B \in \mathbb{R}$ and $B \neq 0$, and $C_t,D_t \xrightarrow[]{\hspace{0.1cm}p\hspace{0.1cm}} 0$ as $t \rightarrow \infty$. Then, we have a convergence in probability for the following ratio:
    \begin{align*}
 \frac{X_tA_t + C_t}{X_tB_t + D_t} \xrightarrow[]{\hspace{0.1cm}p\hspace{0.1cm}} \frac{A}{B}
    \end{align*}
    as $t \rightarrow \infty$.
\end{lemma}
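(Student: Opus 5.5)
The plan is to reduce the ratio to a deterministic limit by dividing out the random factor $X_t$ and then using the algebra of convergence in probability (Slutsky-type arguments). Write
\[
\frac{X_tA_t + C_t}{X_tB_t + D_t}-\frac{A}{B}=\frac{B(X_tA_t+C_t)-A(X_tB_t+D_t)}{B(X_tB_t+D_t)}=\frac{X_t(BA_t-AB_t)+BC_t-AD_t}{B(X_tB_t+D_t)}\,.
\]
The numerator is handled first. By tightness, convergence in distribution $X_t\xrightarrow{d}X$ gives $X_t=O_p(1)$, and the continuous mapping theorem gives $BA_t-AB_t\xrightarrow{p}0$. Hence $X_t(BA_t-AB_t)=O_p(1)o_p(1)=o_p(1)$, and $BC_t-AD_t\xrightarrow{p}0$ directly, so the numerator is $o_p(1)$.

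Next we show the denominator is bounded away from zero in probability; this is the main point, since $X_t$ does not converge in probability and we only know $X>0$ a.s. Because $X>0$ almost surely, for any $\epsilon>0$ we can choose $\delta>0$ with $\mathbb{P}(X\le 2\delta)<\epsilon$. The Portmanteau theorem applied to the closed set $(-\infty,\delta]$ gives $\limsup_t\mathbb{P}(X_t\le\delta)\le\mathbb{P}(X\le\delta)<\epsilon$. On the event $\{X_t>\delta,\ |B_t-B|<|B|/2,\ |D_t|<\delta|B|/4\}$, whose probability is at least $1-2\epsilon$ for large $t$, we have
\[
|X_tB_t+D_t|\ge \frac{\delta|B|}{2}-\frac{\delta|B|}{4}=\frac{\delta|B|}{4}\,.
\]
Therefore $1/(B(X_tB_t+D_t))=O_p(1)$.

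Combining the two parts, the difference is $o_p(1)\cdot O_p(1)=o_p(1)$, which is the claimed convergence in probability to $A/B$. The only delicate step is the lower bound on the denominator, which needs the uniform-in-$t$ control of $\mathbb{P}(X_t\le\delta)$ supplied by Portmanteau together with the positivity of $X$. Everything else is routine Slutsky bookkeeping.
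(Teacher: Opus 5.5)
Your proof is correct, but it is organized differently from the paper's. The paper divides numerator and denominator by $X_t$, rewriting the ratio as $(A_t+C_t/X_t)/(B_t+D_t/X_t)$. It then asserts that $C_t/X_t$ and $D_t/X_t$ are $o_p(1)$ and finishes with the continuous mapping theorem for $(a,b)\mapsto a/b$ at $(A,B)$, $B\neq 0$. You instead subtract the limit and handle the pieces separately: tightness of $X_t$ makes the numerator $o_p(1)$, and a Portmanteau lower bound keeps the denominator away from zero.

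Both routes rest on the same fact: $X_t$ stays bounded away from zero in probability because $X>0$ a.s. The paper uses this without proof. It is exactly what is needed for $C_t/X_t=o_p(1)$, and for dividing by $X_t$ to make sense with probability tending to one. Your Portmanteau argument on $(-\infty,\delta]$ supplies it explicitly.

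The paper's normalization is more economical: once $1/X_t=O_p(1)$ is in hand, it needs no upper bound on $X_t$ and reduces everything to the continuous mapping theorem. Your version is self-contained. It shows precisely where positivity of $X$ enters, and that the ratio is well defined on an event whose probability tends to one.
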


\begin{proof}  
   Re-write the ratio as
    $\frac{X_tA_t + C_t}{X_tB_t + D_t} = \frac{A_t + c_t}{B_t + d_t}$, where $c_t = C_t/X_t$.
    We have $c_t = o_p(1)$ and $d_t = o_p(1)$.  
    It follows that $A_t+ c_t \xrightarrow[]{\hspace{0.1cm}p\hspace{0.1cm}} A$ and $B_t+ d_t \xrightarrow[]{\hspace{0.1cm}p\hspace{0.1cm}} B$ and so by the continuous mapping theorem, $\frac{A_t + c_t}{B_t + d_t} \xrightarrow[]{\hspace{0.1cm}p\hspace{0.1cm}} \frac{A}{B}$.
\end{proof}

We have another technical lemma related to ratios.
\begin{lemma}
    \label{reciprocal-lemma2} Suppose $X_t \xrightarrow[]{\hspace{0.1cm}d\hspace{0.1cm}} X$ and $Y_t \xrightarrow[]{\hspace{0.1cm}d\hspace{0.1cm}} Y$ as $ t \rightarrow \infty$, where $X, Y>0$ almost surely.     Suppose $A_t \xrightarrow[]{} 0 $ and $B_t \xrightarrow[]{} 0$ deterministically. Then, 
    \begin{align*}
 \frac{X_t + A_t}{Y_t + B_t} &= \frac{X_t}{Y_t}  +  O_p(A_t) + O_p(B_t)\,.
    \end{align*}
\end{lemma}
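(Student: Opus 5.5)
The plan is to reduce the ratio to its leading term with one algebraic identity and then bound the correction using the distributional limits of $X_t$ and $Y_t$. Write
\[
\frac{X_t + A_t}{Y_t + B_t} - \frac{X_t}{Y_t} = \frac{Y_t A_t - X_t B_t}{Y_t(Y_t+B_t)} = \frac{A_t}{Y_t+B_t} - \frac{X_t}{Y_t}\cdot\frac{B_t}{Y_t+B_t}.
\]
It then suffices to show that each factor multiplying $A_t$ or $B_t$ is $O_p(1)$. That is, we show
\[
\frac{1}{Y_t+B_t}=O_p(1) \quad\text{and}\quad \frac{X_t}{Y_t(Y_t+B_t)}=O_p(1),
\]
since an $O_p(1)$ quantity times a deterministic sequence $A_t$ is $O_p(A_t)$.

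\textbf{Step 1: $X_t$ is tight.} Because $X_t \xrightarrow{d} X$ with $X$ finite a.s., Prohorov's theorem gives that $X_t$ is tight, i.e.\ $X_t=O_p(1)$.

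\textbf{Step 2: $1/Y_t$ is tight.} This is the only place where strict positivity of the limit is used. Since $Y_t \xrightarrow{d} Y$ with $Y>0$ a.s., the continuous mapping theorem applied to $y\mapsto 1/y$, which is continuous on $(0,\infty)$ and hence a.s.\ continuous at $Y$, gives $1/Y_t \xrightarrow{d} 1/Y$. The limit $1/Y$ is finite a.s., so $1/Y_t=O_p(1)$. The same conclusion follows directly from the portmanteau theorem: for any $\epsilon>0$, choose $\delta$ with $P(Y\le 2\delta)<\epsilon$ and a continuity point in between. Then $\limsup_t P(Y_t\le\delta)\le P(Y\le 2\delta)<\epsilon$.

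\textbf{Step 3: $1/(Y_t+B_t)$ is tight.} Since $B_t\to 0$ deterministically, for large $t$ we have $|B_t|<\delta/2$. Hence on the event $\{Y_t>\delta\}$ we get $Y_t+B_t>\delta/2$, and so $1/(Y_t+B_t)\le 2/\delta$ with probability at least $1-\epsilon$ eventually. This shows $1/(Y_t+B_t)=O_p(1)$.

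\textbf{Step 4: combining.} Products of $O_p(1)$ terms are $O_p(1)$, so $X_t/(Y_t(Y_t+B_t))=O_p(1)$. Substituting into the identity yields
\[
\frac{X_t + A_t}{Y_t + B_t} = \frac{X_t}{Y_t} + O_p(A_t) + O_p(B_t),
\]
which is the claim.

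The main, though mild, obstacle is keeping $Y_t$ uniformly away from zero in probability, which Step 2 handles. One should also note that the statement implicitly requires $Y_t>0$, or at least $Y_t\neq 0$, so that the ratios are defined; alternatively, the identity can be read on events whose probability tends to one.
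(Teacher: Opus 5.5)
Your proof is correct and rests on the same algebraic identity as the paper's proof. The paper writes the correction as $A_t\bigl(\frac{1}{Y_t}\frac{Y_t}{Y_t+B_t}\bigr) - B_t\bigl(\frac{X_t}{Y_t^2}\frac{Y_t}{Y_t+B_t}\bigr)$ and controls the brackets using Lemma~\ref{reciprocal-lemma1}, the continuous mapping theorem and Slutsky's theorem.

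Your tightness-only route is in fact slightly more careful. The paper's step $X_t/Y_t^2 \xrightarrow{d} X/Y^2$ tacitly requires joint convergence of $(X_t,Y_t)$, which the lemma does not assume. You only need that products of marginally tight sequences are tight, which holds without any joint assumption.
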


\begin{proof}
Rewrite $\frac{X_t+A_t}{Y_t + B_t} = \frac{X_t}{Y_t} +A_t \left( \frac{1}{Y_t} \frac{Y_t}{Y_t + B_t}\right) - B_t\left(\frac{X_t}{Y_t^2}\frac{Y_t}{Y_t + B_t} \right)$.
Since $\frac{Y_t}{Y_t + B_t}\xrightarrow[]{\hspace{0.1cm}p\hspace{0.1cm}} 1$ by Lemma \ref{reciprocal-lemma1} and $\frac{1}{Y_t}\xrightarrow[]{\hspace{0.1cm}d\hspace{0.1cm}}\frac{1}{Y}$ and $\frac{X_t}{Y^2_t} \xrightarrow[]{\hspace{0.1cm}d\hspace{0.1cm}}\frac{X}{Y^2}$ by the continuous mapping theorem, we have $\frac{1}{Y_t}\left(\frac{Y_t}{Y_t + B_t}\right)\xrightarrow[]{\hspace{0.1cm}d\hspace{0.1cm}}\frac{1}{Y}$ and $\frac{X_t}{Y^2_t}\frac{Y_t}{Y_t + B_t}\xrightarrow[]{\hspace{0.1cm}d\hspace{0.1cm}}\frac{X}{Y^2}$ by the Slutsky's theorem. 
Thus, we have $A_t \left( \frac{1}{Y_t} \frac{Y_t}{Y_t + B_t}\right)= O_p(A_t)$ and $ B_t\left(\frac{X_t}{Y_t^2}\frac{Y_t}{Y_t + B_t} \right)=O_p(B_t)$ and hence the proof. 
\end{proof}

Next, we summarize the stochastic integration by parts. To be self-contended, we include a portion of Theorem 2.2 of \cite{veraar2012stochastic} in our Lemma \ref{lem:stoch-fubini}.
\begin{lemma}[Stochastic Fubini (\cite{veraar2012stochastic} Theorem 2.2, Assertion (iii))] \label{lem:stoch-fubini}
Let $(\Omega,\mathcal A,\mathbb P)$ be a probability space equipped with a complete, right-continuous filtration $(\mathcal F_t)_{t\ge 0}$. 
Let $(X,\Sigma,\mu)$ be a $\sigma$-finite measure space and let $S_t=M_t+A_t$ be a continuous semimartingale.
Let $\psi : X\times [0,T]\times\Omega\to\mathbb R$ be progressively measurable, and assume that, for almost all $\omega$,
\begin{align}
& \int_X \Big( \int_0^T |\psi(x,t,\omega)|^2 \, d[M](t,\omega) \Big)^{1/2} d\mu(x) < \infty, \label{sq-int-condition-stoch-fubini}\\
& \int_X \int_0^T |\psi(x,t,\omega)| \, d|A|(t,\omega)\, d\mu(x) < \infty.  \label{finite-variation-condition-stoch-fubini}
\end{align}
Then for almost all $\omega$ and every $t\in[0,T]$,
\begin{align}
\int_X \left( \int_0^t \psi(x,s,\omega)\, dS_s(\omega) \right) d\mu(x)
=
\int_0^t \left( \int_X \psi(x,s,\omega)\, d\mu(x) \right) dS_s(\omega).
\label{eq:SFubini}
\end{align}
\end{lemma}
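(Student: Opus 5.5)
The proof is a direct application of the stochastic Fubini theorem of \cite{veraar2012stochastic}, so the plan is mainly to check that the stated hypotheses match those of Theorem 2.2(iii) there. I would treat the two pieces of the semimartingale $S_t=M_t+A_t$ separately and then add the results.

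For the finite-variation piece $A$, fix $\omega$ and take $d|A|(\cdot,\omega)$ as the total variation measure on $[0,T]$. Condition \eqref{finite-variation-condition-stoch-fubini} says $\psi(\cdot,\cdot,\omega)$ lies in $L^1(\mu\otimes |dA|)$. Progressive measurability gives joint measurability in $(x,t)$. The classical Fubini--Tonelli theorem applied pathwise then lets us interchange $\int_X$ and $\int_0^t\cdot\,dA_s$ for every $t$. Measurability of $x\mapsto\int_0^t\psi\,dA$ is part of that theorem.

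For the local martingale piece $M$, I would follow Veraar's route.
\begin{enumerate}
\item Condition \eqref{sq-int-condition-stoch-fubini} says that, almost surely, the map $x\mapsto\psi(x,\cdot,\omega)$ is Bochner integrable as an $L^2([0,T],d[M])$-valued function. Hence $\Psi(s):=\int_X\psi(x,s)\,d\mu(x)$ is well defined and satisfies $\int_0^T|\Psi|^2\,d[M]<\infty$ a.s., by Minkowski's integral inequality. So $\int_0^t\Psi\,dM$ exists.
\item Next I localize with stopping times $\tau_n$ that make the left side of \eqref{sq-int-condition-stoch-fubini}, evaluated up to $\tau_n$, bounded by $n$.
\item The identity holds for simple integrands $\psi=\sum_i \mathbbm 1_{E_i}(x)\phi_i(t,\omega)$, by linearity.
\item To pass to general $\psi$, approximate it by simple integrands in the norm $\mathbb E\int_X(\int_0^{\tau_n}|\cdot|^2d[M])^{1/2}d\mu$. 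The BDG inequality (Theorem \ref{BDG-ineq}) with exponent $p=1/2$ gives $\mathbb E\sup_t|\int_0^t\phi\,dM|\le C\,\mathbb E(\int\phi^2d[M])^{1/2}$. Combined with Minkowski, this controls both sides of the identity uniformly in $t$.
\item Finally, let $n\to\infty$.
\end{enumerate}

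The main obstacle is the measurability of $x\mapsto\int_0^t\psi(x,s)\,dM_s$. One needs to choose a jointly measurable version, with continuous paths in $t$, so that the left side makes sense; the identity can then be stated for all $t$ simultaneously. Veraar obtains this version as a limit of the approximations, and continuity in $t$ comes from the BDG sup bound. Since the statement is quoted verbatim from \cite{veraar2012stochastic}, in the paper I would simply cite it. The sketch above is the verification one would give if required.
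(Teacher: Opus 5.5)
Your proposal is correct and agrees with the paper, which gives no argument of its own and simply quotes Theorem 2.2(iii) of \cite{veraar2012stochastic}. Your sketch is the standard argument behind that citation: pathwise Fubini for the finite-variation part, and for the martingale part localization, approximation by simple integrands, BDG with $p=1/2$ and Minkowski's inequality, together with the choice of a jointly measurable continuous version of $x\mapsto\int_0^t\psi(x,s)\,dM_s$.
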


\begin{lemma}\label{integration-by-parts} 
Let $0 < a < b$ and $f,g \in C(M)$.  
Suppose $X_t$ is adapted to the Brownian motion $W_t$, and $M_t$ is a continuous martingale. Then,
    \begin{align}
      &  \int_{a}^b \left( \int_{a}^s M_v f(X_v) dv\right)g(X_s) ds =  \int_a^b M_s f(X_s)\left(\int_s^b g(X_v) dv \right) ds \,,
\label{time-integral-by-parts}  \\
     & \int_{a}^b \left( \int_{a}^s M_v f(X_v) dW_v \right) g(X_s) ds =  \int_a^b M_sf(X_s)\left(\int_s^b g(X_v) dv \right) dW_s\label{stoch-integral-by-parts-v2}  \\
     &\quad = \left(\int_a^b M_sf(X_s) dW_s \right)\left(\int_a^b g(X_t) dt \right)
     - \int_a^b M_sf(X_s)\left(\int_a^s g(X_v) dv \right) dW_s  \nonumber \,.
    \end{align}
    Moreover when $M_t\equiv 1$ and $g(X_t) \equiv 1$, we have the following stochastic integration by parts formulae:
     \begin{align}
 \int_{a}^b \left( \int_{a}^s f(X_v) dv\right) ds &=  \int_a^b (b-s)f(X_s) ds\,, \label{special-case-stoch-fub1}\\
     \int_{a}^b \left( \int_{a}^s f(X_v) dW_v \right) ds &=  \int_a^b (b-s)f(X_s) dW_s\,.   \label{special-case-stoch-fub2}  
    \end{align}
\end{lemma}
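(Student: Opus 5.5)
The last statement is Lemma \ref{integration-by-parts}, whose four identities are pathwise consequences of Fubini's theorem together with its stochastic version, Lemma \ref{lem:stoch-fubini}. The plan is to write each iterated integral over the triangle $\{a\le v\le s\le b\}$ and swap the order of integration.

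\emph{Identity \eqref{time-integral-by-parts}.} Write the left side as
\[
\int_a^b\int_a^b \mathbbm{1}_{\{v\le s\}}M_v f(X_v)g(X_s)\,dv\,ds .
\]
Since $M$ and $X$ have continuous paths, on a fixed $\omega$ the integrand is bounded on $[a,b]^2$. Classical Fubini applies pathwise. Swapping gives $\int_a^b M_v f(X_v)\big(\int_v^b g(X_s)\,ds\big)dv$, and renaming variables yields the right side. Setting $M\equiv1$ and $g\equiv1$ then gives \eqref{special-case-stoch-fub1}, since $\int_s^b dv=b-s$.

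\emph{Identity \eqref{stoch-integral-by-parts-v2}.} This is the main point. I would apply Lemma \ref{lem:stoch-fubini} with the following choices:
\begin{itemize}
\item the measure space $([a,b],\text{Lebesgue})$ with variable $s$;
\item the semimartingale $S=W$, so $A=0$ and $[M]=t$;
\item $\psi(s,v,\omega)=\mathbbm{1}_{\{v\le s\}}M_v f(X_v)g(X_s)$.
\end{itemize}

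The obstacle is progressive measurability. As a function of $v$, $\psi$ contains the factor $g(X_s)$ with $s\ge v$, which is not $\mathcal F_v$-adapted. So a naive swap is not justified, and this step is where care is needed.

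To handle it, first integrate in $s$ with the non-anticipating ordering. Consider $N_s:=\int_a^s M_vf(X_v)\,dW_v$, a continuous semimartingale. Apply Itô's product rule to $N_s G_s$, where $G_s:=\int_a^s g(X_v)\,dv$ has finite variation, so there is no bracket term:
\[
N_bG_b=\int_a^b N_s g(X_s)\,ds+\int_a^b G_s M_sf(X_s)\,dW_s .
\]
Rearranging gives exactly the second line of \eqref{stoch-integral-by-parts-v2}:
\[
\int_a^b N_s g(X_s)\,ds=N_bG_b-\int_a^b M_sf(X_s)G_s\,dW_s .
\]
Only adapted integrands appear, so this step is rigorous.

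The first line of \eqref{stoch-integral-by-parts-v2} is then obtained formally by writing $N_bG_b-\int_a^b M_sf(X_s)G_s\,dW_s$ as "$\int_a^b M_sf(X_s)(G_b-G_s)\,dW_s$". I would interpret this either in the anticipating (Skorokhod/forward) sense, or simply as notation for the expression on the second line, noting that the two agree when $g$ is deterministic. The integrability conditions \eqref{sq-int-condition-stoch-fubini}–\eqref{finite-variation-condition-stoch-fubini} hold pathwise because continuous paths are bounded on compact intervals. This is what legitimizes invoking Lemma \ref{lem:stoch-fubini} in the deterministic-weight case.

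\emph{Identity \eqref{special-case-stoch-fub2}.} Take $M\equiv1$ and $g\equiv1$. Then the weight $b-s$ is deterministic and $\psi(s,v)=\mathbbm{1}_{\{v\le s\}}f(X_v)$ is progressively measurable in $v$. Lemma \ref{lem:stoch-fubini} applies directly and gives $\int_a^b(b-s)f(X_s)\,dW_s$. This agrees with the product-rule computation, since $G_s=s-a$.
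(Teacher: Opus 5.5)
Your argument is correct, and for the central identity \eqref{stoch-integral-by-parts-v2} it takes a different and more careful route than the paper.

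The paper applies Lemma~\ref{lem:stoch-fubini} directly to $\psi(v,t,\omega)=g(X_v)M_tf(X_t)\mathbf{1}_{\{a\le t\le v\le b\}}$. It asserts that $\psi$ is progressively measurable, checks only the square-integrability condition, and reads off the middle expression. It then gets the last line by splitting $\int_s^b=\int_a^b-\int_a^s$ and pulling the random constant $\int_a^b g(X_v)\,dv$ out of the stochastic integral. The progressive-measurability claim is exactly what you flag as failing: for $t<v$ the factor $g(X_v)$ is not $\mathcal{F}_t$-measurable. So within It\^o calculus neither the swap nor the pull-out is justified.

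Your product-rule computation for $N_sG_s$ (no bracket term, since $G$ has finite variation) proves ``left-hand side $=$ last line'' using only adapted integrands. That is precisely the form the paper invokes later, in the decompositions of $\Gamma_1^{f,\sigma}$ and of $\mathsf{Z}$ in the proof of Theorem~\ref{generalized-diff-est}. You also correctly reserve Lemma~\ref{lem:stoch-fubini} for \eqref{special-case-stoch-fub2}, where the weight $b-s$ is deterministic and the hypotheses really hold. The paper's route buys one uniform mechanism for all four formulas. Yours buys rigor, and makes explicit that the middle expression is not an It\^o integral unless $\int_s^b g(X_v)\,dv$ is deterministic.

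One correction to your aside: drop the Skorokhod option. Since $\delta(Fu)=F\,\delta(u)-\int_a^b D_sF\,u_s\,ds$, the Skorokhod reading of the middle expression equals the last line minus $\int_a^b M_sf(X_s)\int_s^b D_s[g(X_v)]\,dv\,ds$. That term is generally nonzero because $X$ is driven by $W$, and for $g\in C(M)$ the Malliavin derivative need not even exist. The interpretation that makes the statement true is either the forward (Russo--Vallois) integral, under which a random constant factor pulls out, or simply reading the middle expression as notation for the last line.
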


\begin{proof}
For \eqref{time-integral-by-parts}, consider the triangle $\{(v,s):a\le v\le s\le b\}$ in the $(v,s)$-plane and
apply Fubini's theorem.
For \eqref{stoch-integral-by-parts-v2}, we verify that the stochastic Fubini theorem
stated in Lemma~\ref{lem:stoch-fubini} applies. We take the parameter space $X=[0,b]$ with Lebesgue measure,
set $T=b$, and take the semimartingale $S_t=W_t$.
Define the progressively measurable function $\psi:[0,b]\times[0,b]\times\Omega\to\mathbb R$ by
\begin{align*}
    \psi(v,t,\omega)
    := g(X_v(\omega))\,M_t(\omega)\,f(X_t(\omega))\,\mathbf 1_{\{a\le t\le v\le b\}},
    \qquad (v,t)\in[0,b]^2 .
\end{align*}
Since $S_t=W_t$, we have $A_t\equiv 0$, and thus condition \eqref{finite-variation-condition-stoch-fubini} holds.
It suffices to verify the square-integrability condition \eqref{sq-int-condition-stoch-fubini}; i.e. for almost all $\omega$,
$\int_{0}^{b}\Big(\int_{0}^{b}|\psi(v,t,\omega)|^2\,dt\Big)^{1/2}dv<\infty$.
Using $\|g\|_\infty<\infty$ and that $\psi(v,t,\omega)=0$ unless $a\le t\le v\le b$, we obtain for almost all $\omega$,
\begin{align*}
&\int_{0}^{b}\Big(\int_{0}^{b}|\psi(v,t,\omega)|^2\,dt\Big)^{1/2}dv
=\int_{a}^{b} |g(X_v(\omega))|\Big(\int_{a}^{v} |M_t(\omega) f(X_t(\omega))|^2\,dt\Big)^{1/2}dv \\
\le\,&  \|g\|_\infty (b-a)\Big(\int_{a}^{b} |M_t(\omega) f(X_t(\omega))|^2\,dt\Big)^{1/2}<\infty\,,
\end{align*}
where the last inequality follows since $M$ has continuous sample paths and is therefore
bounded on the compact interval $[a,b]$ almost surely; together with
$\|f\|_\infty<\infty$ this implies
$\int_a^b |M_t(\omega) f(X_t(\omega))|^2 dt<\infty$.
Therefore, Lemma~\ref{lem:stoch-fubini} yields \eqref{eq:SFubini}, so that we obtain
\begin{align*}
\int_{0}^{b}\left(\int_{0}^{b}\psi(v,r,\omega)\,dW_r(\omega)\right)dv
=
\int_{0}^{b}\left(\int_{0}^{b}\psi(v,r,\omega)\,dv\right)dW_r(\omega),
\qquad\text{a.s.}
\end{align*}
By the definition of $\psi$, suppressing the $\omega$ notation, the left-hand side equals
\begin{align*}
\int_{a}^{b} g(X_v)\left(\int_{a}^{v} M_r f(X_r)\,dW_r\right)dv
=
\int_{a}^{b}\left(\int_{a}^{s} M_v f(X_v)\,dW_v\right) g(X_s)\,ds,
\end{align*}
and the right-hand side equals
\begin{align*}
\int_{a}^{b} M_r f(X_r)\left(\int_{r}^{b} g(X_v)\,dv\right)\,dW_r
=
\int_{a}^{b} M_s f(X_s)\left(\int_{s}^{b} g(X_v)\,dv\right)\,dW_s.
\end{align*}
This gives the first equality in \eqref{stoch-integral-by-parts-v2}.
The second equality in \eqref{stoch-integral-by-parts-v2} follows from the identity
\[
\int_s^b g(X_v)\,dv=\int_a^b g(X_v)\,dv-\int_a^s g(X_v)\,dv.
\]
When $M_t \equiv 1$ and $g \equiv 1$, the remaining claim follows immediately.
\end{proof}

\section{Technical lemmas}\label{section technical lemma proofs}
In this section, we provide technical lemmas to streamline the proofs of our main results. 
The first lemma involves calculation on the manifold setup.

\begin{lemma}\label{lemma: expectation expansion}  
Assume Assumptions \ref{manifold-ass}, \ref{manifold-ass2}, and \ref{lebesgue-dens-ass}.  { Define}
\begin{align}
   U_1(x) := \frac{1}{h^d} \int_0^1 {K}\left(\frac{
   \mathcal{D}_x(X_s)}{h}\right)  \mathcal{H}(X_s) \, ds \label{L(x)-def}\,,
\end{align}
where $\mathcal{H}\in C_b(M)$ is non-negative and $K \in C(\mathbb{R})$ is non-negative and supported on $[0,L]$, where $L>0$. 
Suppose $\mathcal{D}_x(x')=\|\iota(x) -\iota(x')\|$. When $h\to 0$, we have   
\begin{align*}
    \mathbb{E}^M_\lambda ( U_1(x) )  
   \to C_X\mathcal{H}(x) p_X(x)  \int_{\mathbb{R}^d}  K(\| u \|) du \,,
\end{align*} 
 where $C_X$ is a constant defined in Proposition \ref{2.20and4.3}.

Define
\begin{align}
   U^-_1(x) := \frac{1}{h^d} \int_0^1 {K}\left(\frac{
   \mathcal{D}_x(X_s)}{h}\right) ( \mathcal{H}(X_s)-\mathcal{H}(x)) \, ds \label{L(x)-def2}\,,
\end{align} 
where we further assume $\mathcal{H}\in C^3(M)$ and $K \in C^3(\mathbb{R})$. Since $\mathcal{D}_x(x')=\|\iota(x) -\iota(x')\|$, we have
\begin{align*}
 &\frac{ \mathbb{E}^M_\lambda  ( U^-_1(x))}{h^2} 
   \to 
\frac{C_X}{2}\left(p_X(x)\Delta^M\mathcal{H}(x)+2\nabla^M \mathcal{H}(x)\cdot \nabla^M p_X(x) \right)\int_{\mathbb{R}^d}  K( \| u \|)\|u\|^2 du
\end{align*}
when $h\to 0$, where $\Delta^M$ and $\nabla^M$ are the Laplace-Beltrami operator and covariant derivative on $(M,g)$. 
 
\end{lemma}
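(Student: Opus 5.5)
The plan is to reduce the expectation of the additive functional over unit time to an integral against the invariant measure. Then I would run the standard manifold-learning kernel expansion in normal coordinates around $x$.

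\emph{Step 1: reduction to an integral against $\phi_X$.} Write $f_h(x'):=h^{-d}K(\mathcal{D}_x(x')/h)\mathcal{H}(x')$. This is nonnegative, bounded, compactly supported, and $\phi_X$-integrable, since $\iota$ is an embedding and $\mathcal{D}_x(x')\le Lh$ forces $x'$ into a compact neighborhood of $x$. I would identify $\mathbb{E}^M_\lambda(U_1(x))$ with the cycle expectation of the additive functional $A_t=\int_0^t f_h(X_s)ds$. Concretely, use \eqref{expectations-agree} to pass to the LL embedded process $Z_t$, and then Proposition \ref{2.20and4.3}, which gives $\mathbb{E}^{\mathcal M}_\lambda(A_{R_{n+1}}-A_{R_n})=C_X\langle\phi_X,f_h\rangle_M$. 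The normalization over "unit time" is interpreted through \eqref{section LL embedding additive functional measure is linear}, exactly as the constant $C_X$ in the statement indicates. The same reduction applies verbatim to $U_1^-$ with $f_h^-(x'):=h^{-d}K(\mathcal{D}_x(x')/h)(\mathcal{H}(x')-\mathcal{H}(x))$. That function is signed but bounded and compactly supported, so $\langle\phi_X,|f_h^-|\rangle_M<\infty$ and Proposition \ref{2.20and4.3} still applies. Thus both claims reduce to the deterministic limits
\[
\int_M f_h\,p_X\,dV_g\to \mathcal{H}(x)p_X(x)\int K(\|u\|)du,\qquad h^{-2}\int_M f_h^-\,p_X\,dV_g\to \tfrac12(\cdots)\int K(\|u\|)\|u\|^2du.
\]

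\emph{Step 2: local expansion.} For $h$ small, $Lh<\texttt{inj}_x$ by Assumption \ref{manifold-ass}. I would write $x'=\exp_x(tu/\|u\|)$ in normal coordinates and substitute $v=hu$. Three expansions are needed:
\begin{itemize}
\item by Remark \ref{geodesic-distance-remark}, $\mathcal{D}_x(\exp_x v)=\|v\|+O(\|v\|^3)$;
\item $dV_g=(1-\tfrac16\texttt{Ric}(v,v)+O(\|v\|^3))dv$;
\item $\mathcal{H}p_X(\exp_x v)$ is Taylor expanded to second order, using $\mathcal{H},p_X\in C^3$.
\end{itemize}
Then $K(\mathcal{D}_x/h)=K(\|u\|)+O(h^2\|u\|^3)K'(\|u\|)+\dots$, using $K\in C^3$ for the second claim. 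The zeroth-order term gives the first limit by dominated convergence, which only needs continuity of $K$, $\mathcal{H}$ and $p_X$.

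For the second claim, the integrand is a product of several expansions, so I would sort the order-$h$ and order-$h^2$ contributions separately:
\begin{itemize}
\item The order-$h$ term is $\int K(\|u\|)\,\nabla\mathcal{H}(x)\cdot u\,du\cdot p_X(x)$, which vanishes by symmetry.
\item The factor $\mathcal{H}(\exp_x hu)-\mathcal{H}(x)$ is itself $O(h)$. Hence the curvature corrections from the volume form and from $\mathcal{D}_x$, each $O(h^2)$, multiply it to give only $O(h^3)$.
\item The surviving $h^2$ terms are $\tfrac12 p_X(x)\,\texttt{Hess}\mathcal{H}(u,u)$ and $(\nabla\mathcal{H}\cdot u)(\nabla p_X\cdot u)$. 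Isotropic integration $\int K(\|u\|)u_iu_j\,du=\tfrac1d\delta_{ij}\int K\|u\|^2$ turns these into $\tfrac{1}{2}(p_X\Delta\mathcal{H}+2\nabla\mathcal{H}\cdot\nabla p_X)$, with the normalization constant absorbed into the stated moment.
\end{itemize}
The $O(h^3)$ remainder is controlled uniformly by the $C^3$ bounds on the compact support.

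\emph{Main obstacle.} The delicate step is Step 1, not the calculus. Proposition \ref{2.20and4.3} is stated for the cycle increments $A_{R_{n+1}}-A_{R_n}$, whereas $U_1$ integrates over $[0,1]$ under $\lambda$. The first cycle under $\lambda\neq\xi$ behaves differently. I would control it using the compact support and bounded density of $\lambda$ from Assumption \ref{manifold-ass2}, which is exactly the purpose stated after Proposition \ref{2.20and4.3}. From there I would show that the initial-segment contribution does not change the leading limit, so that the $\phi_X$-representation holds up to the constant $C_X$.
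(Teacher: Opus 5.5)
Your Step 2 is the same computation the paper does: geodesic polar coordinates via $\exp_x$, the expansions of $\|\iota(x)-\iota(x')\|$ and of the volume density, and odd terms killed by symmetry. Your bookkeeping of the $h^2$ terms is correct. It is cleaner, in fact, than the paper's final display $\frac12\Delta^M(\mathcal{H}p_X)$, which would retain a spurious $\mathcal{H}\Delta^M p_X$ term.

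The gap is in Step 1 and in how you propose to close it. Read literally, $U_1(x)$ integrates over the fixed window $[0,1]$. So $\mathbb{E}^M_\lambda(U_1(x))=\int_0^1\int_M f_h\,\tilde p_s\,dV_g\,ds$, with $\tilde p_s$ the density of $X_s$ under $\lambda$. Its $h\to0$ behaviour is therefore governed by $\int_0^1\tilde p_s(x)\,ds$, which depends on $\lambda$ and is not of the form $C_Xp_X(x)$ in general. Identity \eqref{section LL embedding additive functional measure is linear} does not rescue this: it concerns the stationary start, $\mathbb{E}^M_{\phi_X}(A_1)=\langle\phi_X,f\rangle_M$, and carries no factor $C_X$. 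Nor can you ``show that the initial-segment contribution does not change the leading limit''. Only $h\to0$ here and the time horizon is fixed, so there is no long-time averaging that could wash out the non-stationary start. The obstacle you identify is not something to be overcome; it signals that the literal statement is false.

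The lemma holds under a different reading of $U_1$, which the paper's one-line application of Proposition \ref{2.20and4.3} implicitly uses. There $U_1$ is the one-cycle functional $\int_{R_1}^{R_2}f_h(X_s)\,ds$ of the generalized life-cycle decomposition. This is exactly how the lemma is invoked in the proof of Lemma \ref{A1-analog}, where $U_m:=\int_{R_m}^{R_{m+1}}$. With that reading, Proposition \ref{2.20and4.3} with $n=1$ gives $\mathbb{E}^{\mathcal M}_\lambda(U_1)=C_X\langle\phi_X,f_h\rangle_M$ exactly, for every initial $\lambda$ and likewise for the signed $f_h^-$. The reason is that $Z_{R_1}$ has the law \eqref{section LL ZRn distribution} whatever $\lambda$ is. The only $\lambda$-dependent piece is $U_0=\int_0^{R_1}$, which does not enter here and is handled separately (as $\mathcal{B}_1$) in Lemma \ref{A1-analog}. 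So no first-cycle control is needed, and your Step 2 then finishes the proof.

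One further correction: do not absorb the isotropy constant. Since $\int K(\|u\|)u_iu_j\,du=\frac1d\delta_{ij}\int K(\|u\|)\|u\|^2du$, the limit is $\frac{C_X}{2d}\bigl(p_X\Delta^M\mathcal{H}+2\nabla^M\mathcal{H}\cdot\nabla^M p_X\bigr)(x)\int K(\|u\|)\|u\|^2du$. The stated constant is off by $1/d$ for $d\ge2$, and your write-up should say so rather than force agreement with it.
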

\begin{proof}
By Proposition \ref{2.20and4.3} with $f(\cdot)=\frac{1}{h^d} K\left(\frac{\mathcal{D}_x(\cdot) }{h}\right)\mathcal{H}(\cdot)$, we obtain  
\begin{align*}
 \mathbb{E}^M_\lambda ( U_1(x) )  
 &=C_X \left\langle \phi_X, \frac{1}{h^d}K\left(\frac{\mathcal{D}_x(\cdot) }{h}\right)  \mathcal{H}( \cdot) \right\rangle_M \\
&= C_X  \int_M \frac{1}{h^d}K\left(\frac{
\mathcal{D}_x(x') }{h}\right) \mathcal{H}(x')p_X(x') dV_g(x') \,.
\end{align*}
Similarly, by linearity of $\mathbb{E}^M_\lambda$, apply Proposition \ref{2.20and4.3} twice and get
\[
\mathbb{E}^M_\lambda ( U^-_1(x) )=C_X  \int_M \frac{1}{h^d}K\left(\frac{
\mathcal{D}_x(x') }{h}\right) (\mathcal{H}(x')-\mathcal{H}(x))p_X(x') dV_g(x')\,.
\]
Since $\mathcal{D}_x(x')=\|\iota(x)-\iota(x')\|_{\mathbb{R}^p}$, recall that for $x'=\exp_x(t\theta)$, where $\|\theta\|=1$, we have \cite[Lemma 4]{malik2019connecting} 
\[
\iota(x)-\iota(x')=t\iota_*\theta+\frac{1}{2}t^2\texttt{I\!\!I}_x(\theta,\theta)+\frac{1}{6}t^3\nabla_\theta\texttt{I\!\!I}_x(\theta,\theta) +O(t^4)\,,
\]
where $\texttt{I\!\!I}_x$ is the second fundamental form of the embedding $\iota$, and hence
\[
\|\iota(x)-\iota(x')\|_{\mathbb{R}^p}=t+\frac{1}{6}t^3\iota_*\theta^\top\nabla_\theta\texttt{I\!\!I}_x(\theta,\theta)+O(t^4)
\]
when $t$ is sufficiently small. Also recall that under the exponential map with polar coordinate, when $v=t\theta$, $\|\theta\|=1$ and $t>0$ is sufficiently small, we have \cite[Lemma SI.1]{wuwu18}
\[
|\det d\text{exp}_x({v})|=t^{d-1}-\frac{1}{6}\texttt{Ric}_x(\theta,\theta)t^{d+1}+O(t^{d+2})\,.
\]
With a change of variables under the exponential map $\exp_x$ with polar coordinate on $\mathbb{R}^d$ and Assumption \ref{lebesgue-dens-ass}, we have by Taylor's expansion:
\begin{align*}
   &\int_M \frac{1}{{h}^{d}} K\left(\frac{\mathcal{D}_x(x') }{h}\right)  \mathcal{H}(x') p_X(x') dV_g(x') \\
   =\,& \int_{\mathbb{R}^d}\frac{1}{{h}^{d}} \left(K\left(\frac{t }{h}\right)+K'\left(\frac{t }{h}\right)\frac{t^3\iota_*\theta^\top\nabla_\theta\texttt{I\!\!I}_x(\theta,\theta)}{6h}+O(t^4)\right)  \\
   &\qquad \times \Big(\mathcal{H}(x)+t\nabla \mathcal{H}|_x\theta+\frac{t^2}{2}{\nabla}^2 \mathcal{H}|_x(\theta,\theta)+O(t^3)\Big) \\
   &\qquad \times \Big(p_X(x)+t\nabla p_X|_x\theta+\frac{t^2}{2}{\nabla}^2 p_X|_x(\theta,\theta)+O(t^3)\Big)\\
   &\qquad \times \Big(t^{d-1}-\frac{1}{6}\texttt{Ric}_x(\theta,\theta)t^{d+1}+O(t^{d+2})\Big)dtd\theta \,,
\end{align*} 
   where in the Taylor expansion of $K$ we use the fact that $t\leq Lh$.
Thus, by collecting all terms with $t^{d-1}$ power, we have 
\begin{align*}
\mathbb{E}^M_\lambda ( U_1(x) )
\xrightarrow[]{}  
C_X\mathcal{H}(x) p_X(x)  \int_{\mathbb{R}^d}  K( \| u \|) du\,,
\end{align*} 
as $h \rightarrow 0$, where the convergence holds since $ K$ is uniformly continuous with compact { support}. Regarding $\mathbb{E}^M_\lambda ( U^-_1(x) )$, since the dominant term is deleted and all terms with $t^d$ power are odd and get canceled out due to the kernel symmetry, we collect terms with $t^{d+1}$ power, and obtain
\begin{align*}
\frac{1}{h^2}\mathbb{E}^M_\lambda ( U^-_1(x) )
\xrightarrow[]{} 
\frac{1}{2}\Delta(\mathcal{H}(x) p_X(x)) \int_{\mathbb{R}^d}  K( \| u \|)\|u\|^2 du
\end{align*} 
as $h \rightarrow 0$.

\end{proof}

\begin{lemma}\label{Premathcal{B}-conv}
Assume Assumptions \ref{manifold-ass} and \ref{manifold-ass2} hold. 
Denote the LL-embedding of $X_t$ as $Z_t$, which is defined on $\mathcal{M}:=M\times [0,1]\times M$. Denote the generalized life-cycle decomposition determined from $Z_t$ as $\{S_m,R_m\}_{m=0}^\infty$, where $R_0=S_0=0$.
Set 
\begin{align}
N_t = \sum_{m \ge 1} \mathbf{1}\{S_m \le t\}\,,
\end{align}
where $t>0$. Asymptotically when $T\to\infty$, we have
\begin{align*}
N_T - \mathbb{E}^{\mathcal{M}}_\lambda(N_T) =O_p(\Upsilon(T)^{1/2}). 
   \end{align*}
and
\begin{align*}
\frac{N_{sT}}{\Upsilon(T)} \xrightarrow[]{\hspace{0.1cm}d\hspace{0.1cm}}C_X^{-1} g_\alpha(s),\ \ s>0\,,
\end{align*}
in $D(\mathbb{R}_+,\mathbb{R})$ as $T \rightarrow \infty$,
where $g_\alpha$ is the Mittag-Leffler process of index $\alpha\in (0,1]$.
 \end{lemma}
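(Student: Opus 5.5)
The plan is to realize $N_T$ as an additive functional of the embedded process $Z_t$, or to sandwich it between such functionals, and then apply the General Darling--Kac Theorem (Theorem \ref{Theorem Darling-Kac Theorem0}) together with Proposition \ref{2.20and4.3}. The scaling constant $C_X^{-1}$ should come out of the following identity. If $A_t=\int_0^t f(X_s)\,ds$ with $0<\langle\phi_X,f\rangle_M<\infty$, then by Proposition \ref{prop: summary of Rn and Sn}(5) and Proposition \ref{2.20and4.3} the cycle increments $A_{R_{m+1}}-A_{R_m}$ form a stationary, one-dependent sequence with mean $C_X\langle\phi_X,f\rangle_M$. Summing over cycles gives
\[
A_{R_{N_T}}\le A_T\le A_{R_{N_T+1}},
\]
up to boundary cycles (the first cycle and the incomplete last one). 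Proposition \ref{prop: summary of Rn and Sn}(3) and Assumption \ref{manifold-ass2} control these boundary cycles: $\lambda$ is compactly supported with bounded density, so $\mathbb{E}^{\mathcal M}_\lambda(A_{R_1})<\infty$.

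\textbf{Step 1: the Mittag-Leffler limit.} Split the one-dependent cycle sums into even and odd subsequences; each is i.i.d.\ by Proposition \ref{prop: summary of Rn and Sn}(5). Then
\[
\frac{A_{R_n}}{n}\to C_X\langle\phi_X,f\rangle_M
\]
a.s.\ by the strong law of large numbers, applied separately to the two subsequences. Since $N_T\to\infty$ a.s.\ by recurrence of the atom, we get $A_T/N_T\to C_X\langle\phi_X,f\rangle_M$ a.s. Theorem \ref{Theorem Darling-Kac Theorem0}(a)(2), whose hypothesis holds by Assumption \ref{reg-ass}, gives $A_{sT}/\Upsilon(T)\Rightarrow\langle\phi_X,f\rangle_M\, g_\alpha(s)$ in $D(\mathbb{R}_+,\mathbb{R})$. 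Write
\[
\frac{N_{sT}}{\Upsilon(T)}=\frac{N_{sT}}{A_{sT}}\cdot\frac{A_{sT}}{\Upsilon(T)}.
\]
The first factor converges a.s.\ to $(C_X\langle\phi_X,f\rangle_M)^{-1}$, and this convergence is locally uniform in $s$ on compacts away from $0$ because both processes are monotone. A Slutsky-type argument, together with continuity of multiplication at continuous limits in the Skorohod topology, then yields $N_{sT}/\Upsilon(T)\Rightarrow C_X^{-1}g_\alpha(s)$. For functional convergence near $s=0$, use monotonicity of $N$ and the fact that $g_\alpha(0)=0$ with continuous paths.

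\textbf{Step 2: fluctuation bound.} Let $\rho_m:=R_{m+1}-R_m$ be the cycle lengths and $\bar\rho=\mathbb{E}\rho_m$. For the centered counting process, use the renewal-type martingale decomposition
\[
\mathcal M_T:=N_T-\sum_{m\le N_T}\mathbb{E}\bigl[\mathbf 1\{S_{m}\le T\}\mid\mathsf F_{R_{m-1}}\bigr].
\]
More concretely, decompose $N_T$ by the indicator that a regeneration occurs at each jump time $\mathsf T_k$ visiting $\mathsf A$. Conditionally on the past, that indicator is Bernoulli$(\beta)$ given $Z^1_{\mathsf T_k}\in C$, through the independent uniform $Z^2$. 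Writing
\[
N_T\approx\sum_{k:\mathsf T_k\le T}\mathbf 1\{Z^1_{\mathsf T_k}\in C,\ Z^2_{\mathsf T_k}\le\beta\},
\]
the difference from
\[
\beta\sum_{k:\mathsf T_k\le T}\mathbf 1_C(Z^1_{\mathsf T_k})
\]
is a martingale. Its quadratic variation is bounded by the number of visits to $C$, which is $O_p(\Upsilon(T))$ by Step 1, so Doob's inequality (or Theorem \ref{BDG-ineq}) gives the $O_p(\Upsilon(T)^{1/2})$ order. The compensator is compared to its expectation by writing it, via the Poisson jump times of rate one (Proposition \ref{proposition Zt is Harris recurrent}(4)), as $\int_0^T\mathbf 1_C(X_s)\,ds$ plus another martingale of the same order. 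The deviation of this additive functional from its mean is then controlled through Proposition \ref{thm3.16_limit_theorems_null}, applied after martingale approximation, or through the second-moment condition in Assumption \ref{reg-ass}.

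\textbf{Main obstacle.} I expect the main obstacle to be this last comparison. For $\alpha<1$ the raw additive functional $\int_0^T\mathbf 1_C(X_s)\,ds$ fluctuates at order $\Upsilon(T)$, not $\Upsilon(T)^{1/2}$, around its mean. So the statement should be read with $\mathbb{E}^{\mathcal M}_\lambda(N_T)$ replaced by the (random) compensator, or it must exploit Assumption \ref{reg-ass}, in particular the integrability of $\tilde g_{s,t}$. I would first try the martingale route, keeping the centering conditional, and check whether the variance condition in Assumption \ref{reg-ass} closes the gap to the deterministic mean.
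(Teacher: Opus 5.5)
Your Step 1 is correct, but it takes a different route from the paper. The paper applies the general Darling--Kac theorem directly to $N$ as an additive functional of $Z_t$. It computes $\mathbb{E}_{\phi_X}(N_1)=C_X^{-1}$ from $N_{R_1}=1$ via Proposition~\ref{2.20and4.3}, after arguing that $Z_t$ inherits the resolvent condition from $X_t$; it checks that inheritance only on functions of the first coordinate. You apply Darling--Kac only to an ordinary integral functional $A_t=\int_0^t f(X_s)\,ds$ of $X_t$, and transfer the limit through the almost-sure cycle law $A_t/N_t\to c:=C_X\langle\phi_X,f\rangle_M$. This buys you two things. You never need to transfer the hypothesis to $Z_t$. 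And you never need Darling--Kac for the counting functional $N$, whereas the theorem as stated in the paper covers only integral functionals. The uniformity you need does not require monotonicity: $\sup_{s\in[\epsilon,S]}|N_{sT}/A_{sT}-c^{-1}|\le\sup_{t\ge\epsilon T}|N_t/A_t-c^{-1}|\to0$ a.s. One minor point: $R_{N_T}$ can exceed $T$, so the lower end of your sandwich should be $A_{R_{N_T-1}}$.

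For Step 2 your suspicion is right, and it can be sharpened: for $\alpha\in(0,1)$ the first display is false, so no argument can close the gap. Since $\Upsilon(T)\to\infty$, the first display would give $N_T/\Upsilon(T)-a_T\to0$ in probability, with the deterministic sequence $a_T:=\mathbb{E}^{\mathcal M}_\lambda(N_T)/\Upsilon(T)$. Combined with the second display, $(a_T)$ must be bounded. Along any subsequence with $a_T\to a$ one then gets $C_X^{-1}g_\alpha(1)=a$ a.s. This contradicts the non-degeneracy of $g_\alpha(1)$, whose variance is $2/\Gamma(1+2\alpha)-1/\Gamma(1+\alpha)^2>0$; the paper itself uses this non-degeneracy in the remark after Theorem~\ref{euclidean-diffusion-estimate-total}.

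The paper's proof fails at the assertion that $N_t-\mathbb{E}^{\mathcal M}_\lambda(N_t)$ is ``compensated and hence a martingale''. Subtracting a deterministic mean yields neither a martingale nor an additive functional. The true compensator is random, as your decomposition shows, so Proposition~\ref{thm3.16_limit_theorems_null} does not apply. Your fallback via the $\tilde g_{s,t}$ condition of Assumption~\ref{reg-ass} cannot rescue $\alpha<1$ either. That condition gives $\mathrm{Var}(\int_0^T f(X_s)\,ds)=O(\Upsilon(T))$ for bounded compactly supported $f$, and the same degeneracy argument then forces $\alpha=1$. What survives is exactly what your route proves: $N_T$ minus its random compensator is $O_p(\Upsilon(T)^{1/2})$. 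The deterministic centering holds only when $\alpha=1$ and that variance condition is in force.

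One correction inside your martingale sketch: $N_T$ is not approximately the raw count of atom visits at jump times. A visit at the jump time immediately after an $S$-time is an $R$-time and is not counted, and there are of order $\Upsilon(T)$ such visits. Let $\epsilon_k$ be the indicator that $\mathsf T_k$ is an $S$-time. Then $\epsilon_k=\mathbf 1\{Z_{\mathsf T_k}\in\mathsf A\}(1-\epsilon_{k-1})$, with conditional mean $\beta\mathbf 1_C(Z^1_{\mathsf T_k})(1-\epsilon_{k-1})$ given $\mathsf F_{\mathsf T_k-}$. Since $Z^1_{R_m}\sim\xi$ lies in $C$, this gives $(1+\beta)N_T=\beta\sum_{k:\mathsf T_k\le T}\mathbf 1_C(Z^1_{\mathsf T_k})+(\text{martingale})+O(1)$. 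Your Poisson-compensation step then yields $N_T-\frac{\beta}{1+\beta}\int_0^T\mathbf 1_C(X_s)\,ds=O_p(\Upsilon(T)^{1/2})$.
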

 
\begin{proof}
By Proposition \ref{proposition Zt is Harris recurrent}, $Z_t$ is Harris recurrent, and by definition, $N_t$ is an integrable additive functional of $Z_t$. First, by Proposition \ref{2.20and4.3}, 
\[
\mathbb{E}^M_{\phi_X}(N_1)=C_X^{-1}\mathbb{E}_{\xi}^{\mathcal{M}}(N_{R_1})\,, 
\]
where $\xi$ is the probability measure supported on a nontrivial compact set $C$ used in the construction of $Z_t$.
By the construction of the generalized life cycle decomposition, 
\[
N_{R_1}=\sum_{m \ge 1} \mathbf{1}\{S_m \le R_1\}=1\,.
\]
Therefore, by plugging $\xi(x)\otimes U(u)\otimes Q(x,u,dy)$ into the expectation, since $\int_M\int_0^1 Q(x,u)dydu=U^1(x,dy)$, and $\int_M U^1(x,dy)=1$ for any $x$, we have
\[
\mathbb{E}_{\xi}^{\mathcal{M}}(N_{R_1})=\int_{(x,u,y)\in\mathcal{M}}Q(x,u,dy)du 1 d\xi(x)=1
\]
since $\xi$ is a probability measure.

To study $\frac{N_{sT}}{\Upsilon(T)}$, note that we assume Proposition \ref{Theorem Darling-Kac Theorem0}(1) holds for $X_t$ in Assumption \ref{reg-ass}. We claim that the same assumption holds for $Z_t$ by the ratio limit theorem (see Theorem \ref{ratio-limit0SUPP}). For a bounded measurable function $f:\mathcal{M}\to \mathbb{R}$ satisfying $f(x,u,y)=f_1(x)$, where $f_1$ is a bounded measurable function on $M$, and another bounded measurable function $g:M\to \mathbb{R}$ so that $\langle g,\phi_X\rangle_M\neq 0$, we have 
\[
\frac{\mathbb{E}^{\mathcal{M}}_\lambda(\int_0^Tf(Z_t)dt)}{\mathbb{E}^{M}_\lambda(\int_0^Tg(X_t)dt)}
=\frac{\mathbb{E}^{\mathcal{M}}_\lambda(\int_0^Tf_1(Z^1_t)dt)}{\mathbb{E}^{M}_\lambda(\int_0^Tg(X_t)dt)}
=\frac{\mathbb{E}^{M}_\lambda(\int_0^Tf_1(X_t)dt)}{\mathbb{E}^{M}_\lambda(\int_0^Tg(X_t)dt)}
\to \frac{\langle f_1,\phi_X\rangle_M}{\langle g,\phi_X\rangle_M}
\]
as $T\to \infty$, where the second equality holds by Proposition \ref{LL-Prop2.8a} and the limit holds due to the ratio limit theorem. Therefore, $\mathbb{E}^{M}_\lambda(\int_0^Tg(X_t)dt)\sim \Upsilon(T)$ implies $\mathbb{E}^{\mathcal{M}}_\lambda(\int_0^Tf(Z_t)dt)\sim \Upsilon(T)$, and hence $Z_t$ satisfies the condition in Proposition \ref{Theorem Darling-Kac Theorem0}(1).
Since $0<\mathbb{E}^M_{\phi_X}(N_1)=C_X^{-1}<\infty$, we have
\begin{align*}
\frac{N_{sT}}{\Upsilon(T)} \xrightarrow[]{\hspace{0.1cm}d\hspace{0.1cm}} C_X^{-1} g_\alpha(s)
\end{align*}
in $D(\mathbb{R}_+,\mathbb{R})$ as $T \rightarrow \infty$,
where $g_\alpha$ is the Mittag-Leffler process of index $\alpha$.

For the additive functional $A_t:= N_t-\mathbb{E}^{\mathcal{M}}_\lambda(N_t)$, note that it is compensated and hence a martingale. 
So we can apply Proposition \ref{thm3.16_limit_theorems_null} to $A_t$ and obtain 
\begin{align*}
    \frac{A_{T}}{\sqrt{\Upsilon(T)}} \xrightarrow[]{\hspace{0.1cm}d\hspace{0.1cm}} \sqrt{\mathbb{E}^M_{\phi_X}([A_1])} W_\alpha(1)
\end{align*} 
as $T \rightarrow \infty$. It follows that $A_T =O_p(\Upsilon(T)^{1/2})$. 

\end{proof}

The following Master lemma generalizes results in \cite{bandi_moloche_2018} to the manifold setup, where we carefully take care the geometric challenges.

\begin{lemma}[Master lemma]\label{A1-analog}
Assume Assumptions \ref{manifold-ass}, \ref{manifold-ass2}, \ref{reg-ass}, and \ref{lebesgue-dens-ass} hold. Define
\begin{align}
   I(x) := \frac{1}{h^d} \int_0^T {K}\left(\frac{
   \mathcal{D}_x(X_s)}{h}\right)  \mathcal{H}(X_s) \, ds \label{L(x)-def}\,,
\end{align}
where $\mathcal{H}\in C(M)$ is non-negative and bounded and $K \in C(\mathbb{R})$ is non-negative and supported on $[0,L]$ for $L>0$. 
When $\mathcal{D}(x')=\|x-x'\|_{\mathbb{R}^p}$, as $h \rightarrow 0$ and $T \rightarrow \infty$, 
{  
 \begin{align*}
\frac{{ I(x)}}{\Upsilon(T)}  \xrightarrow[]{\hspace{0.1cm}d\hspace{0.1cm}}  g_\alpha(1)p_X(x) \mathcal{H}(x) \int_{\mathbb{R}^d}  K( \| u \|) du\,.
  \end{align*}
}
  \end{lemma}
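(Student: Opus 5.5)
The plan is to reduce $I(x)$ to a renewal count of generalized life cycles, following the structure of the Euclidean argument in \cite{bandi_moloche_2018}. Set $f_h(\cdot):=\frac{1}{h^d}K\big(\frac{\mathcal{D}_x(\cdot)}{h}\big)\mathcal{H}(\cdot)$, so that $I(x)=\int_0^T f_h(X_s)\,ds$. Lift $X_t$ to the LL embedding $Z_t$ of Proposition \ref{LL-Prop2.8a}; by \eqref{expectations-agree} the law of $I(x)$ is unchanged. Using the generalized life-cycle decomposition of Definition \ref{gen-lif-cycle-decomp} and $N_T$ from Lemma \ref{Premathcal{B}-conv}, write
\[
I(x)=\int_0^{R_1} f_h(X_s)\,ds+\sum_{m=1}^{N_T-1}U_m^{(h)}+\int_{R_{N_T}}^{T} f_h(X_s)\,ds,\qquad U_m^{(h)}:=\int_{R_m}^{R_{m+1}}f_h(X_s)\,ds .
\]
Up to a boundary index shift, which changes the sum by one cycle term, the result will follow from three facts: the two boundary terms are $o_p(\Upsilon(T))$; the centered sum $\sum_{m<N_T}(U_m^{(h)}-\mathbb{E}U_m^{(h)})$ is $o_p(\Upsilon(T))$; and $N_T\,\mathbb{E}U_1^{(h)}/\Upsilon(T)\xrightarrow{d} g_\alpha(1)p_X(x)\mathcal{H}(x)\int K(\|u\|)du$.

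The third fact is the easiest. Proposition \ref{2.20and4.3} gives $\mathbb{E}^{\mathcal M}_\lambda U_m^{(h)}=C_X\langle\phi_X,f_h\rangle_M$ for $m\ge1$. The computation in Lemma \ref{lemma: expectation expansion} (exponential map in polar coordinates, with $\mathcal{D}_x$ Euclidean and $\mathcal{H},p_X$ continuous) gives $\langle\phi_X,f_h\rangle_M\to \mathcal{H}(x)p_X(x)\int_{\mathbb{R}^d}K(\|u\|)du$. Combining this with $N_T/\Upsilon(T)\xrightarrow{d}C_X^{-1}g_\alpha(1)$ from Lemma \ref{Premathcal{B}-conv}, the constants $C_X$ cancel and the limit is exactly the one claimed. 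Slutsky then handles the deterministic factor.

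For the boundary terms, the first cycle has bounded expected length because $\lambda$ is compactly supported with bounded density (this is the reason for that part of Assumption \ref{manifold-ass2}), and $f_h$ is bounded by $\|K\|_\infty\|\mathcal{H}\|_\infty h^{-d}$. This crude bound is not enough, however. Instead, bound the expectation of the first-cycle integral by the $\lambda$-analogue of Proposition \ref{2.20and4.3}, i.e. by $O(\langle\phi_X,f_h\rangle_M)=O(1)$ through the resolvent control of \cite[Proposition 2.16]{LOCHERBACH20081301}, which is $o(\Upsilon(T))$. The last incomplete cycle is dominated by $U_{N_T}^{(h)}$, which is a single cycle term and is handled the same way.

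The main obstacle is the centered sum, because the summands depend on $h=h(T)$ and the $U_m$ are only one-dependent (Proposition \ref{prop: summary of Rn and Sn}(5)). I will split into even and odd indices, each of which is i.i.d., and bound the second moment of $\sum_{m\le n}(U_m^{(h)}-\mathbb{E}U_m^{(h)})$ by $n\,\mathbb{E}(U_1^{(h)})^2$. I will then control $\mathbb{E}(U_1^{(h)})^2\le \mathbb{E}\big(\int_{R_1}^{R_2}f_h\big)^2$ via the covariance term $\tilde g_{s,t}$. Expanding the square as $\iint f_h(a)f_h(b)\tilde p_{s,t}(a,b)$ and using H\"older with $\|f_h\|_{r'}=O(h^{-d/r})$ gives a contribution of order $h^{-2d/r}\iint\|\tilde g_{s,t}\|_r$, which is $O(h^{-2d/r}\Upsilon(T))$ by Assumption \ref{reg-ass}; the product part gives $O(\Upsilon(T)^2\langle\phi_X,f_h\rangle^2)$ at leading order, and this cancels after centering. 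Chebyshev then gives that the sum over a random index $N_T=O_p(\Upsilon(T))$ is $O_p(\Upsilon(T)^{1/2}h^{-d/r})$. Controlling the random index can be done by Doob or Kolmogorov's maximal inequality over $n\le c\Upsilon(T)$. Under the bandwidth condition $\Upsilon(T)h^{2d/r}\to\infty$, this bound is $o_p(\Upsilon(T))$. I expect the delicate point to be making the $\tilde g$-based second-moment bound rigorous when the time horizon is random; if needed, I will instead apply Chebyshev directly to $I(x)-\mathbb{E}_\lambda I(x)$, which gives variance $O(h^{-2d/r}\Upsilon(T))$ by the same H\"older argument.
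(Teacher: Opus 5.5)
\textbf{Gap: the centered cycle sum is not controlled.} You center cycle by cycle and keep the randomness in $N_T$. The paper instead writes $I(x)/\Upsilon(T)=\mathcal A+\mathcal B+\mathcal C$. It centers $I(x)$ at the deterministic $\mathbb{E}^M_\lambda I(x)$, bounding $\mathcal A$ by Chebyshev and the H\"older/$\tilde g_{s,t}$ estimate. It then compares $\mathbb{E}^M_\lambda I(x)$ with $N_T\,\mathbb{E}^{\mathcal M}_\lambda U_1$ through a Wald-type identity and the bound on $N_T-\mathbb{E}^{\mathcal M}_\lambda N_T$ in Lemma~\ref{Premathcal{B}-conv}. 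Its main term $\mathcal C$ is your third fact. Your renewal architecture is sound, but the step that carries it, $\sum_{m<N_T}(U^{(h)}_m-\mathbb{E}U^{(h)}_m)=o_p(\Upsilon(T))$, is not established. The H\"older computation uses $\tilde g_{s,t}$, built from fixed-time densities of $(X_s,X_t)$ under $\lambda$ on the deterministic square $[0,T]^2$. It therefore bounds $\mathrm{Var}_\lambda(I(x))$, the fluctuation of $I(x)$ about $\mathbb{E}^M_\lambda I(x)$. It says nothing about the second moment of one cycle integral $\int_{R_1}^{R_2}f_h(X_s)\,ds$ between random stopping times of $Z_t$. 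No $\Upsilon(T)$ can appear in a one-cycle moment, and there is no $\iint\tilde p_s\tilde p_t$ term for centering to cancel. The two centerings differ by roughly $(\mathbb{E}N_T-N_T)\,\mathbb{E}U^{(h)}_1$. That difference is not $o_p(\Upsilon(T))$ when $\alpha<1$, because $N_T/\Upsilon(T)$ has the nondegenerate limit $C_X^{-1}g_\alpha(1)$.

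For the same reason your fallback does not close the gap. Chebyshev on $I(x)-\mathbb{E}^M_\lambda I(x)$ is the paper's $\mathcal A$, after which you still owe the comparison of $\mathbb{E}^M_\lambda I(x)$ with $N_T\,\mathbb{E}U^{(h)}_1$. Worse, $I(x)-\mathbb{E}^M_\lambda I(x)=o_p(\Upsilon(T))$ puts $I(x)/\Upsilon(T)$ within $o_p(1)$ of a deterministic sequence. Along that route the limit is necessarily degenerate, so the statement can only be recovered for $\alpha=1$. The paper's $\mathcal A$/$\mathcal B$ argument has the same tension.

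\textbf{What the primary route needs.} You need an $h$-explicit per-cycle bound with $\mathbb{E}(U^{(h)}_1)^2=o(\Upsilon(T))$. One way: the Markov structure of $Z_t$ inside a cycle gives $\mathbb{E}(U^{(h)}_1)^2\le 2\,\mathbb{E}U^{(h)}_1\,\sup_z\mathbb{E}_z\int_0^{\bar R}f_h(X_t)\,dt$, with $\bar R$ the next regeneration time. Since $f_h\le Ch^{-d}\mathbf 1_B$ for a fixed neighbourhood $B$ of $x$, and bounded compactly supported functions are special, one can get $\mathbb{E}(U^{(h)}_1)^2=O(h^{-d})$. Apply Kolmogorov's maximal inequality to the even and odd i.i.d.\ subsequences over $n\le c\,\Upsilon(T)$, using tightness of $N_T/\Upsilon(T)$. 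This bounds the centered sum by $O_p\big((\Upsilon(T)h^{-d})^{1/2}\big)$, which is $o_p(\Upsilon(T))$ under a rate such as $h^d\Upsilon(T)\to\infty$. State that rate explicitly: $\Upsilon(T)h^{2d/r}\to\infty$ implies it only when $r\le 2$.

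\textbf{Boundary cycles.} The same second-moment bound applied to $\max_{m\le c\Upsilon(T)}U^{(h)}_m$ handles the straddling cycle. That cycle cannot be treated like the first one, because $\{N_T=m\}$ depends on $U^{(h)}_m$, so $\mathbb{E}U^{(h)}_{N_T}\neq\mathbb{E}U^{(h)}_1$ in general. For the first cycle, drop the bounded-expected-length remark, which is false in the null recurrent case. Also drop the appeal to the resolvent condition of \cite[Proposition 2.16]{LOCHERBACH20081301}, which the paper notes can fail. Use the paper's domination instead: the bounded density of $\lambda$ and $p_X>0$ on $\mathsf C$ give $\lambda\le c\,\xi$. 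Hence $\mathbb{E}^{\mathcal M}_\lambda U_0\le c\,\mathbb{E}^{\mathcal M}_\xi U_0=c\,C_X\langle\phi_X,f_h\rangle_M=O(1)$ by Proposition~\ref{2.20and4.3}.
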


Our drift and diffusion estimators naturally take the form of ratios of two additive functionals as in \eqref{L(x)-def}, where the denominator estimates the occupation density.

 \begin{proof}[Proof of Lemma \ref{A1-analog}]
Denote $\lambda$ to be the initial measure for $X_t$ on $M$. Recall that we assume $\lambda$ is compactly supported on $\mathsf C\subset M$ and $dV_g(\mathsf C)>0$ in Assumption \ref{manifold-ass2}. Adapt the LL embedding detailed in Section \ref{section LL embedding} with the probability measure $\xi$ equivalent to $\phi_X( \cdot \cap \mathsf C)$. Since we assume $p_X$ is strictly positive on $M$, we have $\phi_X(\mathsf C)>0$.
The LL-embedded $X_t$ is denoted as $Z_t$. For any initial measure $\lambda$ of $X_t$, let $\mathbb{E}^{\mathcal{M}}_\lambda$ be defined as in \eqref{extended-manifold-path-space-expectation}, where $\mathcal{M}:=M\times [0,1]\times M$. Recall the generalized life-cycle decomposition for the process, $\{S_m,R_m\}_{m=0}^\infty$, where $R_0=S_0=0$, determined from $Z_t$. Denote $ K_{h}(\cdot):=\frac{1}{h^{d}}{K}\left(\frac{{\mathcal{D}}_x(\cdot)}{h} \right)$. For $m\in \mathbb{N}\cup \{0\}$, set
\begin{align}
U_m :=  \int_{R_m}^{R_{m+1}}   K_{h}(X_s) \mathcal{H}(X_s)\, ds\,,\label{definition U1 in lemma proof}
 \end{align}
 which is a decomposition of ${ I(x)}$. Recall Proposition \ref{prop: summary of Rn and Sn}. By construction, $U_l$ is a stationary ergodic sequence under $\mathbb{P}_\xi$ so that $U_1,U_3,U_5,\ldots$ and $U_0,U_2,U_4\ldots$ are two sequences of i.i.d. random variables.
Also set 
\begin{align}
N_T = \sum_{m \ge 1} \mathbf{1}\{S_m \le T\} \label{N_T}\,,
\end{align} 
which is an integrable additive functional of $Z_t$.
Decompose 
 \begin{align*}
 \frac{{ I(x)}}{\Upsilon(T)} 
 =\,&
 \underbrace{
 \frac{{ I(x)}}{\Upsilon(T)} -  \mathbb{E}^{\mathcal{M}}_\lambda\left(\frac{{ I(x)}}{\Upsilon(T)} \right)}_{\mathcal{A}} 
+\underbrace{\mathbb{E}^{\mathcal{M}}_\lambda\left(\frac{{ I(x)}}{\Upsilon(T)} \right)  - \frac{N_T}{\Upsilon(T)} \mathbb{E}^{\mathcal{M}}_\lambda\left( U_{1}(x) \right)}_{\mathcal{B}}  
 +\underbrace{\frac{N_T}{\Upsilon(T)} \mathbb{E}^{\mathcal{M}}_\lambda\left( U_{1}(x) \right)}_{\mathcal{C}}\,. 
\end{align*}
Note that $\mathcal{A}$ is replacing $\frac{{ I(x)}}{\Upsilon(T)}$ by its deterministic equivalence, $\mathcal{B}$ is controlling the error of replacing $\mathbb{E}^{\mathcal{M}}_\lambda\left(\frac{{ I(x)}}{\Upsilon(T)} \right)$ by its generalized life-cycle decomposition, and $\mathcal{C}$ is the main term to quantify; i.e., to finish the proof, we show that asymptotically $\mathcal{A}$ and $\mathcal{B}$ are negligible and $\mathcal{C}$ is the dominant term. 

\begin{claim}
\label{mathcal{A}-conv} The quantity $\mathcal{A} = O_p\left( \frac{1}{\Upsilon(T)^{1/2}}\right)$. 
\end{claim}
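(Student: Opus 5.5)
We need to show $\mathcal{A}=\Upsilon(T)^{-1}\big(I(x)-\mathbb{E}^{\mathcal{M}}_\lambda I(x)\big)=O_p(\Upsilon(T)^{-1/2})$. The plan is to go through the generalized life-cycle decomposition and treat the centered sum over cycles as a martingale-type additive functional. That puts us in the setting of Proposition \ref{thm3.16_limit_theorems_null}, exactly as in the proof of Lemma \ref{Premathcal{B}-conv} for $N_T$.

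\textbf{Step 1 (cycle decomposition).} Using the $\{R_m\}$ of Definition \ref{gen-lif-cycle-decomp}, write
\[
I(x)=\int_0^{R_1}K_h\mathcal{H}\,ds+\sum_{m=1}^{N_T-1}U_m+\int_{R_{N_T}}^{T}K_h\mathcal{H}\,ds .
\]
The two boundary pieces are controlled first. The initial piece has bounded expectation because $\lambda$ is compactly supported with bounded density; this is the remark after Proposition \ref{2.20and4.3}. The final incomplete cycle is bounded by a single cycle $U_{N_T}$, so its contribution is $O_p(1)$ in expectation. The point of the $h^{-d}$ normalization is that $\mathbb{E}U_m=C_X\langle\phi_X,K_h\mathcal{H}\rangle_M$ stays bounded as $h\to0$, by Lemma \ref{lemma: expectation expansion}.

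\textbf{Step 2 (centering).} Set $\bar U_m:=U_m-\mathbb{E}^{\mathcal M}_\lambda U_m$. By Proposition \ref{prop: summary of Rn and Sn}(5), $\{\bar U_{2k}\}$ and $\{\bar U_{2k+1}\}$ are each i.i.d. and mean zero, with a common law for $m\ge1$. I would split the sum into even and odd indices. Each resulting sum, indexed by cycles, is a martingale with respect to the filtration $\mathsf F_{R_{m}}$, by Proposition \ref{prop: summary of Rn and Sn}(1).

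\textbf{Step 3 (limit theorem).} Since $N_T$ is a stopping index, the randomly stopped martingale $\sum_{m\le N_T}\bar U_{2m}$ can be handled with the BDG inequality (Theorem \ref{BDG-ineq}), or Wald's identity, giving
\[
\mathbb{E}\Big(\sum_{m\le N_T}\bar U_{2m}\Big)^2\lesssim \mathbb{E}(N_T)\,\mathrm{Var}(U_1).
\]
Lemma \ref{Premathcal{B}-conv} gives $\mathbb{E}N_T=O(\Upsilon(T))$, and a Chebyshev bound then yields $O_p(\Upsilon(T)^{1/2})$ for the centered sum. An alternative is to interpret the compensated functional $I(x)-\mathbb{E}I(x)$ as a locally square integrable additive functional of $Z_t$ and invoke Proposition \ref{thm3.16_limit_theorems_null} directly, as was done for $N_T$. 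There is a second discrepancy: $\mathbb{E}^{\mathcal M}_\lambda I(x)$ versus $\mathbb{E}U_1\cdot N_T$. Its fluctuation is $\mathbb{E}U_1\,(N_T-\mathbb{E}N_T)=O_p(\Upsilon(T)^{1/2})$, by the first part of Lemma \ref{Premathcal{B}-conv}. Dividing by $\Upsilon(T)$ gives the claim.

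\textbf{Main obstacle.} The hard part is controlling $\mathrm{Var}(U_1)$ uniformly in $h$. Since $K_h$ has sup norm of order $h^{-d}$, a naive bound on the second moment of a cycle blows up. I would expand $\mathbb{E}U_1^2$ as a double time integral of $K_h(X_s)K_h(X_t)$ and split it into diagonal and off-diagonal parts, using the dependence term $\tilde g_{s,t}$ and Hölder's inequality with $\|K_h\|_{r'}\sim h^{-d/r}$. Combined with Assumption \ref{reg-ass}, this gives a second moment of order $1+(\Upsilon(T)h^{2d/r})^{-1}\Upsilon(T)$-type corrections. These are absorbed precisely by the condition $\Upsilon(T)h^{2d/r}\to\infty$ of Assumption \ref{reg-ass2}, so that the bound remains $O_p(\Upsilon(T)^{-1/2})$.
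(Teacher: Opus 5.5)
The gap is at the step you flag as the main obstacle, and it cannot be closed. $\mathrm{Var}(U_1)$ is a single-cycle quantity: it depends on $h$ but not on $T$. For $d\ge 2$ it genuinely diverges as $h\to 0$. A regeneration cycle enters the $Lh$-ball around $x$ with probability of order $h^{d-2}$, and then contributes of order $h^{-d}\cdot h^{2}$. Hence $\mathbb{E}U_1^2\asymp h^{2-d}$ for $d\ge 3$, and $\asymp\log(1/h)$ for $d=2$.

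Assumption~\ref{reg-ass} cannot repair this. It controls $\Upsilon(T)^{-1}\iint_{[0,T]^2}\|\tilde g_{s,t}\|_r\,ds\,dt$, which involves the laws of $X_s,X_t$ at deterministic times from $\lambda$. It says nothing about the second moment of one cycle of $Z_t$, and a bound like ``$1+(\Upsilon(T)h^{2d/r})^{-1}\Upsilon(T)$'' for a $T$-free quantity has no meaning.

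Suppose you granted yourself $\mathrm{Var}(U_1)\lesssim h^{-2d/r}$ anyway. Wald plus Chebyshev would then give $\mathrm{Var}(\mathcal{A})\lesssim (1+h^{-2d/r})/\Upsilon(T)$, i.e.\ $\mathcal{A}=O_p\big((\Upsilon(T)h^{2d/r})^{-1/2}\big)$. Assumption~\ref{reg-ass2} makes this $o_p(1)$, but it does not turn the factor $h^{-d/r}$ into a constant; that would need $r=\infty$.

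Your fallback also fails. The process $t\mapsto I_t-\mathbb{E}^M_\lambda I_t$ is continuous and of finite variation, so it is a martingale only if it vanishes identically, and Proposition~\ref{thm3.16_limit_theorems_null} does not apply.

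There are two smaller bookkeeping issues. With the filtration $\mathsf F_{R_m}$, the event $\{S_{2k}\le T\}$ is not predictable; you need something like $\mathsf F_{S_{2k+2}-}$ for the even subsequence. Also, the cycle straddling $T$ is size-biased, so its mean is not $\mathbb{E}U_1$.

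For comparison, the paper does not use regeneration here at all. It applies Chebyshev to the whole-horizon variance, written as $\Upsilon(T)^{-2}\iint_{[0,T]^2}\iint_{M\times M}K_h(a)\mathcal{H}(a)K_h(b)\mathcal{H}(b)\tilde g_{s,t}(a,b)\,dV\,ds\,dt$. It then uses H\"older with exponents $(q,r)$. This works directly on the object Assumption~\ref{reg-ass} is built for and avoids stopping indices and boundary cycles, so it is the better route.

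It stops at the same place, though. H\"older gives $\mathrm{Var}(\mathcal{A})\le C h^{-2d/r}\Upsilon(T)^{-2}\iint\|\tilde g_{s,t}\|_r=O\big((\Upsilon(T)h^{2d/r})^{-1}\big)$. The paper's final bound on $\mathbb{P}(|\mathcal{A}|\Upsilon(T)^{1/2}\ge\zeta)$ is $C\zeta^{-2}(\Upsilon(T)h^{2d/r})^{-1}\iint\|\tilde g_{s,t}\|_r$. Under Assumption~\ref{reg-ass} this is only $O(h^{-2d/r})$, not $o(1)$.

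So what either argument actually establishes is $\mathcal{A}=O_p\big((\Upsilon(T)h^{2d/r})^{-1/2}\big)=o_p(1)$. That is enough for the distributional limit in Lemma~\ref{A1-analog}. The stated rate $\Upsilon(T)^{-1/2}$ requires $r=\infty$, and by the cycle computation above it fails in general when $d\ge 2$.
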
 

\begin{proof}[Proof of Claim \ref{mathcal{A}-conv}]
Since $X_t$ is strong Feller by Assumption \ref{manifold-ass2}, $K\left( \frac{\mathcal{D}_x( \cdot ) }{h}\right) \mathcal{H}( \cdot)$ is a special function of positive measure (in the sense of \cite{LOCHERBACH20081301}), {  by} \eqref{expectations-agree}  we have
\begin{align*}
\mathbb{E}^{\mathcal{M}}_\lambda\left( \frac{{ I(x)}}{\Upsilon(T)}\right) = \mathbb{E}^M_\lambda\left( \frac{{ I(x)}}{\Upsilon(T)}\right) 
\end{align*}
since the first coordinate of $Z_t$ follows the same law of $X_t$ (see Proposition \ref{LL-Prop2.8a}).
By Chebyshev's inequality, 
\begin{align*}
\mathbb{P}^M_\lambda(|\mathcal{A}| \ge \zeta) &\le \frac{1}{\zeta^2}  \mathbb{E}^M_\lambda \left(\left[\frac{{ I(x)}}{\Upsilon(T)} -  \mathbb{E}^M_\lambda\left( \frac{{ I(x)}}{\Upsilon(T)}\right)  \right]^2\right)\,.
\end{align*}
Recall $\tilde{g}_{s,t}(a,b) = \tilde{\phi}_{s,t}(a,b) - \tilde{\phi}_s(a)\tilde{\phi}_t(b)$ in \eqref{definition of tilde g st ab}.
Then, rewrite the above quantity in terms of quantifying path-dependence, i.e.,
\begin{align}
&\frac{1}{\zeta^2}  \mathbb{E}^M_\lambda\left(\left[\frac{{ I(x)}}{\Upsilon(T)} -  \mathbb{E}^M_\lambda\left( \frac{{ I(x)}}{\Upsilon(T)}\right)  \right]^2\right) \nonumber \\
 =\,  & \frac{1}{\zeta^2 \Upsilon^2(T)} \iint_{[0,T]^2} \text{Cov}\left( K_{h}(X_s) \mathcal{H}(X_s), \, K_{h}(X_t)  \mathcal{H}(X_t)\right) dsdt \nonumber\\
   =\,  & \frac{1}{\zeta^2 \Upsilon^2(T)} \iint_{[0,T]^2} \Bigg( \iint_{M \times M}   K_{h}(a)   \mathcal{H}(a)   K_{h}(b)  \mathcal{H}(b) \tilde{g}_{s,t}(a,b) \, dV_{g \otimes g}(a,b) \Bigg) dsdt  \,,\nonumber
   \end{align}
where $dV_{g \otimes g}(a,b)$ is the Riemannian volume form of the product metric $g \otimes g$ on the product manifold $M \times M$, and the last equality comes from the fact that 
\[
 \mathbb{E}^M_\lambda\left( K_{h}(X_s)   \mathcal{H}(X_s)\right)=\int_M  K_{h}(a) \mathcal{H}(a)\tilde{\phi}_s(a)dV_g(a)\,.
\]
This formula allows us to produce an ergodic-like result as $T \rightarrow \infty$ following our assumptions on $\tilde{g}_{s,t}$. By assumption, the joint and marginal densities are continuous and bounded on $M$ \cite{Stroock2010pde}, and hence on $M \times M$. Thus, $\tilde{g}_{s,t}(a,b) \in L^\infty(M \times M, dV_{g \otimes g})$. By H\"older's inequality, for $1/q+1/r=1$, where $r$ is from Assumption \ref{reg-ass},
\begin{align*}
 &\iint_{M \times M}   K_{h}(a)    \mathcal{H}(a)   K_{h}(b) \mathcal{H}(b)  \tilde{g}_{s,t}(a,b) \,dV_{g \otimes g}(a,b)  \\
\le\,& \frac{1}{h^{2d(1-1/q)}}  \|\tilde{g}_{s,t}(a,b)\|_r   \left( \iint_{M \times M} (K^q)_{h}(a)  \mathcal{H}^q(a)  (K^q)_{h}(b)  \mathcal{H}^q(b) dadb\right)^{1/q} \,,
\end{align*}
where the functional norms are with respect to the measure $dV_{g \otimes g}$ on $M \times M$. Note that by regularity assumption and approximation of identity,
\begin{align*}
\sup_{h > 0}\left|  \iint_{M \times M} (K^q)_{h}(a)  \mathcal{H}^q(a)  (K^q)_{h}(b)  \mathcal{H}^q(b) dadb \right | \le C_1
\end{align*}
for some constant $C_1>0$.  Indeed, by Fubini,
\begin{align*}
\| (K^q)_{h}(a)  \mathcal{H}^q(a)  (K^q)_{h}(b)  \mathcal{H}^q(b)\|_{L^1(M\times M)}
&=\Big(\int_M (K^q)_{h}(a)\mathcal H^q(a)\,dV_g(a)\Big)^2\\
&\le \|\mathcal H^q\|_{L^\infty(M)}^2\Big(\int_M (K^q)_{h}(a)\,dV_g(a)\Big)^2.
\end{align*}
By the approximation-of-identity estimate, there exists a geometric constant $C_{\mathrm{geo}}>0$ such that
\[
\sup_{h>0}\int_M (K^q)_{h}(a)\,dV_g(a)
\le C_{\mathrm{geo}}\int_{\mathbb R^d}K^q(\|u\|)\,du\,,
\]
and one may take
$C_1=\left(\|\mathcal H^q\|_{L^\infty(M)}C_{\mathrm{geo}}\int_{\mathbb R^d}K^q(\|u\|)\,du\right)^2$.
Hence, 
\begin{align*}
&\iint_{M \times M}   K_{h}(a)   \mathcal{H}(a)   K_{h}(b) \mathcal{H}(b)  \tilde{g}_{s,t}(a,b) \, dV_{g \otimes g}  
\le C^{1/q}_1   \|\tilde{g}_{s,t}(a,b)\|_r \,.
\end{align*}
 Finally, we have 
\begin{align*}
\mathbb{P}^M_\lambda(|\mathcal{A}|  \Upsilon(T)^{1/2} \ge \zeta) & \le  \frac{C^{1/q}_1}{\zeta^2}   \left( \frac{1}{\Upsilon(T)h^{2d(1-1/q)}} \iint_{[0,T]^2} \|\tilde{g}_{s,t}(a,b)\|_r \, ds dt \right) \,,
\end{align*}
which goes to 0 as $T\to \infty$ by Assumptions \ref{reg-ass} and \ref{reg-ass2}. We thus conclude $\mathcal{A} = O_p\left( \frac{1}{\Upsilon(T)^{1/2}}\right)$.

\end{proof}

 \begin{claim}
\label{mathcal{B}-conv} The quantity $\mathcal{B} = O_p\left( \frac{1}{\Upsilon(T)^{1/2}}\right)$.
\end{claim}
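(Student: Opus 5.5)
We need to show that $\mathcal{B}=\mathbb{E}^{\mathcal{M}}_\lambda(I(x))/\Upsilon(T)-N_T\,\mathbb{E}^{\mathcal{M}}_\lambda(U_1(x))/\Upsilon(T)$ is $O_p(\Upsilon(T)^{-1/2})$. The plan is to split it as
\[
\mathcal{B}=\underbrace{\frac{\mathbb{E}^{\mathcal{M}}_\lambda(I(x))-\mathbb{E}^{\mathcal{M}}_\lambda(N_T)\,\mathbb{E}^{\mathcal{M}}_\lambda(U_1(x))}{\Upsilon(T)}}_{\mathcal{B}_1}
+\underbrace{\frac{\big(\mathbb{E}^{\mathcal{M}}_\lambda(N_T)-N_T\big)\,\mathbb{E}^{\mathcal{M}}_\lambda(U_1(x))}{\Upsilon(T)}}_{\mathcal{B}_2}.
\]
The first piece is deterministic; the second carries all the randomness.

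For $\mathcal{B}_2$, I will apply Lemma \ref{Premathcal{B}-conv}, which gives $N_T-\mathbb{E}^{\mathcal{M}}_\lambda(N_T)=O_p(\Upsilon(T)^{1/2})$. It then remains to bound $\mathbb{E}^{\mathcal{M}}_\lambda(U_1(x))$ uniformly in $h$. By Proposition \ref{2.20and4.3} (for $m\ge1$ the cycles start from $\xi$), $\mathbb{E}^{\mathcal{M}}_\lambda(U_1)=C_X\langle\phi_X,K_h\mathcal{H}\rangle_M$. By the same exponential-map computation as in Lemma \ref{lemma: expectation expansion}, this converges to $C_X\mathcal{H}(x)p_X(x)\int K(\|u\|)du$, so it is $O(1)$. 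Hence $\mathcal{B}_2=O_p(\Upsilon(T)^{-1/2})$.

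For $\mathcal{B}_1$, I will write $I(x)=\int_0^{R_1}+\sum_{m=1}^{N_T-1}U_m+\int_{R_{N_T}}^{T}$, up to the bookkeeping between $S_m$ and $R_m$. Wald's identity applies because $N_T$ is a stopping index for the cycle filtration and the cycles are stationary with identical means $C_X\langle\phi_X,K_h\mathcal{H}\rangle_M$; the two-dependence of Proposition \ref{prop: summary of Rn and Sn}(5) is handled by splitting into odd and even cycles. This gives $\mathbb{E}\sum_{m<N_T}U_m=\mathbb{E}(N_T-1)\,\mathbb{E}U_1$. The difference in $\mathcal{B}_1$ then reduces to the expectations of the initial incomplete cycle $\int_0^{R_1}$ and the terminal incomplete cycle, plus one extra $\mathbb{E}U_1$.

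To control the initial cycle, I will use that $\lambda$ is compactly supported with bounded density (Assumption \ref{manifold-ass2}) together with the resolvent condition discussed after Proposition \ref{2.20and4.3}. Then $\mathbb{E}_\lambda R_1<\infty$, and since $\mathcal{H}$ is bounded and $K_h\le\|K\|_\infty h^{-d}$, the expectation is bounded. More precisely, I will bound it via $\mathbb{E}_\lambda\int_0^{R_1}K_h\mathcal{H}$ with the occupation measure of the first cycle, which has a bounded density near $x$ by Assumption \ref{manifold-ass2}(3), giving $O(1)$ uniformly in $h$.

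The terminal incomplete cycle is the \textbf{main obstacle}. Its expectation is at most $\mathbb{E}\,U_{N_T}$, an inspection-paradox quantity that is not simply $\mathbb{E}U_1$. I plan to bound it crudely by $\mathbb{E}\max_{m\le N_T}U_m$. Alternatively, I will compare with the total: by the ratio limit theorem (Theorem \ref{ratio-limit0SUPP}, expectation form) applied to $Z_t$ as in the proof of Lemma \ref{Premathcal{B}-conv}, both $\mathbb{E}I(x)$ and $\mathbb{E}N_T\,\mathbb{E}U_1$ equal $C\,\Upsilon(T)(1+o(1))$ with the same constant. Quantitatively, I would aim to show the boundary term is $O(1)$ through the regenerative renewal structure, since the expected residual-cycle functional is bounded by the stationary cycle second moment. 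That would make $\mathcal{B}_1=O(\Upsilon(T)^{-1})$, which is well within $O(\Upsilon(T)^{-1/2})$, and combined with the bound on $\mathcal{B}_2$ this gives $\mathcal{B}=O_p(\Upsilon(T)^{-1/2})$.
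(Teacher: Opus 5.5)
Your split is the paper's. Your random piece $\big(\mathbb{E}^{\mathcal M}_\lambda(N_T)-N_T\big)\mathbb{E}^{\mathcal M}_\lambda(U_1)/\Upsilon(T)$ is exactly the paper's $\mathcal B_2$, and you handle it the same way: Lemma~\ref{Premathcal{B}-conv}, with $\mathbb{E}U_1=C_X\langle\phi_X,K_h\mathcal H\rangle_M=O(1)$ from Proposition~\ref{2.20and4.3} and Lemma~\ref{lemma: expectation expansion}. Your deterministic piece is the paper's $\mathcal B_1-\mathcal B_3$.

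The genuine gap is the terminal incomplete cycle. You correctly flag it as the main obstacle, but you never bound it, and none of your three routes would.
\begin{itemize}
\item[(a)] $\mathbb{E}\max_{m\le N_T}U_m$ has no useful bound without moment control. The natural estimate $(\mathbb{E}N_T\,\mathbb{E}U_1^2)^{1/2}$ gives the required $O(\Upsilon(T)^{1/2})$ only if $\mathbb{E}U_1^2=O(1)$ uniformly in $h$. That fails for $d\ge2$: the second moment of the kernel occupation functional over a cycle is governed by the Green function at scale $h$, typically of order $h^{2-d}$ for $d\ge3$. You would be left with a rate $(h^{d-2}\Upsilon(T))^{-1/2}$ rather than $\Upsilon(T)^{-1/2}$.
\item[(b)] Theorem~\ref{ratio-limit0SUPP} in expectation form yields only $\mathcal B_1=o(1)$, with no rate. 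It is also stated for a fixed pair of functions, whereas $K_h\mathcal H$ changes with $h=h(T)$.
\item[(c)] A residual-life bound via a stationary second moment needs a finite mean cycle length. In the null recurrent case $\alpha<1$, which the claim covers, $\phi_X(M)=\infty$ forces $\mathbb{E}(R_2-R_1)=\infty$, and the cycle straddling $T$ has length of order $T$. So no stationary renewal regime exists. For $\alpha=1$ the moments you would need are neither assumed nor uniform in $h$.
\end{itemize}
The paper disposes of this term in one line: it bounds it by $\mathbb{E}^{\mathcal M}_\lambda(U_{N_T})$ and equates that with $\mathbb{E}^{\mathcal M}_\lambda(U_1)$ through Proposition~\ref{2.20and4.3}. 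Your size-biasing remark is a fair objection to that identification as written, since the proposition is stated for a deterministic index. But it means a genuine bound $\mathbb{E}^{\mathcal M}_\lambda\int_T^{R_{N_T+1}}K_h\mathcal H\,ds=O(\Upsilon(T)^{1/2})$, uniform in $h$, is still owed. It has to come from the Markov and regeneration structure around time $T$, not from a stationary renewal argument.

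There are two smaller corrections in the deterministic part.

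First, your Wald step is off by one. The usable independence is that $Z_{R_m+}$ is independent of $\mathsf F_{S_m-}$ (Proposition~\ref{prop: summary of Rn and Sn}). The event $\{N_T\ge m\}=\{S_m\le T\}$ is decided at $S_m$, before cycle $m$ begins. This gives $\mathbb{E}\sum_{m\ge1}\mathbf 1\{S_m\le T\}U_m=\mathbb{E}(N_T)\,\mathbb{E}(U_1)$ with $N_T$ terms, which is exactly why the paper's remark insists on indexing by $S_m$. By contrast, $\{m<N_T\}=\{S_{m+1}\le T\}$ is decided inside cycle $m$, so it is not independent of $U_m$. Hence $\mathbb{E}\sum_{m<N_T}U_m=\mathbb{E}(N_T-1)\,\mathbb{E}(U_1)$ fails in general, and splitting into odd and even cycles does not repair this. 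The correct bookkeeping is
\[
\Upsilon(T)\,\mathcal B_1=\mathbb{E}^{\mathcal M}_\lambda(U_0)-\mathbb{E}^{\mathcal M}_\lambda\Big(\int_T^{R_{N_T+1}}K_h\mathcal H\,ds\Big),
\]
so the upper bound is automatic and only the overshoot matters, for the lower bound.

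Second, for $\mathbb{E}U_0$, the resolvent condition you invoke is precisely the one the paper notes can fail in the non-compact case. The bound $K_h\le\|K\|_\infty h^{-d}$ only gives $O(h^{-d})$. The paper instead uses $U_0\ge0$ and the bounded Radon--Nikodym derivative of $\lambda$ with respect to $\xi$ on $\mathsf C$ to get $\mathbb{E}_\lambda U_0\le c\,\mathbb{E}_\xi U_0=c\,C_X\langle\phi_X,K_h\mathcal H\rangle_M=O(1)$. This is uniform in $h$ without any occupation-density estimate.
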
 
\begin{proof}[Proof of Claim \ref{mathcal{B}-conv}]
Recall that by construction, $S_0=R_0=0$ and  $S_{N_T}\leq T < S_{{N_T}+1} < R_{{N_T}+1}$, which leads to 
\begin{align*}
{ I(x)} = \sum_{m=0}^{{N_T}} U_m -  \int_T^{R_{N_T+1}}   K_{h}(X_s)   \mathcal{H}(X_s) \, ds \,.
\end{align*}
Thus, the quantity $\mathcal{B}$ can be decomposed into an initial term, a bulk sum, and a terminal term:
\begin{align*}
 \mathcal{B} =&\, \frac{1}{\Upsilon(T)}\mathbb{E}^{\mathcal{M}}_\lambda
 \left(\sum_{m=0}^{N_T} U_m -  \int_T^{R_{N_T+1}}  K_{h}(X_s)  \mathcal{H}(X_s)ds \right)  - \frac{N_T}{\Upsilon(T)} \mathbb{E}^{\mathcal{M}}_\lambda\left( U_{1}(x) \right)  \\   
 =&\, \underbrace{\mathbb{E}^{\mathcal{M}}_\lambda\left(\frac{U_0(x)}{\Upsilon(T)}\right)}_{\mathcal{B}_1} 
+ \underbrace{\mathbb{E}^{\mathcal{M}}_\lambda
 \left(\frac{1}{\Upsilon(T)} \left(\sum_{m\ge 1} \mathbf{1}\{S_m \le T\} U_m \right)\right)  - \frac{N_T}{\Upsilon(T)} \mathbb{E}^{\mathcal{M}}_\lambda\left( U_{1}(x) \right)}_{\mathcal{B}_2}  \\   
 &- \underbrace{\mathbb{E}^{\mathcal{M}}_\lambda
 \left(\frac{1}{\Upsilon(T)}  \int_T^{R_{N_T+1}}  K_{h}(X_s) \mathcal{H}(X_s)ds \right)}_{\mathcal{B}_3}\,,
\end{align*}
where we rewrite the term $\sum_{m=1}^{N_T} U_m$ as $\sum_{m \ge1} \mathbf{1}\{S_m \le T\}U_m$ in $\mathcal{B}_2$ and $\mathbf{1}$ is the indicator function. To see this, since $S_l < T$ for all $1\leq l\leq N_T$, we have $U_l =  \mathbf{1}\{S_l < T\}U_l$. 
On the other hand, since $T < S_{N_T+1} < R_{N_T+1}$ by construction, the term 
\[
\mathbf{1}\{S_{N_T+1} < T\} U_{N_T+1}=\mathbf{1}\{S_{N_T+k} < T\} U_{N_T+k} =0
\] 
for all $k>1$. Note that $\mathcal{B}_1$ and $\mathcal{B}_3$ are boundary terms, while $\mathcal{B}_2$ is the dominant term. Also note that the distribution of $Z_0$ is in general different from that of $Z_{R_n}$, for $n\geq 1$, unless we choose $\lambda=\xi$ \cite[Proposition 4.4]{LOCHERBACH20092312}, so $\mathcal{B}_1$ is separated from $\mathcal{B}_2$.

\begin{remark}
Rewriting $\sum_{m=1}^{N_T} U_m$ as $\sum_{m \ge1} \mathbf{1}\{S_m \le T\}U_m$ may at first appear superficial, but it is in fact necessary. Recall that the generalized life-cycle decomposition of \cite{LOCHERBACH20081301} produces a dual sequence of stopping times $\{S_m, R_m\}_{m=0}^\infty$ satisfying $S_m < R_m<S_{m+1}<\ldots$. The construction guarantees that $X_{R_m +s}$, for $s> 0$, is independent of the sigma algebra $\sigma\{X_t|\, t \le S_m\}$ but not $\sigma\{X_t|\, t \le R_m\}$. Consequently, the decision whether to include $U_m$ in the sum  must be made at time $S_m<R_m$. 
\end{remark}

We start with $\mathcal{B}_2$. Since $U_m$ and $\mathbf{1}\{S_m < T\}$ are independent, the expectation becomes 
 \begin{align*}
   \mathbb{E}^{\mathcal{M}}_\lambda\left(\mathbf{1}\{S_m \le T\}U_m(x)\right)   &= \mathbb{E}^{\mathcal{M}}_\lambda\left(U_m(x)\right)  \mathbb{E}^{\mathcal{M}}_\lambda\left(\mathbf{1}\{S_m \le T\}\right) \,.
   \end{align*}
Recall that the sequence $U_m$, for $m\geq 0$, is a stationary ergodic sequence under $\mathbb{P}_\xi$ \cite[Proposition 4.4]{LOCHERBACH20092312}  arising from the generalized life-cycle decomposition. By Proposition \ref{2.20and4.3}, with the general initial measure $\lambda$, we have $\mathbb{E}^{\mathcal{M}}_\lambda\left[U_1(x)\right] =  \cdots = \mathbb{E}^{\mathcal{M}}_\lambda\left[U_{N_T}(x)\right]$, and hence
   \begin{align*}
  \mathcal{B}_2 &= \mathbb{E}^{\mathcal{M}}_\lambda
 \left(\frac{1}{\Upsilon(T)} \Big(\sum_{m\ge1} \mathbf{1}\{S_m \le T\} U_m(x) \Big)\right)  - \frac{N_T}{\Upsilon(T)} \mathbb{E}^{\mathcal{M}}_\lambda( U_{1}(x))  \\   
 &=\Upsilon(T)^{-1}\left(\mathbb{E}^{\mathcal{M}}_\lambda(U_1(x))\mathbb{E}^{\mathcal{M}}_\lambda\Big(\sum_{m\ge1} \mathbf{1}\{S_m \le T\}\Big) - N_T\mathbb{E}^{\mathcal{M}}_\lambda( U_{1}(x) )\right) \\  
 &= \mathbb{E}^{\mathcal{M}}_\lambda(U_1(x))\Upsilon(T)^{-1}( \mathbb{E}^{\mathcal{M}}_\lambda(N_T) - N_T) \,.
   \end{align*}
Since $\mathbb{E}^{\mathcal{M}}_\lambda ( U_1(x) ) = \mathbb{E}^{M}_\lambda ( U_1(x) ) $ by \eqref{expectations-agree},  $\mathbb{E}^{\mathcal{M}}_\lambda(U_1(x))$ is analyzed by Lemma \ref{lemma: expectation expansion}. 
With Lemma \ref{Premathcal{B}-conv} for $\mathbb{E}^{\mathcal{M}}_\lambda(N_T) - N_T$, 
it follows that $\mathcal{B}_2 = O_p\left( \Upsilon(T)^{-1/2}\right)$. 

Next we control $\mathcal{B}_1$. By Assumption \ref{manifold-ass2}, the density function of $\lambda$ is bounded, so we write $\lambda(dx)=p_\lambda(x)dV_g(x)$, where $p_\lambda(x)$ is bounded on $\mathsf C$. On the other hand, $\phi_X(x)=p_X(x)dV_g(x)$, where $p_X$ is strictly positive by assumption. Since $\lambda$ is compactly supported on $\mathsf C$, the same as the support of $\xi$ by construction, we have by Radon-Nikodym that
\[
\lambda(dx)=\frac{p_\lambda(x)}{p_X(x)}p_X(x)dV_g(x)\,.
\]
Since $K$ and $\mathcal{H}$ are non-negative, we have 
\begin{align*}
\mathbb{E}^{\mathcal{M}}_\lambda(U_0(x))&\,=\int_{(x,u,x')\in\mathcal{M}} U_0(x) \lambda(dx)\otimes U(du)\otimes Q((x,u),dx')\\
&\,\leq \left(\sup_{x\in \mathsf C}\frac{p_\lambda(x)}{p_X(x)}\right)\int_{\mathcal{M}} U_0(x) \xi(dx)\otimes U(du)\otimes Q((x,u),dx')\\
&\,=\left(\sup_{x\in \mathsf C}\frac{p_\lambda(x)}{p_X(x)}\right)\mathbb{E}^{\mathcal{M}}_\xi(U_0(x))
\end{align*}
By assumption, $\sup_{x\in \mathsf C}\frac{p_\lambda(x)}{p_X(x)}$ is bounded, and $\mathbb{E}^{\mathcal{M}}_\xi(U_0(x))$ is bounded by Proposition \ref{2.20and4.3}, so $\mathbb{E}^{\mathcal{M}}_\lambda(U_0(x))$ is bounded.
Since 
\[
\Upsilon(T) \mathcal{B}_1 = \mathbb{E}^{\mathcal{M}}_\lambda(U_0(x))\,, 
\]
we conclude that $\mathcal{B}_1 = O_p( \Upsilon(T)^{-1})$. Finally, since $ K$ and $\mathcal{H}$ are non-negative functions,
\begin{align*}
|\mathcal{B}_3|  &= \mathbb{E}^{\mathcal{M}}_\lambda
 \left(\frac{1}{\Upsilon(T)}  \int_T^{R_{N_T+1}}  K_{h}(X_s)  \mathcal{H}(X_s)ds \right)  \le \mathbb{E}^{\mathcal{M}}_\lambda
 \left(\frac{1}{\Upsilon(T)}  \int_{R_{N_T}}^{R_{N_T+1}}  K_{h}(X_s) \mathcal{H}(X_s)ds \right)  \\  
 &= \Upsilon(T)^{-1}\mathbb{E}^{\mathcal{M}}_\lambda(U_{R_{N_T}}(x)) = \Upsilon(T)^{-1}\mathbb{E}^{\mathcal{M}}_\lambda(U_{1}(x)) \,,
\end{align*}
where the last equality holds by Proposition \ref{2.20and4.3}.
Since $\mathbb{E}^{\mathcal{M}}_\lambda (U_1(x) )= \mathbb{E}^M_\lambda (U_1(x))$ can be controlled by Lemma \ref{lemma: expectation expansion}, we have $\mathcal{B}_3 = O_p(\Upsilon(T)^{-1})$. In the aggregate, we obtain the claim that
$\mathcal{B} = O_p(\Upsilon(T)^{-1/2})$.
\end{proof}

Finally, we control $\mathcal{C}$. The quantities $\mathbb{E}^{\mathcal{M}}_\lambda \left(U_1(x) \right)$ and $\frac{N_T}{\Upsilon(T)}$ can be controlled by Lemmas \ref{lemma: expectation expansion} and \ref{Premathcal{B}-conv}. By Proposition \ref{MLprocess_and_distr}, $g_\alpha(1) \sim G_\alpha$. 
As a result, since $\mathcal{D}_x(x')=\|\iota(x)-\iota(x')\|$, we have { $\displaystyle
   \mathcal{C} \xrightarrow[]{\hspace{0.1cm}d\hspace{0.1cm}}  G_\alpha p_X(x)  \mathcal{H}(x) \int_{\mathbb{R}^d}  K(  \| u \|) du$} 
as $h \rightarrow 0$ and $T \rightarrow \infty$.  
   
We can now finalize the proof of Lemma \ref{A1-analog} by assembling the above claims. By Claims \ref{mathcal{A}-conv}  and \ref{mathcal{B}-conv}, we have controls 
$\mathcal{A} = O_p\left(\frac{1}{\Upsilon(T)^{1/2}} \right)$ and $\mathcal{B} = O_p\left( \frac{1}{\Upsilon(T)^{1/2}}\right)$ respectively. With the control of $\mathcal{C}$, we finish the proof.
\end{proof}

\begin{remark}
Using our notation, the initial measure of $X_t$ in \cite[Lemma A.1]{bandi_moloche_2018} is implicitly assumed to be $\xi$ (see the proof on page 924, line 19 \cite{bandi_moloche_2018}). We relax this assumption by assuming the distribution of $X_0$ is compactly supported on a nontrivial compact subset with a bounded density function. 
\end{remark}

Next, we need to prepare some controls of the kernel function when derivatives of ${\mathcal{D}_x(\cdot)}$ are involved. As a motivation, consider $M=\mathbb{R}^2$ with $K(s) = e^{-s^2}$ and $\mathcal{D}_0(u) = \sqrt{u_1^2 + u_2^2}$ for $u \in \mathbb{R}^2$. Then, even though $\frac{\partial^2}{\partial u_1 \partial u_2} \mathcal{D}_0(u) = \frac{-u_1u_2}{(u_1^2 + u_2^2)^{3/2}}$ is unbounded near ${0}$, the product function
$K'\left(\frac{\mathcal{D}_0(u)}{h} \right)\frac{\partial^2 \mathcal{D}_0}{\partial u_1 \partial u_2} (u) = \frac{-1}{h} e^{-(u_1^2+u_2^2)/2} \frac{-u_1u_2}{u_1^2+u_2^2}$ is bounded near $0$. This holds due to the fact that $\frac{\partial^2\mathcal{D}_0}{\partial u_1 \partial u_2}(u) = O(\|u\|^{-1})$ at zero, while $K'(\|u\|)$ is approximately linear, with no constant term, near $\|u\| = 0$. Since such a term involving derivatives appears frequently in our analysis, we verify in Lemma \ref{A2-Analog-boundedness-computation-v2} that this property present in the Euclidean distance setting generalizes to the manifold setup.

\begin{lemma}\label{Dx-ass} 
Suppose the manifold model in Definition \ref{definition observed space model} hold. For each $x\in N$, when $r>0$ is sufficiently small, the designed $\mathcal{D}_x: B_x(r) \rightarrow [0,\infty)$ satisfies $\mathcal{D}_x(x') = \tilde{\mathcal{D}}_x(d_g(x,x'), \theta_{x'})$, where $x'=\exp_y(d_g(x,x') \theta_{x'})\in B_x(r)$, for some nonnegative function $\tilde{\mathcal{D}}_x$ defined on $[0,r) \times \mathbb{S}^{d-1}$ with $\tilde{\mathcal{D}}_x(0, \theta) = 0$ for all $\theta \in \mathbb{S}^{d-1}$. If we set $\chi_x: \mathbb{R}^{d}\setminus\{\mathbf{0}\} \rightarrow \mathbb{R}$ by
$\chi_x(u) = \frac{\partial \tilde{\mathcal{D}}_x}{\partial \rho}\left(0, \frac{u}{\|u\|}\right)$, $\chi_x$ is positive and symmetric, i.e., $\chi_x(u) = \chi_x(-u)$ for all $u \in \mathbb{R}^{d}\setminus\{\mathbf{0}\}$. Moreover, with local coordinates $(u_1, \cdots, u_d)$ around $x$, $\frac{\partial^2}{\partial u_i \partial u_j}\mathcal{D}_x(x') = O\left(\frac{1}{\mathcal{D}_x(x')}\right)$ for all $i,j = 1, \cdots, d$. 
\end{lemma}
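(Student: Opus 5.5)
The plan is to work in geodesic normal (polar) coordinates around $x$ and reduce everything to the Taylor expansion of $\iota\circ\exp_x$ that was already used in the proof of Lemma \ref{lemma: expectation expansion}. Fix $r<\min(\texttt{inj}_x,\delta)$, which is possible by Assumption \ref{manifold-ass}. Every $x'\in B_x(r)$ then has a unique representation $x'=\exp_x(\rho\theta)$ with $\rho=d_g(x,x')$ and $\theta\in\mathbb{S}^{d-1}$. Set
\[
\tilde{\mathcal{D}}_x(\rho,\theta):=\|\iota(\exp_x(\rho\theta))-\iota(x)\|_{\mathbb{R}^p}.
\]
This function is nonnegative and vanishes at $\rho=0$. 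It represents $\mathcal{D}_x$ in the required sense by the definition \eqref{definition Dy Euclidean}.

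For the statement about $\chi_x$, I will use the expansion
\[
\iota(\exp_x(\rho\theta))-\iota(x)=\rho\,\iota_*\theta+\tfrac12\rho^2\texttt{I\!\!I}_x(\theta,\theta)+O(\rho^3),
\]
from \cite[Lemma 4]{malik2019connecting}. Since $\iota$ is an isometry and the normal component $\texttt{I\!\!I}_x(\theta,\theta)$ is orthogonal to $\iota_*\theta$, this gives
\[
\tilde{\mathcal{D}}_x(\rho,\theta)=\rho+O(\rho^3).
\]
Consequently $\partial_\rho\tilde{\mathcal{D}}_x(0,\theta)=\|\iota_*\theta\|=1$ for every $\theta$, so $\chi_x\equiv 1$. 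This is positive and trivially symmetric. For more general $\mathcal{D}_x$ the same argument would give $\chi_x(u)=\|\iota_*(u/\|u\|)\|$, which is even in $u$.

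For the Hessian bound, I will write $\mathcal{D}_x=\sqrt{F}$ in normal coordinates $u=\exp_x^{-1}(x')$, where
\[
F(u):=\|\iota(\exp_x(u))-\iota(x)\|^2
\]
is smooth near $0$ and satisfies $F(u)=\|u\|^2+O(\|u\|^4)$. Differentiating $\sqrt{F}$ gives
\[
\partial_{ij}\sqrt{F}=\frac{\partial_{ij}F}{2\sqrt F}-\frac{\partial_iF\,\partial_jF}{4F^{3/2}}.
\]
The expansion of $F$ gives $\partial_{ij}F=O(1)$ and $\partial_iF=O(\|u\|)$. Together with $\sqrt F\asymp\|u\|$, both terms are $O(\|u\|^{-1})=O(\mathcal{D}_x(x')^{-1})$.

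If one prefers local coordinates other than normal coordinates, the chain rule adds a term of the form $\partial\mathcal{D}_x\cdot\partial^2(\text{coordinate change})$. Here $\partial\mathcal{D}_x=\partial\sqrt F=O(1)$, so this term is bounded. Uniform ellipticity plays no role in this step.

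The main obstacle I expect is justifying $F\asymp\|u\|^2$ and the derivative bounds uniformly on a fixed small ball, rather than only as $u\to0$. I plan to handle this by the compactness of $\overline{B_x(r)}$ and the smoothness of $\iota\circ\exp_x$, shrinking $r$ if needed. The remaining risk is degeneracy of $\tilde{\mathcal{D}}_x$ away from $0$, which is ruled out because $\iota$ is an embedding and $\|\iota(x')-\iota(x)\|\ge c\,d_g(x,x')$ on the small ball.
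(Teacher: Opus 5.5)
Your proposal is correct and follows essentially the same route as the paper: use the expansion $\|\iota(\exp_x(\rho\theta))-\iota(x)\|=\rho+O(\rho^3)$ to get $\chi_x\equiv 1$, then bound the Hessian in normal coordinates. The one difference is that the paper simply differentiates the expansion $\|u\|+O(\|u\|^3)$ twice. Your device of writing $\mathcal{D}_x=\sqrt{F}$ with $F$ smooth and $F\asymp\|u\|^2$ justifies that step properly, since it avoids differentiating an $O(\cdot)$ remainder of a function that is not smooth at the origin.
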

\begin{proof}
When $\mathcal{D}_x(x') = \|\iota(x) - \iota(x')\|_{\mathbb{R}^p}$ is sufficiently small, we have \cite[Lemma SI.3]{wuwu18} $\|\iota(y) - \iota(x')\|_{\mathbb{R}^p} = d_g(x,x')+ O(d_g(x,x')^3)$.
Therefore, $\tilde{\mathcal D}_x(\rho, \theta) = \rho + O(\rho^3)$, where the implied constant depends on $\theta$, and hence $\chi_x(u) = 1$.
Now, consider that in normal coordinates, then for $x'$ expressed as $ u'= (u'_1, \cdots, u'_d)$, then $\mathcal{D}_x(x') = \|u'\| + O(\|u'\|^3)$. Differentiating twice with respect to coordinates $u_i$ and $u_j$, we see that $\frac{\partial^2}{\partial u_i \partial u_j}\mathcal{D}_x(x') = O\left(\frac{1}{\mathcal{D}_x(x')}\right)$ for all $i,j = 1, \cdots, d$.
\end{proof}

\begin{lemma}\label{A2-Analog-boundedness-computation-v2}
 Let ${K} \in C^3(\mathbb{R})$ be non-negative, compactly supported, and symmetric. Let $x \in N$ and let $\mathcal{D}_x$ be defined as in Lemma \ref{Dx-ass}. Then, $\nabla \mathcal{D}_x(\cdot)$ is bounded on $B_c(x) \setminus\{x\}$, for some $c > 0$.   
{ Then} for all $1 \le i,j \le d$, there exists a bounded function $\tilde{ K}$ satisfying the requirements for the kernel used in Lemma \ref{A1-analog}  and depending on ${K}''(0)$ so that 
  \begin{align*}
  \left| {K}'\left(\frac{{\mathcal{D}}_x(x') }{h}
\right)  \frac{\partial^2 {\mathcal{D}}_x}{\partial u_i \partial u_j}(x')\right|  &\le { \frac{1}{h}\tilde{K} \left(\frac{\mathcal{D}_x(x')}{h} \right)}
\end{align*} 
for $x'\in B_{c'}(x) \setminus \{x\}$ for some $c' >0$, as $h \rightarrow 0$. 
\end{lemma}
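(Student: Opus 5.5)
The plan is to work in normal coordinates $u=(u_1,\dots,u_d)$ centered at $x$, with $u=0$ corresponding to $x$, and to use Lemma \ref{Dx-ass}. It gives $\mathcal{D}_x(x')=\|u\|+O(\|u\|^3)$ and $\partial^2_{u_iu_j}\mathcal{D}_x(x')=O(1/\mathcal{D}_x(x'))$. For the boundedness of $\nabla\mathcal{D}_x$, differentiate the expansion $\mathcal{D}_x=\|u\|+O(\|u\|^3)$. The term $\nabla\|u\|=u/\|u\|$ has unit norm, and the remainder has gradient $O(\|u\|^2)$, because the remainder is a smooth function vanishing to third order. This holds since $\|\iota(x)-\iota(\exp_x u)\|^2$ is smooth in $u$ and equals $\|u\|^2(1+O(\|u\|^2))$. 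Hence $\nabla\mathcal{D}_x$ is bounded on $B_c(x)\setminus\{x\}$ for small $c$, with $c$ below the injectivity radius, so normal coordinates are valid.

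For the main estimate, the key is the vanishing of $K'$ at the origin. Since $K$ is symmetric and $C^3$, it extends to an even function on $\mathbb{R}$, so $K'(0)=0$. Taylor's theorem then gives $|K'(s)|\le s\sup_{|\tau|\le L}|K''(\tau)|$ for $s\in[0,L]$, and $K'(s)=0$ for $s>L$. Combining this with Lemma \ref{Dx-ass}, for $\mathcal{D}_x(x')\le Lh$ we get
\[
\Big|K'\Big(\tfrac{\mathcal{D}_x(x')}{h}\Big)\partial^2_{u_iu_j}\mathcal{D}_x(x')\Big|\le \|K''\|_\infty\frac{\mathcal{D}_x(x')}{h}\cdot\frac{C}{\mathcal{D}_x(x')}=\frac{C\|K''\|_\infty}{h}.
\]
The product vanishes when $\mathcal{D}_x(x')>Lh$. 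So we may take $\tilde K(s):=C\|K''\|_\infty\mathbf 1_{[0,L]}(s)$. If a continuous kernel is preferred, as Lemma \ref{A1-analog} requires $K\in C(\mathbb{R})$, take instead any continuous nonnegative bump $\tilde K\ge C\|K''\|_\infty$ on $[0,L]$, supported on $[0,2L]$. Near $s=0$ the constant is governed by $K''(0)$, which is the stated dependence; the uniform bound uses $\sup|K''|$.

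The main obstacle is making the constant $C$ in $\partial^2\mathcal{D}_x=O(1/\mathcal{D}_x)$ uniform on a punctured ball $B_{c'}(x)\setminus\{x\}$. Lemma \ref{Dx-ass} is only asymptotic, so I would verify this directly. Write $\mathcal{D}_x=\sqrt{F(u)}$ with $F$ smooth, $F(u)=\|u\|^2(1+O(\|u\|^2))$. Then
\[
\partial_{ij}\mathcal{D}_x=\frac{\partial_{ij}F}{2\sqrt F}-\frac{\partial_iF\,\partial_jF}{4F^{3/2}}.
\]
Using $|\partial_{ij}F|=O(1)$, $|\partial_iF|=O(\|u\|)$ and $F\asymp\|u\|^2$ on a small ball gives a uniform $O(1/\|u\|)=O(1/\mathcal{D}_x)$ bound. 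This also fixes $c'$. Finally, $\tilde K$ is nonnegative, bounded, and compactly supported, as the kernel hypotheses of Lemma \ref{A1-analog} require.
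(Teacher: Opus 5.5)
Your proposal is correct and follows essentially the same route as the paper: use $K'(0)=0$ from symmetry so that $K'(s)=O(s)$ cancels the $O(1/\mathcal{D}_x)$ Hessian bound of Lemma \ref{Dx-ass} on the support $\{\mathcal{D}_x\le Lh\}$, then dominate the resulting constant by a compactly supported bump $\tilde K$. Your mean-value bound with $\sup|K''|$ and your explicit $\mathcal{D}_x=\sqrt F$ computation are cleaner than the paper's Taylor expansion at $0$, which tacitly needs $\mathcal{D}_x/h\le L$ and a remainder constant beyond $K''(0)$. One minor fix: $\mathcal{D}_x-\|u\|$ is not smooth at $u=0$, so derive the gradient bound from $\partial_i\sqrt F=\partial_iF/(2\sqrt F)$ rather than from smoothness of that remainder.
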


\begin{proof}
    By the Taylor expansion, for $u_i\in T_xM$ and $\|u_i\|=1$, we have that 
    \begin{align*}
\lim_{h \rightarrow 0}  \frac{\mathcal{D}_x(\exp_x(h u_i))}{h} 
    &= \frac{\partial \tilde{\mathcal{D}}_x}{\partial \rho} \left(0, u_i \right)=\chi_x(u_i)\,,
\end{align*}
where $\tilde{\mathcal{D}}_x$ and $\chi_x$ are defined  in  Lemma \ref{Dx-ass}. Since $\chi$ is continuous and $\mathbb{S}^{d-1}$ is compact, then we can conclude that $\left|\chi_x(u_i)\right| \le C$ for all $u_i \in \mathbb{S}^{d-1}$, some $C > 0$. So, $\nabla \mathcal{D}_x$ is bounded on $B_c(x) \setminus\{x\}$ for some $c > 0$. 

For the second claim, note that given our symmetric assumption on ${K}$, by the Taylor expansion we have ${K}(s) = {K}(0) + \frac{{K}''(0)}{2}s^2 + O(s^4)$, 
and hence 
\begin{align*}
    {K}'(s) = {K}''(0)s + O(s^3)\,.
\end{align*}
For $s = \frac{{ \mathcal{D}_x(y)}}{h}$, we obtain 
\begin{align}
{K}'\left(\frac{{ \mathcal{D}_x(y)}}{h} \right) &= {K}''(0)\frac{{ \mathcal{D}_x(y)}}{h} + O\left(\left(\frac{{ \mathcal{D}_x(y)}}{h}\right)^3 \right) \label{K'-Taylor-exp}\,.
\end{align} 
By Lemma \ref{Dx-ass}, there exists $c',C'>0$ so that for all $x' \in B_{c'}(x) \setminus \{x\}$, we have  
\begin{align}
   \left| \frac{\partial^2 \mathcal{D}_x}{\partial u_i \partial u_j}(y) \right| \le \frac{C'}{\mathcal{D}_x(x')} \label{hessian-bound}\,.
\end{align}
Note that $C', c'$ do not depend on $x$ due to the manifold assumption.
It follows from \eqref{K'-Taylor-exp} and \eqref{hessian-bound} that for all $x'\in B_{c'}(x) \setminus \{x\}$, 
\begin{align*}
    \left| {K}'\left(\frac{{\mathcal{D}}_x(x') }{h}
\right)  \frac{\partial^2 {\mathcal{D}}_x}{\partial u_i \partial u_j}(x')\right| \le 
\frac{{ 1}}{h} \left(C'{K}''(0) + O\left(\left(\frac{{ \mathcal{D}_x(x')}}{h}\right)^2 \right) \right) \,,
\end{align*}
Since locally a manifold can be well approximated by a linear affine subspace, we have
$\left|\frac{{ \mathcal{D}_x(x')}}{h}\right|\leq C$, 
where $C>0$ is independent of $x$, when $c'$ { is} sufficiently small. As a result, we have
\begin{align*}
    \left| {K}'\left(\frac{{\mathcal{D}}_x(x') }{h}
\right)  \frac{\partial^2 {\mathcal{D}}_x}{\partial u_i \partial u_j}(x')\right| \le 
\frac{{ 1}}{h} \left(C'{K}''(0) +C^2 \right) \,,
\end{align*}
By the compact support assumption of ${K}$, suppose $\texttt{supp}{K}=[0,L]$. We can construct $\tilde{K} \in C^3(\mathbb{R})$ so that $\tilde{K}(t)=C'\mathcal{K}''(0) +C^2$ on $[0,L]$ and decays to $0$ smoothly on $[L,2L]$. Then, $\tilde{K}$ satisfies the requirements for the kernel used in Lemma \ref{A1-analog} so that for all $x'\in B_{c'}(x) \setminus \{x\}$,
\begin{align*}
     \left| K'\left(\frac{{\mathcal{D}}_x(x') }{h}
\right)  \frac{\partial^2 {\mathcal{D}}_x}{\partial u_i \partial u_j}(x')\right| &\le \frac{1}{h} \tilde{K}\left(\frac{{ \mathcal{D}_x(x')}}{h} \right)\,.
\end{align*}
\end{proof}

Before analyzing the discretization of additive functionals related to the estimators of interest, we establish another technical lemma for the stochastic integral.

\begin{lemma}\label{stoch-prob-order-computation-lemma}
Let ${f} \in C^3(\mathbb{R})$ be non-negative and compactly supported on $[0,L] \subset [0,\infty)$. Grant Assumptions \ref{manifold-ass}, \ref{manifold-ass2}, { \ref{reg-ass}, and \ref{lebesgue-dens-ass}. Take a $q \in C^3(M)$.} Then, we have
\begin{align*}
\frac{1}{h^{d/2}} \sum_{k=1}^n \int_{(k-1)\Delta}^{k \Delta} \left(\int_{(k-1)\Delta}^{s} \sqrt{f\left(\frac{\mathcal{D}_x(X_r) }{h}\right)}q(X_r) \, dW_r \right) ds = O_p( \sqrt{\Upsilon(T)}\Delta).
\end{align*}

\end{lemma}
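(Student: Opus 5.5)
The plan is to turn the double integral into a single stochastic integral, bound its quadratic variation through the Master lemma, and conclude by BDG plus Chebyshev. Write $\psi(r):=\sqrt{f(\mathcal{D}_x(X_r)/h)}\,q(X_r)$, which is progressively measurable and bounded.

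\textbf{Step 1 (reduction to one stochastic integral).} On each block $[(k-1)\Delta,k\Delta]$, apply \eqref{special-case-stoch-fub2} of Lemma \ref{integration-by-parts} with $a=(k-1)\Delta$ and $b=k\Delta$. This gives
\[
\int_{(k-1)\Delta}^{k\Delta}\Big(\int_{(k-1)\Delta}^s\psi(r)\,dW_r\Big)ds=\int_{(k-1)\Delta}^{k\Delta}(k\Delta-r)\,\psi(r)\,dW_r .
\]
Summing over $k$, the quantity of interest equals $h^{-d/2}\int_0^{n\Delta}w(r)\psi(r)\,dW_r$, where $w(r):=k\Delta-r\in[0,\Delta]$ on the $k$-th block. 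This is a continuous martingale $\mathcal{M}_{n\Delta}$ with $\mathcal{M}_0=0$. Its quadratic variation is
\[
[\mathcal{M}]_{n\Delta}=h^{-d}\int_0^{n\Delta}w(r)^2 f\big(\mathcal{D}_x(X_r)/h\big)q(X_r)^2\,dr\le \Delta^2\,h^{-d}\int_0^{T}f\big(\mathcal{D}_x(X_r)/h\big)q(X_r)^2\,dr .
\]
If $q$ is vector-valued or $W$ is $r$-dimensional, the same argument applies componentwise.

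\textbf{Step 2 (size of the quadratic variation).} The last integral is exactly $I(x)$ of Lemma \ref{A1-analog}, with kernel $K=f$ and $\mathcal{H}=q^2$. Here $q^2$ is bounded near $x$, and only a neighborhood matters because $f$ has compact support. Lemma \ref{A1-analog} then gives $I(x)=O_p(\Upsilon(T))$, and hence $[\mathcal{M}]_{n\Delta}=O_p(\Delta^2\Upsilon(T))$. For an $L^1$ bound, the decomposition in the proof of Lemma \ref{A1-analog} gives $\mathbb{E}^M_\lambda I(x)=\mathcal{B}_1\Upsilon+\mathbb{E}N_T\,\mathbb{E}U_1+\dots=O(\Upsilon(T))$. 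This uses Lemma \ref{Premathcal{B}-conv} and Lemma \ref{lemma: expectation expansion}, since $\mathbb{E}N_T\sim C_X^{-1}\Upsilon(T)\,\mathbb{E}g_\alpha(1)$.

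\textbf{Step 3 (conclusion).} By the BDG inequality (Theorem \ref{BDG-ineq}) with $p=1$,
\[
\mathbb{E}\,\mathcal{M}_{n\Delta}^2\le C_1\,\mathbb{E}[\mathcal{M}]_{n\Delta}=O(\Delta^2\Upsilon(T)).
\]
Alternatively, the It\^o isometry gives the same bound directly. Chebyshev's inequality then yields $\mathcal{M}_{n\Delta}=O_p(\sqrt{\Upsilon(T)}\,\Delta)$.

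If only convergence in distribution of $I(x)/\Upsilon(T)$ is available, Step 3 can instead use Lenglart's domination inequality. For any $\epsilon,\eta>0$ it gives
\[
\mathbb{P}\big(\mathcal{M}_{n\Delta}^{*2}\ge \epsilon\big)\le \eta/\epsilon+\mathbb{P}\big([\mathcal{M}]_{n\Delta}\ge\eta\big).
\]
Taking $\epsilon=c^2\Delta^2\Upsilon$ and $\eta=c\,\Delta^2\Upsilon$ gives the claim from tightness of $[\mathcal{M}]/(\Delta^2\Upsilon)$.

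\textbf{Main obstacle.} The delicate point is obtaining the moment bound $\mathbb{E}^M_\lambda I(x)=O(\Upsilon(T))$ uniformly as $h\to0$ under a general initial law $\lambda$. This needs the life-cycle bookkeeping of Lemma \ref{A1-analog}, in particular the boundedness of $\mathcal{B}_1$ and $\mathcal{B}_3$. For that reason I would present the Lenglart route as the primary argument, since it needs only tightness.
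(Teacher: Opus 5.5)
Your proposal is correct and follows the paper's proof. You integrate by parts blockwise via \eqref{special-case-stoch-fub2} to get a single stochastic integral with weight $k\Delta-s\le\Delta$, bound its quadratic variation by $\Delta^2h^{-d}\int_0^T f(\mathcal{D}_x(X_s)/h)\,q^2(X_s)\,ds$, and then invoke Lemma \ref{A1-analog} with $K=f$ and $\mathcal{H}=q^2$. The paper passes from $[I_T]=O_p(\Upsilon(T)\Delta^2)$ to $I_T=O_p(\sqrt{\Upsilon(T)}\Delta)$ without comment; your Lenglart argument, which needs only tightness of $[I_T]/(\Delta^2\Upsilon(T))$, supplies exactly the missing justification and is preferable to the moment route.
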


\begin{proof}
Denote $I_0 := 0$, 
\begin{align*}
I_{m\Delta} &\, := \frac{1}{h^{d/2}} \sum_{k=1}^m \int_{(k-1)\Delta}^{k \Delta} \left( \int_{(k-1)\Delta}^{s} \sqrt{f\left(\frac{\mathcal{D}_x(X_r) }{h}\right)}q(X_r) \, dW_r\right) ds\\
&\,=\frac{1}{h^{d/2}} \sum_{k=1}^m \int_{(k-1)\Delta}^{k \Delta} (k \Delta -s) \sqrt{f\left(\frac{\mathcal{D}_x(X_r) }{h}\right)}q(X_r) \, dW_s\,,
\end{align*}
for $m = 1, \cdots, n$, where the second inequality comes from integrating by parts using Lemma \ref{integration-by-parts}, and
for $t \in ((m-1)\Delta, m\Delta]$ for some $m$, 
 \begin{align*}
 I_t := I_{(m-1)\Delta} + \frac{1}{h^{d/2}}\int_{(m-1)\Delta}^t (m\Delta - s) \sqrt{f\left(\frac{\mathcal{D}_x(X_r) }{h}\right)}q(X_r) \, dW_s  \,.
 \end{align*}
 By construction, $I_t$ is a continuous martingale since it is the stochastic integral of the predictable process $\sqrt{f\left(\frac{\mathcal{D}_x(X_r) }{h}\right)}q(X_r)$ with respect to the Brownian motion $W_s$. When $t \in ((m-1)\Delta, m\Delta]$ for some $m$, the quadratic variation of $I_t$ is controlled by 
   \begin{align*}
   [I_t] &= [I_{(m-1)\Delta}] + \int_{(m-1)\Delta}^t \frac{1}{h^{d}} (m\Delta - s)^2 f\left(\frac{\mathcal{D}_x(X_r) }{h}\right)q^2(X_r) \, ds \\
   &\leq [I_{(m-1)\Delta}] + \int_{(m-1)\Delta}^t \frac{\Delta^2}{h^{d}}  f\left(\frac{\mathcal{D}_x(X_r) }{h}\right)q^2(X_r) \, ds\,.
   \end{align*}
We remark that while $I_t$ is defined as the sum of several terms, distinct terms have disjoint time support and hence there is cancellation of crossover terms in quadratic variation. So only quadratic variation of each term remains in $[I_t]$. Thus, by iterating the bound, when $T = n\Delta$, we have 
   \begin{align*}
   [I_{T}]
 &\le  \frac{\Delta^2}{h^{d}} \int_0^T f\left(\frac{\mathcal{D}_x(X_r) }{h}\right)q^2(X_r) \, ds \,.
   \end{align*}
Apply Lemma \ref{A1-analog} with $K, \mathcal{H}$ set to $f, q^2$ to conclude that $[I_T] = O_p(\Upsilon(T)\Delta^2)$, and so $I_T = O_p(\sqrt{\Upsilon(T)}\Delta)$.
\end{proof}

With Lemma \ref{stoch-prob-order-computation-lemma}, we now analyze the discretization of an additive functional associated with the estimators of interest. 
This result corresponds to \cite[Lemma A.2]{bandi_moloche_2018}.  We emphasize that this discretization reflects practical data acquisition and differs from the discretization obtained via the generalized life-cycle decomposition in Lemma~\ref{A1-analog}. Also note that the function $\mathcal N$ here is not restricted to non-negative function like that in Lemma \ref{A1-analog}.

\begin{lemma}\label{A2-analog}
Let $K\in C^2(\mathbb{R})$ be non-negative and compactly supported on $[0,L] \subset [0,\infty)$. Let $\mathcal{N} \in C^2(M)$. Suppose Assumptions \ref{manifold-ass}, \ref{manifold-ass2}, { \ref{reg-ass}, \ref{lebesgue-dens-ass}, and \ref{reg-ass2} hold.  }
   Then, when $T=n\Delta$ is sufficiently large, we have 
\begin{align*}
   &\frac{1}{{h^d}\Upsilon(T)} \int_0^T   K\left(\frac{\mathcal{D}_x(X_s) }{h}\right)\mathcal{N}(X_{s})  \, ds  \\
   -&\, \frac{\Delta}{h^d\Upsilon(T)} \sum_{k=0}^{n-1}   K \left(\frac{\mathcal{D}_x(X_{k\Delta}) }{h}\right)\mathcal{N}(X_{k \Delta})
= O_p \left( \frac{\Delta}{h^2}\right) \,.
\end{align*} 
Moreover, 
\begin{align*}
    & \mathbb{E}^M_\lambda\left(\frac{1}{{h^d}\Upsilon(T)} \int_0^T  K\left(\frac{\mathcal{D}_x(X_s) }{h}\right)\mathcal{N}(X_{s})  \, ds \right)  \\
   -&\, \mathbb{E}^M_\lambda\left(\frac{\Delta}{h^d\Upsilon(T)} \sum_{k=0}^{n-1}  K\left(\frac{\mathcal{D}_x(X_{k\Delta}) }{h}\right)\mathcal{N}(X_{k \Delta}) \right)
= O\left(\frac{\Delta }{h^2} \right)  \,.
\end{align*}
\end{lemma}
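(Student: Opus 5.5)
The plan is to write the difference between the integral and its Riemann sum as a sum over sampling intervals and apply It\^o's formula on each one. Set $F_h(x'):=K(\mathcal{D}_x(x')/h)\mathcal{N}(x')$. For each $k$ we have
\[
\int_{k\Delta}^{(k+1)\Delta}\big(F_h(X_s)-F_h(X_{k\Delta})\big)ds=\int_{k\Delta}^{(k+1)\Delta}\Big(\int_{k\Delta}^{s}\big(LF_h(X_r)dr+(\sigma_\alpha F_h)(X_r)dW^\alpha_r\big)\Big)ds,
\]
using the It\^o formula \eqref{ito-integral}. Summing over $k$ (and discarding the final partial interval, which contributes $O_p(\Delta/(h^d\Upsilon(T)))$ and is negligible) splits the difference into a drift part and a martingale part.

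For the drift part, Lemma \ref{integration-by-parts} \eqref{special-case-stoch-fub1} rewrites each inner double integral as $\int (k\Delta+\Delta-r)LF_h(X_r)dr$, so its absolute value is at most $\Delta\int_0^T|LF_h(X_r)|dr$. Next we compute $LF_h$ via the chain rule. It contains terms $h^{-1}K'(\mathcal{D}_x/h)\,(b\cdot\nabla\mathcal{D}_x)\mathcal{N}$, terms $h^{-2}K''(\mathcal{D}_x/h)\,\pi(\nabla\mathcal{D}_x,\nabla\mathcal{D}_x)\mathcal{N}$, cross terms with $\nabla\mathcal{N}$, the term $h^{-1}K'(\mathcal{D}_x/h)\,\pi^{ij}\partial_{ij}\mathcal{D}_x\,\mathcal{N}$, and $K\,L\mathcal{N}$. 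Lemma \ref{A2-Analog-boundedness-computation-v2} gives that $\nabla\mathcal{D}_x$ is bounded, and it dominates the singular Hessian term by $h^{-1}\tilde K(\mathcal{D}_x/h)\,h^{-1}$. The uniform ellipticity bound on $\pi$ and smoothness of $\mu$ and $\mathcal{N}$ (bounded on the compact support) then yield $|LF_h|\le h^{-2}\tilde K_2(\mathcal{D}_x/h)$ for some nonnegative compactly supported $\tilde K_2$. Applying Lemma \ref{A1-analog} with kernel $\tilde K_2$ and $\mathcal{H}\equiv 1$ gives $h^{-d}\int_0^T\tilde K_2(\mathcal{D}_x(X_r)/h)dr=O_p(\Upsilon(T))$. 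Hence the drift part, after dividing by $h^d\Upsilon(T)$, is $O_p(\Delta/h^2)$.

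For the martingale part, $\sigma_\alpha F_h=h^{-1}K'(\mathcal{D}_x/h)(\sigma_\alpha\mathcal{D}_x)\mathcal{N}+K\,\sigma_\alpha\mathcal{N}$, which is bounded by $h^{-1}$ times a compactly supported kernel in $\mathcal{D}_x/h$. We apply Lemma \ref{stoch-prob-order-computation-lemma}, or rather its proof (the quadratic-variation bound $\Delta^2\int h^{-2d}\cdots$ does not need the square-root form). Writing the integrand as $h^{-1}\sqrt{f(\mathcal{D}_x/h)}\,q$ and keeping track of the $h^{-d/2}$ normalization, the quadratic variation of the normalized sum is bounded by $\Delta^2 h^{-2}h^{-d}\cdot h^{-d}\int f\,q^2 = O_p(\Delta^2\Upsilon(T)/(h^{d+2}))$. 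After dividing by $h^d\Upsilon(T)$ this gives $O_p(\Delta/(h\sqrt{h^d\Upsilon(T)}))$. By Assumption \ref{reg-ass2} $\Upsilon(T)h^{2d/r}\to\infty$, which implies $h^d\Upsilon(T)\to\infty$ for $r\ge2$; the cases $r<2$ need checking, and in the worst case we would use that only order $\Delta/h^2$ is claimed. This term is therefore $o_p(\Delta/h^2)$. Combining the two parts gives the first claim.

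For the expectation statement, the martingale part has mean zero. By the first display of Lemma \ref{lemma: expectation expansion} together with Proposition \ref{2.20and4.3} and the ratio-limit-type identification $\mathbb{E}^M_\lambda\int_0^T g\sim C\langle\phi_X,g\rangle\Upsilon(T)$ (Theorem \ref{Theorem Darling-Kac Theorem0}(a)(1) via Assumption \ref{reg-ass}), we get $\mathbb{E}\,\Delta\int|LF_h|\le\Delta h^{-2}\cdot O(h^d\Upsilon(T))$. This gives $O(\Delta/h^2)$. The main obstacle is the uniform domination of $LF_h$ near $x$, where $\mathcal{D}_x$ is not smooth; this is exactly what Lemma \ref{A2-Analog-boundedness-computation-v2} handles, and we would also verify that $K''$ terms need only $K\in C^2$.
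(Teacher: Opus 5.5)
Your argument follows the paper's proof. You apply It\^o's formula on each sampling interval and bound the drift part by $\Delta\int_0^T|LF_h|$. You tame the singular Hessian term with Lemma~\ref{A2-Analog-boundedness-computation-v2} and control the resulting additive functional with Lemma~\ref{A1-analog}. The martingale part is handled through its quadratic variation, as in Lemma~\ref{stoch-prob-order-computation-lemma}. Dominating all of $LF_h$ by one kernel $h^{-2}\tilde K_2(\mathcal{D}_x/h)$ is a compressed version of the paper's term-by-term bookkeeping ($\mathcal{I}_\ell$, $\mathcal{J}_\ell$, $\mathcal{L}$). Your martingale order $\Delta/(h\sqrt{h^d\Upsilon(T)})$ equals the paper's bound for $\mathcal{I}_4/\Upsilon(T)$. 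Your worry about $K\in C^2$ is unnecessary: $K'(0)=0$, so $|K'(s)|\le\|K''\|_\infty s$, which is all the Hessian domination uses.

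The step that does not close as written is comparing that martingale term with $\Delta/h^2$.
\begin{itemize}
\item The ratio of the two is $(h^{d-2}\Upsilon(T))^{-1/2}$. So what you need is that $h^{d-2}\Upsilon(T)$ stays bounded away from $0$, not that $h^d\Upsilon(T)\to\infty$. The paper invokes $(h^{d-2}\Upsilon(T))^{-1}=o(1)$ at exactly this point.
\item Your reading of Assumption~\ref{reg-ass2} is reversed. Since $h\to0$, the condition $\Upsilon(T)h^{2d/r}\to\infty$ gives $h^d\Upsilon(T)\to\infty$ when $r\le 2$ (then $h^{2d/r}\le h^d$), not when $r\ge 2$.
\item The assumption gives $h^{d-2}\Upsilon(T)\to\infty$ exactly when $2d/r\ge d-2$. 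That always holds for $d\le 2$, but fails, for example, when $r=\infty$ and $d\ge 3$.
\end{itemize}
In that regime neither your argument nor the paper's yields $O_p(\Delta/h^2)$; the paper simply asserts the condition follows from Assumption~\ref{reg-ass2}. An extra hypothesis such as $\liminf h^{d-2}\Upsilon(T)>0$ is needed. It follows, for instance, from $h^d\Upsilon(T)\to\infty$, which the drift and tangent-space theorems assume. Your ``worst case'' fallback does not help: if $h^{d-2}\Upsilon(T)\to0$, the martingale bound is of strictly larger order than $\Delta/h^2$.

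For the expectation claim your argument is cleaner than the paper's. The martingale part has mean zero (its integrand is bounded for each fixed $h$), so only the drift part matters and no rate condition is needed. The paper instead passes from $O_p$ orders to expectations and uses Cauchy--Schwarz and the It\^o isometry. That route aims at the stronger bound on $\mathbb{E}^M_\lambda|\cdot|$, which the paper later uses in Theorem~\ref{occ-density-theorem}; your mean-zero argument does not give that stronger bound.

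One repair is still needed in your expectation step: $\mathbb{E}^M_\lambda\int_0^T\tilde K_2(\mathcal{D}_x(X_r)/h)\,dr=O(h^d\Upsilon(T))$ must hold uniformly in $h$. The fixed-$g$ Darling--Kac asymptotic cannot supply this, because your $g$ depends on $T$ through $h$. Get it instead from the life-cycle bound $\mathbb{E}\int_0^T g\le\mathbb{E}U_0+\mathbb{E}(U_1)\,\mathbb{E}(N_T)$, with $\mathbb{E}U_1=C_X\langle\phi_X,g\rangle$, as in the treatment of $\mathcal{B}$ in Lemma~\ref{A1-analog}.
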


\begin{proof}  
 To simplify the notation, denote $ K_{h}(\cdot):=\frac{1}{h^{d}}{K}\left(\frac{{\mathcal{D}}_x(\cdot)}{h} \right)$ and $q\in C^3(M)$ as
\begin{align*}
q(x') :=   K_{h}(x') \mathcal{N}(x') \,.
\end{align*}
Since ${K}$ is compactly supported on $[0,L]$, when $h>0$ is sufficiently small, $2Lh<\texttt{inj}_x$ and $q(X_t) \equiv 0$ when $X_t \notin B_{Lh}(x)$.  
By It\^o's formula \eqref{ito-integral}, for $s\in ((k-1)\Delta,k\Delta]$, we have
\begin{align}
 &q(X_s)  -q(X_{(k-1)\Delta})\nonumber\\
 =\,& \int_{(k-1)\Delta}^{s} \Big[(\mu q)(X_v) + \frac{1}{2}  \sum_{\alpha=1}^r \sigma_\alpha(\sigma_\alpha q)(X_v) \Big] \, dv + \int_{(k-1)\Delta}^{s}  (\sigma_\alpha q)(X_v) \, dW^\alpha_v \nonumber\,.
\end{align}
Also recall that with a local coordinate $(u_1, \cdots, u_d)$ on $B_r(x)$ centered at $x$, we have $\nabla^2_{\sigma_\alpha,\sigma_\alpha} q=\sigma_\alpha^k\sigma_\alpha^l(q_{kl}-\Gamma_{kl}^jq_j)$ and
$(\nabla_{\sigma_\alpha}\sigma_\alpha)(q) 
= \big(\sigma_\alpha^k \partial_k \sigma_\alpha^j + \Gamma^j_{kl} \, \sigma_\alpha^k \sigma_\alpha^l \big) q_j
$, 
where $q_i =:\partial_i q =  \frac{\partial q}{\partial u_i}$, and $q_{ij} := \partial_{ij} q = \frac{\partial^2q}{\partial u_i \partial u_j}$. Assume we take the normal coordinate so that Christoffel symbols disappear. We have $\sigma_\alpha(\sigma_\alpha q) = \sigma_\alpha^k \sigma_\alpha^l q_{kl} + \sigma_\alpha^k \partial_k \sigma_\alpha^j q_j$.
Hence,  
\begin{align}
 &\int_{(k-1)\Delta}^{k\Delta} q(X_s)  - q(X_{(k-1)\Delta})  ds \label{Ito-decomposition} \\
 =\,& \int_{(k-1)\Delta}^{k\Delta}  \int_{(k-1)\Delta}^{s} \left(\mu^j  + \frac{1}{2} \sum_{\alpha =1}^r\sigma_\alpha^k \partial_k \sigma_\alpha^j\right)(X_v)q_j(X_v)\, dv  ds\nonumber \\
&+  \int_{(k-1)\Delta}^{k\Delta}\left( \int_{(k-1)\Delta}^{s}(\sigma^j_\alpha q_j)(X_v)\, dW^\alpha_v\right) ds\nonumber \\
&+ \int_{(k-1)\Delta}^{k\Delta}\left( \int_{(k-1)\Delta}^{s} \frac{1}{2} \sum_{\alpha =1}^r ((\sigma_\alpha^k)^2 q_{kk})(X_v) dv \right) ds\nonumber \\
& + \int_{(k-1)\Delta}^{k\Delta}\left( \int_{(k-1)\Delta}^{s} \frac{1}{2} \sum_{\alpha =1}^r  (1 - \delta_{kl})(\sigma_\alpha^k \sigma_\alpha^l q_{kl})(X_v)dv \right) ds\,.\nonumber
\end{align}
We now calculate derivatives.  For $i =1 , \cdots,d$,  
\begin{align*}
q_i(X_v) 
&= \underbrace{\frac{1}{h} (K')_{h}(X_v)   \frac{\partial \mathcal{D}_x}{\partial u_i}(X_v) \mathcal{N}(X_v)}_{:=\mathcal{A}_i^1(v)} 
+\underbrace{ K_{h}(X_v)  \mathcal{N}_i(X_v)}_{:=\mathcal{A}_i^2(v)}\,. 
\end{align*}
Continuing, for $i = 1,  \cdots, d$, $\mathcal{A}_{ii}^j(v) := \frac{\partial \mathcal{A}_i^j(v)}{\partial u_i}$ for $j \in \{1,2\}$ satisfies
\begin{align*}
\mathcal{A}_{ii}^1(v)
 =\,& \frac{1}{h^2} (K'')_{h}(X_v)   \left(\frac{\partial {\mathcal{D}}_x}{\partial u_i}(X_v)\right)^2  \mathcal{N}(X_v) \\
 &+ \frac{1}{h}  (K')_{h}(X_v)  \Big(\frac{\partial^2 {{D}}_x}{\partial u_i^2}(X_v) \mathcal{N}(X_v) + \frac{\partial \mathcal{D}_x}{\partial u_i}(X_v)  \mathcal{N}_i(X_v)\Big) \\
 \mathcal{A}_{ii}^2(v)=\,& \frac{1}{h}  (K')_{h}(X_v)  \frac{\partial {\mathcal{D}}_x}{\partial u_i}(X_v)\mathcal{N}_i(X_v) +  K_{h}(X_v) \mathcal{N}_{ii}(X_v)\,.
\end{align*}
Aggregating terms and labeling, we compute
 \begin{align*}
&q_{ii}(X_v) =  \mathcal{A}_{ii}^1(v)  + \mathcal{A}_{ii}^2(v)\\
=\,&  
    \underbrace{\frac{1}{h^2} (K'')_{h}(X_v)  \left(\frac{\partial {\mathcal{D}}_x}{\partial u_i}(X_v)\right)^2  \mathcal{N}(X_v)+ 
{\frac{1}{h}  (K')_{h}(X_v) \frac{\partial^2 {\mathcal{D}}_x}{\partial u_i^2}(X_v) \mathcal{N}(X_v)}}_{:=\mathcal{B}_i^1(v)}  \\
 &+ %
 \underbrace{\frac{2}{h}   (K')_{h}(X_v) \frac{\partial {\mathcal{D}}_x}{\partial u_i}(X_v)\ \mathcal{N}_i(X_v)}_{:=\mathcal{B}_i^2(v)}  + 
\underbrace{ K_{h}(X_v)\mathcal{N}_{ii}(X_v)}_{:=\mathcal{B}_i^3(v)} \,.  
\end{align*}
Similarly, for $1 \le i \neq j \le d$, we compute with $ \mathcal{A}_{ij}^k(v) = \frac{\partial \mathcal{A}_j^k(v)}{\partial u_i}  $ for $k \in \{1,2\}$ that
\begin{align*}
  \mathcal{A}_{ij}^1(v)
 =\,&  \frac{1}{h^2}   (K'')_{h}(X_v)  \frac{\partial {\mathcal{D}}_x}{\partial u_j}(X_v)\frac{\partial {\mathcal{D}}_x}{\partial u_i}(X_v)  \mathcal{N}(X_v) \\
&+  \frac{1}{h} (K')_{h}(X_v) \Big( \frac{\partial^2 {\mathcal{D}}_x}{\partial u_i \partial u_j}(X_v)  \mathcal{N}(X_v) +\frac{\partial {\mathcal{D}}_x}{\partial u_j}(X_v)  \mathcal{N}_i(X_v)\Big) \\
\mathcal{A}_{ij}^2(v) =\,& \frac{1}{h}   (K')_{h}(X_v)  \frac{\partial {\mathcal{D}}_x}{\partial u_i}(X_v)    \mathcal{N}_j(X_v) +    K_{h}(X_v) \mathcal{N}_{ij}(X_v)\,.
\end{align*}
Again aggregating, we obtain for $1 \le i \neq j \le d$ that
\begin{align*}
&q_{ij}(X_v)= \mathcal{A}_{ij}^1(v) + \mathcal{A}_{ij}^2(v)\\
=\,&\underbrace{
\frac{1}{h^2} (K'')_{h}(X_v) \frac{\partial {\mathcal{D}}_x}{\partial u_j}(X_v)\frac{\partial {\mathcal{D}}_x}{\partial u_i}(X_v)  \mathcal{N}(X_v) +  \frac{1}{h} (K')_{h}(X_v) \frac{\partial^2 {\mathcal{D}}_x}{\partial u_i \partial u_j}(X_v)  \mathcal{N}(X_v) }_{:=\mathcal{C}_{ij}^1(v)} \\
&+
\underbrace{ \frac{1}{h} (K')_{h}(X_v)  \left(\frac{\partial {\mathcal{D}}_x}{\partial u_j}(X_v)  \mathcal{N}_i(X_v) +    \frac{\partial {\mathcal{D}}_x}{\partial u_i}(X_v)\mathcal{N}_j(X_v)\right) }_{:=\mathcal{C}_{ij}^2(v)}+   
   \underbrace{ K_{h}(X_v) \mathcal{N}_{ij}(X_v)}_{:=\mathcal{C}_{ij}^3(v)}\,. 
\end{align*}
By inserting these terms into \eqref{Ito-decomposition}, we have
\begin{align*}
     &\frac{1}{\Upsilon(T)} \sum_{k=0}^{n-1} \int_{k\Delta}^{(k+1)\Delta}[q(X_s)  - q(X_{(k-1)\Delta})] ds \\
 =\,& \frac{1}{\Upsilon(T)} \sum_{k=1}^n\int_{(k-1)\Delta}^{k\Delta}  \int_{(k-1)\Delta}^{s} \left(\mu^j  + \frac{1}{2} \sum_{\alpha =1}^r\sigma_\alpha^k \partial_k \sigma_\alpha^j\right)(X_v)(\mathcal{A}_j^1(v) + \mathcal{A}_j^2(v))\, dv  ds \\
&+ \frac{1}{\Upsilon(T)} \sum_{k=1}^n\int_{(k-1)\Delta}^{k\Delta} \int_{(k-1)\Delta}^{s}\sigma^j_\alpha(X_v) (\mathcal{A}_j^1(v) + \mathcal{A}_j^2(v))\, dW^\alpha_v ds \\
&+ \frac{1}{\Upsilon(T)} \sum_{k=1}^n\int_{(k-1)\Delta}^{k\Delta} \int_{(k-1)\Delta}^{s} \frac{1}{2} \sum_{\alpha =1}^r \sigma_\alpha^k(X_v)^2 (\mathcal{B}_k^1(v) +  \mathcal{B}_k^2(v)+ \mathcal{B}_k^3(v) )  dv  ds \\
& + \frac{1}{\Upsilon(T)} \sum_{k=1}^n\int_{(k-1)\Delta}^{k\Delta} \int_{(k-1)\Delta}^{s} \frac{1}{2} \sum_{\alpha =1}^r  (1 - \delta_{kl})(\sigma_\alpha^k \sigma_\alpha^l)(X_v) (\mathcal{C}_{kl}^1(v) + \mathcal{C}_{kl}^2(v)+ \mathcal{C}_{kl}^3(v))dv  ds\,,
\end{align*}
and we label terms in this expansion as
\begin{align*}
\mathcal{I}_1 &:= \sum_{k=1}^n\int_{(k-1)\Delta}^{k\Delta}  \int_{(k-1)\Delta}^{s} \left(\mu^j  + \frac{1}{2} \sum_{\alpha =1}^r\sigma_\alpha^k \partial_k \sigma_\alpha^j\right)(X_v)\mathcal{A}_j^1(v) \, dv  ds \\
\mathcal{I}_2 &:= \sum_{k=1}^n\int_{(k-1)\Delta}^{k\Delta} \int_{(k-1)\Delta}^{s} \frac{1}{2} \sum_{\alpha =1}^r \sigma_\alpha^k(X_v)^2 \mathcal{B}_k^1(v)  dv  ds \\
\mathcal{I}_3 &:=  \sum_{k=1}^n\int_{(k-1)\Delta}^{k\Delta} \int_{(k-1)\Delta}^{s} \frac{1}{2} \sum_{\alpha =1}^r  (1 - \delta_{kl})(\sigma_\alpha^k \sigma_\alpha^l)(X_v) \mathcal{C}_{kl}^1(v) dv  ds \\
\mathcal{I}_4 &:=  \sum_{k=1}^n\int_{(k-1)\Delta}^{k\Delta} \int_{(k-1)\Delta}^{s}\sigma^j_\alpha(X_v) \mathcal{A}_j^1(v) \, dW^\alpha_v ds \,.
\end{align*}
\begin{align*}
\mathcal{J}_1 &:=  \sum_{k=1}^n\int_{(k-1)\Delta}^{k\Delta}  \int_{(k-1)\Delta}^{s} \left(\mu^j  + \frac{1}{2} \sum_{\alpha =1}^r\sigma_\alpha^k \partial_k \sigma_\alpha^j\right)(X_v)\mathcal{A}_j^2(v)\, dv  ds \\
\mathcal{J}_2 &:=  \sum_{k=1}^n\int_{(k-1)\Delta}^{k\Delta} \int_{(k-1)\Delta}^{s} \frac{1}{2} \sum_{\alpha =1}^r \sigma_\alpha^k(X_v)^2 \mathcal{B}_k^3(v)  dv  ds \\
\mathcal{J}_3 &:=  \sum_{k=1}^n\int_{(k-1)\Delta}^{k\Delta} \int_{(k-1)\Delta}^{s} \frac{1}{2} \sum_{\alpha =1}^r  (1 - \delta_{kl})(\sigma_\alpha^k \sigma_\alpha^l)(X_v) \mathcal{C}_{kl}^3(v) dv  ds \\
\mathcal{J}_4 &:= \sum_{k=1}^n\int_{(k-1)\Delta}^{k\Delta} \int_{(k-1)\Delta}^{s}\sigma^j_\alpha(X_v) \mathcal{A}_j^2(v) \, dW^\alpha_v ds \,.
\end{align*}
\begin{align*}
\mathcal{L}:=\,&  \sum_{k=1}^n\int_{(k-1)\Delta}^{k\Delta} \int_{(k-1)\Delta}^{s} \frac{1}{2}  \sum_{\alpha =1}^r \left( \sigma_\alpha^k(X_v)^2 \mathcal{B}_k^2(v) + (1 - \delta_{kl})(\sigma_\alpha^k \sigma_\alpha^l)(X_v) \mathcal{C}_{kl}^2(v)\right) dv  ds \,.
\end{align*}
As a result, the lemma statement can be reformulated to 
\begin{align*}
  \int_{(k-1)\Delta}^{k\Delta} q(X_s)  - q(X_{(k-1)\Delta}) ds
  = \mathcal{L} + \sum_{\ell=1}^4 (\mathcal{I}_\ell + \mathcal{J}_\ell )\,.
\end{align*}
Next, we analyze and control the asymptotic for each $\mathcal{I}_\ell, \mathcal{J}_\ell,$ and $\mathcal{L}$, beginning with $\mathcal{I}_1$. Since ${K}$ is compactly supported, we only focus on the case that $X_v$ is sufficiently close to $x$. We have
\begin{align*}
|\mathcal{I}_1| &\le\Delta \sum_{k=1}^n\int_{(k-1)\Delta}^{k\Delta} \Big| \Big(\mu^j  + \frac{1}{2} \sum_{\alpha =1}^r\sigma_\alpha^k \partial_k \sigma_\alpha^j\Big)(X_v)\Big| \left|\mathcal{A}_j^1(v)\right| \, ds \\
&\le\frac{C\Delta}{h} \sum_{k=1}^n\int_{(k-1)\Delta}^{k\Delta} \left| (K')_{h}(X_v) \right| \Big| \Big(\mu^j  + \frac{1}{2} \sum_{\alpha =1}^r\sigma_\alpha^k \partial_k \sigma_\alpha^j\Big)(X_v)\Big|  \left|  \mathcal{N}(X_v)\right| \, ds \,,
\end{align*}
where in the last equality we use the fact that $\left| \frac{\partial {\mathcal{D}}_x}{\partial u_i}(X_v) \right| \le C$ for some constant $C>0$ (see the proof in Lemma \ref{A2-Analog-boundedness-computation-v2}) when $T$ is sufficiently large and $h$ is sufficiently small.
For each $j = 1,  \cdots, d$, we can apply Lemma \ref{A1-analog} with $f$ set to $|{K}'|$ and $\mathcal{H}$ set to $\left| \big(\mu^j  + \frac{1}{2} \sum_{\alpha =1}^r\sigma_\alpha^k \partial_k \sigma_\alpha^j\big)  \mathcal{N}\right|$ and obtain $\mathcal{I}_1 = O_p\left(\frac{\Delta}{h}\Upsilon(T)\right)$. 
Analogous computations lead to 
\[
\mathcal{I}_2 = O_p\Big(\frac{\Delta}{h^2}\Upsilon(T) \Big)\ \ \mbox{and}\ \ \mathcal{I}_3 = O_p\Big(\frac{\Delta}{h^2}\Upsilon(T)\Big)\,.
\]

Next, rewrite
\begin{align*}
\mathcal{I}_4 &= \sum_{k=1}^n\int_{(k-1)\Delta}^{k\Delta} \int_{(k-1)\Delta}^{s}{\sigma^j_\alpha} \mathcal{A}_j^1(v) \, dW^\alpha_v ds \\
&=  \frac{1}{h} \sum_{k=1}^n\int_{(k-1)\Delta}^{k\Delta} \int_{(k-1)\Delta}^{s} (K')_{h}(X_v)   \frac{\partial {\mathcal{D}}_x}{\partial u_j}(X_v) (\sigma^j_\alpha\mathcal{N})(X_v)  \, dW^\alpha_v ds\,. 
\end{align*}
For each $j = 1, \cdots, d$, since $\left| \frac{\partial {\mathcal{D}}_x}{\partial u_i}(X_s) \right|\leq C$, we can apply Lemma \ref{stoch-prob-order-computation-lemma} with $f$ set to $(K')^2$ and $q$ set to $(\sigma^j_\alpha\mathcal{N})^2$ to conclude $\mathcal{I}_4 = O_p\Big(\frac{\sqrt{\Upsilon(T)}\Delta}{h^{(d+2)/2} }\Big)$.
With the same approaches, it follows that 
\begin{align*}
\mathcal{J}_1 =\,& O_p\left( \Delta \Upsilon(T) \right),\ \mathcal{J}_2 = O_p\left( \Delta \Upsilon(T) \right),\ \mathcal{J}_3 =O_p\left( \Delta \Upsilon(T) \right),\\  
\mathcal{J}_4 =\,& O_p\Big( \frac{\sqrt{\Upsilon(T)}\Delta}{h^{d/2} }\Big),\ \mbox{and}\ \ \mathcal{L} = O_p \Big(\frac{\Delta}{h} \Upsilon(T) \Big)\,.
\end{align*}
 Combining all these we obtain
   \begin{align*}
   \frac{1}{\Upsilon(T)} \sum_{k=0}^{n-1} \int_{k\Delta}^{(k+1)\Delta}[q(X_s)  - q(X_{(k-1)\Delta})] ds
=O_p \Big( \frac{\Delta}{h^2}\Big)\,, 
   \end{align*}
where we use Assumption \ref{reg-ass2} that $(h^{d-2} \Upsilon(T))^{-1} = o(1)$, and so we finish the first claim.

Moreover, collecting the bounds of all terms in the decomposition, we can summarize that our argument bounded the difference 
\begin{align*}
    \left| \sum_{k=1}^n \int_{k\Delta}^{(k+1)\Delta} [q(X_s)  - q(X_{(k-1)\Delta})]ds\right| \le |A_T| + |M_T|
\end{align*}
where $A_T:=\mathcal{I}_1  + \mathcal{I}_2 + \mathcal{I}_3 + \mathcal{J}_1 + \mathcal{J}_2 + \mathcal{J}_3 + \mathcal{L}$ contains predictable process and $M_T:=\mathcal{I}_4 +  \mathcal{J}_4$ contains continuous martingale. Then, use Lemma \ref{A1-analog} on $A_T$ and $[M_T]$ to conclude the probability order that $A_T = O_p\Big( \frac{ \Upsilon(T)\Delta}{h^2}\Big)$ and $[M_T] = O_p\Big( \frac{\Upsilon(T)\Delta^2}{h^{d+2}}\Big)$,
and hence that $ \mathbb{E}^M_\lambda|A_T| = O\Big(\frac{\Upsilon(T)\Delta}{h^2} \Big)$ and $ \mathbb{E}^M_\lambda[M_T] = O\Big(\frac{\Upsilon(T)\Delta^2}{h^{d+2}} \Big)$. So, by Cauchy-Schwarz and the It\^o isometry, $ \mathbb{E}^M_\lambda(|M_T|) = O\left(\frac{\sqrt{\Upsilon(T)}\Delta}{h^{(d+2)/2}} \right)$. 
By Assumption \ref{reg-ass2}, it follows that
\begin{align*}
 \mathbb{E}^M_\lambda \left|\frac{1}{\Upsilon(T)}\sum_{k=1}^n \int_{k\Delta}^{(k+1)\Delta} [q(X_s)  - q(X_{(k-1)\Delta})]ds\right| = O\left(\frac{\Delta }{h^2} \right)
\end{align*}
and hence the second claim. 
\end{proof}

Lemma \ref{A2-analog} quantifies the difference between a continuous integral and its ``double discretization'', in which both the kernel and the function arguments replace $X_s$ by $X_{k\Delta}$. In the analysis of our estimators, we also encounter a ``single discretization'', where discretization appears only in the kernel, while function increments such as $f(X_{(k+1)\Delta}) - f(X_{k\Delta})$ are represented via continuous stochastic integrals. To streamline the subsequent analysis, we isolate this single-discretization setting in the following lemma.

\begin{lemma}\label{single-disc-difference}
With the same setup of Lemma \ref{A2-analog}, when $T=n\Delta$ is sufficiently large,
  \begin{align}
      \frac{1}{\Upsilon(T)h^d} \sum_{k=0}^{n-1} \int_{k\Delta}^{(k+1)\Delta}  {K}\left(\frac{
\mathcal{D}_x(X_{k \Delta}) }{h}\right)(\mathcal{N}(X_{k\Delta})-\mathcal{N}(X_{s})) ds &= O_p\left( \Delta  \right) \,.
  \end{align}
\end{lemma}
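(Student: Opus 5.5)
The plan is to expand the increment $\mathcal{N}(X_s)-\mathcal{N}(X_{k\Delta})$ for $s\in[k\Delta,(k+1)\Delta]$ with It\^o's formula \eqref{ito-integral}. This splits the quantity into a drift part and a martingale part. Because the kernel is evaluated only at the left endpoint $X_{k\Delta}$, it is $\mathcal{F}_{k\Delta}$-measurable, so it can be moved inside both integrals. No derivatives of the kernel appear: only $\mathcal{N}$ is differentiated. This is why the rate is $\Delta$ rather than $\Delta/h^2$ as in Lemma \ref{A2-analog}. Writing $K_h(\cdot)=h^{-d}K(\mathcal{D}_x(\cdot)/h)$, we get
\begin{align*}
\sum_{k}\int_{k\Delta}^{(k+1)\Delta}K_h(X_{k\Delta})(\mathcal{N}(X_{s})-\mathcal{N}(X_{k\Delta}))ds
=\mathcal{D}_T+\mathcal{M}_T\,,
\end{align*}
where
\begin{align*}
\mathcal{D}_T&=\sum_k\int_{k\Delta}^{(k+1)\Delta}\int_{k\Delta}^{s}K_h(X_{k\Delta})\mu_{\mathcal N}(X_v)\,dv\,ds\,,\\
\mathcal{M}_T&=\sum_k\int_{k\Delta}^{(k+1)\Delta}\int_{k\Delta}^{s}K_h(X_{k\Delta})\sigma_{\mathcal N,\alpha}(X_v)\,dW^\alpha_v\,ds\,.
\end{align*}

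For the drift term, the inner integral is at most $\Delta\|\mu_{\mathcal N}\|_\infty$ in absolute value, which is finite because the relevant $X_v$ stay in a bounded neighborhood (or by the $C^2$ bounds). Hence $|\mathcal{D}_T|\le C\Delta\cdot\Delta\sum_k K_h(X_{k\Delta})$. The Riemann sum $\Delta\sum_k K_h(X_{k\Delta})$ equals $\hat L^{(\texttt{o})}(x)$. By Lemma \ref{A2-analog} with $\mathcal N\equiv1$ and Lemma \ref{A1-analog}, it is $O_p(\Upsilon(T))$, since $\Delta/h^2\to0$ or at least is bounded in this regime. So $\mathcal{D}_T/\Upsilon(T)=O_p(\Delta)$.

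For the martingale term, I would apply the integration by parts \eqref{special-case-stoch-fub2} of Lemma \ref{integration-by-parts} on each block. This rewrites $\mathcal{M}_T$ as $\sum_k\int_{k\Delta}^{(k+1)\Delta}((k+1)\Delta-v)K_h(X_{k\Delta})\sigma_{\mathcal N,\alpha}(X_v)\,dW_v^\alpha$, which is a continuous martingale in $T$ because the integrand is predictable. Its quadratic variation is bounded by
\[
\Delta^2\sum_k\int_{k\Delta}^{(k+1)\Delta}K_h(X_{k\Delta})^2|\sigma_{\mathcal N}(X_v)|^2dv\le C\Delta^2h^{-d}\cdot\Delta\sum_kh^{-d}K^2(\mathcal{D}_x(X_{k\Delta})/h)\,.
\]
Applying Lemmas \ref{A2-analog} and \ref{A1-analog} with kernel $K^2$ gives $[\mathcal{M}_T]=O_p(\Delta^2\Upsilon(T)/h^d)$. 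Then, as in Lemma \ref{stoch-prob-order-computation-lemma}, BDG (Theorem \ref{BDG-ineq}) or the convergence in Proposition \ref{thm3.16_limit_theorems_null} yields $\mathcal{M}_T=O_p(\Delta\sqrt{\Upsilon(T)/h^d})$. Dividing by $\Upsilon(T)$ gives $O_p(\Delta(h^d\Upsilon(T))^{-1/2})$, and this is $o_p(\Delta)$ because $h^d\Upsilon(T)\to\infty$, which follows from Assumption \ref{reg-ass2} since $r\ge1$ gives $h^{2d/r}\le h^{d}$ only when $r\le2$. If $r>2$, I would instead invoke the stronger bandwidth conditions of the theorems where this lemma is used, namely $h^d\Upsilon(T)\to\infty$.

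The main obstacle is justifying the transfer from the discretized kernel sums to the continuous additive functionals, that is, the $O_p(\Upsilon(T))$ control of $\Delta\sum_kK_h(X_{k\Delta})$. This needs Lemma \ref{A2-analog}, whose error $\Delta/h^2$ must stay bounded. I would state explicitly that $\Delta/h^2=O(1)$ under the standing bandwidth conditions. Alternatively, I would bound the sum in expectation using the occupation density estimate of Theorem \ref{occ-density-theorem}-type arguments, which give $\mathbb{E}\hat L^{(\texttt{o})}=\Theta(\Upsilon(T))$, and then apply Markov's inequality.
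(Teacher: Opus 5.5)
Your proposal is correct and follows the paper's proof essentially step for step. Both arguments apply It\^o's formula on each block with the kernel frozen at $X_{k\Delta}$. Both bound the drift part by $C\Delta^2\sum_k K_h(X_{k\Delta})=O_p(\Delta\Upsilon(T))$ via Lemmas \ref{A2-analog} and \ref{A1-analog}. Both handle the martingale part by block-wise integration by parts plus a quadratic-variation bound, giving $O_p(\Delta\sqrt{\Upsilon(T)}/h^{d/2})$. The only cosmetic difference is that the paper splits the drift in coordinates into $\mathcal{S}_1,\mathcal{S}_3,\mathcal{S}_4$, whereas you use the single intrinsic $\mu_{\mathcal N}$.

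You also point out that the argument needs $\Delta/h^2=O(1)$ and $h^d\Upsilon(T)\to\infty$, the latter implied by Assumption \ref{reg-ass2} only when $r\le 2$. This is accurate: the paper's proof uses exactly these two conditions without listing them in the lemma's hypotheses.
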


\begin{proof}    
Denote $ K_{h}(\cdot):=\frac{1}{h^{d}}{K}\left(\frac{{\mathcal{D}}_x(\cdot)}{h} \right)$. Applying Lemma \ref{integration-by-parts} to the decomposition of $\mathcal{N}(X_s)-\mathcal{N}(X_{k \Delta})$ as in \eqref{Ito-decomposition}, we can compute
\begin{align*}
&\underbrace{\sum_{k=0}^{n-1} \int_{k\Delta}^{(k+1)\Delta}   K_{h}(X_{k \Delta}) (\mathcal{N}(X_s)-\mathcal{N}(X_{k \Delta})) ds}_{\mathcal{S}}
\\
=\,& \underbrace{ \sum_{k=0}^{n-1} \int_{k \Delta}^{(k+1)\Delta}  K_{h}(X_{k \Delta})((k+1)\Delta-s)\left(\mu^j  + \frac{1}{2} \sum_{\alpha =1}^r\sigma_\alpha^k \partial_k \sigma_\alpha^j\right)\mathcal{N}_j(X_s) \, ds}_{\mathcal{S}_1}\\
&+ \underbrace{ \sum_{k=0}^{n-1} \int_{k \Delta}^{(k+1)\Delta} K_{h}(X_{k \Delta}) ((k+1)\Delta - s){\sigma^j_\alpha} \mathcal{N}_j(X_s)\, dW^\alpha_s}_{\mathcal{S}_2}
   \\
&+ \underbrace{\frac{1}{2} \sum_{k=0}^{n-1} \int_{k \Delta}^{(k+1)\Delta}    K_{h}(X_{k \Delta}) ((k+1)\Delta-s)\frac{1}{2} \sum_{\alpha =1}^r (\sigma_\alpha^k)^2 \mathcal{N}_{kk}(X_s) ds}_{\mathcal{S}_3} 
   \\
&+\underbrace{ \frac{1}{2} \sum_{k=0}^{n-1} \int_{k \Delta}^{(k+1)\Delta}    K_{h}(X_{k \Delta}) ((k+1)\Delta-s)\frac{1}{2} \sum_{\alpha =1}^r  (1 - \delta_{kl})\sigma_\alpha^k \sigma_\alpha^l \mathcal{N}_{kl}(X_s) ds}_{\mathcal{S}_4}.  
\end{align*}
By the regularity and growth assumptions, there exists a finite constant $C>0$ due to the regularity assumption, such that $\mathcal{S}_1$ is controlled by
\begin{align*}
|  \mathcal{S}_1 | 
&\le  C \sum_{k=0}^{n-1} \int_{k \Delta}^{(k+1)\Delta}    K_{h}(X_{k \Delta}) ((k+1)\Delta-s)  \, ds \leq  C\Delta^2  \sum_{k=0}^{n-1} K_{h}(X_{k \Delta}) \\
  & = C\Delta \int_0^T  K_{h}(X_s)\, ds+ O_p\left(\frac{\Delta^2 \Upsilon(T)}{h^2} \right)= O_p(\Delta \Upsilon(T))\,,
\end{align*}
where the first equality comes from Lemma \ref{A2-analog} with $K$ set to $\mathcal{K}$ and $\mathcal{N}$ set to $1$ and the last equality comes from Lemma \ref{A1-analog} with $K$ set to $\mathcal{K}$ and $\mathcal{H}$ set to $1$ and $\frac{\Delta}{h^2}=o(1)$. 
The same argument applies to $\mathcal{S}_3$ and $\mathcal{S}_4$. 
Finally, $\mathcal{S}_2$ can be evaluated just as in Lemma \ref{stoch-prob-order-computation-lemma} to obtain a probability order of $O_p\left(\frac{\Delta}{h^{d/2}}\Upsilon(T)^{1/2} \right)$. So, overall,  $|\mathcal{S}|=O_p\left( \Delta \Upsilon(T)\right) + O_p\left( \frac{\Delta}{h^{d/2} \Upsilon(T)^{1/2}}\right)$.
As we assume that $h^d \Upsilon(T) \rightarrow \infty$, it follows that $\frac{\Delta \Upsilon(T)^{1/2}}{h^{d/2}} = o_p(\Delta  \Upsilon(T))$. Thus, $|\mathcal{S}|=O_p\left( \Delta \Upsilon(T)\right)$ as claimed.  
\end{proof}

With Lemma \ref{single-disc-difference}, we provide an analog of the result of Lemma \ref{A2-analog} when the kernel is discretized but not the function of interest.

\begin{lemma} \label{kernel-disc-corollary}
With the same setup of Lemma \ref{A2-analog}, when $T=n\Delta$ is sufficiently large,
\begin{align*}
   &\frac{1}{{h^d}\Upsilon(T)} \sum_{k=0}^{n-1} \int_{k\Delta}^{(k+1)\Delta}  \left[{K} \left(\frac{\mathcal{D}_x(X_{k\Delta}) }{h}\right)-{K} \left(\frac{\mathcal{D}_x(X_{s}) }{h}\right)\right]\mathcal{N}(X_{s})  \, ds = {O_p \left( \frac{\Delta}{h^2}\right)} \,.
\end{align*}
\end{lemma}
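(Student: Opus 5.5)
The plan is to reduce the statement to the two preceding lemmas by adding and subtracting the fully discretized sum. Write $K_h(\cdot):=\frac{1}{h^d}K\left(\frac{\mathcal{D}_x(\cdot)}{h}\right)$ and split the integrand as
\begin{align*}
\big[K_h(X_{k\Delta})-K_h(X_s)\big]\mathcal{N}(X_s)
=\,&K_h(X_{k\Delta})\big(\mathcal{N}(X_s)-\mathcal{N}(X_{k\Delta})\big)\\
&+\big[K_h(X_{k\Delta})\mathcal{N}(X_{k\Delta})-K_h(X_s)\mathcal{N}(X_s)\big]\,.
\end{align*}
After summing over $k$, integrating over $[k\Delta,(k+1)\Delta]$ and dividing by $\Upsilon(T)$, the left side becomes exactly the quantity in the statement. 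The first piece on the right is the single-discretization term of Lemma \ref{single-disc-difference} (up to sign), hence $O_p(\Delta)$. The second piece is
\[
\frac{\Delta}{h^d\Upsilon(T)}\sum_{k=0}^{n-1}K\left(\frac{\mathcal{D}_x(X_{k\Delta})}{h}\right)\mathcal{N}(X_{k\Delta})-\frac{1}{h^d\Upsilon(T)}\int_0^T K\left(\frac{\mathcal{D}_x(X_s)}{h}\right)\mathcal{N}(X_s)\,ds\,.
\]
This is precisely the double-discretization difference controlled by the first claim of Lemma \ref{A2-analog}, so it is $O_p(\Delta/h^2)$.

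Combining the two bounds gives $O_p(\Delta)+O_p(\Delta/h^2)=O_p(\Delta/h^2)$, since $h\to0$ by Assumption \ref{reg-ass2}. Both lemmas are stated under the same setup as the present lemma ($K\in C^2$ compactly supported and nonnegative, $\mathcal{N}\in C^2(M)$, Assumptions \ref{manifold-ass}--\ref{reg-ass2}), so the hypotheses transfer directly, and neither lemma requires $\mathcal{N}\geq0$.

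The only point needing care is Lemma \ref{single-disc-difference}. Its proof invokes Lemma \ref{A2-analog} together with $\Delta/h^2=o(1)$ and $h^d\Upsilon(T)\to\infty$ to absorb its remainder terms. If these conditions are not available, I would instead keep its raw bound $O_p(\Delta+\Delta^2/h^2+\Delta h^{-d/2}\Upsilon(T)^{-1/2})$. Each of these terms is still dominated by $\Delta/h^2$, because $\Delta^2/h^2\le\Delta/h^2$ for small $\Delta$ and $h^{-d/2}\Upsilon(T)^{-1/2}\to0$ by Assumption \ref{reg-ass2} with $r\geq1$. So the conclusion holds without any extra assumption, and I expect no real obstacle beyond this bookkeeping.
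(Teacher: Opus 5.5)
Your argument is exactly the paper's proof: the same add-and-subtract of $K_h(X_{k\Delta})\mathcal{N}(X_{k\Delta})$, with the single-discretization piece sent to Lemma~\ref{single-disc-difference} and the double-discretization piece to Lemma~\ref{A2-analog}. One small correction to your fallback remark: Assumption~\ref{reg-ass2} only yields $h^d\Upsilon(T)\to\infty$ when $r\le 2$, not for all $r\ge1$, so absorbing $\Delta h^{-d/2}\Upsilon(T)^{-1/2}$ into $\Delta/h^2$ actually needs $h^{d-4}\Upsilon(T)$ bounded below. That holds automatically for $d\le 4$ or $r\le 2$ and is otherwise an extra condition (the paper's proof of Lemma~\ref{single-disc-difference} likewise assumes $h^d\Upsilon(T)\to\infty$ and $\Delta/h^2\to 0$ there).
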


\begin{proof}
 Denote $ K_{h}(\cdot):=\frac{1}{h^{d}}{K}\left(\frac{{\mathcal{D}}_x(\cdot)}{h} \right)$.   We may decompose
    \begin{align*}
  &\frac{1}{\Upsilon(T)} \sum_{k=0}^{n-1} \int_{k\Delta}^{(k+1)\Delta}  \left[ K_{h}(X_{k\Delta}) - K_{h}(X_s)\right]\mathcal{N}(X_{s})  \, ds\nonumber   \\
   =\,&  \underbrace{\frac{1}{\Upsilon(T)} \sum_{k=0}^{n-1} \int_{k\Delta}^{(k+1)\Delta}   K_{h}(X_{k\Delta}) (\mathcal{N}(X_{s})-\mathcal{N}(X_{k \Delta}) )  \, ds}_{:=\mathcal{S}_1}
   \\
   &+ \underbrace{\frac{1}{\Upsilon(T)} \sum_{k=0}^{n-1} \int_{k\Delta}^{(k+1)\Delta}  \left[ K_{h}(X_{k \Delta})\mathcal{N}(X_{k\Delta})- K_{h}(X_s)\mathcal{N}(X_{s})\right] ds}_{:=\mathcal{S}_2} \,.
    \end{align*}
    Then, apply Lemma \ref{single-disc-difference} to $\mathcal{S}_1$ and Lemma \ref{A2-analog} to $\mathcal{S}_2$.
\end{proof}

In addition to Lemma \ref{single-disc-difference}, which treats additive functionals, we also need to handle the single-discretization case for stochastic integrals. The following Lemma  completes the collection of technical results for the main proofs of the estimators.

\begin{lemma}\label{discrete-approximation-martingale-cross-quad-var}
Let $ K\in C^3(\mathbb{R})$ be non-negative and compactly supported on $[0,L] \subset [0,\infty)$. Suppose Assumptions \ref{manifold-ass}, \ref{manifold-ass2}, \ref{reg-ass}, and \ref{lebesgue-dens-ass} hold.
Consider two continuous martingales $Z^{(1)}_T$ and $Z^{(2)}_T$ of the form
\begin{align*}
 Z^{(i)}_T= &\, \frac{1}{h^{d/2}} \sum_{k=0}^{n-1}  \sqrt{K\left(\frac{
\mathcal{D}_x(X_{k \Delta}) }{h}\right)}\int_{k \Delta}^{(k+1)\Delta}q^{(i)}(X_s) dW_s \,,
\end{align*}
where $i=1,2$ and $q^{(1)},q^{(2)} \in C^2(M)$. Then, when $T=n\Delta$ is sufficiently large, we have
\begin{align*}
\frac{[Z^{(1)},Z^{(2)}]_T}{\Upsilon(T)} = \frac{1}{h^d \Upsilon(T) }  \int_0^T K\left(\frac{
\mathcal{D}_x(X_{s}) }{h}\right)q^{(1)}(X_s)q^{(2)}(X_s) ds + O_p\left( \frac{\Delta}{h^2}\right) \,.
\end{align*}
\end{lemma}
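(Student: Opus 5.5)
The plan is to compute the cross quadratic variation exactly and then compare it with its continuous counterpart using Lemma \ref{kernel-disc-corollary}. Each $Z^{(i)}_T$ is a sum of stochastic integrals over the disjoint intervals $[k\Delta,(k+1)\Delta)$. On each interval the integrand $\sqrt{K(\mathcal{D}_x(X_{k\Delta})/h)}\,q^{(i)}(X_s)$ is predictable, since the kernel factor is $\mathcal{F}_{k\Delta}$-measurable. Write $Z^{(i)}_t$ as the stochastic integral $\int_0^t H^{(i)}_s\,dW_s$ with $H^{(i)}_s:=h^{-d/2}\sqrt{K(\mathcal{D}_x(X_{k\Delta})/h)}\,q^{(i)}(X_s)$ for $s\in[k\Delta,(k+1)\Delta)$. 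Then, exactly as in the proof of Lemma \ref{stoch-prob-order-computation-lemma}, the cross terms between different intervals vanish. We obtain (with the Brownian components paired in the usual way, i.e.\ summing over $\alpha$ if $q^{(i)}$ carry a Brownian index)
\begin{align*}
[Z^{(1)},Z^{(2)}]_T=\frac{1}{h^d}\sum_{k=0}^{n-1}\int_{k\Delta}^{(k+1)\Delta}K\Big(\frac{\mathcal{D}_x(X_{k\Delta})}{h}\Big)q^{(1)}(X_s)q^{(2)}(X_s)\,ds\,.
\end{align*}
This is an identity; no approximation is involved at this stage.

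Next, set $\mathcal{N}:=q^{(1)}q^{(2)}\in C^2(M)$. This function is not necessarily nonnegative, but Lemma \ref{A2-analog}, and hence Lemma \ref{kernel-disc-corollary}, allows signed $\mathcal{N}$. Dividing by $\Upsilon(T)$ and subtracting the continuous integral $\frac{1}{h^d\Upsilon(T)}\int_0^T K(\mathcal{D}_x(X_s)/h)\mathcal{N}(X_s)\,ds$ leaves exactly
\[
\frac{1}{h^d\Upsilon(T)}\sum_k\int_{k\Delta}^{(k+1)\Delta}\Big[K\Big(\frac{\mathcal{D}_x(X_{k\Delta})}{h}\Big)-K\Big(\frac{\mathcal{D}_x(X_s)}{h}\Big)\Big]\mathcal{N}(X_s)\,ds.
\]
Lemma \ref{kernel-disc-corollary} bounds this difference by $O_p(\Delta/h^2)$, which yields the claim.

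The only subtle point is checking the hypotheses of Lemma \ref{kernel-disc-corollary}, which inherits the setting of Lemma \ref{A2-analog}. Its statement includes Assumption \ref{reg-ass2}, and its internal step uses $h^{d-2}\Upsilon(T)\to\infty$ and $\Delta/h^2=o(1)$. These are not listed in the present lemma, so I would either assume them implicitly, as in every application in the main theorems, or note that the bound is only meaningful when $\Delta/h^2\to0$. Beyond this, I expect no genuine obstacle. The $C^2$ regularity of $q^{(i)}$ gives $\mathcal{N}\in C^2$, which is what Itô's formula requires inside Lemma \ref{A2-analog}, and $K\in C^3$ is more than enough.
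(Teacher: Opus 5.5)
Your proposal is correct and matches the paper's argument. The paper also writes $[Z^{(1)},Z^{(2)}]_T$ exactly, with the kernel frozen at $X_{k\Delta}$, and then bounds the error with Lemma \ref{single-disc-difference} and Lemma \ref{A2-analog} separately; that is precisely the decomposition packaged in Lemma \ref{kernel-disc-corollary}, which you invoke directly. Your remark that Assumption \ref{reg-ass2}, and hence $\Delta/h^2\to 0$, enters implicitly through these lemmas is accurate, since the paper's own proof relies on them in the same way.
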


\begin{proof}  
By a direct calculation, we have
\begin{align*}
[Z^{(1)},Z^{(2)}]_t =\,& \frac{1}{h^d} \sum_{k=0}^{m-1} \int_{k\Delta}^{(k+1)\Delta} \mathcal{K}\left(\frac{
\mathcal{D}_x({ X_{k \Delta})}}{h}\right) q^{(1)}(X_s)q^{(2)}(X_s) ds \\
&+ \frac{1}{h^d} \int_{m\Delta}^t  \mathcal{K}\left(\frac{
\mathcal{D}_x(X_{m \Delta}) }{h}\right) q^{(1)}(X_s)q^{(2)}(X_s) ds \,.
\end{align*}
We can approximate this quantity by replacing the continuous process $X_s$ with discrete observations $X_{k\Delta}$. Define
\begin{align*}
\widehat{[Z^{(1)},Z^{(2)}]_T} &:= \frac{1}{h^d} \sum_{k=0}^{n-1} \int_{k\Delta}^{(k+1)\Delta}  \mathcal{K}\left(\frac{
\mathcal{D}_x(X_{k \Delta}) }{h}\right) q^{(1)}(X_{k\Delta})q^{(2)}(X_{k\Delta}) ds
\end{align*}
and consider
\begin{align*}
[Z^{(1)},Z^{(2)}]_T = \widehat{[Z^{(1)},Z^{(2)}]_T} + ( [Z^{(1)},Z^{(2)}]_T - \widehat{[Z^{(1)},Z^{(2)}]_T} )\,.
\end{align*}
First,
\begin{align*}
&[Z^{(1)},Z^{(2)}]_T - \widehat{[Z^{(1)},Z^{(2)}]_T}
\\
=\,& \frac{1}{h^d} \sum_{k=0}^{n-1} \int_{k\Delta}^{(k+1)\Delta}
\mathcal{K}\left(\frac{
\mathcal{D}_x(X_{k \Delta}) }{h}\right)
\Big(q^{(1)}(X_s)q^{(2)}(X_s)-q^{(1)}(X_{k \Delta})q^{(2)}(X_{k \Delta})\Big) ds \,, 
\end{align*}
which by Lemma \ref{single-disc-difference} with $\mathcal{N}=q^{(1)}q^{(2)}$, we obtain
\begin{align*}
  [Z^{(1)},Z^{(2)}]_T - \widehat{[Z^{(1)},Z^{(2)}]_T}
  = O_p\left(\Delta \nu(T) \right).
\end{align*}
Moreover, by Lemma \ref{A2-analog},
\begin{align*}
  \widehat{[Z^{(1)},Z^{(2)}]_T}
  &= \frac{1}{h^d} \int_{0}^{T} \mathcal{K}\left(\frac{
   {\mathcal{D}}_x(X_{s}) }{h}\right) q^{(1)}(X_{s})q^{(2)}(X_{s}) ds + O_p\left( \frac{\Delta \nu(T)}{h^2} \right).
\end{align*}
So, overall,
\begin{align*}
[Z^{(1)},Z^{(2)}]_T
= \frac{1}{h^d} \int_{0}^{T} \mathcal{K}\left(\frac{
   {\mathcal{D}}_x({X}_{s}) }{h}\right) q^{(1)}(X_{s})q^{(2)}(X_{s}) ds  + O_p\left( \frac{\Delta \nu(T)}{h^2} \right),
\end{align*}
as claimed. 
\end{proof}

\section{Proof of Theorem \ref{occ-density-theorem}}\label{Section lemmas about occupation density}

With the technical preparation in Section \ref{section technical lemma proofs}, we can easily prove Theorem \ref{occ-density-theorem}.

\begin{proof}
By Lemma \ref{A2-analog} with { $\mathcal{N}\equiv 1$}, when $T$ is sufficiently large, we immediately have 
\begin{align*}
\frac{\hat{L}^{(\texttt{o})}(x)}{\Upsilon(T)} &= \frac{L^{(\texttt{o})}(x)}{\Upsilon(T)} + O_p\left( \frac{\Delta}{h^2}\right)\,,
\end{align*}
where $L^{(\texttt{o})}(x):=\frac{1}{h^d} \int_0^T  \mathcal{K} \big(\frac{\mathcal{D}_x(X_s)}{h}\big) ds$.  
As we have assumed that $\frac{\Delta}{h^2} = o(1)$, and we have $ \Upsilon(T)^{-1}L^{(\texttt{o})}(x) \xrightarrow[]{\hspace{0.1cm}d\hspace{0.1cm}} g_\alpha(1)p_X(x) $ by Lemma \ref{A1-analog} with { $\mathcal{H}=1$}, we obtain \eqref{hat{L}-prob-order}. Moreover, by Lemma \ref{A2-analog}, we have when $T$ is sufficiently large, 
\begin{align*}
    \frac{1}{\Upsilon(T)}\mathbb{E}^M_\lambda\left| \hat{L}^{(\texttt{o})}(x) - L^{(\texttt{o})}(x)\right| = O\left(\frac{\Delta }{h^2} \right) 
\end{align*}
and as in Lemma \ref{A1-analog}, 
we have that $\mathbb{E}^M_\lambda (L^{(\texttt{o})}(x)) = O(\Upsilon(T))$. 
 \end{proof}

Note that when assumptions of Theorem \ref{occ-density-theorem} hold, since $g_\alpha(1) \neq 0$ almost surely, a ratio quantity satisfies
\begin{align}
  \frac{\Upsilon(T)}{\hat L^{(\texttt{o})}(x)}  \xrightarrow[]{\hspace{0.1cm}d\hspace{0.1cm}} ( g_\alpha(1)p_X(x) )^{-1}\label{reciprocal-lemma-nu(T)-hatL}\,,
    \end{align}
which follows from \eqref{hat{L}-prob-order} together with the continuous mapping theorem applied to the function $s\mapsto s^{-1}$.

\section{Preparation for the proof of Theorems \ref{main theorem drift}}\label{section appendix preparation for main theorem proof}

\subsection{Generalized Drift Estimator}

\begin{theorem}
\label{generalized-drift-est}
Suppose Assumptions \ref{manifold-ass}-\ref{reg-ass2} hold. 
Fix $x\in M$. Take $f\in C^2(M,\, \mathbb{R}^p)$ with $\texttt{supp} f \subset B_r(x)$ for $r < \texttt{inj}_x(M)$. { Define} 
\begin{align}
   \hat{\mu}_f(x)  := &\frac{1}{\Delta} \frac{\sum_{k=0}^{n-1}  \mathcal{K} \left(\frac{\mathcal{D}_x(X_{k \Delta}) }{h}\right)(f(X_{(k+1)\Delta})-f(X_{k\Delta}))}{\sum_{k=0}^{n-1} \mathcal{K}\left(\frac{\mathcal{D}_x(X_{k\Delta}) }{h}\right)} \label{definition generalized drift estimator}\,,
   \end{align}

\begin{equation}
{ \hat L(x)}:=\frac{\Delta}{h^d}\sum_{k=0}^{n-1} \mathcal{K} \left(\frac{\mathcal{D}_x(X_{k\Delta}) }{h}\right)\,,\label{definition Lepsilon}
\end{equation}
\begin{align}
 {B}^{\mu,\texttt{o}}_{f}(x) &:=\kappa_{2,0} \Big(\nabla^M {\mu}_{f}(x)\cdot \nabla^M \log(p_X(x))
+ \frac{1}{2}\Delta^M   {\mu}_{f}(x) \Big) \label{drift-bias}\,,
\end{align}
 where $\mu_f$ is defined in \eqref{ito-integral}, $\Delta^M$ is the Laplace-Beltrami operator on $M$, and $\nabla^M$ is the associated covariant derivative, 
and
\begin{align*}
\pi_{f}(x) := \sum_{\alpha=1}^r \sigma_{f,\alpha}(x)\sigma_{f,\alpha}(x)^\top\in \mathbb{R}^{p\times p}\,, 
\end{align*}
where $\sigma_{f,\alpha}$ is defined in \eqref{ito-integral}.
Suppose $h^d \Upsilon(T) \xrightarrow{} \infty$, $h^{d+4} \Upsilon(T) \xrightarrow[]{} C>0$, $C$ a constant, and $\frac{\Delta}{h^2}\sqrt{h^d \Upsilon(T)} \xrightarrow[]{} 0$. Since $\mathcal{D}_x(x')=\|\iota(x)-\iota(x')\|_{\mathbb{R}^p}$,  we have
\begin{align*}
\sqrt{h^d   { \hat{L}^{\texttt{(o)}}(x)}}  \big(\hat\mu_f(x) - \mu_f (x) - h^2{B}_f^{\mu,\texttt{o}}(x) \big)
\xrightarrow[]{\hspace{0.1cm}d\hspace{0.1cm}}  N(\mathbf{0}, \kappa_{2,0} \pi_f(x))\,.
\end{align*}
\end{theorem}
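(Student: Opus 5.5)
The plan is to write the increment by It\^o's formula \eqref{ito-integral}:
\[
f(X_{(k+1)\Delta})-f(X_{k\Delta})=\int_{k\Delta}^{(k+1)\Delta}\mu_f(X_s)\,ds+\int_{k\Delta}^{(k+1)\Delta}\sigma_{f,\alpha}(X_s)\,dW^\alpha_s .
\]
Multiplying by $\sqrt{h^d\hat L^{(\texttt{o})}(x)}$ and recalling that the denominator of $\hat\mu_f$ equals $\frac{h^d}{\Delta}\hat L(x)$, we obtain
\[
\sqrt{h^d\hat L}\,(\hat\mu_f-\mu_f(x))=\frac{\mathcal{D}_T}{\sqrt{\hat L/\Upsilon(T)}}+\frac{\mathcal{M}_T}{\sqrt{\hat L/\Upsilon(T)}} .
\]
Here the drift part is
\[
\mathcal{D}_T:=\sqrt{\frac{h^d}{\Upsilon(T)}}\,\frac{1}{h^d}\sum_k\mathcal{K}\Big(\tfrac{\mathcal{D}_x(X_{k\Delta})}{h}\Big)\int_{k\Delta}^{(k+1)\Delta}\big(\mu_f(X_s)-\mu_f(x)\big)\,ds ,
\]
and the martingale part is
\[
\mathcal{M}_T:=\frac{1}{\sqrt{\Upsilon(T)}}\,h^{-d/2}\sum_k\mathcal{K}\Big(\tfrac{\mathcal{D}_x(X_{k\Delta})}{h}\Big)\int_{k\Delta}^{(k+1)\Delta}\sigma_{f,\alpha}\,dW^\alpha_s .
\]
Theorem \ref{occ-density-theorem} gives $\hat L/\Upsilon(T)\to g_\alpha(1)p_X(x)$ in distribution. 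The proof then splits into a bias analysis of $\mathcal{D}_T$ and a mixed-normal CLT for $\mathcal{M}_T$, and the two are assembled jointly with the denominator.

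For the bias, first replace the discretized kernel by the continuous one. Lemma \ref{kernel-disc-corollary}, applied with $\mathcal{N}=\mu_f-\mu_f(x)$ componentwise and splitting into positive and negative parts where Lemma \ref{A1-analog} needs nonnegativity, gives an error of $O_p(\Delta/h^2)\Upsilon(T)$ before the $\sqrt{h^d/\Upsilon}$ factor. This becomes $O_p(\frac{\Delta}{h^2}\sqrt{h^d\Upsilon})=o_p(1)$ by hypothesis. What remains is the continuous additive functional $\frac{1}{h^d}\int_0^T\mathcal{K}(\cdot)(\mu_f(X_s)-\mu_f(x))\,ds$. I would decompose it exactly as in Lemma \ref{A1-analog} into the pieces $\mathcal{A},\mathcal{B},\mathcal{C}$, but now track $h^2$ precision. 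The main term $\mathcal{C}=\frac{N_T}{\Upsilon}\mathbb{E}(U_1)$ uses the second part of Lemma \ref{lemma: expectation expansion}, which gives $\mathbb{E}U_1^-=h^2C_X\kappa_{1,2}\big(\tfrac12 p_X\Delta\mu_f+\nabla\mu_f\cdot\nabla p_X\big)+o(h^2)$. Combined with $N_T/\Upsilon\to C_X^{-1}g_\alpha(1)$ (Lemma \ref{Premathcal{B}-conv}), after dividing by the denominator $\approx g_\alpha(1)p_X(x)$ this yields exactly $h^2B_f^{\mu,\texttt{o}}$; the $\log p_X$ arises from dividing by $p_X$. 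Since $\sqrt{h^{d+4}\Upsilon}\to\sqrt C$, this bias survives at order one, which is why it is subtracted. The fluctuation terms $\mathcal{A},\mathcal{B}$ must be shown to be $o_p(h^2)$ after normalization, i.e.\ $o_p((h^d\Upsilon)^{-1/2})$ relative to $\Upsilon$. Here I would use that $\mathcal{H}-\mathcal{H}(x)=O(h)$ on the kernel support, so Claims \ref{mathcal{A}-conv}–\ref{mathcal{B}-conv} gain a factor $h$. I expect this step to be the main obstacle: the crude $O_p(\Upsilon^{-1/2})$ bounds of Lemma \ref{A1-analog} are not sharp enough, and one needs a Lipschitz-type gain, or the Darling–Kac martingale bound of Proposition \ref{thm3.16_limit_theorems_null} applied to the compensated sum $\sum(U_m-\mathbb{E}U_m)$ with variance $O(h^{2-d})$.

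For the martingale term, $\mathcal{M}_T$ is a continuous local martingale additive functional of the (LL-embedded) process. Its quadratic variation, by Lemma \ref{discrete-approximation-martingale-cross-quad-var} with $\sqrt{K}=\mathcal{K}$ (i.e.\ kernel $\mathcal{K}^2$) and $q^{(i)}$ the components of $\sigma_{f,\alpha}$, equals $\frac{1}{h^d\Upsilon}\int_0^T\mathcal{K}^2\,\sigma_{f}\sigma_f^\top\,ds+O_p(\Delta/h^2)$. By Lemma \ref{A1-analog} this converges to $g_\alpha(1)p_X(x)\kappa_{2,0}\pi_f(x)$. Proposition \ref{thm3.16_limit_theorems_null}, applied jointly to $(\mathcal{M}_T,[\mathcal{M}]_T)$, gives $\mathcal{M}_T\to(\kappa_{2,0}p_X\pi_f)^{1/2}\sqrt{g_\alpha(1)}Z$ with $Z$ independent of $g_\alpha(1)$. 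Because $\hat L/\Upsilon$ is asymptotically the same functional of the clock $g_\alpha(1)$ (it shares the additive-functional limit with $N_T/\Upsilon$), joint convergence holds. The ratio $\mathcal{M}_T/\sqrt{\hat L/\Upsilon}$ then converges to $\sqrt{\kappa_{2,0}\pi_f}\,Z$, the random clock cancelling to give a genuine Gaussian. Finally, combine the pieces by Slutsky and Lemma \ref{reciprocal-lemma2}.
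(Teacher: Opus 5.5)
Your proposal is essentially the paper's proof. It has the same components: the It\^o split into a bias part and a martingale part, Lemma \ref{kernel-disc-corollary} to remove the discretization at cost $O_p\big(\frac{\Delta}{h^2}\sqrt{h^d\Upsilon(T)}\big)$, and the life-cycle ratio $\mathsf{B}_0$ evaluated with Lemma \ref{lemma: expectation expansion}. Your constant $\kappa_{1,2}$ is the right one; the $\kappa_{2,0}$ in \eqref{drift-bias} appears to be a typo, cf.\ \eqref{mu-varphi-bias}. The martingale part is Lemma \ref{Theta_T-analysis-lemma}, whose mixed-normal limit lets the Mittag-Leffler clock cancel against $\hat L$.

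The step you single out as the main obstacle is not one if you take the stated bounds in the proof of Lemma \ref{A1-analog} at face value. Since $h^{d+4}\Upsilon(T)\to C$, you have $\Upsilon(T)^{-1/2}\asymp h^{2+d/2}=o(h^2)$. So the $O_p(\Upsilon(T)^{-1/2})$ bounds on $\mathcal{A}$ and $\mathcal{B}$ already give $\mathsf{B}_1=o_p(h^2)$. Fluctuations of the denominator only need to be $o_p(1)$, because they enter multiplied by the $O(h^2)$ numerator. This is exactly how the paper disposes of $\mathsf{B}_1$, in a single line.

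Your Lipschitz gain is still a sensible safeguard. The H\"older step in the paper's bound for $\mathcal{A}$ actually delivers only $O_p\big((\Upsilon(T)h^{2d/r})^{-1/2}\big)$, since the factor $h^{-2d/r}$ is dropped and then reinstated inconsistently. That rate is $o_p(h^2)$ only when $r>2$. With the extra factor $h$ from $\mathcal{H}-\mathcal{H}(x)=O(h)$ on the kernel support, $r>2d/(d+2)$ suffices.
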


Note that $h^d \Upsilon(T) \xrightarrow{} \infty$ and $\frac{\Delta}{h^2}\sqrt{h^d \Upsilon(T)} \xrightarrow[]{} 0$ jointly imply $\frac{\Delta}{h^2}\to 0$ as $T\to \infty$. The quantity $\hat{\mu}_f(x)$ is a {\em generalized} drift estimator in the sense that it estimates the drift of the transformed process $f(X_t)$ at $f(x)$, and $B^\mu_{f}(x)$ quantifies the associated bias. The key idea behind the proof is to control the discrepancy between the additive functional and its uniform discretization, and then apply the generalized life-cycle decomposition to quantify the resulting error.
Before proving Theorem \ref{generalized-drift-est}, we establish a technical lemma that streamlines the argument and will also be used in the analysis of the diffusion estimator. 

\begin{lemma}\label{Theta_T-analysis-lemma} 
Assume assumptions in Theorem \ref{generalized-drift-est} hold. Consider an $\mathbb{R}^p$-valued continuous semimartingale $\Theta^f(t)$ defined for $t \ge 0$ as 
 \begin{align}
 \Theta^f(t):= &\, \frac{1}{h^{d/2}} \sum_{k=0}^{m-1}  \mathcal{K} \left(\frac{{\mathcal{D}}_x(X_{k \Delta}) }{h}\right)\int_{k\Delta}^{(k+1)\Delta}    \sigma_{f,\alpha}(X_{s}) \, d W^\alpha_s  \label{theta_t_definition} \\
 &+ \frac{1}{h^{d/2}} \mathcal{K} \left(\frac{{\mathcal{D}}_x(X_{m \Delta})  }{h}\right) \int_{m \Delta}^t  \sigma_{f,\alpha}(X_{s}) \, d W^\alpha_s  \nonumber \,,
 \end{align}
where $t \in [m \Delta, (m+1) \Delta)$ for $m=0,1,2,\ldots$, and so $\Theta^f(0)=0$ immediately. 
Denote $J_o(x):=\kappa_{2,0}p_X(x)$. Then, when $\mathcal{D}_x(x')=\|\iota(x)-\iota(x')\|_{\mathbb{R}^p}$, we have
\begin{align}
\left(\frac{\Theta^f(T)}{\sqrt{\Upsilon(T)}},\, \frac{[\Theta^f(T)]}{\Upsilon(T)} \right) \xrightarrow[]{\hspace{0.1cm}d\hspace{0.1cm}}\left(\sqrt{J_o(x) \pi_f(x)} W_\alpha(1),\ J_o(x)\pi_f(x)
g_\alpha(1)\right)\label{using-3.16}
\end{align}
in $D(\mathbb{R}_+,\mathbb{R})$ as $T \rightarrow \infty$  under $P_x$ for all $x\in M$, and
\begin{align}
 \frac{  [\Theta^f(T)]}{  { \hat{L}}(x) }
\xrightarrow[]{\hspace{0.1cm}p\hspace{0.1cm}} \frac{\kappa_{2,0}}{}\pi_f(x)
\ \mbox{ and }\   \frac{\Theta^f(T)}{\sqrt{{ \hat{L}}(x)}}
  \xrightarrow[]{\hspace{0.1cm}d\hspace{0.1cm}} N(\mathbf{0}, \kappa_{2,0}\pi_f(x))  \label{convergence-of-Theta^f}
\end{align}
as $T\to \infty$ under $P_x$ for all $x\in M$. Moreover,
\begin{align}
    \Theta^f(T) = O_p(\sqrt{\Upsilon(T)}) \label{prob-order-Theta^f_T}\,.
\end{align}
\end{lemma}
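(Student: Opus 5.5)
The plan is to treat $\Theta^f$ as a continuous local martingale, identify its quadratic variation through the technical lemmas of Section \ref{section technical lemma proofs}, and then obtain the mixed-normal limit from a martingale limit theorem with a Mittag-Leffler clock. First I would check that $\Theta^f$ is a continuous local martingale. On each interval $[k\Delta,(k+1)\Delta)$ the factor $\mathcal{K}(\mathcal{D}_x(X_{k\Delta})/h)$ is $\mathcal{F}_{k\Delta}$-measurable, so $\Theta^f$ is the stochastic integral of a predictable integrand. Pieces with disjoint time supports have vanishing cross variation, so
\[
[\Theta^f(T)]=\frac{1}{h^d}\sum_{k=0}^{n-1}\mathcal{K}^2\Big(\frac{\mathcal{D}_x(X_{k\Delta})}{h}\Big)\int_{k\Delta}^{(k+1)\Delta}\pi_f(X_s)\,ds .
\]
I would then apply Lemma \ref{discrete-approximation-martingale-cross-quad-var} entrywise, with kernel $\mathcal{K}^2$ (so that $\sqrt{\mathcal{K}^2}=\mathcal{K}$) and $q^{(1)},q^{(2)}$ ranging over the components $e_i^\top\sigma_{f,\alpha}$. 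This replaces the discretized kernel by $h^{-d}\int_0^T\mathcal{K}^2(\mathcal{D}_x(X_s)/h)\pi_f(X_s)\,ds$ at cost $O_p(\Upsilon(T)\Delta/h^2)$, which is $o_p(\Upsilon(T))$ because $\Delta/h^2\to0$.

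Next, the Master Lemma \ref{A1-analog} with $K=\mathcal{K}^2$ gives the limit of this integral, and the factor $\int\mathcal{K}^2(\|u\|)du=\kappa_{2,0}$ appears. Since $\pi_f$ has signed entries and Lemma \ref{A1-analog} requires $\mathcal{H}\ge0$, I would write each entry as a difference of bounded nonnegative functions (for example, $\mathcal{H}+c$ and $c$ for a constant $c$). The result is $[\Theta^f(T)]/\Upsilon(T)\xrightarrow{d} g_\alpha(1)\kappa_{2,0}p_X(x)\pi_f(x)=J_o(x)\pi_f(x)g_\alpha(1)$. Dividing by $\hat L(x)/\Upsilon(T)\xrightarrow{d} g_\alpha(1)p_X(x)$ from Theorem \ref{occ-density-theorem} gives the first claim in \eqref{convergence-of-Theta^f}. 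That division is only legitimate because both numerator and denominator converge jointly, driven by the same renewal count $N_T/\Upsilon(T)\to C_X^{-1}g_\alpha(1)$. In the decomposition $\mathcal{A}+\mathcal{B}+\mathcal{C}$ of Lemma \ref{A1-analog}, the only nondegenerate term is $\mathcal{C}$, which is a deterministic multiple of $N_T/\Upsilon(T)$ (Lemma \ref{Premathcal{B}-conv}). Hence the ratio is asymptotically deterministic, and Lemma \ref{reciprocal-lemma1} yields convergence in probability.

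For the pair convergence \eqref{using-3.16}, I would pass to the undiscretized martingale $\tilde\Theta^f(t)=h^{-d/2}\int_0^t\mathcal{K}(\mathcal{D}_x(X_s)/h)\sigma_{f,\alpha}(X_s)\,dW^\alpha_s$. The quadratic variation of $\Theta^f-\tilde\Theta^f$ involves $[\mathcal{K}(\mathcal{D}_x(X_{k\Delta})/h)-\mathcal{K}(\mathcal{D}_x(X_s)/h)]^2$. This is bounded by $\Delta/h^2$ times an additive functional of order $\Upsilon(T)$ via the argument of Lemma \ref{kernel-disc-corollary}, so by the BDG inequality (Theorem \ref{BDG-ineq}) the difference is $o_p(\sqrt{\Upsilon(T)})$. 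Then $\tilde\Theta^f$ is a locally square-integrable additive martingale functional of $X$, and I would invoke Proposition \ref{thm3.16_limit_theorems_null} with $J=\mathbb{E}_{\phi_X}[\tilde\Theta^f]_1\to J_o(x)\pi_f(x)$. This yields $(J^{1/2}W(g_\alpha),Jg_\alpha)$ with $W$ independent of $g_\alpha$.

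\textbf{Main obstacle.} Proposition \ref{thm3.16_limit_theorems_null} is stated for a fixed functional, whereas here $h=h(T)$ varies, so we are in a triangular-array situation. To handle it I would use a Dambis--Dubins--Schwarz (Knight) time change: write $\tilde\Theta^f(T)/\sqrt{\Upsilon(T)}=B_T([\tilde\Theta^f]_T/\Upsilon(T))$ with $B_T$ a Brownian motion for each $T$. Independence in the limit would follow from asymptotic orthogonality of $\tilde\Theta^f$ to the martingales generating $N_T$ (the quadratic covariation being $o_p$ after normalization, since the regeneration structure lives on the auxiliary coordinates $Z^2,Z^3$), which gives stable convergence with a mixed normal limit.

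Given this, $W_\alpha(1)=\sqrt{g_\alpha(1)}\,Z$ with $Z\sim N(\mathbf 0,I)$ independent of $g_\alpha(1)$, and jointly $\hat L/\Upsilon\to g_\alpha(1)p_X(x)$. Slutsky and the continuous mapping theorem then give
\[
\Theta^f(T)/\sqrt{\hat L(x)}=\big(\Theta^f/\sqrt{\Upsilon}\big)\big(\Upsilon/\hat L\big)^{1/2}\to\sqrt{\kappa_{2,0}\pi_f(x)}\,Z .
\]
Finally, \eqref{prob-order-Theta^f_T} is immediate from \eqref{using-3.16}, since the limit is tight.
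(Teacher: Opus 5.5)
Your outline is the paper's proof at every load-bearing step. You use the blockwise formula for $[\Theta^f(T)]$, then Lemma~\ref{discrete-approximation-martingale-cross-quad-var} with kernel $\mathcal{K}^2$ to remove the frozen kernel, then Lemma~\ref{A1-analog} with $K=\mathcal{K}^2$ to get $J_o(x)\pi_f(x)g_\alpha(1)$. You then note that $[\Theta^f(T)]/\Upsilon(T)$ and $\hat L(x)/\Upsilon(T)$ are, up to $o_p(1)$, deterministic multiples of the same $N_T/\Upsilon(T)$, so Lemma~\ref{reciprocal-lemma1} gives the ratio in probability. Finally you invoke Proposition~\ref{thm3.16_limit_theorems_null} for the pair. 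Shifting by a constant instead of polarizing is immaterial, since both need the joint-convergence remark you make.

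Two of your deviations are improvements. First, you control $[\Theta^f-\tilde\Theta^f]_T$, which is what transfers the limit of the martingale itself. The paper's justification (Lemma~\ref{kernel-disc-corollary} with \eqref{eq Theta and overlineTheta QV difference}) only handles additive functionals and the difference of brackets, and that does not bound $\Theta^f-\overline\Theta^f$. The squared kernel increment is not literally covered by Lemma~\ref{kernel-disc-corollary}, but the Lipschitz bound on $\mathcal{K}$ plus the $O(\Delta)$ conditional second moment of increments gives it; if you only have an $O_p$ bound on that bracket, use Lenglart's inequality. Second, the joint convergence with $\hat L(x)/\Upsilon(T)$ that you use at the end follows from \eqref{using-3.16} and the in-probability ratio: divide $[\Theta^f_a(T)]/\Upsilon(T)$ by $[\Theta^f_a(T)]/\hat L(x)$ for an index with $\pi_f^{a,a}(x)>0$. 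This is simpler than the paper's appeal to \cite[Remark 4.26]{limit_theorems_null}.

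The real problem is the step you flag yourself. You are right that Proposition~\ref{thm3.16_limit_theorems_null} concerns a fixed additive functional, while $h=h(T)$ makes this a triangular array. The paper just applies the proposition, so you are not missing an argument it supplies, but your repair does not close the hole.

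For $\alpha=1$ the repair works: $\tau_T:=[\tilde\Theta^f]_T/\Upsilon(T)$ converges in probability to the constant $J_o(x)\pi_f(x)$, and Dambis--Dubins--Schwarz with Cram\'er--Wold gives the Gaussian limit. For $\alpha\in(0,1)$, however, $\tau_T$ converges only in law. The usual stable martingale CLTs need the bracket to converge in probability, so they are unavailable. Asymptotic independence of the DDS Brownian motion $B_T$ from $\tau_T$ is then the whole content of the claim. Without it even the self-normalized statement in \eqref{convergence-of-Theta^f} is not established, since $B(\tau)/\sqrt{\tau}$ need not be Gaussian when $\tau$ depends on $B$.

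Your reason for that independence is not valid, for three reasons:
\begin{itemize}
\item $N_T$ is not generated by any continuous martingale to which $\tilde\Theta^f$ could be orthogonal.
\item Regeneration is not carried by $Z^2,Z^3$: entering the atom requires $Z^1\in C$.
\item Decisively, $\tau_T$ is an additive functional of $X=Z^1$ alone, so its Mittag-Leffler limit comes from the recurrence of $X$ itself, not from the auxiliary randomization.
\end{itemize}
An orthogonality argument would require a macroscopic-scale martingale that asymptotically determines $\tau_T$, and the embedding offers no candidate.

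The natural repair is a triangular-array version of the regenerative argument. Decompose $\tilde\Theta^f$ over the generalized life cycles. Show that the $h$-dependent cycle increments have asymptotically constant second moment and satisfy a Lindeberg-type condition, allowing for the 1-dependence of the cycles; this is where bandwidth conditions such as $h^d\Upsilon(T)\to\infty$ must enter. Independence then follows because the Gaussian and $\alpha$-stable parts of the joint L\'evy limit of (normalized martingale sums, normalized cycle lengths) are independent.

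Note that \eqref{prob-order-Theta^f_T} and the first claim of \eqref{convergence-of-Theta^f} are unaffected by this gap, since they only use the bracket.
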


\begin{proof}  
Denote
\begin{align*}
 \overline\Theta^f(t):= &\, \frac{1}{h^{d/2}}  \int_0^T \mathcal{K} \left(\frac{{\mathcal{D}}_x(X_{s}) }{h}\right)    \sigma_{f,l}(X_{s}) \, d W^l_s  \,.
 \end{align*}
Denote $\Theta^f(t)=[\Theta^f_1(t),\ldots,\Theta^f_p(t)]^\top\in \mathbb{R}^p$. 
By Lemma \ref{discrete-approximation-martingale-cross-quad-var}, for any $\alpha,\beta=1,\ldots,p$,
\begin{align}\label{eq Theta and overlineTheta QV difference}
\frac{[\Theta^f_\alpha(T), \Theta^f_\beta(T)]}{\Upsilon(T)} &= \frac{[\overline\Theta^f_\alpha(T), \overline\Theta^f_\beta(T)]}{\Upsilon(T)} + O_p\left(\frac{\Delta}{h^2} \right) \,.
\end{align}
When $\alpha=\beta$, since $\pi_f^{\alpha, \alpha}(x)\geq 0$ is finite by the regularity assumption and $\frac{\Delta}{h^2}\to 0$ as $T\to \infty$, then by Lemma \ref{A1-analog} with ${K}(\cdot)$ and $\mathcal{H}(\cdot)$ set to $\mathcal{K}^2(\cdot)$ and $\pi_f^{\alpha, \alpha}(\cdot)$, we have  
\begin{align*}
 \frac{ [\overline\Theta^f_\alpha(T), \overline\Theta^f_\alpha(T)] }{ \Upsilon(T) }\xrightarrow[]{\hspace{0.1cm}d\hspace{0.1cm}} J_o(x)\pi_f^{\alpha, \alpha}(x)  g_\alpha(1) 
\end{align*}
as $T\to \infty$, where the coefficient $J_o(x):=\kappa_{2,0}p_X(x)$ is calculated via Proposition \ref{2.20and4.3}. 
When $\alpha\neq \beta$, in general we may not have $\pi_f^{\alpha, \beta}(x) \ge 0$, but the analysis is similar to the diagonal case by applying Lemma \ref{discrete-approximation-martingale-cross-quad-var} to $[\Theta^f_\alpha(T), \Theta^f_\beta(T)]$ followed by invoking the polarization identity, i.e., 
\[
[\overline\Theta^f_\alpha(T), \overline\Theta^f_\beta(T)] = \frac{1}{4}\left([\overline\Theta^f_\alpha(T)+ \overline\Theta^f_\beta(T)]  - [\overline\Theta^f_\alpha(T)- \overline\Theta^f_\beta(T)] \right)\,.
\] 
We can then apply Lemma \ref{A1-analog} to get
\[
\frac{[\overline\Theta^f_\alpha(T),\overline \Theta^f_\beta(T)]}{\Upsilon(T)}\xrightarrow[]{\hspace{0.1cm}d\hspace{0.1cm}} J_o(x)\pi_f^{\alpha, \beta}(x)  g_\alpha(1) 
\] 
when $T\to \infty$, since we can collapse the polarization identity in the limiting distribution by applying the ratio limit theorem.
Denote $[\Theta^f(T)]:=([\Theta^f_\alpha(T), \Theta^f_\beta(T)])_{\alpha,\beta=1}^p$ and $[\overline\Theta^f(T)]:=([\overline\Theta^f_\alpha(T), \overline\Theta^f_\beta(T)])_{\alpha,\beta=1}^p$. We have 
\[
\frac{[\Theta^f(T)]}{\Upsilon(T)}\xrightarrow[]{\hspace{0.1cm}d\hspace{0.1cm}} J_o(x) \pi_f(x)  g_\alpha(1)
\] 
when $T\to \infty$, and hence \eqref{prob-order-Theta^f_T}. 
Since $\Theta^f(T)$ is locally square integrable and locally bounded, we conclude \eqref{using-3.16} from Proposition \ref{thm3.16_limit_theorems_null}; that is,
\begin{align*}
&\left(\frac{\Theta^f(T)}{\sqrt{\Upsilon(T)}},\, \frac{[\Theta^f(T)]}{\Upsilon(T)} \right)=\left(\frac{\overline\Theta^f(T)}{\sqrt{\Upsilon(T)}},\, \frac{[\overline\Theta^f(T)]}{\Upsilon(T)} \right) + O_p\left(\frac{\Delta }{h^2} \right) \\
\xrightarrow[]{\hspace{0.1cm}d\hspace{0.1cm}}&\,\left(\sqrt{J_o(x) \pi_f(x)} W_\alpha(1),\ J_o(x) \pi_f(x) g_\alpha(1)\right)\,,
\end{align*}
where the first equality comes from Lemma \ref{kernel-disc-corollary} and \eqref{eq Theta and overlineTheta QV difference}.
Recall that $\pi_f(x)$ is non-negative definite by the uniform ellipticity assumption. 

Next, by the same argument as that in \cite[Remark 4.26 and Chapter 7]{limit_theorems_null}, we have
\begin{equation}\label{Remark 4.26 and Chapter 7 limit_theorems_null}
\left(\frac{\Theta^f(T)}{\sqrt{\Upsilon(T)}},\, \sqrt{\frac{{ \hat{L}}(x)}{\Upsilon(T)}} \right)\xrightarrow[]{\hspace{0.1cm}d\hspace{0.1cm}} \left(\sqrt{J_o(x)\pi_f(x)}\sqrt{g_\alpha(1)}Z,\, \sqrt{p_X(x) g_\alpha(1)}\right)\,,
\end{equation} 
where $Z \sim N(0, I_p)$.
By the continuous mapping theorem with $(x,y)\mapsto x/y$, we obtain 
\[
\frac{\Theta^f(T)}{\sqrt{{ \hat{L}}(x)}}\xrightarrow[]{\hspace{0.1cm}d\hspace{0.1cm}}N(\mathbf{0}, \kappa_{2,0}\pi_f(x)) \,.
\] 
The asymptotic behavior of $\frac{  [\Theta^f(T)]}{  { \hat{L}}(x) }$ can be analyzed similarly with a weak convergence, or analyzed directly by applying Lemma \ref{A1-analog} to analyze the denominator and numerator of $\frac{  [\Theta^f_\alpha(T), \Theta^f_\beta(T)] }{ { \hat{L}}(x) }$ simultaneously, which leads to
\begin{align*}
 \frac{  [\Theta^f_\alpha(T), \Theta^f_\beta(T)] }{ { \hat{L}}(x) }
= 
\frac{ \mathbb{E}^{\mathcal M}_\lambda \Big( \int_{R_1}^{R_{2}} (\mathcal{K}^2)_{h}(X_{s}) \pi_f^{\alpha, \beta}(X_{s})  \, ds \Big) + \frac{\Upsilon(T)}{N_T} O_p\Big( \frac{1}{\sqrt{\Upsilon(T)}} \Big) + \frac{\Upsilon(T)}{N_T} O_p\left(\frac{\Delta}{h^2} \right)}
{ \mathbb{E}^{\mathcal M}_\lambda\left( \int_{R_1}^{R_{2}} \mathcal{K}_{h}(X_{s})  \, ds \right) + \frac{\Upsilon(T)}{N_T} O_p\Big( \frac{1}{\sqrt{\Upsilon(T)}}\Big) + \frac{\Upsilon(T)}{N_T} O_p\left(\frac{\Delta}{h^2} \right)}\,.
\end{align*}
where $\mathcal{K}_{h}(\cdot):=\frac{1}{h^{d}}\mathcal{K}\left(\frac{{\mathcal{D}}_x(\cdot)}{h} \right)$.
Since $\frac{\Upsilon(T)}{N_T}$ converges weakly to a strictly positive Mittag-Leffler random variable in Lemma \ref{Premathcal{B}-conv}, with Lemma \ref{reciprocal-lemma1} we conclude that
\begin{align*}
\frac{  [\Theta^f(T)]}{  { \hat{L}}(x) }
\xrightarrow[]{\hspace{0.1cm}p\hspace{0.1cm}} \kappa_{2,0}\pi_f(x) 
\end{align*} 
when $T\to \infty$, and hence the proof.

\end{proof}

\begin{proof}[Proof of Theorem \ref{generalized-drift-est}]
Denote $\mathcal{K}_{h}(\cdot):=\frac{1}{h^{d}}\mathcal{K}\left(\frac{{\mathcal{D}}_x(\cdot)}{h} \right)$. Take a normal coordinate chart on $B_r(x)$. 
First, we proceed in the case that $\mathcal{D}_x(x')=\|\iota(x)-\iota(x')\|_{\mathbb{R}^p}$.
Plugging in It\^o's formula \eqref{ito-integral} that $f(X_{(k+1)\Delta}) - f(X_{k\Delta})= \int_{k \Delta }^{(k+1) \Delta}  \mu_f(X_t)\,dt +  \int_{k \Delta}^{(k+1) \Delta} \sigma_{f, \alpha}(X_t) \, d W^\alpha_t$
into \eqref{definition generalized drift estimator}, we have the bias and variance decomposition of the error: 
\begin{align}
\mathcal{E} := \hat{\mu}_f(x) - \mu_f(x)
=\,& \underbrace{ \frac{ \sum_{k=1}^{n-1} \mathcal{K}_{h}(X_{k \Delta})  \int_{k \Delta}^{(k+1) \Delta} \mu_f(X_{s})\,ds }{{ \hat{L}}(x)}-\mu_f(x)}_{:=\mathsf{B}} \label{definition Err B}\\
&+ \underbrace{ \frac{ \sum_{k=1}^{n-1}  \mathcal{K}_{h}(X_{k \Delta})  \int_{k \Delta}^{(k+1) \Delta} \sigma_{f,\alpha}(X_{s}) \, d W^\alpha_s }{{ \hat{L}}(x)}}_{:=\mathsf{V}} \,.\nonumber
\end{align}
We start with $\mathsf{V}$. Note that $\mathsf{V}= \frac{1}{h^{d/2}}\frac{ \Theta^f(T)}{{ \hat{L}}(x)} $, where $\Theta^f(T)$ is defined in \eqref{theta_t_definition}. 
By \eqref{convergence-of-Theta^f} of Lemma \ref{Theta_T-analysis-lemma},
\begin{align}
    \sqrt{h^d\hat{L}^{\texttt{(o)}}(x)}\mathsf{V} &= \frac{ \Theta^f(T)}{\sqrt{{ \hat{L}}(x)}} \xrightarrow[]{\hspace{0.1cm}d\hspace{0.1cm}} N(\mathbf{0}, \kappa_{2,0}\pi_f(x))\,.
\end{align}
Next, consider $\mathsf{B}$.
For $i = 1, \cdots, p$, by Lemma \ref{kernel-disc-corollary} with $\mathcal{N}$ set to $e_i^\top\mu_f$, we obtain 
\begin{align*}
\sum_{k=0}^{n-1} \mathcal{K}_{h}(X_{k \Delta}) \int_{k\Delta}^{(k+1)\Delta} \mu_f(X_s)ds =  \int_0^T \mathcal{K}_{h}(X_s) \mu_f(X_s) ds + O_p\left(\frac{\Delta   \Upsilon(T)}{h^2}\right)  \,.
\end{align*}
We have a similar expression for ${ \hat{L}}(x)$. Therefore, 
\begin{align*}
\mathsf{B} =\,& \frac{\frac{1}{\Upsilon(T)} \int_0^T \mathcal{K}_{h}(X_s) \left(\mu_{f}(X_s) - \mu_{f}(x)\right)  ds + O_p\left(\frac{\Delta}{h^2}\right)   
}{\frac{1}{\Upsilon(T)}\int_0^T \mathcal{K}_{h}(X_s)  ds+ O_p \left( \frac{\Delta }{h^2}\right)}  \\
=\,& \frac{ \int_0^T \mathcal{K}_{h}(X_s) \left(\mu_{f}(X_s) - \mu_{f}(x)\right)  ds}{ \int_0^T\mathcal{K}_{h}(X_s)  ds} + O_p \left( \frac{\Delta}{h^2}\right) \,,
\end{align*}
where the first equality holds since $\mu_f$ is bounded by the drift assumption and the second equality comes from Lemma \ref{reciprocal-lemma2}.  Write
\begin{align*}
\mathsf{B} = \underbrace{ \frac{  \int_0^T \mathcal{K}_{h}(X_s) \left(\mu_{f}(X_s) - \mu_{f}(x)\right)  ds}{ \int_0^T \mathcal{K}_{h}(X_s)   ds}- \mathsf{B}_0 }_{:=\mathsf{B}_1} +   \mathsf{B}_0+ O_p \left( \frac{\Delta}{h^2}\right) \,,
\end{align*}
where
\begin{align}
\mathsf{B}_0 &=  \frac{N_T   \mathbb{E}^{\mathcal M}_\lambda\left( \int_{R_1}^{R_{2}} \mathcal{K}_{h}(X_s) \left(\mu_{f}(X_s) - {\mu}_{f}(x)\right)  ds\right)}{  N_T   \mathbb{E}^{\mathcal M}_\lambda\left( \int_{R_1}^{R_{2}} \mathcal{K}_{h}(X_s)  ds\right) }\,. \label{mathcal{E}_B-defn}
\end{align}
Analyzing $\mathsf{B}$ is thus reduced to controlling two terms. We have
$h^{-2} \mathsf{B}_0  \xrightarrow[]{\hspace{0.1cm}p\hspace{0.1cm}} {B}^{\mu,\texttt{o}}_{f}(x)$ when $T\to \infty$, where ${B}^{\mu,\texttt{o}}_{f}(x)$ is defined as in \eqref{drift-bias}.
To see this claim, since $\frac{\Delta}{h^2}\to 0$ by the assumption $h^d \Upsilon(T) \xrightarrow{} \infty$ and $\frac{\Delta}{h^2}\sqrt{h^d \Upsilon(T)} \xrightarrow[]{} 0$, we can apply exactly the same analysis as that of $\mathcal{C}$ in Lemma \ref{A1-analog} to both the denominator and numerator with Lemma \ref{lemma: expectation expansion}
Second, applying exactly the same analysis of  $\mathcal{A}$ and $\mathcal{B}$ in Lemma \ref{A1-analog}, we have $ \mathsf{B}_1 = o_p(h^2)$. 
As a result, since $\mathcal{D}_x(x')=\|\iota(x)-\iota(x')\|_{\mathbb{R}^p}$, 
\begin{align*}
&\sqrt{h^d { \hat{L}}(x)}\left( \mathsf{B} - h^2 {B}_f^{\mu,\texttt{o}}(x)\right)  \\
=&\, o_p\Big(\sqrt{h^{d+4}{ \hat{L}}}\Big)+\sqrt{h^{d+4} { \hat{L}}(x)}\left(h^{-2}\mathsf{B}_0 -  {B}_f^{\mu,\texttt{o}}(x)\right) + O_p\Big(\frac{\Delta}{h^2}\sqrt{h^d { \hat{L}}(x)}\Big)=o_p(1)\,,
\end{align*}
where the last control comes from Slutsky's theorem, and jointly from the assumptions $\frac{\Delta}{h^2}\sqrt{h^d \Upsilon(T)}\to 0$, $h^{d+4}\Upsilon(T) \to C>0$, $\frac{{ \hat{L}}(x)}{\Upsilon(T)}  \xrightarrow[]{\hspace{0.1cm}d\hspace{0.1cm}}   g_\alpha(1)p_X(x)$ by Lemma \ref{A1-analog},  and $h^{-2} \mathsf{B}_0  \xrightarrow[]{\hspace{0.1cm}p\hspace{0.1cm}} {B}^{\mu,\texttt{o}}_{f}(x)$ shown above.
We therefore obtain the claim 
\begin{align*}
\sqrt{h^d   { \hat{L}}(x)}  \big(\hat{\mu}_f (x) - \mu_f(x) - h^2 {B}_f^{\mu,\texttt{o}}(x) \big) 
\xrightarrow[]{\hspace{0.1cm}d\hspace{0.1cm}} N(\mathbf{0}, \kappa_{2,0}\pi_f(x))\,.
\end{align*}

\end{proof}

\section{Preparation for the proof of Theorem \ref{euclidean-diffusion-estimate-total}}

\begin{theorem}\label{generalized-diff-est} 
Suppose Assumptions \ref{manifold-ass}, \ref{manifold-ass2}, \ref{reg-ass},  \ref{lebesgue-dens-ass}, \ref{kernel-ass}, and \ref{reg-ass2} hold. 
Fix $x\in M$. Consider functions $f,q\in C^3(M,\, \mathbb{R}^p)$ with $\texttt{supp} f \subset B_r(x)$ and $\texttt{supp} q \subset B_r(x)$ for $r < \texttt{inj}_x(M)$. For a kernel function $\mathcal{K}$ and $\mathcal{D}_x(x')=\|\iota(x)-\iota(x')\|_{\mathbb{R}^p}$, define 
\begin{equation*}
\hat{\pi}_{f,q}(x) :=   \frac{1}{\Delta}  \frac{\sum_{k=0}^{n-1}  \mathcal{K} \left(\frac{
   {\mathcal{D}}_x({X}_{k\Delta}) }{h}\right)({f}(X_{(k+1)\Delta})-{f}(X_{k\Delta}))(q(X_{(k+1)\Delta})-q(X_{k\Delta}))^\top}{\sum_{k=0}^{n-1} \mathcal{K} \left(\frac{
   {\mathcal{D}}_x({X}_{k\Delta}) }{h}\right)}
 \end{equation*}
and
\begin{equation*}
\hat{L}(x):=\frac{\Delta}{h^d}\sum_{k=0}^{n-1} \mathcal{K}\left(\frac{\mathcal{D}_x(X_{k\Delta}) }{h}\right)\,.
\end{equation*}
Denote 
\[
\pi_{f,q}(x) := \sum_{l=1}^r \sigma_{f,\alpha}(x)\sigma_{q,\alpha}(x)^\top\in \mathbb{R}^{p\times p}\,, 
\]
where $\sigma_{f,\alpha}\in \mathbb{R}^p$ is defined in \eqref{ito-integral},
\begin{align}
B^\pi_{f,q}(x) := \sum_{i=1}^d  \partial_i {\pi}_{f,q}(
x) \partial_i \log p_X(x) + \frac{1}{2} \sum_{i=1}^d  \partial_i^2{\pi}_{f,q}(x)   \label{diffusion-bias}\,,
\end{align} 
where $\partial_i$ are the normal coordinate vector fields centered at $x$, and 
\begin{align*}
{\Xi}_{f,q}(x) := \frac{1}{2}(\pi_{f,f}(x) \otimes \pi_{q,q}(x)+\pi_{q,q}(x) \otimes \pi_{f,f}(x))\,.
\end{align*}
With conditions $\frac{h^d  \Upsilon(T)}{\Delta} \xrightarrow{}\infty$,  $\frac{h^{d+4} \Upsilon(T)}{\Delta} \xrightarrow[]{} C$, $C>0$,  and $h^{d-4}  \Upsilon(T) \Delta \xrightarrow[]{} 0$, we have 
\begin{align*}
\sqrt{\frac{h^d   { \hat{L}}(x)}{\Delta}}   \left( \hat{{\pi}}_{f,q}(x) - {\pi}_{f,q}(x)   - h^2 {B}^{\pi}_{f,q}(x)\right)
\xrightarrow[]{\hspace{0.1cm}d\hspace{0.1cm}}  N(\boldsymbol{0}, \kappa_{2,0}\Xi_{f,g}(x))\,.
\end{align*}
\end{theorem}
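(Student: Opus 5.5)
We follow the proof of Theorem \ref{generalized-drift-est}, with an extra layer of Itô calculus for the product of two increments. Write $\Delta f_k:=f(X_{(k+1)\Delta})-f(X_{k\Delta})$, and similarly $\Delta q_k$. By \eqref{ito-integral}, each increment splits as $\int\mu_f\,ds+\int\sigma_{f,\alpha}dW^\alpha_s$. Itô's product rule on $[k\Delta,(k+1)\Delta]$, with $F_s:=f(X_s)-f(X_{k\Delta})$ and $G_s:=q(X_s)-q(X_{k\Delta})$, gives
\[
\Delta f_k\,\Delta q_k^\top=\int_{k\Delta}^{(k+1)\Delta}\pi_{f,q}(X_s)\,ds+\int_{k\Delta}^{(k+1)\Delta}\big(F_s\,dG_s^\top+dF_s\,G_s^\top\big).
\]
Dividing by $\Delta$ and by the normalization $h^d\hat L(x)/\Delta$, we obtain the decomposition $\hat\pi_{f,q}(x)-\pi_{f,q}(x)=\mathsf B+\mathsf V+\mathsf R$:
\begin{itemize}
\item $\mathsf B$ is the kernel-weighted average of $\int\pi_{f,q}(X_s)ds$ divided by $\hat L(x)$, minus $\pi_{f,q}(x)$;
\item $\mathsf V$ is the martingale part $\frac{1}{\Delta}\sum_k \mathcal K_h(X_{k\Delta})\int F_s\sigma_{q,\alpha}^\top dW^\alpha_s+\cdots$, divided by $\hat L(x)/\Delta$;
\item $\mathsf R$ collects the cross terms $\int F_s\mu_q^\top ds$ and its transpose.
\end{itemize}

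\textbf{Bias.} We treat $\mathsf B$ exactly as in Theorem \ref{generalized-drift-est}, with $\mu_f$ replaced by $\pi_{f,q}$. Lemma \ref{kernel-disc-corollary} removes the kernel discretization at cost $O_p(\Delta/h^2)$. Lemma \ref{reciprocal-lemma2} then reduces $\mathsf B$ to a ratio of additive functionals. The claims $\mathcal A$, $\mathcal B$ and $\mathcal C$ in Lemma \ref{A1-analog}, combined with the $U_1^-$ expansion of Lemma \ref{lemma: expectation expansion} (taking $\mathcal H=\pi_{f,q}$ entrywise, split into positive and negative parts), give $h^{-2}\mathsf B\to B^\pi_{f,q}(x)$ up to $o_p(h^2)$. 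The discretization error needs $\sqrt{h^d\Upsilon/\Delta}\,\Delta/h^2\to0$, which is exactly $h^{d-4}\Upsilon\Delta\to0$. The term $\sqrt{h^d\Upsilon/\Delta}\,o_p(h^2)$ is $o_p(1)$ because $h^{d+4}\Upsilon/\Delta\to C$.

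\textbf{Remainder.} For $\mathsf R$, $F_s=O_p(\sqrt{\Delta})$ locally, so the cross term is of order $\Delta^{3/2}$ per step. Normalized, this gives $\mathsf R=O_p(\sqrt\Delta)$. This is negligible after multiplying by $\sqrt{h^d\Upsilon/\Delta}$ only if $h^d\Upsilon\to0$, which fails. We must therefore exploit the martingale part of $F_s$. Split $F_s=\int\mu_f+\int\sigma_f dW$. The drift piece is $O(\Delta)$ per step. The stochastic piece, integrated against $ds$, is handled by the Fubini identity \eqref{special-case-stoch-fub2} and Lemma \ref{stoch-prob-order-computation-lemma}, giving $O_p(\sqrt{\Upsilon}\Delta)$ before normalization. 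After normalization by $\hat L(x)\Delta^{-1}\cdot\Delta^{-1}$, both pieces are negligible under the stated rates.

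\textbf{Main obstacle: the CLT for $\mathsf V$.} Define the continuous martingale
\[
\Psi(t)=\frac{1}{h^{d/2}\sqrt{\Delta}}\sum_k\mathcal K(\cdot)\int_{k\Delta}^{t\wedge(k+1)\Delta}\big(F_s\sigma_{q,\alpha}^\top+\sigma_{f,\alpha}G_s^\top\big)\,dW^\alpha_s,
\]
so that $\sqrt{h^d\hat L/\Delta}\,\mathsf V=\Psi(T)/\sqrt{\hat L(x)}$, vectorized. Its quadratic variation involves $\Delta^{-1}\int_{k\Delta}^{(k+1)\Delta}F_sF_s^\top\otimes\sigma_q\sigma_q^\top\,ds$ and the analogous terms. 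Replacing $F_sF_s^\top$ by its conditional mean $(s-k\Delta)\pi_{f,f}(X_{k\Delta})$ yields the factor $\frac1\Delta\int_0^\Delta u\,du=\Delta/2$. This produces $\frac12(\pi_{f,f}\otimes\pi_{q,q}+\pi_{q,q}\otimes\pi_{f,f})=\Xi_{f,q}$, with the cross terms symmetrized, times $\mathcal K^2$, hence the constant $\kappa_{2,0}$. The fluctuation $F_sF_s^\top-(s-k\Delta)\pi_{f,f}$ is again a martingale plus higher-order terms. We control it with the BDG inequality (Theorem \ref{BDG-ineq}) together with Lemma \ref{stoch-prob-order-computation-lemma}. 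We expect this triple-discretization error to be the hardest step.

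\textbf{Conclusion.} Lemmas \ref{discrete-approximation-martingale-cross-quad-var} and \ref{A1-analog} give $[\Psi(T)]/\Upsilon\to\kappa_{2,0}p_X\Xi_{f,q}\,g_\alpha(1)$. Proposition \ref{thm3.16_limit_theorems_null}, applied jointly with $\hat L/\Upsilon$ as in \eqref{Remark 4.26 and Chapter 7 limit_theorems_null}, and then the continuous mapping theorem give $N(0,\kappa_{2,0}\Xi_{f,q})$.
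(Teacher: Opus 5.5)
Your overall route is the paper's. The product rule on $F_sG_s^\top$ yields, up to regrouping, the same three pieces as the paper's It\^o expansion of $\Omega(X_s)$. Your $\mathsf V$ is its $\hat\Pi_2$, and your $\mathsf R$ collects $\hat\Pi_1$ together with the second-derivative-times-increment terms the paper leaves inside $\hat\Pi_3$, which is a slightly cleaner split. Your handling of the bias, the remainder and the quadratic variation also mirrors the paper: freezing at $X_{k\Delta}$, the factor $\frac1\Delta\int_0^\Delta u\,du$, BDG, then Proposition \ref{thm3.16_limit_theorems_null} jointly with $\hat L(x)/\Upsilon(T)$.

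The step that fails is identifying the limit of $[\Psi(T)]$ with $\Xi_{f,q}$ ``with the cross terms symmetrized''. For entries $(a,b)$ and $(c,d)$ the bracket integrand is
\[
F^a_sF^c_s\,\pi_{q,q}^{bd}+G^b_sG^d_s\,\pi_{f,f}^{ac}+F^a_sG^d_s\,\pi_{q,f}^{bc}+G^b_sF^c_s\,\pi_{f,q}^{ad}\,.
\]
Replacing each product of increments by its conditional mean $(s-k\Delta)\pi_{\cdot,\cdot}(X_{k\Delta})$ gives the limiting covariance
\[
\kappa_{2,0}\big(\pi_{f,f}^{ac}(x)\,\pi_{q,q}^{bd}(x)+\pi_{f,q}^{ad}(x)\,\pi_{f,q}^{cb}(x)\big)\,.
\]
Both ``diagonal'' pieces produce the same product $\pi_{f,f}^{ac}\pi_{q,q}^{bd}$. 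Row indices of $\hat\pi_{f,q}$ come from $f$ and column indices from $q$, so no $\pi_{f,f}^{bd}\pi_{q,q}^{ac}$ can arise. The mixed pieces do not cancel.

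This is not $\kappa_{2,0}\Xi_{f,q}$. Take a diagonal entry with $f=q$, where $f^a(X_s)-f^a(X_{k\Delta})$ is locally a Brownian increment. Then $(\Delta f^a_k)^2/\Delta$ is conditionally $\pi_{f,f}^{aa}\chi^2_1$, with variance $2(\pi_{f,f}^{aa})^2$. So the limiting variance is $2\kappa_{2,0}(\pi^{aa}_{f,f})^2$, whereas $\kappa_{2,0}\Xi_{f,f}$ predicts $\kappa_{2,0}(\pi^{aa}_{f,f})^2$.

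The paper's proof has the same omission. It obtains the bracket of $\Gamma_2^{q,\sigma}+\Gamma_2^{f,\sigma}$ from $\mathsf A_3^q+\mathsf A_3^f$ alone and never computes the cross-variation
\[
[e_a^\top\Gamma_2^{q,\sigma}e_b,\,e_a^\top\Gamma_2^{f,\sigma}e_b]=\sum_k\mathcal K_h(X_{k\Delta})^2\int_{k\Delta}^{(k+1)\Delta}M_{b,k}(s-k\Delta)\,\widetilde M_{a,k}(s-k\Delta)\,\pi^{ab}_{f,q}(X_s)\,ds\,.
\]
Here $\widetilde M_{a,k}$ is the analogue of $M_{b,k}$ for $f^a$, and this term is of the same order as $\mathsf A_3^q$. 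The corrected covariance is the one consistent with Theorem \ref{euclidean-diffusion-estimate-total}. For $f=q=\iota$ it is $\kappa_{2,0}(\pi^{ac}\pi^{bd}+\pi^{ad}\pi^{bc})$ with $\pi=\pi^{(\texttt{o})}(x)$, whose $\texttt{vech}$ form is $2\kappa_{2,0}P_D(\pi^{(\texttt{o})}\otimes\pi^{(\texttt{o})})P_D^\top$. Your argument, carried out correctly, therefore proves the CLT with that covariance, not with $\kappa_{2,0}\Xi_{f,q}$.

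Two smaller points.
\begin{itemize}
\item In $\mathsf R$, \eqref{special-case-stoch-fub2} needs a constant $ds$-factor. You must first freeze $\mu_q(X_s)$ at $X_{k\Delta}$, with error $O_p(\Delta\Upsilon(T))$ by Cauchy--Schwarz and BDG (as for $\Gamma_{1,1}^{f,\sigma}-\hat\Gamma_{1,1}^{f,\sigma}$), or else use \eqref{stoch-integral-by-parts-v2}. The resulting martingale is $O_p(\Delta\sqrt{\Upsilon(T)}/h^{d/2})$, not $O_p(\sqrt{\Upsilon(T)}\Delta)$, because the $h^{-d}$ sits outside the inner integral. It is still $O_p(\sqrt\Delta)$ after normalization, so your conclusion there stands.
\item By Lemma \ref{lemma: expectation expansion}, the limit of $h^{-2}\mathsf B$ carries the kernel moment $\kappa_{1,2}$ (as in $b^{(\texttt{o})}_\pi$), which \eqref{diffusion-bias} omits. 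Since $h^{d+4}\Upsilon(T)/\Delta\to C>0$, that constant survives at the CLT scale.
\end{itemize}
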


Note that $\hat{\pi}_{f,q}$ resembles a kernel diffusion estimator, but is generalized to capture the interaction between the functions $f$ and $q$. When $q = f$, it estimates the $p \times p $ diffusion matrix of the process ${f}(X_t)$ at $f(x)$.  Here, ${B}^{\pi}_{f,q}(x)$ quantifies the bias of the estimator.
Before proving Theorem \ref{generalized-diff-est}, we present some technical lemmas that streamline the argument.

\begin{lemma} \label{diff-est-proof-lemma1}
Assume assumptions in Theorem \ref{generalized-diff-est} hold. Consider  $f\in C(M)$ and $q\in C^3(M,\, \mathbb{R}^p)$ with $\texttt{supp} f \subset B_r(x)$ and $\texttt{supp} q \subset B_r(x)$ for $r < \texttt{inj}_x(M)$.
Denote
\begin{align*}
A_T\,&:=\frac{1}{h^d}\sum_{k=0}^{n-1} \mathcal{K}\left(\frac{{\mathcal{D}}_x(X_{k \Delta}) }{h} \right)\int_{k\Delta}^{(k+1)\Delta} 
\left( \int_{k\Delta}^s f(X_t) dt  \right){\sigma}_{q, \alpha}(X_s) dW^\alpha_s  \\
B_T\,&:=\frac{1}{h^d}\sum_{k=0}^{n-1} \mathcal{K}\left(\frac{{\mathcal{D}}_x(X_{k \Delta})}{h} \right)\Delta f(X_{k\Delta})  \int_{k\Delta}^{(k+1)\Delta}{\sigma}_{q, \alpha}(X_s) dW^\alpha_s \,, 
\end{align*}
where $T=n\Delta$. When $T$ is sufficiently large, we have 
\begin{align*}
\frac{A_T}{\sqrt{\Upsilon(T)}}= O_p\left( \frac{\Delta}{h^{d/2}}\right)\ \ \mbox{and}\ \ \frac{B_T}{\sqrt{\Upsilon(T)}}=  O_p\left( \frac{\Delta}{h^{d/2}}\right)\,.
\end{align*}
=\end{lemma}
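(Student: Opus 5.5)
Both $A_T$ and $B_T$ are continuous martingales in $T$, because on each block $[k\Delta,(k+1)\Delta)$ the integrand is $\mathcal{F}_s$-predictable. For $A_T$ it is $\mathcal{K}(\mathcal{D}_x(X_{k\Delta})/h)\int_{k\Delta}^s f(X_t)dt\,\sigma_{q,\alpha}(X_s)$, and for $B_T$ it is $\mathcal{K}(\mathcal{D}_x(X_{k\Delta})/h)\Delta f(X_{k\Delta})\sigma_{q,\alpha}(X_s)$. So, as in Lemma \ref{stoch-prob-order-computation-lemma}, I will bound the quadratic variation and then transfer the bound to the martingale. Blocks have disjoint time supports, so cross terms vanish. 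Using $|\int_{k\Delta}^s f(X_t)dt|\le \|f\|_\infty\Delta$ and $\|f\|_\infty<\infty$ (since $f$ is continuous and compactly supported in $B_r(x)$), I get, componentwise,
\begin{align*}
[A_T]+[B_T]\le \frac{2\|f\|_\infty^2\Delta^2}{h^{2d}}\sum_{k=0}^{n-1}\mathcal{K}^2\Big(\frac{\mathcal{D}_x(X_{k\Delta})}{h}\Big)\int_{k\Delta}^{(k+1)\Delta}\|\sigma_q(X_s)\|^2ds .
\end{align*}
By Lemma \ref{discrete-approximation-martingale-cross-quad-var}, applied with kernel $\mathcal{K}^2$ and $q^{(1)}=q^{(2)}$ a component of $\sigma_{q,\alpha}$, the sum divided by $h^d\Upsilon(T)$ equals $\frac{1}{h^d\Upsilon(T)}\int_0^T\mathcal{K}^2(\mathcal{D}_x(X_s)/h)\|\sigma_q(X_s)\|^2ds+O_p(\Delta/h^2)$. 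Lemma \ref{A1-analog} shows that the main term is $O_p(1)$; it converges in distribution to $g_\alpha(1)p_X(x)\|\sigma_q(x)\|^2\kappa_{2,0}$. Also $\Delta/h^2\to0$ under the bandwidth conditions of Theorem \ref{generalized-diff-est}. Hence $[A_T],[B_T]=O_p(\Delta^2\Upsilon(T)/h^d)$.

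To pass from the quadratic variation to the martingale, I will use the standard Lenglart-type domination, which follows from the BDG inequality (Theorem \ref{BDG-ineq}) by localization. It says that if $[M_T]=O_p(a_T)$ for a continuous local martingale $M$ with $M_0=0$, then $M_T=O_p(\sqrt{a_T})$. Taking $a_T=\Delta^2\Upsilon(T)/h^d$ gives $A_T/\sqrt{\Upsilon(T)}=O_p(\Delta/h^{d/2})$, and likewise for $B_T$. An alternative is to invoke Proposition \ref{thm3.16_limit_theorems_null} after rescaling by $h^{d/2}/\Delta$, mirroring Lemma \ref{Theta_T-analysis-lemma}.

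The main obstacle is this final transfer step. Pure Chebyshev would need $\mathbb{E}[A_T]^2$, which requires an expectation bound on the discretized kernel sum. That bound does follow from the second statement of Lemma \ref{A2-analog} together with $\mathbb{E}^M_\lambda$ of the continuous additive functional being $O(\Upsilon(T))$ (the argument of Lemma \ref{lemma: expectation expansion} and Proposition \ref{2.20and4.3}). I would use that route if the Lenglart argument needs justification. The remaining points are bookkeeping: all matrix entries are handled componentwise, and the vector structure of $\sigma_{q,\alpha}$ and the sum over $\alpha$ only change the constants.
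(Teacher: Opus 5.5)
Your proposal is correct and follows the paper's route. The paper also bounds $[A_T]$ and $[B_T]$ by $\frac{\Delta^2 C_1^2}{h^d}[\Theta^q_T]$ and gets $[\Theta^q_T]=O_p(\Upsilon(T))$ from the argument of Lemma~\ref{Theta_T-analysis-lemma}, which is your combination of Lemmas~\ref{discrete-approximation-martingale-cross-quad-var} and \ref{A1-analog} with kernel $\mathcal{K}^2$ and $\Delta/h^2\to 0$. The only difference is that you justify the step from the quadratic-variation bound to $A_T,B_T=O_p(\Delta\sqrt{\Upsilon(T)}/h^{d/2})$ via a Lenglart-type inequality, which the paper takes for granted.
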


\begin{proof}[Proof of Lemma \ref{diff-est-proof-lemma1}]
With $\mathcal{K}_{h}(\cdot):=\frac{1}{h^{d}}\mathcal{K}\left(\frac{{\mathcal{D}}_x(\cdot)}{h} \right)$ and \texttt{Tr} denoting matrix trace, we have
 \begin{align*}
\left[A_T\right]
=\,&\sum_{k=0}^{n-1} \mathcal{K}_{h}(X_{k \Delta})^2 \int_{k\Delta}^{(k+1)\Delta} 
 \left( \int_{k\Delta}^s f({X_t})dt  \right)^2 \texttt{Tr}\pi_{q,q}(X_s) \, ds \\
\le\,& \sum_{k=0}^{n-1} \mathcal{K}_{h}(X_{k \Delta})^2 \int_{k\Delta}^{(k+1)\Delta} 
 \left( \int_{k\Delta}^{(k+1)\Delta} f(X_t)dt  \right)^2 \texttt{Tr}\pi_{q,q}(X_s) \, ds  \\ 
\leq\,& \Delta^2   C_1^2   \sum_{k=0}^{n-1} \mathcal{K}_{h}(X_{k \Delta})^2 \int_{k\Delta}^{(k+1)\Delta} \texttt{Tr}\pi_{q,q}(X_s) ds 
= \frac{ \Delta^2C_1^2 }{h^{d}}  [\Theta_T^q] \,,
\end{align*}
where $C_1>0$ is finite depending on the regularity assumption of $f$, and $\Theta_T^q$ is defined in \eqref{theta_t_definition}.
By the same argument for $[\Theta_T^q] $ in Lemma \ref{Theta_T-analysis-lemma}, we have $[\Theta_T^q] =  O_p(\Upsilon(T))$ since our assumptions imply $\frac{\Delta}{h^2} \xrightarrow{} 0$, and thus
$\left[A_T\right] = O_p\big( \frac{\Delta^2   \Upsilon(T)}{h^d}\big)$,
which leads to
$A_T = O_p\big( \frac{\Delta   \sqrt{\Upsilon(T)}}{h^{d/2}}\big)$. 
For $B_T$, we have
\begin{align*}
    [B_T] &= \Delta^2 \sum_{k=0}^{n-1} \mathcal{K}_{h}(X_{k \Delta})^2 f(X_{k\Delta})^2 \int_{k\Delta}^{(k+1)\Delta} \pi_{q,q}(X_s) \, ds 
    \le  \frac{ \Delta^2C_1^2 }{h^{d}}  [\Theta_T^q]\,,
\end{align*} 
and hence the claim.
\end{proof}

Now, we prove the main theorem of the section.

\begin{proof}[Proof of Theorem \ref{generalized-diff-est}]
We start with introducing notation. Denote $\mathcal{K}_{h}(\cdot):=\frac{1}{h^{d}}\mathcal{K}\left(\frac{{\mathcal{D}}_x(\cdot)}{h} \right)$. For $k=0,1,\ldots,n-1$, we suppress the notation for the dependence of $\Omega$ on $k$ and define 
\begin{align*}
   {\Omega}(x) := (f(x) - f(X_{k\Delta}))(q(x) - q(X_{k\Delta}))^\top\in C_0^3(M,\mathbb{R}^{p \times p}) \,,
\end{align*}
which appears in the numerator of $\hat{\pi}_{f,q}(x)$.
We use superscripts to index coordinates. Specifically, $f^a$ and $q^b$ denote the $a$-th and $b$-th components of $f$ and $q$, respectively, and $\Omega^{a,b}$ denotes the $(a,b)$-th entry of $\Omega$, where $a,b = 1, \cdots, p$. We work in normal coordinates on $B_r(x)$. 
Since $\Omega(X_{k\Delta})=0$, for $s \in[k \Delta, (k+1)\Delta]$, It\^o's formula \eqref{ito-integral} gives  
\begin{align}
    \Omega(X_s) 
 =\,&  \int_{k\Delta}^{s} \underbrace{\Big(\mu^j  + \frac{1}{2} \sum_{l =1}^r\sigma_l^i \partial_i \sigma_l^j\Big)(X_v)\Omega_j(X_v)}_{:=\mathcal{M}(X_v)}\, dv \nonumber\\
&+ \int_{k\Delta}^{s}\underbrace{(\sigma^j_l \Omega_j)(X_v)}_{:=\mathcal{S}_l(X_v)}\, dW^l_v  + \int_{k\Delta}^{s} \underbrace{ \frac{1}{2} \sum_{l =1}^r  (\sigma_l^i \sigma_l^j \Omega_{i,j})(X_v)}_{:=\mathcal{H}(X_v)}dv \label{mathcal{S}^{a,b}}\,.
\end{align}
We therefore have
\begin{align*}
    \hat{\pi}_{f,q}(x) &= \hat{\Pi}_1(x) + \hat{\Pi}_2(x) + \hat{\Pi}_3(x)\,,
\end{align*}
where 
\begin{align}
    \hat{\Pi}_i(x) := \frac{\Gamma_i(x)}{{ \hat{L}}(x)} \label{hat-pi-defn}
\end{align}
for $i = 1,2,3$, and 
\begin{align}
     \Gamma_1(x) &=  \sum_{k=0}^{n-1} \mathcal{K}_{h}(X_{k\Delta}) \int_{k\Delta}^{(k+1)\Delta} \mathcal{M}(X_s) ds  \in \mathbb{R}^{p \times p} \nonumber\\
      \Gamma_2(x) &=  \sum_{k=0}^{n-1} \mathcal{K}_{h}(X_{k\Delta}) \int_{k\Delta}^{(k+1)\Delta} \mathcal{S}_\alpha(X_s) dW^\alpha  \in \mathbb{R}^{p \times p}\label{Gamma2} \\
      \Gamma_3(x) &=  \sum_{k=0}^{n-1} \mathcal{K}_{h}(X_{k\Delta}) \int_{k\Delta}^{(k+1)\Delta} \mathcal{H}(X_s) ds   \in \mathbb{R}^{p \times p} \nonumber\,.
\end{align}
As we will show below, asymptotically $\hat{\Pi}_1(x)$ is negligible, $\hat{\Pi}_2(x)$ goes to a Gaussian random matrix, and $\hat{\Pi}_3(x)$ gives the targeting diffusion matrix. Note that with subscripts denoting differentiation in normal coordinates,
\begin{align}
 \Omega_i^{a,b}(x) = &\, f_i^a(x)(q^b(x) - q^b(X_{k\Delta})) +  q_i^b(x)(f^a(x) - f^a(X_{k\Delta}))\nonumber \\
  \Omega_{i,j}^{a,b}(x) = &\, f_{i,j}^a(x)(q^b(x) - q^b(X_{k\Delta})) \label{Omega_i}  \\
&+  q_{i,j}^b(x)(f^a(x) - f^a(X_{k\Delta})) +f_i^a(x) q_j^b(x) +    f_j^a(x) q_i^b(x) \,. \nonumber 
\end{align}

We start with $ \hat{\Pi}_3(x)$. By plugging \eqref{Omega_i} into $\mathcal{H}(X_s)$, by symmetry we obtain 
\begin{align*}
e_a^\top\mathcal{H}(X_s)&e_b-\pi_{f,q}^{a,b}(X_s)=\frac{1}{2}\sum_{\alpha=1}^r(\sigma_\alpha^i\sigma_\alpha^j)(X_s)\\
&\times \big(f^a_{i,j}(X_s)(q^b(X_s)-q^b(X_{k\Delta}))+q^a_{i,j}(X_s)(f^b(X_s)-f^b(X_{k\Delta}))\big)\,.
\end{align*}
Therefore, by the same analysis of $\mathsf{B}$ as in \eqref{definition Err B}, for $a,b=1,\ldots,p$, we get 
\begin{align*}
e_a^\top(\hat{\Pi}_3(x)- \pi_{f,g}(x) )e_b &= h^2 e_a^\top B^{\pi}_{f,g}e_b + O_p\left( \frac{\Delta}{h^2}\right) + o_p(h^2) \,.
\end{align*}
Hence, under the assumptions that $\frac{h^d \Upsilon(T)}{\Delta} \xrightarrow{}\infty$ and $h^{d-4} \Upsilon(T) \Delta \xrightarrow[]{} 0$, we have $\frac{{ \hat{L}}(x)}{\Upsilon(T)}  \xrightarrow[]{\hspace{0.1cm}d\hspace{0.1cm}}  g_\alpha(1)p_X(x)$ by Lemma \ref{A1-analog}, which combined with the assumption that $\frac{h^{d+4}  \Upsilon(T)}{\Delta} \xrightarrow[]{\hspace{0.1cm}\hspace{0.1cm}} C$ leads to
\begin{align*}
\sqrt{\frac{h^d{ \hat{L}}(x)}{\Delta}}e_a^\top (\hat{\Pi}_3(x)-\pi_{f,g}(x)-h^2B^{\pi}_{f,g})e_b= o_p(1)\,. 
\end{align*}

Next, we analyze $\hat{\Pi}_1(x)$. When $s \in [k\Delta, (k+1)\Delta]$, by plugging the It\^o's formula \eqref{ito-integral}, 
\[
q^b(X_s) - q^b(X_{k\Delta}) = \int_{k\Delta}^s \mu_{q^b}(X_t) \, dt + \int_{k\Delta}^s \sigma_{q^b, \alpha}(X_t) \, dW^\alpha_t 
\]
into \eqref{Omega_i}, we have a decomposition
$\Gamma_1(x)=\Gamma^{f,\mu}_1(x)+\Gamma^{f,\sigma}_1(x)+\Gamma^{q,\mu}_1(x)+\Gamma^{q,\sigma}_1(x)$, where 
\begin{align*}
& e_a^\top \Gamma_1^{f,\mu}(x) e_b := 
 \sum_{k=0}^{n-1} \mathcal{K}_{h}(X_{k\Delta}) \int_{k\Delta}^{(k+1)\Delta} 
\left( \int_{k\Delta}^s  \mu_{q^b}(X_t) dt\right) \zeta^a(X_s) ds\,,
\\
&e_a^\top \Gamma_1^{f,\sigma}(x) e_b :=  
 \sum_{k=0}^{n-1} \mathcal{K}_{h}(X_{k\Delta})\int_{k\Delta}^{(k+1)\Delta} 
\left( \int_{k\Delta}^s  \sigma_{q^b, \alpha}(X_t)dW^\alpha _t\right)\zeta^a(X_s) ds\,,
\end{align*}
$\zeta^a:=\left(\mu^i  + \frac{1}{2} \sum_{\alpha =1}^r\sigma_\alpha^\ell \partial_\ell \sigma_\alpha^i\right) f_i^a$, and $\Gamma^{q,\mu}_1(x)$ and $\Gamma^{q,\sigma}_1(x)$ are defined similarly. We only analyze $\Gamma^{f,\mu}_1(x)$ and $\Gamma^{f,\sigma}_1(x)$ since the analysis for the reversely defined quantities is identical. By the regularity assumption, for all indices $a,b$ we have  
\begin{align*}
|e_a^\top \Gamma_1^{f,\mu}(x) e_b| \le    C  \Delta^2 \sum_{k=0}^{n-1} \mathcal{K}_{h}(X_{k\Delta})
\end{align*}
for some finite constant $C>0$ due to the regularity assumption. By Lemmas \ref{A1-analog} and \ref{A2-analog}, 
\begin{align*}
 \Delta \sum_{k=0}^{n-1} \mathcal{K}_{h}(X_{k\Delta}) &= O_p(\Upsilon(T)) + O_p\left( \frac{\Delta   \Upsilon(T)}{h^2}\right) 
\end{align*}
and hence
\begin{align*}
e_a^\top \Gamma_1^{f,\mu}(x) e_b=O_p(\Delta   \Upsilon(T)) + O_p\left( \frac{\Delta^2   \Upsilon(T)}{h^2}\right) \,.
\end{align*}
The term $\Gamma_{1}^{f,\sigma}$ involves martingales and is more complicated. Using stochastic integration by parts (Lemma \ref{integration-by-parts}), we decompose
\begin{align*}
e_a^\top {\Gamma}_1^{f,\sigma} e_b 
=\,& \underbrace{ \sum_{k=0}^{n-1} \mathcal{K}_{h}(X_{k\Delta}) \int_{k\Delta}^{(k+1)\Delta} 
\zeta^a(X_s)ds   \times \int_{k\Delta}^{(k+1)\Delta} {\sigma}_{q^b, \alpha}(X_s) dW^\alpha_s}_{:=e_a^\top {\Gamma}_{1,1}^{f,\sigma} e_b}\\
&- \underbrace{ \sum_{k=0}^{n-1} \mathcal{K}_{h}(X_{k\Delta}) \int_{k\Delta}^{(k+1)\Delta} 
\left( \int_{k\Delta}^s \zeta^a(X_t)dt  \right){\sigma}_{q^b, \alpha}(X_s) dW^\alpha_s}_{:=e_a^\top {\Gamma}_{1,2}^{f,\sigma} e_b}\,.
\end{align*}
Lemma \ref{diff-est-proof-lemma1} gives $e_a^\top \Gamma_{1,2}^{f,\sigma} e_b = O_p\big( \frac{\Delta   \sqrt{\Upsilon(T)}}{h^{d/2}}\big)$. We turn to $e_a^\top {\Gamma}_{1,1}^{f,\sigma} e_b$, which involves kerneled summation of products of $\int_{k\Delta}^{(k+1)\Delta} \zeta^a(X_s)ds$ and  $\int_{k\Delta}^{(k+1)\Delta} {\sigma}_{q^b, \alpha}(X_s) dW^\alpha_s$. We estimate it by leveraging approximations at uniform discrete times:
\begin{align*}
e_a^\top {\Gamma}_{1,1}^{f,\sigma}\,& e_b
= \underbrace{ \Delta \sum_{k=0}^{n-1} \mathcal{K}_{h}(X_{k\Delta})  \zeta^a(X_{k\Delta}) \times \int_{k\Delta}^{(k+1)\Delta}{\sigma}_{q^b, \alpha}(X_s) dW^\alpha_s}_{:=e_a^\top \hat{\Gamma}_{1,1}^{f,\sigma} e_b} \\
&+\underbrace{  \sum_{k=0}^{n-1} \mathcal{K}_{h}(X_{k\Delta})   \int_{k\Delta}^{(k+1)\Delta} 
  {\sigma}_{q^b, \alpha}(X_s) dW^\alpha_s \int_{k\Delta}^{(k+1)\Delta} 
  (\zeta^a(X_s) - \zeta^a(X_{k\Delta}))\, ds }_{:=e_a^\top\left( {\Gamma}_{1,1}^{f,\sigma} -\hat{\Gamma}_{1,1}^{f,\sigma}\right) e_b}\,. 
\end{align*}
Applying Lemma \ref{diff-est-proof-lemma1} again, we obtain $e_a^\top \hat{\Gamma}_{1,1}^{f,\sigma} e_b=  O_p\big( \frac{\Delta   \sqrt{\Upsilon(T)}}{h^{d/2}}\big)$. We continue to control $e_a^\top\big( {\Gamma}_{1,1}^{f,\sigma} -\hat{\Gamma}_{1,1}^{f,\sigma}\big) e_b$. Note that this term is a kernel summation of products of a martingale term and a discrete-time approximation, which needs a treatment different from all those above. Recall that $\mathcal{F}_{\Delta k}$ is the sub $\sigma$-algebra of the filtration generated by the process $X_t$ up to time $\Delta k$. By the linearity of expectation and the tower property, we have 
 \begin{align*}
& \mathbb{E}^M_\lambda \left|e_a^\top \Big({\Gamma}_{1,1}^{f,\sigma} -\hat{\Gamma}_{1,1}^{f,\sigma}\Big)e_b \right|  \\
  \le\,&  \sum_{k=0}^{n-1}  
 \mathbb{E}^M_\lambda \bigg(\mathcal{K}_{h}(X_{k\Delta})\bigg|\int_{k\Delta}^{(k+1)\Delta} {\sigma}_{q^b, \alpha}(X_s) dW^\alpha_s    \int_{k\Delta}^{(k+1)\Delta} 
  (\zeta^a(X_s)- \zeta^a(X_{k\Delta})) \, ds  \bigg|\bigg)\\
=\,&  \sum_{k=0}^{n-1}  \mathbb{E}^M_\lambda \Bigg( \mathcal{K}_{h}(X_{k\Delta})   \mathbb{E}^M_\lambda\bigg( \bigg|\int_{k\Delta}^{(k+1)\Delta}   {\sigma}_{q^b, \alpha}(X_s) dW^\alpha_s \bigg|
  \bigg| \int_{k\Delta}^{(k+1)\Delta} (\zeta^a(X_s)- \zeta^a(X_{k\Delta})) \, ds  \bigg|  \Bigg| \mathcal{F}_{\Delta k}\bigg) \Bigg) \,.
 \end{align*}
Then, apply the Cauchy-Schwarz inequality to bound the inner expectation by
\begin{align}
&\Bigg(\underbrace{ \mathbb{E}^M_\lambda\bigg( \bigg| \int_{k\Delta}^{(k+1)\Delta} 
  (\zeta^a(X_s)-  \zeta^a(X_{k\Delta})) \, ds  \bigg|^2 \Bigg| \mathcal{F}_{\Delta k} \bigg)}_{:=\mathcal{I}_1}\Bigg)^{1/2} \nonumber \\
  &\quad\times \Bigg(\underbrace{ \mathbb{E}^M_\lambda \bigg(\bigg|\int_{k\Delta}^{(k+1)\Delta}   {\sigma}_{q^b, \alpha}(X_s) dW^\alpha_s \bigg|^2\Bigg| \mathcal{F}_{\Delta k} \bigg)}_{:=\mathcal{I}_2}\Bigg)^{1/2}\,. \nonumber 
\end{align}
First, we use the trivial bound 
\begin{align*}
\mathcal{I}_2 \le   \mathbb{E}^M_\lambda \bigg( \sup_{s \in [k\Delta, (k+1)\Delta]} \bigg|\int_{k\Delta}^{s}{\sigma}_{q^b, \alpha}(X_s) dW^\alpha_s \bigg|^2\Bigg| \mathcal{F}_{\Delta k} \bigg)  \le C_1 \mathbb{E}([M^{k}_{\Delta}])\,,
\end{align*}
where the second bound is via the BDG inequality in Theorem \ref{BDG-ineq} with $p = 1$ and 
\[
M^k_t:= \int_{{k\Delta}}^{k\Delta + t} \sigma_{q^b, \alpha}(X_s)dW^\alpha_{s}
\] 
for $t \in [0, \Delta]$. Then, since 
\[
[M^k_\Delta] = \int_{k\Delta}^{(k+1)\Delta} \sum_{\alpha=1}^r \sigma_{q^b,\alpha}(X_s)^2 ds \le C_2\Delta\,,
\] 
where $C_2>0$ is finite by the regularity assumption, we obtain $\mathcal{I}_2\le C_1 C_2 \Delta$.
With It\^o's formula \eqref{ito-integral},  
\[
\zeta^a(X_s) - \zeta^a(X_{k\Delta}) = \int_{k\Delta}^{s} \mu_{\zeta^a}(X_t) \, dt + \int_{k\Delta}^s  
\sigma_{\zeta^a,\alpha}(X_t) \, dW^\alpha_t \,,
\] 
when $s \in [k\Delta, (k+1)\Delta]$, the bound $(a+b)^2 \le 2(a^2 + b^2)$ gives the control of $\mathcal{I}_1$:
\begin{align*}
\mathcal{I}_1  \le  &\,
  2\underbrace{ \mathbb{E}^M_\lambda\bigg( \Delta   \int_{k\Delta}^{(k+1)\Delta} 
  \bigg( \int_{k\Delta}^{s}  \mu_{\zeta^a}(X_t)\, dt \bigg) ^2 \, ds\bigg| \mathcal{F}_{k\Delta} \bigg)}_{:=\mathcal{I}_{1,1}} \\
    &+ 2\underbrace{ \mathbb{E}^M_\lambda\bigg( \Delta   \int_{k\Delta}^{(k+1)\Delta} \bigg( \int_{k\Delta}^s  \sigma_{\zeta^a,\alpha}(X_t) \, dW^\alpha_t \bigg) ^2 \, ds\bigg| \mathcal{F}_{k\Delta} \bigg)}_{:=\mathcal{I}_{1,2}} \,
\end{align*}
By the Cauchy-Schwarz inequality again, 
\begin{align*}
 \mathcal{I}_{1,1} \le \,&  \mathbb{E}^M_\lambda\bigg( \Delta   \int_{k\Delta}^{(k+1)\Delta} 
   (s- k\Delta) \int_{k\Delta}^{s}  \mu_{\zeta^a}(X_t)
  ^2 \, dt \, ds\bigg| \mathcal{F}_{k\Delta} \bigg) \\   
  \le \,& \mathbb{E}^M_\lambda\bigg( \Delta^2   \int_{k\Delta}^{(k+1)\Delta} 
   \int_{k\Delta}^{(k+1)\Delta}  \mu_{\zeta^a}(X_t)^2 \, dt  \, ds\bigg| \mathcal{F}_{k\Delta} \bigg)
  \leq  C_3\Delta^4 \,,
\end{align*}
where $C_3>0$ is finite due to the regularity assumption. We turn to $\mathcal{I}_{1,2}$.  
By invoking the BDG inequality (Theorem \ref{BDG-ineq}) for the process $M^k_t:= \int_{k\Delta}^{k\Delta + t} \sigma_{\zeta^a,\alpha}(X_s) dW^\alpha_s$ and setting  
\begin{align*}
s^* \in { \text{argmax}_{s\in [k\Delta ,(k+1)\Delta)}} \ ( M^k_{s - k\Delta} )^2\,,
\end{align*}
we bound 
\begin{align*}
\mathcal{I}_{1,2}  \le  \,& \mathbb{E}^M_\lambda\bigg( \Delta   \int_{k\Delta}^{(k+1)\Delta} 
  ( M^k_{s^*-k\Delta} ) ^2 \, ds\bigg| \mathcal{F}_k \bigg)\\ 
  \le \,& C_1 \mathbb{E}^M_\lambda\bigg( \Delta   \int_{k\Delta}^{(k+1)\Delta} 
\Big( \int_{k\Delta}^{(k+1)\Delta} C_4\, dt\Big) \, ds\bigg| \mathcal{F}_k \bigg) = C_1C_4 \Delta^3\,,
\end{align*}
where the first bound is trivial by definition of $s^*$, and the second bound arises from the BDG inequality applied with $p=1$ and $|\mathcal{S}^{a}_\alpha(X_t)|^2 \le C_4$ for a finite constant $C_4>0$ by  the regularity assumption. 
We thus have $\mathcal{I}_1=O(\Delta^3)$. 
As a result, 
\begin{align*}
\left| \mathbb{E}^M_\lambda\left( e_a^\top \left({\Gamma}_{1,1}^{f,\sigma} -\hat{\Gamma}_{1,1}^{f,\sigma}\right)e_b \right) \right| 
  \le \,&
 C_1\sqrt{C_2C_4}  \sum_{k=0}^{n-1}  \mathbb{E}^M_\lambda \bigg( \mathcal{K}_{h}(X_{k \Delta}) \Delta^{2} \bigg) \\
= \,& C_1\sqrt{C_2C_4}    \Delta   \mathbb{E}^M_\lambda ({ \hat{L}}(x)) = O(\Delta   \Upsilon(T)) \,,
\end{align*}
where the last equality comes from Lemmas \ref{A1-analog} and \ref{A2-analog}, and hence
\begin{align}
 e_a^\top \big({\Gamma}_{1,1}^{f,\sigma} -\hat{\Gamma}_{1,1}^{f,\sigma}\big)e_b &= O_p(\Delta \Upsilon(T)) \,. \label{Gamma11-approx-eqn}
\end{align}
By the assumptions $\frac{h^d \Upsilon(T)}{\Delta} \xrightarrow{}\infty$ and $h^{d-4} \Upsilon(T) \Delta \xrightarrow[]{} 0$, we have $\frac{\Delta}{h^2} \xrightarrow{} 0$. Combining the analyses,  we conclude 
\begin{align*}
e_a^\top {\Gamma}_1 e_b= O_p(\Delta   \Upsilon(T)) + O_p\left( \frac{\Delta   \sqrt{\Upsilon(T)}}{h^{d/2}}\right) \,.
\end{align*} 
By assumptions $h^{d-4}\Upsilon(T)\Delta \to 0$ and $\Delta \xrightarrow[]{} 0$, with \eqref{reciprocal-lemma-nu(T)-hatL} we conclude
\begin{align*}
&\,\sqrt{\frac{h^d{ \hat{L}}(x)}{\Delta} }e_a^\top \hat{\Pi}_1 e_b = \sqrt
 {\frac{h^d}{{ \hat{L}}(x)\Delta}}e_a^\top {\Gamma}_1 e_b\\
=&\, O_p\left(\sqrt{h^d  \Upsilon(T)\Delta}\sqrt{\frac{\Upsilon(T)}{{ \hat{L}}(x)}} \right) + O_p\left( \Delta^{1/2}\sqrt{\frac{\Upsilon(T)}{{ \hat{L}}(x)}}\right) = o_p(1) \,. 
\end{align*}

It remains to analyze $\hat{\Pi}_2(x)$. By applying It\^o's formula to $q^b(X_s)-q^b(X_{k\Delta})$ and $f^b(X_s)-f^b(X_{k\Delta})$, we rewrite $\Gamma_2(x)=\Gamma_2^{q,\mu}(x)+\Gamma_2^{q,\sigma}(x)+\Gamma_2^{f,\mu}(x)+\Gamma_2^{f,\sigma}(x)$, where 
\begin{align*}
& e_a^\top {\Gamma}_2^{q,\mu} e_b:=  \sum_{k=0}^{n-1} \mathcal{K}_{h}(X_{k\Delta}) \int_{k\Delta}^{(k+1)\Delta} 
\left( \int_{k\Delta}^s  \mu_{q^b}(X_t) dt\right) \sigma_{f^a,\alpha}(X_s) dW^\alpha_s\,,
\\
&e_a^\top {\Gamma}_2^{q,\sigma} e_b :=  \sum_{k=0}^{n-1} \mathcal{K}_{h}(X_{k\Delta}) \int_{k\Delta}^{(k+1)\Delta} 
\left( \int_{k\Delta}^s  {\sigma}_{q^b, \beta}(X_t)dW^\beta _t\right) \sigma_{f^a,\alpha}({X_s}) dW^\alpha_s\,,
\end{align*}
and $\Gamma_2^{f,\mu}(x)$ and $\Gamma_2^{f,\sigma}(x)$ are defined similarly. As we did for ${\Gamma}_1$, we only analyze $\Gamma_2^{q,\mu}(x)$ and $\Gamma_2^{q,\sigma}(x)$ as the analysis of the reverse scenario is identical. 
By Lemma \ref{diff-est-proof-lemma1}, we have  
$e_a^\top{\Gamma}_2^{q,\mu} e_b = O_p\big( \frac{\Delta   \sqrt{\Upsilon(T)}}{h^{d/2}}\big) $. So,
\begin{align}
 \frac{h^{d/2}}{\sqrt{\hat{L}^{\texttt{(o)}}(x) \Delta}}e_a^\top{\Gamma}_2^{q,\mu} e_b &=  O_p\left(\sqrt{\frac{\Delta\Upsilon(T)}{{ \hat{L}}(x)}} \right) = o_p(1)\,, \label{normalized-order-gamma2mu}
\end{align}
where we use \eqref{reciprocal-lemma-nu(T)-hatL} and $\Delta\to 0$ in the last equality.
The analysis of ${\Gamma}_2^{q,\sigma}$, as a locally square integrable local martingale, is more involved due to the double stochastic integration. Denote 
\[
M_{b,k}(t):= \int_{k\Delta}^{k\Delta + t} \sigma_{q^b, \alpha}(X_s)dW^\alpha_{s}
\] 
for $t \in [0, \Delta)$. The quadratic variation of $e_a^\top{\Gamma}_2^{q,\sigma} e_b$ becomes
\begin{align*}
&\big[e_a^\top{\Gamma}_2^{q,\sigma} e_b\big]
=  \sum_{k=0}^{n-1} \mathcal{K}_{h}(X_{k\Delta})^2\int_{k\Delta}^{(k+1)\Delta} 
 M_{b,k}(t- k \Delta) ^2 \pi_{f}^{a,a}(x)(X_t) \, dt\,,
\end{align*}
where we use $\pi_{f}:=\pi_{f,f}$ to simplify the notation.
By the Doob-Meyer decomposition we have 
\begin{align*}
    M_{b,k}(s- k \Delta)^2 &= \int_{k\Delta}^s \pi_{q}^{b,b}(X_t)\, dt + 2\int_{k\Delta}^s \left( \int_{k\Delta}^t {\sigma}_{q^b, \gamma}(X_u) dW^\gamma_u \right){\sigma}_{q^b, \beta}(X_t)dW^\beta_t \,,
\end{align*}  
and hence
\begin{align*}
\big[&e_a^\top\Gamma_2^{q,\sigma} e_b\big]
= \underbrace{\sum_{k=0}^{n-1} \mathcal{K}_{h}(X_{k\Delta})^2\int_{k\Delta}^{(k+1)\Delta} \left( \int_{k\Delta}^s  \pi_{q}^{b,b}(X_t) dt\right)\pi_{f}^{a,a}(X_s) ds}_{:=e_a^\top \mathsf{A} e_b}
\\
&+\underbrace{\sum_{k=0}^{n-1} \mathcal{K}_{h}(X_{k\Delta})^2\int_{k\Delta}^{(k+1)\Delta}  \int_{k\Delta}^s  M_{b,k}(t - k \Delta)\sigma_{q^b, \beta}(X_t)dW^\beta_t  \pi_{f}^{a,a}(X_s) \, ds}_{:=e_a^\top \mathsf{Z}e_b}\,.
\end{align*}
By the stochastic Fubini Theorem (Lemma \ref{integration-by-parts}), $e_a^\top \mathsf{Z}e_b$ becomes
\begin{align*}
& \underbrace{2\sum_{k=0}^{n-1} \mathcal{K}_{h}(X_{k\Delta})^2 \int_{k\Delta}^{(k+1)\Delta} M_{b,k}(t - k\Delta) {\sigma}_{q^b, \beta}(X_t)dW^\beta_t \int_{k\Delta}^{(k+1)\Delta} \pi_{f}^{a,a}(X_s) \, ds   }_{e_a^\top\mathsf{Z}_1e_b}\\
&- \underbrace{2\sum_{k=0}^{n-1} \mathcal{K}_{h}(X_{k\Delta})^2\int_{k\Delta}^{(k+1)\Delta} M_{b,k}(s - k\Delta) {\sigma}_{q^b, \beta}(X_s)  \int_{k\Delta}^s  \pi_{f}^{a,a}(X_t)dt dW_s^\beta}_{e_a^\top \mathsf{Z}_2e_b}.
\end{align*}
The quadratic variation of $e_a^\top \mathsf{Z}_2e_b$ can be directly controlled by
\begin{align*}
&\left[\frac{h^d}{\Delta }e_a^\top \mathsf{Z}_2e_b\right]  \\ 
=\,& \frac{4h^{2d}}{\Delta^2}\sum_{k=0}^{n-1} \mathcal{K}_{h}(X_{k\Delta})^4\int_{k\Delta}^{(k+1)\Delta} 
  \left(\int_{k\Delta}^s  \pi_{f}^{a,a}(X_t) dt\right)^2 M_{b,k}(s - k\Delta)^2  \pi_{q}^{b,b}(X_s)
 \,ds  \\
\le \, 
  &  4C^3h^{2d}\sum_{k=0}^{n-1} \mathcal{K}_{h}(X_{k\Delta})^4\int_{k\Delta}^{(k+1)\Delta}  M_{b,k}(s - k\Delta)^2  ds \,,
\end{align*} 
where in the last inequality we use $\max\{| \pi_{f}^{a,b}(X_t) | ,\,| \pi_{q}^{a,b}(X_t) | \}\le C$, where $C>0$ is finite by the regularity assumptions on $f,q, \mu$ and $\sigma$.
We now apply the BDG inequality similar to the control of $\mathcal{I}_{1,2}$ above, and obtain
\begin{align*}
& \mathbb{E}^M_\lambda \left(\left[\frac{h^d}{\Delta }e_a^\top \mathsf{Z}_2 e_b\right]\right)  \\
\le \,& 
\frac{4C^3}{h^{d}}  \mathbb{E}^M_\lambda\left(\sum_{k=0}^{n-1} \frac{1}{h^{d}}\mathcal{K}\left(\frac{{\mathcal{D}}_x(X_{k \Delta}) }{h} \right)^4\int_{k\Delta}^{(k+1)\Delta} 
  \mathbb{E}^M_\lambda \big( M_{b,k}(s^* - k\Delta)^2 \big| \mathcal{F}_k\big) ds \right) \\
 \le \,& \frac{4C^3 C_1   \Delta }{h^{d}} \mathbb{E}^M_\lambda\left(\sum_{k=0}^{n-1}\frac{1}{h^{d}}\mathcal{K}\left(\frac{{\mathcal{D}}_x(X_{k \Delta}) }{h} \right)^4\int_{k\Delta}^{(k+1)\Delta} \pi_{q}^{b,b}(X_u) du \right)  \,,
\end{align*}
where the term inside the expectation is the same as $[\Theta^{q^b}(T)]$, where $\Theta^{q^b}(T)$ is defined in \eqref{theta_t_definition} with $\mathcal{K}^2$. By Lemma \ref{Theta_T-analysis-lemma}, $[\Theta^{q^b}(T)]=O(\Upsilon(T))$.
Therefore, we have
$\mathbb{E}^M_\lambda \left[\frac{h^d}{\Delta }e_a^\top \mathsf{Z}_2 e_b\right]= O\left(\frac{\Delta   \Upsilon(T)}{h^d}\right)$, and hence 
\begin{align}
\frac{h^d}{\Delta }e_a^\top \mathsf{Z}_2 e_b  &= O_p\left( \frac{\sqrt{\Delta\Upsilon(T)}}{h^{d/2}}\right)  \label{gamma_2_sigma_Z_2}\,.
\end{align}
Next, we control $e_a^\top \mathsf{Z}_1 e_b$ by approximation at uniform discrete times:
\begin{align*}
 e_a^\top \mathsf{Z}_1e_b 
= \,& \underbrace{2\sum_{k=0}^{n-1} \mathcal{K}_{h}(X_{k\Delta})^2 \Delta \pi_{f}^{a,a}(X_{k\Delta}) \int_{k\Delta}^{(k+1)\Delta} \left( \int_{k\Delta}^t {\sigma}_{g^b,\gamma}(X_u) dW^\gamma_u \right){\sigma}_{g^b, \beta}(X_t) d{W}^\beta_t}_{e_a^\top \hat{\mathsf{Z}}_1e_b}\\
&\,+ 2\sum_{k=0}^{n-1} \mathcal{K}_{h}(X_{k\Delta})^2\int_{k\Delta}^{(k+1)\Delta}
\left(\pi_{f}^{a,a}(X_s)-\pi_{f}^{a,a}(X_{k\Delta}) \right) ds  \\
&\qquad \underbrace{\qquad\times \int_{k\Delta}^{(k+1)\Delta} \left( \int_{k\Delta}^t {\sigma}_{g^b, \gamma}(X_u) dW^\gamma_u \right){\sigma}_{g^b, \beta}(Y_t) dW^\beta_t}_{e_a^\top (\mathsf{Z}_1 -  \hat{\mathsf{Z}}_1)e_b} \,.
\end{align*}
Just as for the computations leading to \eqref{gamma_2_sigma_Z_2}, we can compute $\frac{h^d}{\Delta }e_a^\top \hat{\mathsf{Z}}_1e_b = O_p\big( \frac{\sqrt{\Delta \Upsilon(T)}}{h^{d/2}}\big)$.
The analysis of $e_a^\top (\mathsf{Z}_1 -  \hat{\mathsf{Z}}_1)e_b$ is similar to that of $e_a^\top( {\Gamma}_{1,1}^{f,\sigma} -\hat{\Gamma}_{1,1}^{f,\sigma})e_b$ above in \eqref{Gamma11-approx-eqn}, except the quadratic variation of the former contains one less $\Delta$ factor. Hence, we achieve $e_a^\top (\mathsf{Z}_1 -  \hat{\mathsf{Z}}_1)e_b=O_p(\Delta^{1/2}  \Upsilon(T))$. We thus have
\begin{align}
   \frac{h^d}{\Delta} e_a^\top \mathsf{Z}e_b &= O_p\left( \frac{\sqrt{\Delta \Upsilon(T)}}{h^{d/2}}\right) + O_p(\Delta^{1/2}\Upsilon(T))\,.  \label{equation final Z order in Gammaq sigma_2}
\end{align}
We proceed to examine $\mathsf{A}$. 
By again approximating the process at uniform discrete times, 
\begin{align*}
 \pi_{q}^{b,b}(X_t)  &
\pi_{f}^{a,a}(X_s) 
  =   \left( \pi_{q}^{b,b}(X_t) -  \pi_{q}^{b,b}(X_{k \Delta}) \right)\pi_{f}^{a,a}(X_s)  \\
  &+  
\pi_{q}^{b,b}(X_{k\Delta})\left(\pi_{f}^{a,a}(X_s) - \pi_{f}^{a,a}(X_{k \Delta})\right) + 
  \pi_{q}^{b,b}(X_{k\Delta}) \pi_{f}^{a,a}(X_{k \Delta})\,,
\end{align*}
$e_a^\top \mathsf{A}e_b$ is decomposed into
\begin{align*}
& \underbrace{\sum_{k=0}^{n-1} \mathcal{K}_{h}(X_{k \Delta})^2 \int_{k\Delta}^{(k+1)\Delta} 
  \int_{k\Delta}^s \left(\pi_{q}^{b,b}(X_t) -  \pi_{q}^{b,b}(X_{k \Delta})\right)\pi_{f}^{a,a}(X_s)\,  dt ds}_{:=\mathsf{A}_1} \\
  +\,& \underbrace{\sum_{k=0}^{n-1} \mathcal{K}_{h}(X_{k \Delta})^2 \int_{k\Delta}^{(k+1)\Delta} 
  \int_{k\Delta}^s  \pi_{q}^{b,b}(X_{k\Delta})\left(\pi_{f}^{a,a}(X_s) - \pi_{f}^{a,a}(X_{k \Delta})\right)\,  dt ds}_{:=\mathsf{A}_2}\\
  +\,& \underbrace{\sum_{k=0}^{n-1} \mathcal{K}_{h}(X_{k \Delta})^2 \int_{k\Delta}^{(k+1)\Delta} 
  \int_{k\Delta}^s  \pi_{q}^{b,b}(X_{k\Delta}) \pi_{f}^{a,a}(X_{k \Delta})\,  dt ds}_{:=\mathsf{A}^q_3}\,.
\end{align*}
By the same application of It\^o's Lemma (e.g., the treatment of $[Z_t] - \widehat{[Z_t]}$ in Lemma  \ref{discrete-approximation-martingale-cross-quad-var}), we obtain 
\begin{align*}
\frac{h^d}{\Delta \Upsilon(T)}\mathsf{A}_1&\,= O_p(\Delta^{1/2} ) + O_p\Big( \sqrt{\frac{\Delta}{h^d \Upsilon(T)}} \Big)\,, \\
\frac{h^d}{\Delta \Upsilon(T)}\mathsf{A}_2&\,= O_p(\Delta^{1/2}  ) + O_p\Big( \sqrt{\frac{\Delta}{h^d \Upsilon(T)}} \Big)\,.
\end{align*}
We thus have
\begin{align*}
   \frac{h^d}{\Delta \Upsilon(T)} e_a^\top (\mathsf{A}_1+\mathsf{A}_2)e_b &= O_p(\Delta^{1/2}  ) + O_p\Big( \sqrt{\frac{\Delta}{h^d \Upsilon(T)}} \Big)
\end{align*}
by assumption. 
The nontrivial term in the analysis of $\hat{\Pi}_2$ is $\mathsf{A}^q_3$. Note that when we analyze $\Gamma^{f,\sigma}_2$, there exists a similar term, denoted likewise as $\mathsf{A}^f_3$, which is also nontrivial. We have
\begin{align*}
   &\frac{h^d}{\Delta \Upsilon(T)}(\mathsf{A}^q_3+\mathsf{A}^f_3) \\
   =&\,  \frac{\Delta}{ h^d \Upsilon(T)}\sum_{k=0}^{n-1} \mathcal{K}\left(\frac{{\mathcal{D}}_x(X_{k \Delta}) }{h} \right)^2 \frac{1}{2} (\pi_{q}^{b,b}(X_{k\Delta}) \pi_{f}^{a,a}(X_{k \Delta})+\pi_{q}^{b,b}(X_{k\Delta}) \pi_{q}^{a,a}(X_{k \Delta})) \\
  \xrightarrow[]{\hspace{0.1cm}d\hspace{0.1cm}} &\,\frac{1}{2}   \kappa_{2,0}   g_\alpha(1)p_X(x) (\pi_{q}^{b,b}(x) \pi_{f}^{a,a}(x)+\pi_{q}^{b,b}(x) \pi_{q}^{a,a}(x)) \,,
\end{align*}
where the convergence comes from Lemmas \ref{A1-analog} and \ref{A2-analog}.
With \eqref{equation final Z order in Gammaq sigma_2},
\begin{align*}
 \,& \frac{  \big[\frac{h^{d/2}}{\Delta^{1/2}} e_a^\top(\Gamma_2^{q,\sigma}+\Gamma_2^{f,\sigma}) e_b\big]}{\Upsilon(T)} =\frac{h^d}{\Delta \Upsilon(T)}(\mathsf{A}^q_3+\mathsf{A}^f_3)+ o_p(1)\\
\,&\qquad\qquad  \xrightarrow[]{\hspace{0.1cm}d\hspace{0.1cm}}    \kappa_{2,0}   g_\alpha(1)p_X(x) \frac{1}{2}(\pi_{q}^{b,b}(x) \pi_{f}^{a,a}(x)+\pi_{q}^{b,b}(x) \pi_{q}^{a,a}(x))\,.
\end{align*}

To obtain the desired weak convergence to the normal distribution, we need to evaluate cross quadratic variation. We sketch the key steps here without details since they are the same as the above quadratic variation terms. The cross quadratic variation of $\big[e_a^\top{\Gamma}_2^{q,\sigma} e_b,\, e_c^\top{\Gamma}_2^{q,\sigma} e_d\big]$ is
\begin{align*}
 \sum_{k=0}^{n-1} \mathcal{K}_{h}(X_{k \Delta})^2\int_{k\Delta}^{(k+1)\Delta} 
 M_{b,k}(t- k \Delta)M_{d,k}(t- k \Delta) \pi_{f}^{a,c}(x)(X_t) \, dt\,.
\end{align*}
With the polarization $M_{b,k}M_{d,k}=\frac{1}{4}((M_{b,k}+M_{d,k})^2-(M_{b,k}-M_{d,k})^2)$, we can apply the Doob-Meyer decomposition. The associated $\mathsf{A}$ part of the term involving $(M_{b,k}+M_{d,k})^2$ is
\[
 \sum_{k=0}^{n-1} \mathcal{K}_{h}(X_{k \Delta})^2\int_{k\Delta}^{(k+1)\Delta} \left( \int_{k\Delta}^s  (\pi_{q}^{b,b}+\pi_{q}^{d,d}+2\pi_{q}^{b,d})(X_t) dt\right)\pi_{f}^{a,c}(X_s) ds
\]
and hence the associated $\frac{h^d}{\Delta \Upsilon(T)}(\mathsf{A}^q_3+\mathsf{A}^f_3)$ part of the term involving $M_{b,k}M_{d,k}$ is
\begin{align*}
 &\frac{\Delta}{ h^d \Upsilon(T)}\sum_{k=0}^{n-1} \mathcal{K}\left(\frac{{\mathcal{D}}_x(X_{k \Delta}) }{h} \right)^2 \frac{1}{2} (\pi_{q}^{b,d}(X_{k\Delta}) \pi_{f}^{a,c}(X_{k \Delta})+\pi_{f}^{b,d}(X_{k\Delta}) \pi_{q}^{a,c}(X_{k \Delta})) \\
\,&\qquad\qquad  \xrightarrow[]{\hspace{0.1cm}d\hspace{0.1cm}}   \kappa_{2,0}   g_\alpha(1)p_X(x) \frac{1}{2}(\pi_{q}^{b,d}(x) \pi_{f}^{a,c}(x)+\pi_{f}^{b,d}(x) \pi_{q}^{a,c}(x))\,.
\end{align*}
Recall the index routine that for a random matrix $M\in \mathbb{R}^{p\times p}$ with $\texttt{cov}(\text{vec}(M))=A\otimes B\in \mathbb{R}^{p^2\times p^2}$, where $A$ and $B$ are non-negative definite, the $(i,j)$-th entry of $\texttt{cov}(\text{vec}(M))$ is $A_{a,b}B_{c,d}$, where $i=(b-1)p+a$ and $j=(d-1)+c$. With these preparation, by Proposition \ref{thm3.16_limit_theorems_null} and the same argument as \eqref{Remark 4.26 and Chapter 7 limit_theorems_null} in Theorem \ref{generalized-drift-est} with $\Theta^f$ replaced by $\frac{h^d}{\Delta } e_a^\top\Gamma_2^{q,\sigma} e_b$ and $\mathcal{K}$ replaced by $\mathcal{K}^2$, we have
\[
\frac{\frac{h^{d/2}}{\Delta^{1/2}}\text{vec}(\Gamma^{q,\sigma}_2+\Gamma^{f,\sigma}_2)}{\sqrt{{ \hat{L}}(x)}} \xrightarrow[]{\hspace{0.1cm}d\hspace{0.1cm}}  N\left(\boldsymbol{0},{\frac{\kappa_{2,0}}{2}(\pi_{f,f}(x) \otimes \pi_{q,q}(x)+\pi_{q,q}(x) \otimes \pi_{f,f}(x))}\right)\,.
\]
With \eqref{normalized-order-gamma2mu} and Slutsky's theorem, we obtain
    \begin{align*}
&\sqrt{\frac{h^d{ \hat{L}}(x)}{\Delta}} \hat{\Pi}_2(x) =  \frac{h^{d/2}}{\sqrt{{ \hat{L}}(x) \Delta}} (\Gamma^{q,\sigma}_2+\Gamma^{f,\sigma}_2)  + \frac{h^{d/2}}{\sqrt{{ \hat{L}}(x) \Delta}} (\Gamma^{q,\mu}_2+\Gamma^{f,\mu}_2) \\ 
=\,& \frac{h^{d/2}}{\sqrt{{ \hat{L}}(x) \Delta}} (\Gamma^{q,\sigma}_2+\Gamma^{f,\sigma}_2)  + o_p(1) \xrightarrow[]{\hspace{0.1cm}d\hspace{0.1cm}}  N\left(\boldsymbol{0}, \frac{\kappa_{2,0}}{2}(\pi_{f,f}(x) \otimes \pi_{q,q}(x)+\pi_{q,q}(x) \otimes \pi_{f,f}(x))\right) \,.
\end{align*}

By putting all the above together and introducing $\text{vech}$, Slutsky's theorem gives the desired convergence:
\begin{align*}
&\sqrt{\frac{h^d   { \hat{L}}(x)}{\Delta}} \left( \hat{\pi}_{f,g}(x) - \pi_{f,g}(x)   - h^2{B}^{\pi}_{f,g}(x)\right) \\
=\,&  \sqrt{\frac{h^d   { \hat{L}}(x)}{\Delta}} \left[ \hat{\Pi}_1(x) + \hat{\Pi}_2(x) +(\hat{\Pi}_3(x) -  {\pi}_{f,g}(x)   - h^2 {B}^{\pi}_{f,g}(x))\right] \\
=\,& \sqrt{\frac{h^d   { \hat{L}}(x)}{\Delta}} \hat{\Pi}_2(x) + o_p(1)  \xrightarrow[]{\hspace{0.1cm}d\hspace{0.1cm}} N\left(\boldsymbol{0}, \kappa_{2,0}\Xi_{f,g}(x)\right) \,.
\end{align*}
\end{proof}

To analyze the drift estimators, we require the following corollary, which will also be used in establishing tangent space estimation. Note that the relationship among $h$, $\Upsilon(T)$, and $\Delta$ differs from that in Theorem \ref{generalized-diff-est} by a factor of $\Delta$.

\begin{corollary}
\label{diffusion-estimator-at-drift-scale}
Suppose the assumptions in Theorem \ref{generalized-diff-est} hold, but replace assumptions $\frac{h^d  \Upsilon(T)}{\Delta} \xrightarrow{}\infty$,  $\frac{h^{d+4} \Upsilon(T)}{\Delta} \xrightarrow[]{} C$, $C>0$,  and $h^{d-4}  \Upsilon(T) \Delta \xrightarrow[]{} 0$ by  ${h^d \Upsilon(T)}\xrightarrow{}\infty$,  ${h^{d+4}   \Upsilon(T)}\xrightarrow[]{} C$ for $C>0$, and $h^{d-4} \Upsilon(T) {\Delta^2} \xrightarrow[]{} 0$. Then, 
\begin{align*}
\sqrt{{h^d   \hat{L}^{\texttt{(o)}}(x)}} \text{vech } \left( \hat{{\pi}}_{f,g}(x) - {\pi}_{f,g}(x)   - h^2 {B}^{\pi}_{f,g}(x)\right)
\xrightarrow[]{\hspace{0.1cm}p\hspace{0.1cm}}  \mathbf{0} \in \mathbb{R}^p\,.
\end{align*}
\end{corollary}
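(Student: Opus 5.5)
The plan is to reuse, term by term, the decomposition $\hat\pi_{f,q}=\hat\Pi_1+\hat\Pi_2+\hat\Pi_3$ from the proof of Theorem \ref{generalized-diff-est}. We then show that after multiplying by the coarser normalization $\sqrt{h^d\hat L(x)}$ every piece of $\hat\pi_{f,q}-\pi_{f,q}-h^2B^\pi_{f,q}$ vanishes in probability. The guiding observation is that $\sqrt{h^d\hat L(x)}=\sqrt{\Delta}\cdot\sqrt{h^d\hat L(x)/\Delta}$. The fluctuation term that produced the Gaussian limit in Theorem \ref{generalized-diff-est} is therefore multiplied by an extra $\sqrt\Delta\to0$. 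The bias and remainder terms must be rechecked under the new rate conditions, since the old conditions are no longer assumed. Throughout we use \eqref{reciprocal-lemma-nu(T)-hatL}, i.e.\ $\Upsilon(T)/\hat L(x)=O_p(1)$, and Theorem \ref{occ-density-theorem}, which applies because the new conditions imply $\Delta/h^2\to0$. Indeed, $\Delta^2h^{d-4}\Upsilon(T)\to0$ together with $h^d\Upsilon(T)\to\infty$ gives $\Delta^2/h^4\to0$.

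\emph{Step 1 (the term $\hat\Pi_3$).} As in the proof of Theorem \ref{generalized-diff-est}, $\hat\Pi_3-\pi_{f,q}-h^2B^\pi_{f,q}=O_p(\Delta/h^2)+o_p(h^2)$. Multiplying by $\sqrt{h^d\hat L}$ and using $h^{d+4}\Upsilon(T)\to C$ turns the $o_p(h^2)$ piece into $o_p(1)$. The $O_p(\Delta/h^2)$ piece becomes $O_p\big(\sqrt{\Delta^2h^{d-4}\Upsilon(T)}\big)=o_p(1)$, which is exactly the third hypothesis. I expect this step to be the main obstacle: one must confirm that the $o_p(h^2)$ control of the analogue of $\mathsf B_1$ (Claims \ref{mathcal{A}-conv}--\ref{mathcal{B}-conv}, giving $O_p(\Upsilon^{-1/2})$) still holds. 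That requires $\Upsilon^{-1/2}=o(h^2)$, i.e.\ $h^4\Upsilon\to\infty$, which follows from $h^{d+4}\Upsilon\to C$ when $d\ge1$. With more care, the $O_p(\Upsilon^{-1/2})$ terms times $\sqrt{h^d\Upsilon}$ give $O_p(h^{d/2})\to0$ directly.

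\emph{Step 2 (the term $\hat\Pi_1$).} From the earlier proof, $\Gamma_1=O_p(\Delta\Upsilon)+O_p(\Delta\sqrt\Upsilon/h^{d/2})$. Hence
\[
\sqrt{h^d\hat L}\,\hat\Pi_1=\sqrt{h^d/\hat L}\,\Gamma_1=O_p\big(\Delta\sqrt{h^d\Upsilon}\big)+O_p(\Delta).
\]
Since $h^{d+4}\Upsilon\to C$, we have $\Delta\sqrt{h^d\Upsilon}\asymp\Delta h^{-2}\to0$, so $\sqrt{h^d\hat L}\,\hat\Pi_1=o_p(1)$.

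\emph{Step 3 (the term $\hat\Pi_2$).} We write $\hat\Pi_2=(\Gamma_2^{q,\mu}+\Gamma_2^{f,\mu}+\Gamma_2^{q,\sigma}+\Gamma_2^{f,\sigma})/\hat L$. The drift parts are bounded via Lemma \ref{diff-est-proof-lemma1} by $O_p(\Delta\sqrt\Upsilon/h^{d/2})$, and after normalization this gives $O_p(\Delta)$. For the double stochastic integrals, the quadratic variation analysis showed $[\Gamma_2^{\cdot,\sigma}]=O_p(\Delta\Upsilon/h^d)$, through the dominant term $\mathsf A_3$ and the controlled terms $\mathsf Z$, $\mathsf A_1$, $\mathsf A_2$. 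Re-examining those bounds under the new rates only requires $\Delta\to0$ and $\Delta/h^2\to0$. It follows that $\sqrt{h^d/\hat L}\,\Gamma_2^{\cdot,\sigma}=O_p(\sqrt\Delta)\to0$, by Proposition \ref{thm3.16_limit_theorems_null} or simply by Chebyshev on the martingale. Combining Steps 1--3 entrywise, and hence for $\mathrm{vech}$, yields convergence in probability to $\mathbf 0$.
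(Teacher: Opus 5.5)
Your proposal is correct and follows essentially the same argument as the paper. You use the same decomposition $\hat\pi_{f,q}=\hat\Pi_1+\hat\Pi_2+\hat\Pi_3$, and the key observation is that $\sqrt{h^d\hat L}$ is $\sqrt{h^d\hat L/\Delta}$ times $\sqrt{\Delta}$, so the fluctuation term $\hat\Pi_2$ becomes negligible. The paper only asserts that the rest of the error analysis is ``identical,'' while you explicitly recheck $\hat\Pi_1$ and $\hat\Pi_3$ under the new rates; that recheck is genuinely needed, since $h^{d+4}\Upsilon(T)\to C$ is incompatible with the theorem's $h^{d+4}\Upsilon(T)/\Delta\to C$.
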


\begin{proof}  
Since the assumptions ${h^d  \Upsilon(T)}\xrightarrow{}\infty$,  ${h^{d+4}  \Upsilon(T)}\xrightarrow[]{} C,$ and $h^{d-4} \Upsilon(T) {\Delta^2} \xrightarrow[]{} 0$ differ from those in Theorem \ref{generalized-diff-est} by a factor of ${\Delta}$, we scaled up $\hat{{\pi}}_{f,g}(x) - {\pi}_{f,g}(x)   - h^2 {B}^{\pi}_{f,g}(x)$ by a factor of $\sqrt{\Delta}$. The error analysis is identical to that in the proof of Theorem \ref{generalized-diff-est}, while we now have the dominant term $\sqrt{h^d\hat{L}^{\texttt{(o)}}(x)} \hat{\Pi}_2(x) = o_p(1)$ since $\Delta = o(1)$. Thus, we obtain the convergence to $\mathbf{0}$ in probability.
\end{proof}

\section{Proofs of Main Theorems}

\subsection{Proof of Theorem \ref{euclidean-diffusion-estimate-total} about diffusion estimator}\label{section proof obs diffusion proof}

The proof follows immediately by an application of Theorem \ref{generalized-diff-est} to the map $f=q:=\iota:M \rightarrow \mathbb{R}^p$. By definition, $\hat{\pi}^{(\texttt{o})}(x)=\hat{\pi}_{\iota,\iota}(x)$. 
Next, note that $\sigma_{\iota,l}(x)$ defined in \eqref{ito-integral} is a directional derivative of $\iota$ so that $\sigma_{\iota,l}=\sigma_l\iota=\sigma^k_l\partial_k\iota^je_j$, where $e_j$ is the standard orthonormal basis of $\mathbb{R}^q$, and the pushforwarded vector satisfies ${\iota}_*\sigma_l=\sigma_l^k\partial_k\iota^je_j$, so we have $\sigma_{\iota,l}(x)=\iota_*\sigma_l(\iota(x))$. This leads to $ {\pi}_{\iota,\iota}(x) \otimes {\pi}_{\iota,\iota}(x)  = \pi^{(\texttt{o})}(x) \otimes \pi^{(\texttt{o})}(x) $ and hence ${\Xi}_{\iota,\iota}(x) =  \Xi^{(\texttt{o})}(x)$. Therefore, we have ${\pi}^{(\texttt{o})}(x)={\pi}_{\iota,\iota}(x)$. 
Also, by definition in \eqref{B_pi_varphi}, $b^{(\texttt{o})}_{\pi}(x)=B^{\pi}_{f,f}(x)$. 
With all the above, Theorem \ref{generalized-diff-est} gives the desired result.

\subsection{Proof of Theorem \ref{diff-est-tangent-space-proj} about tangent space estimation}
\label{Tan-space-est-and-proj}

Fix $x\in M$. We will apply Corollary \ref{diffusion-estimator-at-drift-scale}  to the map $f=q:=\iota:M \rightarrow \mathbb{R}^p$. In this setup, as discussed in Section \ref{section proof obs diffusion proof}, $\hat{\pi}^{(\texttt{o})}(x)=\hat{\pi}_{\iota,\iota}(x)$ and ${\pi}^{(\texttt{o})}(x)={\pi}_{\iota,\iota}(x)$.

Choose ambient coordinates around $\iota(x)$ so that $\pi^{(\texttt{o})}(x)$ is nonzero only on the left upper $d\times d$ matrix. Denote the eigenvalue decomposition $\pi^{(\texttt{o})}(x)=UDU^\top$, where $U\in O(p)$, $D=\texttt{diag}(\lambda_1,\ldots,\lambda_d,0,\ldots,0)\in \mathbb{R}^{p\times p}$ and $\lambda_1\geq \ldots \geq\lambda_d$. By construction, the first $d$ eigenvectors of $\pi^{(\texttt{o})}(x)$, denoted as, form an orthonormal basic of $\iota_*T_xM$.
By Corollary \ref{diffusion-estimator-at-drift-scale}, we have  
\begin{align*}
\hat\pi^{(\texttt{o})}(x)= \pi^{(\texttt{o})}(x) + h^2 b_\pi^{(\texttt{o})}(x) \in \mathbb{R}^{p \times p}\,,
\end{align*}
where $b_\pi^{(\texttt{o})}(x)\in \mathbb{R}^{p \times p}$ is a symmetric matrix corresponding to the bias of the diffusion estimator at $x$, with $b_\pi^{(\texttt{o})}(x) = O(1) + o_p\Big( \frac{1}{\sqrt{h^{d+4}\hat{L}^{(\texttt{o})}(x)}}\Big)$. Note that  by the assumption $h^{d+4} \Upsilon(x) \xrightarrow[]{\hspace{0.1cm}\hspace{0.1cm}} C$ and \eqref{reciprocal-lemma-nu(T)-hatL}, asymptotically $h^{d+4}\hat{L}^{(\texttt{o})}(x)$ converges in distribution to a random variable when $T\to \infty$. Thus,  $o_p\Big( \frac{1}{\sqrt{h^{d+4}\hat{L}^{(\texttt{o})}(x)}}\Big)=o_p(1)$ by Slutsky's theorem. Also, we can replace $o_p\Big( \frac{1}{\sqrt{h^{d+4}\hat{L}^{(\texttt{o})}(x)}}\Big)$ by $o_p\Big( \frac{1}{\sqrt{h^{d+4}\Upsilon(T)}}\Big)$. Clearly, $b_\pi^{(\texttt{o})}(x) = O_p(1)$.

Denote the eigendecomposition of  $\hat\pi^{(\texttt{o})}(x)= \hat U\hat D\hat U^\top$,
where $ \hat U\in O(p)$ (the space of $p \times p$ orthogonal matrices) and $ \hat D$ are diagonal with eigenvalues ordered non-increasingly. 
Denote $\hat{U}_{d}$ and $U_d$ to be the dominant $d$ columns of $\hat U$ and $U$.
Using the perturbation technique \cite{van2007computation}, asymptotically when $h\to 0$, we have 
\begin{align*}
\hat{P}_x := \hat U_{d}\hat U_{d}^\top = U_dU_d^\top + h^2b^{(\texttt{tan})}(x) = P_x + h^2b^{(\texttt{tan})}(x)
\end{align*}
for some ${p\times p}$ symmetric matrix $b^{(\texttt{tan})}(x)$, where $b^{(\texttt{tan})}(x)= O(1)+o_p\Big( \frac{1}{\sqrt{h^{d+4}\Upsilon(T)}}\Big)$. We thus conclude the theorem.

\subsection{Proof of Theorem \ref{main theorem drift} about drift estimator}\label{section proof of theorem of obs drift}

\begin{proof}
Analogous to proving Theorem \ref{euclidean-diffusion-estimate-total} by applying Theorem \ref{generalized-diff-est}, we prove Theorem \ref{main theorem drift} by applying Theorem \ref{generalized-drift-est}, with the additional consideration of the projection $\hat{P}_x$ of \eqref{definition Py Euclidean}.
With $f=\iota : M \rightarrow \mathbb{R}^p$, $\mathcal{D}_x(x'):=\|\iota(x)-\iota(x')\|_{\mathbb{R}^p}$, { we} obtain
\begin{align}
    \sqrt{{h^d   \hat{L}^{(\texttt{o})}(x)}}  \left( \hat{\mu}_\iota (x) - \mu_\iota(x) - h^2B^{\mu,\texttt{o}}_\iota(x) \right)
\xrightarrow[]{\hspace{0.1cm}d\hspace{0.1cm}}  N(\mathbf{0}, \,  \kappa_{2,0} \pi^{(\texttt{o})}(x)) \label{observed-drift-estimate-statement}
\end{align}
by Theorem \ref{generalized-drift-est}, where $B^{\mu,\texttt{o}}_\iota$ is defined in \eqref{drift-bias}. 
Following the same argument as that in proving Theorem \ref{euclidean-diffusion-estimate-total}, we have $\mu_\iota(x):=\iota_*\mu(\iota(x))$, and hence ${P}_x\mu_\iota(x)=\mu^{(\texttt{o})}(x)$. Recall that $\hat{\mu}^{(\texttt{o})}(x)=\hat{P}_x\hat{\mu}_\iota (x)$.  However, the targeting bias term $b_\mu^{(\texttt{o})}(x)\neq P_xB^{\mu,\texttt{o}}_\iota(x)$, but rather $b_\mu^{(\texttt{o})}(x) = P_x B^{\mu,\texttt{o}}_\iota(x) - \overline{b} ^{(\texttt{tan})}(x) \mu_\iota(x)$  \eqref{mu-varphi-bias}, as the drift estimator bias also contains the bias term induced by the tangent space estimation.

By Theorem \ref{diff-est-tangent-space-proj}, we have $\hat{P}_x = P_x + h^2b^{(\texttt{tan})}(x)$, where $b^{(\texttt{tan})}(x) = \overline{b} ^{(\texttt{tan})}(x) +  \epsilon^{(\texttt{tan})}(x)$ with $\overline{b} ^{(\texttt{tan})}(x)=O(1)$ and $\epsilon^{(\texttt{tan})}(x) = o_p\Big( \frac{1}{\sqrt{h^{d+4} \Upsilon(T)}}\Big)$. Putting all together, write 
\begin{align}
&   \hat{\mu}^{(\texttt{o})}(x) - \mu^{(\texttt{o})}(x) - h^2b^{(\texttt{o})}_{\mu}(x) \nonumber  \\  
= \,&   P_x(\hat{\mu}_\iota (x)-\mu_\iota(x) - h^2{B}_\iota^{\mu}(x)) + (P_x - \hat{P}_x)\hat{\mu}_\iota(x) + h^2 \overline{b} ^{(\texttt{tan})}(x)\mu_\iota(x)  \nonumber \\
   =\, &  P_x (\mu_\iota(x) - \hat{\mu}_\iota (x) - h^2{B}_\iota^{\mu}(x) ) \label{decomp-of-ob-drift-I}\\
   &- h^2b^{(\texttt{tan})}(x)(\hat{\mu}_\iota(x)-\mu_\iota(x)) -  h^2 \epsilon^{(\texttt{tan})}\mu_\iota(x) \label{decomp-of-ob-drift-II}\,.
\end{align} 
By \eqref{observed-drift-estimate-statement} and the continuous mapping theorem, \eqref{decomp-of-ob-drift-I} after normalization becomes
\begin{align}
     \sqrt{{h^d   \hat{L}^{(\texttt{o})}(x)}} P_x (\mu_\iota(x) - \hat{\mu}_\iota (x) - h^2B_\iota^{\mu}(x) )  \xrightarrow[]{\hspace{0.1cm}d\hspace{0.1cm}} N( \mathbf{0}, \,\kappa_{2,0}\pi^{(\texttt{o})}(x))\,, \nonumber 
\end{align}
where we use the fact that $P_x\pi^{(\texttt{o})}(x)P_x^\top=\pi^{(\texttt{o})}(x)$.
The first term of \eqref{decomp-of-ob-drift-II} after normalization becomes
\begin{align}
    \sqrt{h^{d+4}  \hat{L}^{(\texttt{o})}(x)} b^{(\texttt{tan})}(x)(\hat{\mu}_\iota(x)-\mu_\iota(x))=o_p(1)\nonumber
\end{align}
by using \eqref{observed-drift-estimate-statement} and the fact that $\overline{b} ^{(\texttt{tan})}(x)=O(1)$.
The second term of \eqref{decomp-of-ob-drift-II} after normalization becomes
\begin{align*}
    \sqrt{h^{d+4}  \hat{L}^{(\texttt{o})}(x)}\epsilon^{(\texttt{tan})}(x){\mu}_\iota (x)  = o_p\left(\sqrt{\frac{\hat{L}^{(\texttt{o})}(x)}{\Upsilon(T)}}\right)=o_p(1)\,,
\end{align*} 
where the last equality comes from \eqref{reciprocal-lemma-nu(T)-hatL}. By combining all the above controls, we obtain the desired conclusion.
 
\end{proof}

\section{More details on numerical simulation}\label{section more numerical simulations}

\begin{figure}[hbt!]
    \centering
\includegraphics[trim=10 80 0 80, clip, width=0.9\textwidth]{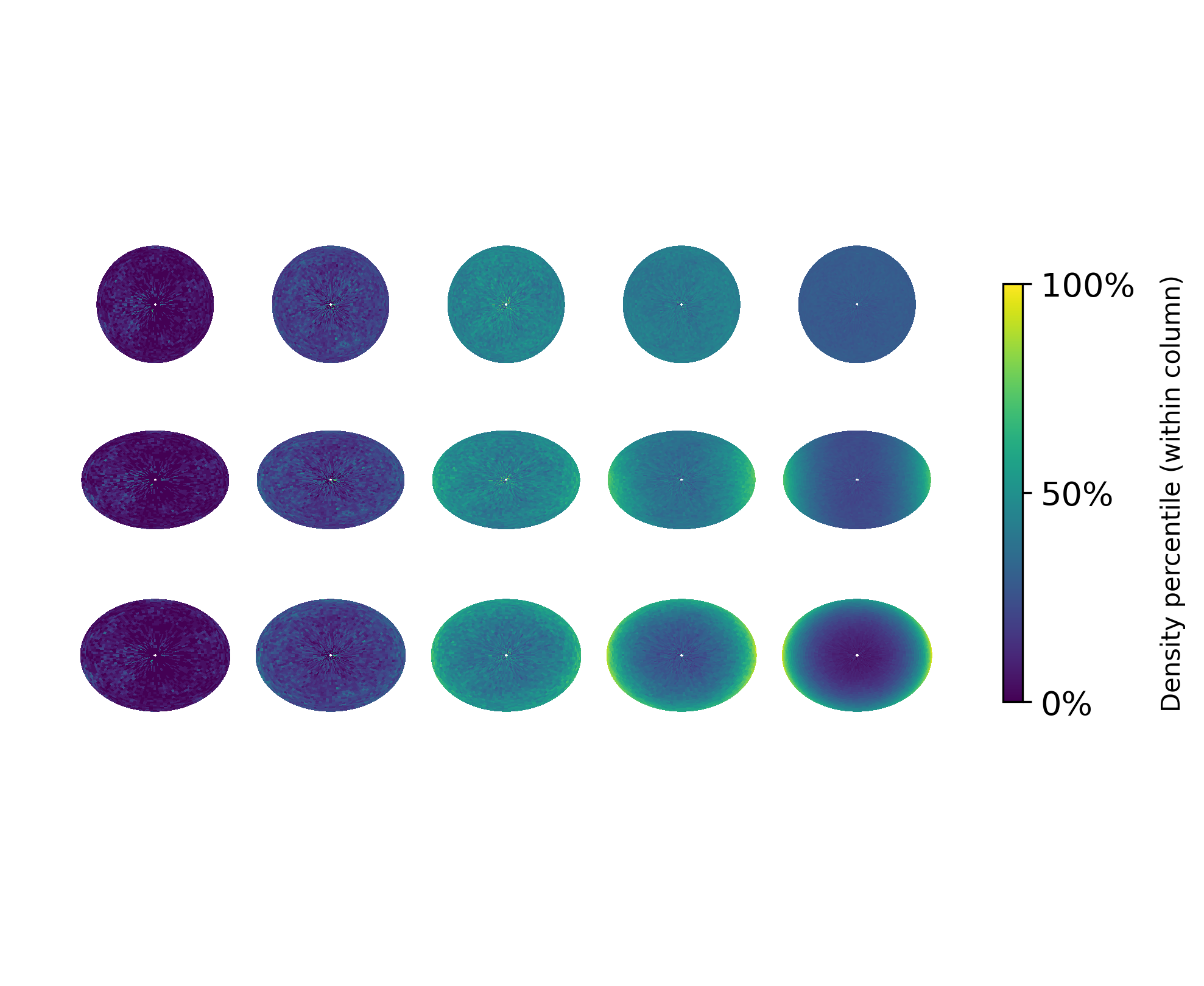}
    \caption{Top to bottom row: $\hat{L}^{(\texttt{o})}$ on ellipsoids with eccentricities $(1,1,1)$, $(1.5,1,1)$, and $(2,1.5,1)$ evaluated from a single long trajectory with different $n_i$ and $\Delta = 10^{-2}$, where $\log_{10}(n_i) \in \{4, 5,6,7,8\}$, from left to right column. Bins are colored by the percentile of their density across all eccentricities (within each column) for a fixed $T$. This enables meaningful visual comparison across plots sharing the same $T$ value.}
    \label{fig:ellipsoid_density}
\end{figure}

\begin{figure}[hbt!]
    \centering
    \includegraphics[width=0.85\textwidth]{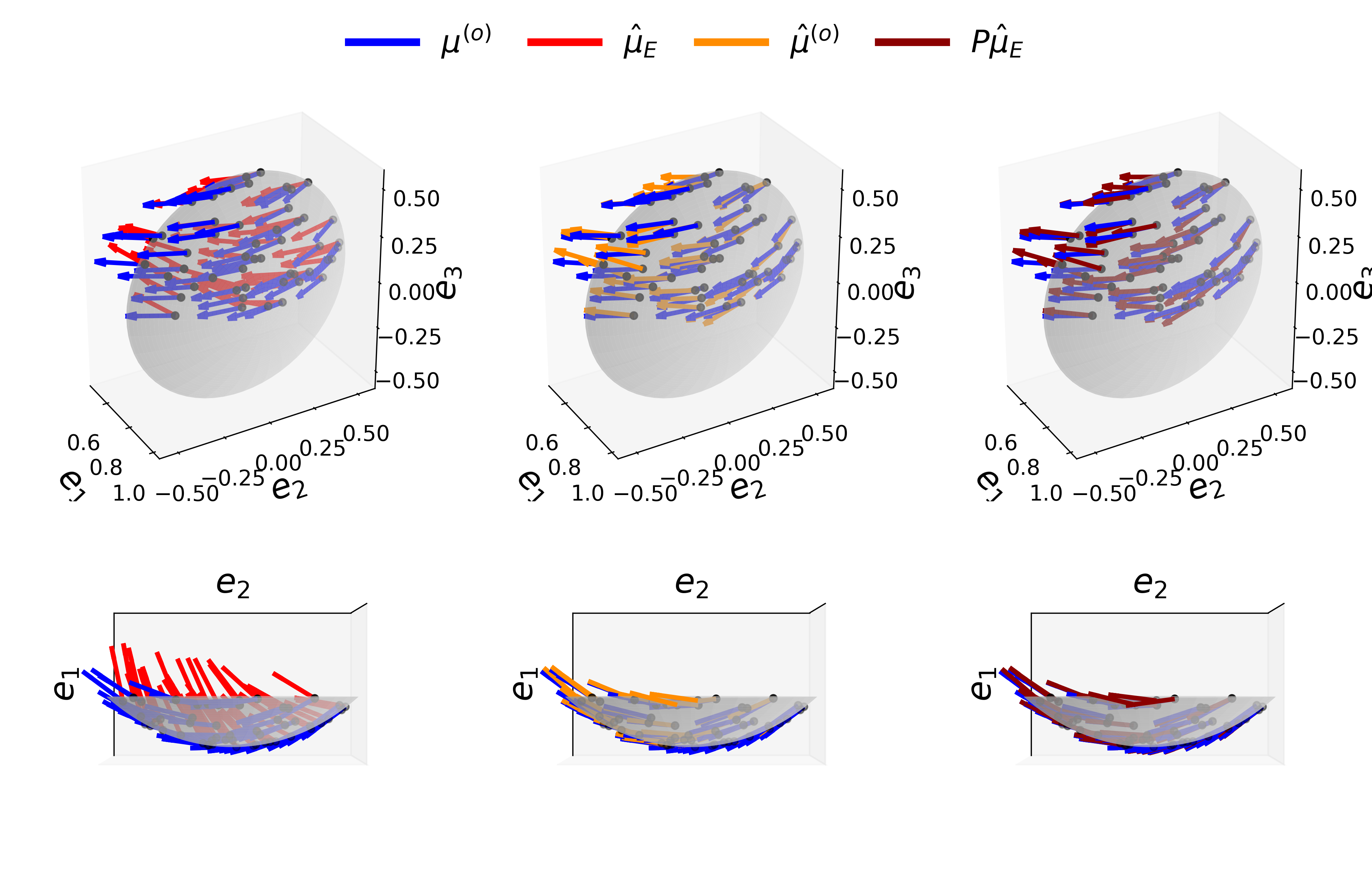}
    \caption{From left to right: visualizations of $\hat{\mu}_E(x), \hat{\mu}^{(\texttt{o})}(x),$ and $P_x \hat{\mu}_E(x)$, where $P_x$ is the projection operator onto the tangent space $T_xM$, for base-point samples $x$ drawn uniformly from a spherical cap centered at $(1,0,0)^\top$ and observed on ellipsoids with eccentricity $(1,1,1)$, shown from two viewing angles. The ground-truth drift vector is superimposed as blue arrows.}
    \label{fig:IB1-11}
\end{figure}

\begin{figure}[hbt!]
    \centering
    \includegraphics[width=0.85\textwidth]{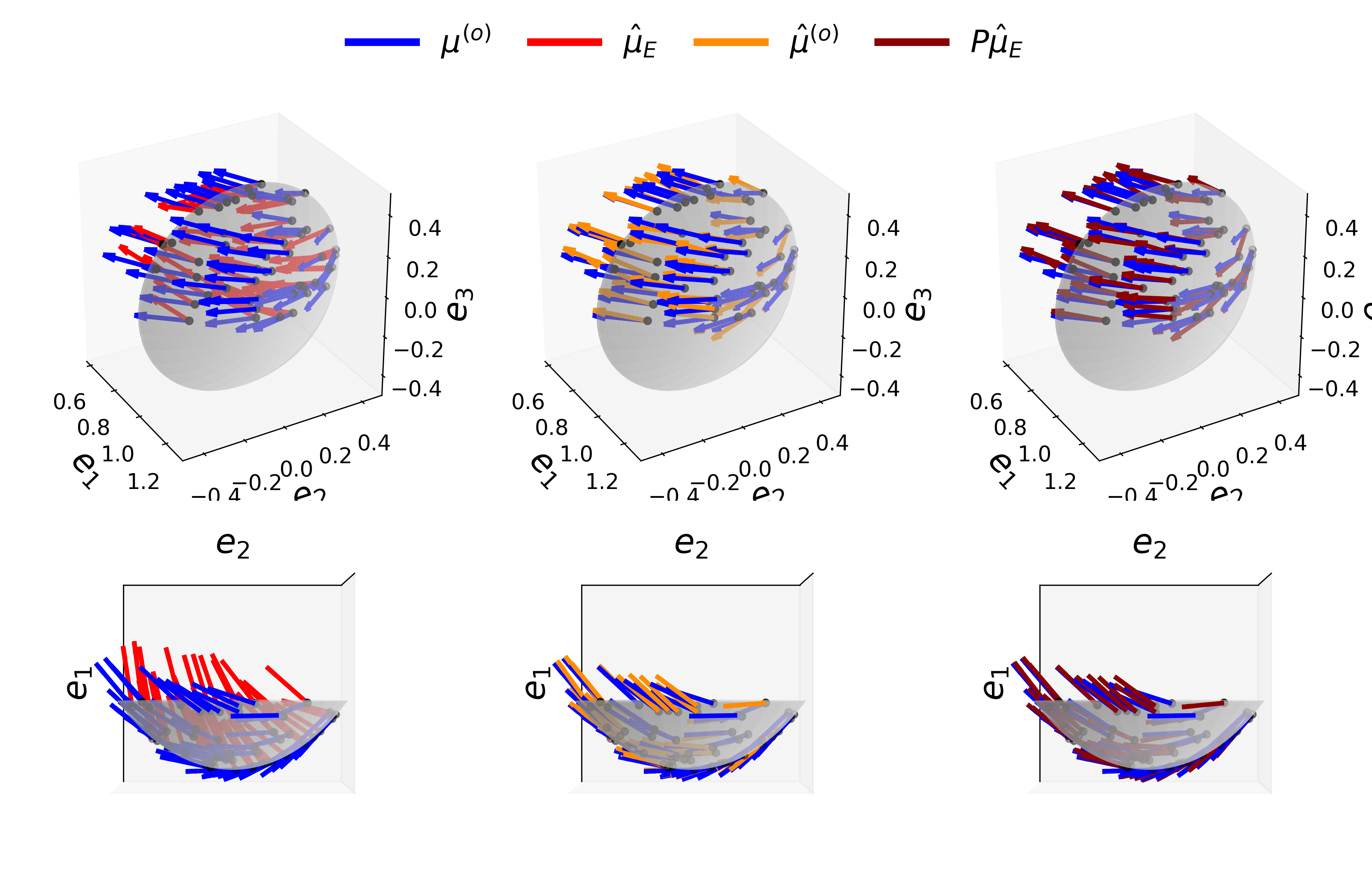}
    \caption{From left to right: visualizations of $\hat{\mu}_E(x), \hat{\mu}^{(\texttt{o})}(x),$ and $P_x \hat{\mu}_E(x)$, where $P_x$ is the projection operator onto the tangent space $T_xM$, for base-point samples $x$ drawn uniformly from a spherical cap centered at $(1,0,0)^\top$ and observed on ellipsoids with eccentricity $(1.5,1,1)$, shown from two viewing angles. The ground-truth drift vector is superimposed as blue arrows.}
    \label{fig:IB1-12}
\end{figure}

\begin{figure}[hbt!]
    \centering
    \includegraphics[width=0.9\textwidth]{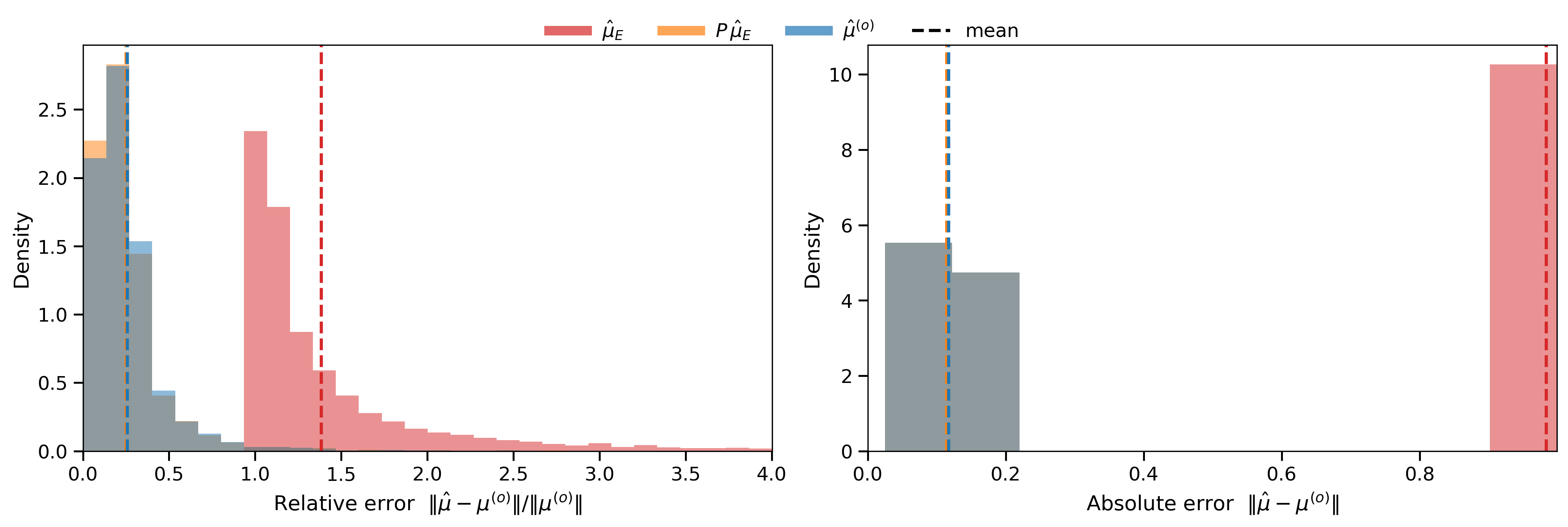}
      \includegraphics[width=0.9\textwidth]{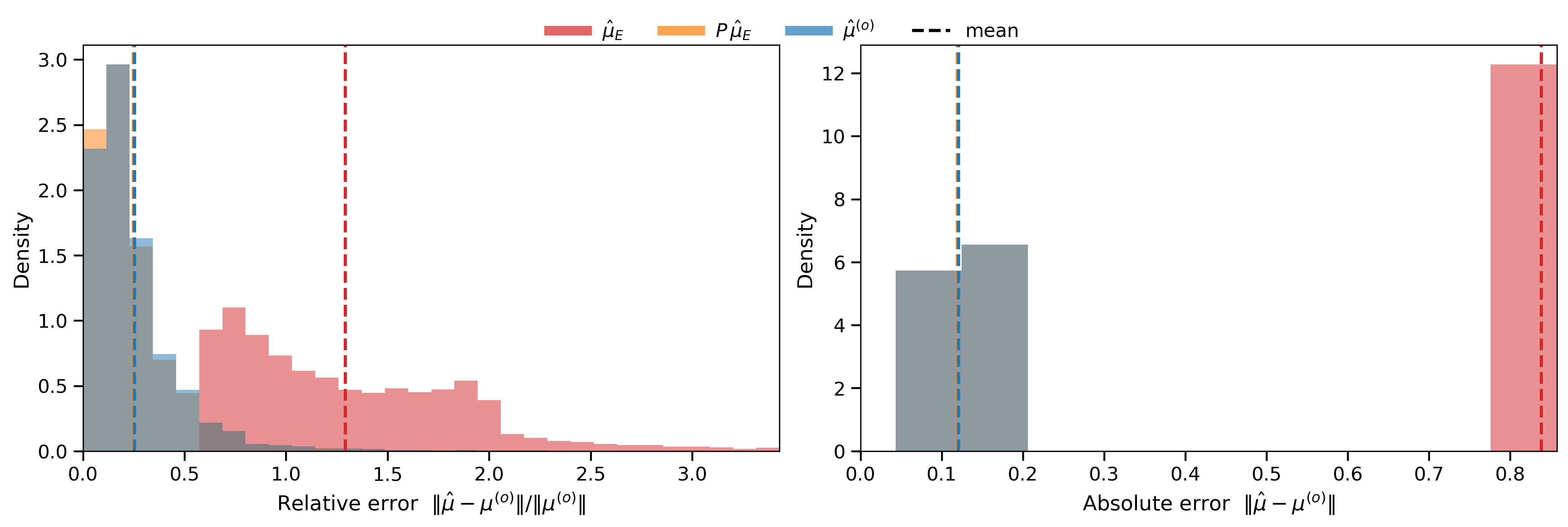}
         \includegraphics[width=0.9\textwidth]{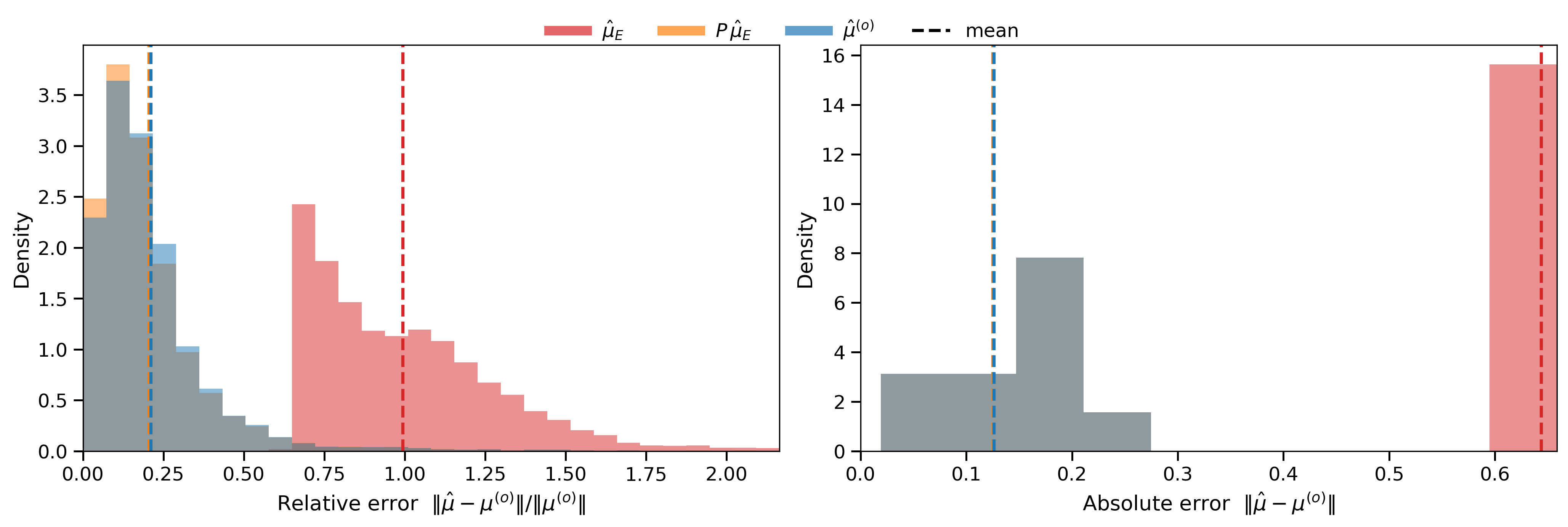}
   \caption{Histograms associated with Table \ref{tab:mag-and-error-plots-obs-ell-table}. Left: means and standard deviations of NRMSE of various drift estimators at $x$ where {$\|\mu^{(\texttt{o})}(x)\| / \|\mu^{(\texttt{o})}\|_\infty> 0.05$}. Right: means and standard deviations of RMSE of various estimators when  $x$ where {$\|\mu^{(\texttt{o})}(x)\| / \|\mu^{(\texttt{o})}\|_\infty \le 0.05$}.}
    \label{fig:obs-ell-hist}
\end{figure}

\begin{figure}[hbt!]
    \centering
            \includegraphics[width=0.75\textwidth]{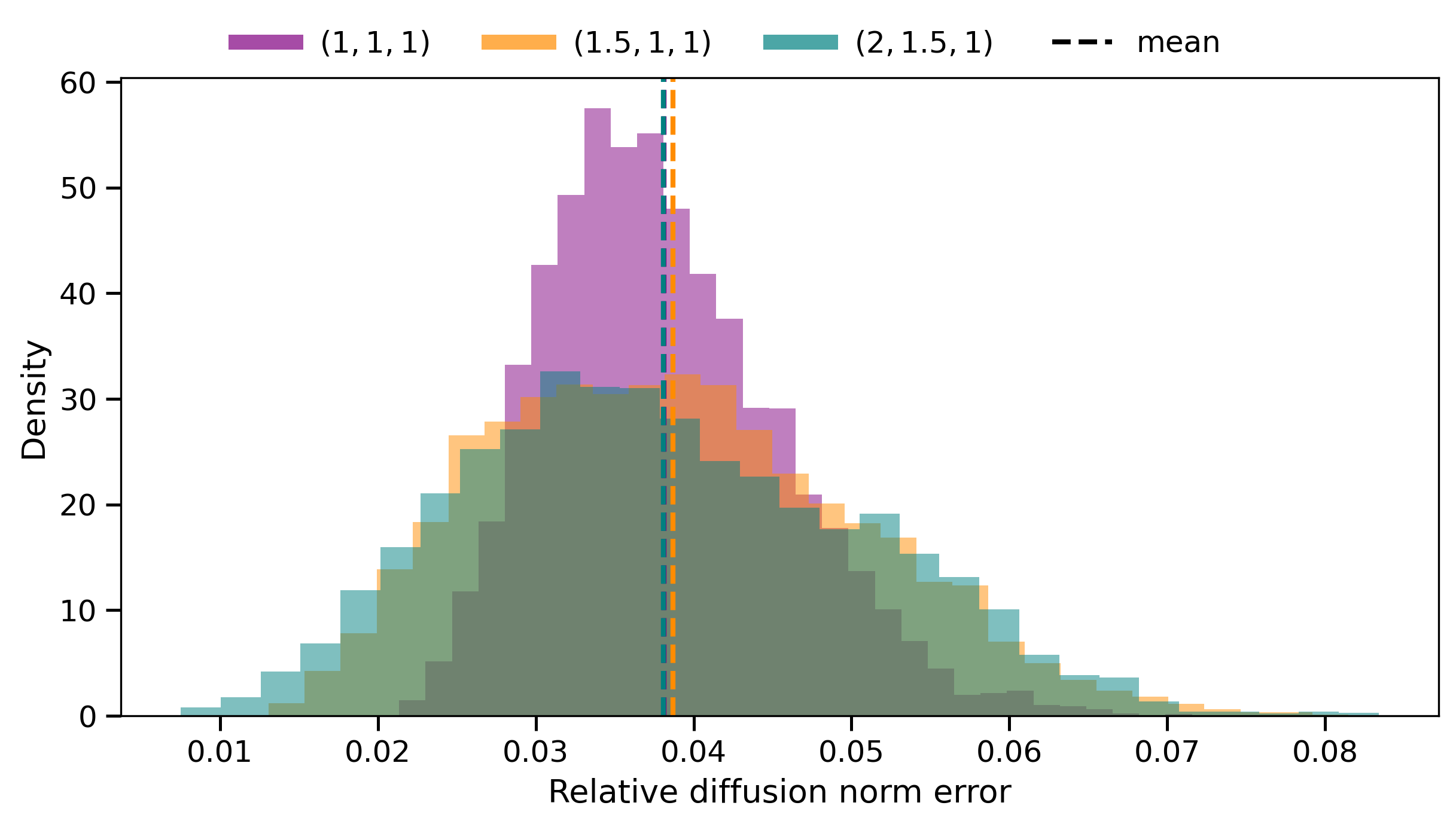}
   \caption{ \raggedright Histograms associated with the error of $\pi^{(\texttt{o})}$ in Table \ref{tab:mag-and-error-plots-obs-ell-table}.}
    \label{fig:obs-ell-hist-2}
\end{figure}

\begin{figure}[hbt!]
    \centering
    \includegraphics[width=0.75\textwidth]{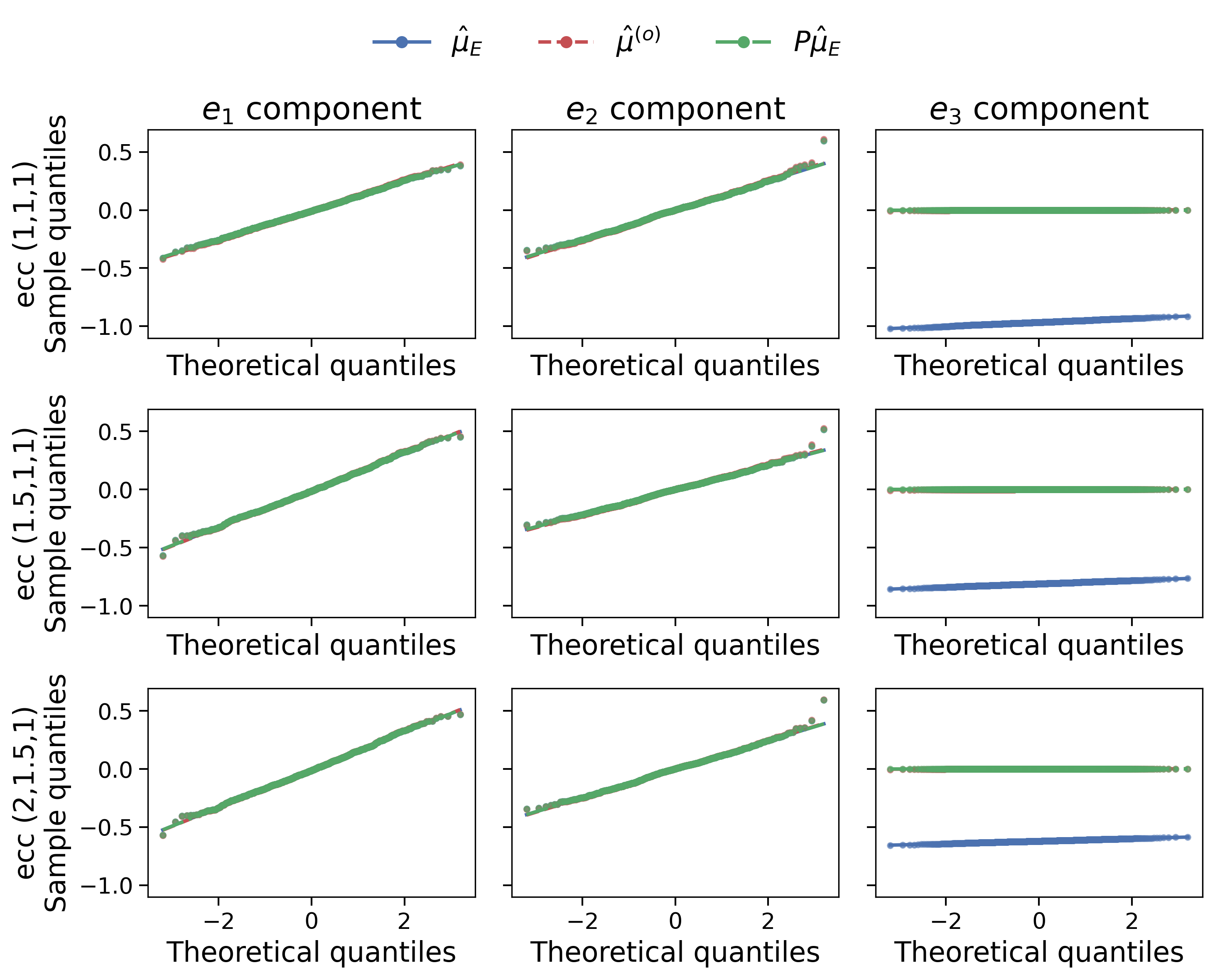}
    \caption{QQ plots corresponding to the drift estimation errors shown in Figure~\ref{fig:IA3a}, assessing agreement with the theoretical Gaussian reference; the normal component after estimated projection to tangent space.}
    \label{fig:IA2}
\end{figure}

\begin{figure}[hbt!]
    \centering
    \includegraphics[width=0.75\textwidth]{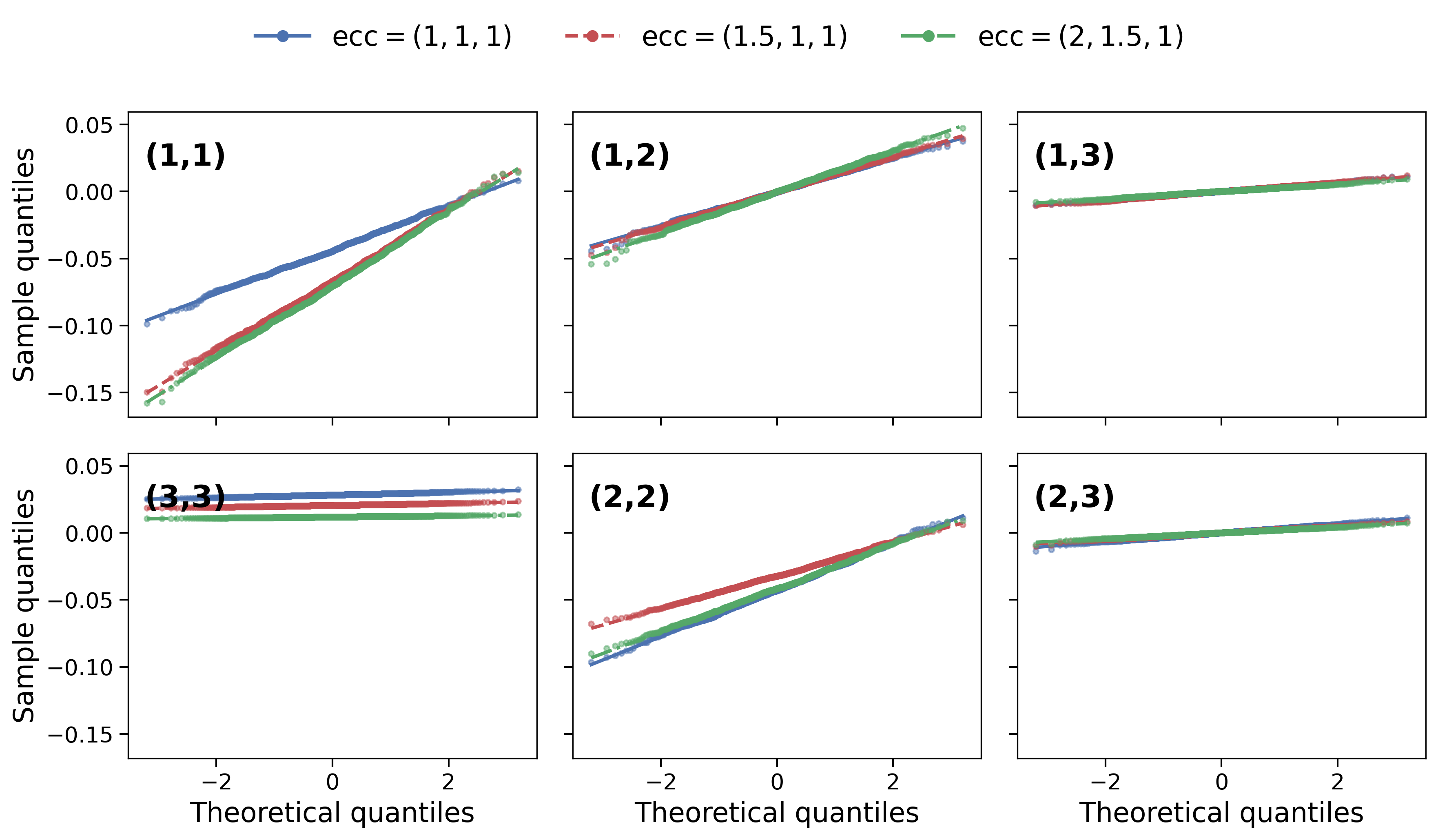}
    \caption{QQ plots corresponding to the vector field coordinate $(i,j)$ errors for $\hat{\pi}^{(\texttt{o})}(x)$ of Figure~\ref{fig:IA3b}.}
    \label{fig:IA2-22}
\end{figure}

\begin{table}[t]
\renewcommand{\arraystretch}{1.15}
\begin{tabular}{lcccc}
\toprule
Error type & $\hat{\mu}_E$ & $\hat{\mu}^{(\texttt{o})}$ & $P_x\hat{\mu}_E$ & $\hat{\pi}^{(\texttt{o})}$ \\
\midrule
$\frac{\|\hat\mu(x)-\mu^{(\texttt{o})}(x)\|}{\|\mu^{(\texttt{o})}(x)\|}$ ($\frac{\|\mu^{(\texttt{o})}(x)\|}{\|\mu^{(\texttt{o})}\|_\infty}\ge 0.05$)       & $1.530 \pm 1.120$  & $0.259 \pm 0.247$ & $0.252 \pm 0.244$ & --- \\
$\frac{|\|\hat\mu(x)\|-\|\mu^{(\texttt{o})}(x)\||}{\|\mu^{(\texttt{o})}(x)\|}$ ($\frac{\|\mu^{(\texttt{o})}(x)\|}{\|\mu^{(\texttt{o})}\|_\infty}\ge 0.05$) & $0.856 \pm 1.050$ & $0.162 \pm 0.184$ & $0.157 \pm 0.179$ & --- \\
$\Theta(\hat\mu(x),\mu^{(\texttt{o})}(x))$ ($\frac{\|\mu^{(\texttt{o})}(x)\|}{\|\mu^{(\texttt{o})}\|_\infty}\ge 0.05$)               & $0.936 \pm 0.191$& $0.181 \pm 0.251$ & $0.177 \pm 0.252$ & --- \\
$\|\hat\mu(x)-\mu^{(\texttt{o})}(x)\|$   ($\frac{\|\mu^{(\texttt{o})}(x)\|}{\|\mu^{(\texttt{o})}\|_\infty}< 0.05$)                    & $0.984 \pm 0.014$ & $0.117 \pm 0.053$ & $0.114 \pm 0.051$ & --- \\
$\frac{\|\hat\pi(x)-\pi^{(\texttt{o})}(x)\|_F}{\|\pi\|_F}$          & --- & --- & --- & $0.038 \pm 0.007$ \\
  $\|\sin\Theta_2\|_F$        & --- & --- & --- & $0.017 \pm 0.006$ \\

\midrule

$\frac{\|\hat\mu(x)-\mu^{(\texttt{o})}(x)\|}{\|\mu^{(\texttt{o})}(x)\|}$ ($\frac{\|\mu^{(\texttt{o})}(x)\|}{\|\mu^{(\texttt{o})}\|_\infty}\ge 0.05$)       & $1.410 \pm 0.976$ & $0.257 \pm 0.267$ & $0.250 \pm 0.264$ & --- \\
$\frac{|\|\hat\mu(x)\|-\|\mu^{(\texttt{o})}(x)\||}{\|\mu^{(\texttt{o})}(x)\|}$ ($\frac{\|\mu^{(\texttt{o})}(x)\|}{\|\mu^{(\texttt{o})}\|_\infty}\ge 0.05$) & $0.764 \pm 0.883$& $0.174 \pm 0.209$ & $0.170 \pm 0.205$ & --- \\
$\Theta(\hat\mu(x),\mu^{(\texttt{o})}(x))$ ($\frac{\|\mu^{(\texttt{o})}(x)\|}{\|\mu^{(\texttt{o})}\|_\infty}\ge 0.05$) & $0.887 \pm 0.25$ & $0.165 \pm 0.248$ & $0.161 \pm 0.249$ &  \\
$\|\hat\mu(x)-\mu^{(\texttt{o})}(x)\|$   ($\frac{\|\mu^{(\texttt{o})}(x)\|}{\|\mu^{(\texttt{o})}\|_\infty}< 0.05$)                    & $0.838 \pm 0.015$ & $0.121 \pm 0.049$ & $0.119 \pm 0.049$ & --- \\
 $\frac{\|\hat\pi(x)-\pi^{(\texttt{o})}(x)\|_F}{\|\pi\|_F}$          & --- & --- & --- & $0.039 \pm 0.012$ \\
$\|\sin\Theta_2\|_F$       & --- & --- & --- & $0.016 \pm 0.009$ \\
\bottomrule
\end{tabular}
\caption{\raggedright Same as Table \ref{tab:mag-and-error-plots-obs-ell-table} about summary of various evaluation metrics, but with eccentricities $(1,1,1)$ (top block) and $(1.5,1,1)$ (bottom block). $\hat{\mu}(x)$ is the estimator of $\mu^{(\texttt{o})}(x)$, which can be $\hat{\mu}_E$, $\hat{\mu}^{(\texttt{o})}$, or $P_x\hat{\mu}_E$, where $P_x$ is the projection to $T_xM$, listed in the top. $\hat\pi(x)$ is the estimator of $\pi^{(\texttt{o})}(x)$, which is $\hat{\pi}^{(\texttt{o})}(x)$. $\Theta(\mu^{(\texttt{o})},\hat\mu)$ is the angle between $\mu^{(\texttt{o})}$ and $\hat\mu$ with the unit radian.  $\|\sin\Theta\|_F$ is the subspace distance between the dominant 2D eigenspaces of $\hat\pi^{(\texttt{o})}$ and $\pi^{(\texttt{o})}$.
}
\label{tab:mag-and-error-plots-obs-ell-table-2}
\end{table}

\begin{figure}[hbt!]
    \centering
\includegraphics[width=1\textwidth]{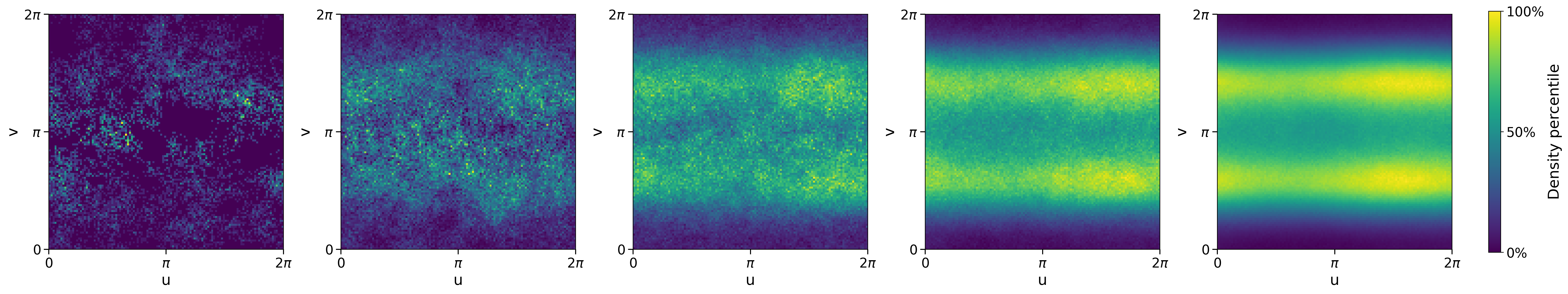}
     \caption{A single long trajectory of length $n = 10^8$ with time-step $\Delta = 10^{-2}$ is run on the parameter space $N = [0,2\pi) \times [0, 2\pi)$ and mapped to the Klein Bottle in $\mathbb{R}^4$, where the occupation density $\hat{L}^{(\texttt{o})}$ is plotted in $N$, where $\log_{10}(n_i) \in \{4, 5,6,7,8\}$, from left to right column. In each image, bins are colored according to the percentile of their density values within that plot.}
    \label{fig:IC1-2}
\end{figure}

\begin{figure}[hbt!]
    \centering
    \includegraphics[width=0.45\textwidth]{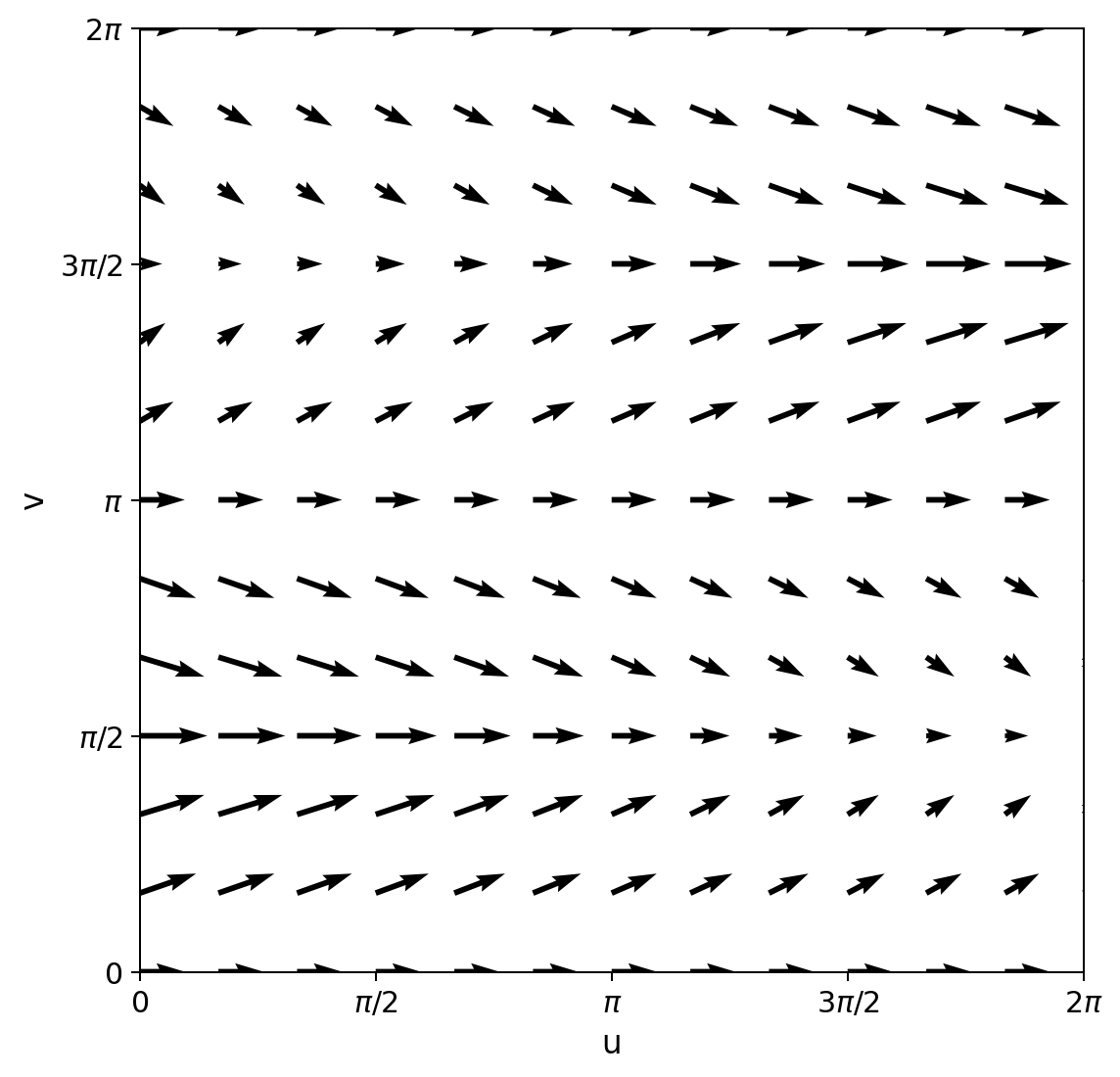}
     \caption{The drift vector field $\mu(u,v) = (1 + \frac{1}{2}\cos\left(\frac{u}{2} \right)\sin(v), \frac{1}{2}\sin\left(2v\right))^\top$ on $N = [0,2\pi) \times [0, 2\pi)$ used to generate the dynamics on the Klein bottle.}
    \label{fig:KB_drift}
\end{figure}

\begin{figure}[hbt!]
    \centering
    \includegraphics[width=0.6\textwidth]{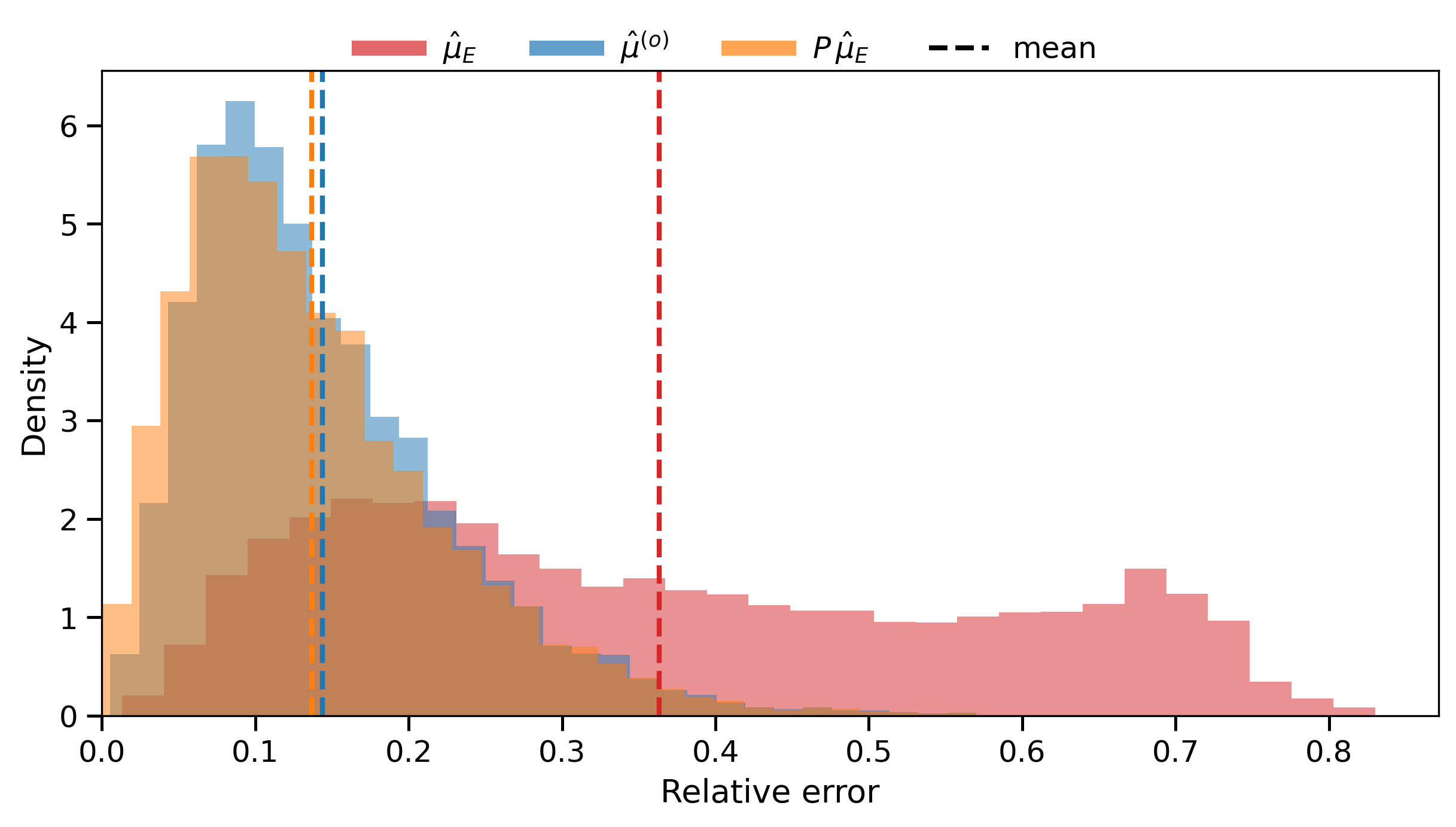}
 \includegraphics[width=0.6\textwidth]{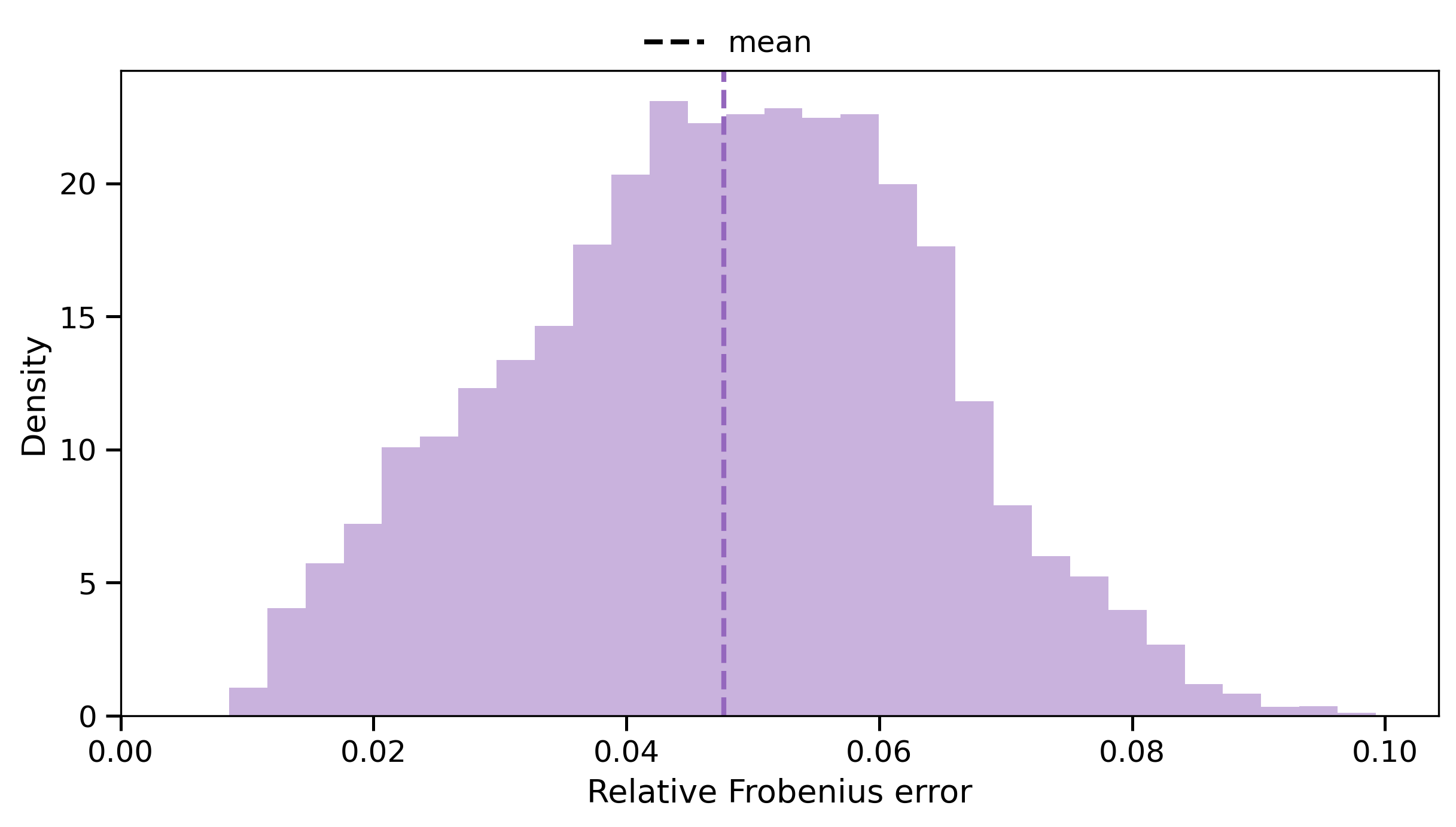}
     \caption{Top: Histogram of NRMSE of various estimators on the Klein Bottle. Bottom: Histogram of NRMSE of diffusion estimator.}
    \label{fig:IC2-2}
\end{figure}

\clearpage

\begin{table}
\begin{center}
\begin{longtable}{ |c|p{12cm}| } 
 \hline
$b_{\mu}^{(\texttt{o})}$ & bias of observed drift vector estimator \\
$b_\pi^{(\texttt{o})}$ & bias of observed diffusion matrix estimator \\
$C_X$ & process specific constant \\
$D(\mathbb{R}_+,\mathbb{R})$ & the Skorohod space with Borel $\sigma$-algebra and canonical filtration \\
$d$ & dimension of the manifold \\
$\Delta$ & sampling period \\
$\mathcal{D}_x$ & a distance-like function at $x\in M$ \\
$e(X)$ & explosion time of manifold-valued diffusion \\
$\mathbb{E}_x$ & expectation conditional on $X_0 = x$ \\
$\Gamma$ & Gamma function \\
$h$ & bandwidth \\
$I, I_{p \times p}$ & the identity matrix (of size $p \times p$) \\
$\iota$ & an embedding of $M$ into $\mathbb{R}^p$ \\
$\mathbf{1}_A$ & the indicator function of the set $A$ \\
$\mathcal{K}$ & a kernel function \\
$\kappa_{p,q}$ & kernel-dependent constant for $p \in \mathbb{N}$, $q \in \{0\}\cup \mathbb{N}$ \\
$\mu^{\texttt{(o)}}$ & observed drift vector \\
$n$ & number of process observations \\
$p$ & ambient space dimension\\
$\pi^{\texttt{(o)}}$ & observed diffusion matrix \\
$R_{m}$ & $m$-th regeneration time of generalized life-cycle decomposition \\
$S_m$ & $m$-th subsequent regeneration time of generalized life-cycle decomposition \\
$\sigma_\alpha$ & $\alpha$-th component diffusion vector field \\
$\sim$ & asymptotically equivalent (ratio $\to 1$) \\
$T$ & sampling period  \\
$\theta_s$ & shift operator for time $s \ge 0$ \\
$\tilde{g}_{s,t}(a, b)$ & $\tilde{\phi}_{s,t}(a,b) - \tilde{\phi}_{s}(a)\tilde{\phi}_{t}(b)$, a measure of path dependence \\
$\Upsilon$ & the scaling factor of the process $X_t$ \\
$\mathcal{W}(M)$ & $C([0, \infty), M)$ \\
$\mathcal{W}^r_0$ & $\{\omega \in C([0, \infty), \mathbb{R}^r): \omega(0) = 0\}$ \\
$\mathcal{B}_t(\mathcal{W}(M))$ & the $\sigma$-algebra generated by cylinder sets of $\mathcal{W}(M)$ up to time $t>0$ \\
$\mathcal{B}_t(\mathcal{W}^r_0)$ & the $\sigma$-algebra generated by cylinder sets of $\mathcal{W}^r_0$ up to time $t>0$ \\
$W_t, W^\alpha_t$ & standard $r$-dimensional Brownian motion and its components \\
$\mathcal{X}$ & high-dimensional time series \\
$\mathsf A$ & recurrent atom \\
$\mathsf T_m$ & exponential jump-times \\
 \hline
\end{longtable}
\end{center}
\caption{Commonly used notation.\label{Table: more symbols}}
\end{table}

\fi

\end{document}